\documentclass[aps,prx,twocolumn,nofootinbib,superscriptaddress,notitlepage]{revtex4-2}
\usepackage{braket}
\usepackage{graphicx}
\usepackage{amssymb}
\usepackage{amsmath,amsthm}
\usepackage{bm}

\usepackage{color}
\usepackage{physics}
\usepackage{mathtools}
\usepackage{bbm}
\usepackage{epstopdf}
\epstopdfsetup{outdir=./}
\usepackage{hyperref}
\usepackage[capitalize,noabbrev]{cleveref}
\urlstyle{same}
\usepackage{qcircuit}
\usepackage{subfigure}
\usepackage{float}
\makeatletter
\let\newfloat\newfloat@ltx
\makeatother

\usepackage{algorithm}
\usepackage{algorithmic}
\usepackage{enumerate}
\usepackage{kantlipsum}
\usepackage{refcount}
\crefrangeformat{enumi}{#3#1#4--#5#2#6}

\usepackage{multirow}
\usepackage[english]{babel}

\usepackage{algorithm}
\usepackage{algorithmic}
\usepackage{xspace}
\usepackage{xcolor}

\usepackage{newfloat}
\usepackage{etoolbox}
\usepackage{dsfont}
\usepackage{booktabs}

\newtheorem{theorem}{Theorem}
\newtheorem{definition}[theorem]{Definition}

\newtheorem{lemma}[theorem]{Lemma}
\newtheorem{corollary}[theorem]{Corollary}

\newtheorem{problem}[theorem]{Problem}

\newcommand{\bP}{\mathbb{P}}

\newcommand{\bI}{\mathbb{I}}

\newcommand{\rd}{\mathrm{d}}

\newcommand{\bvec}[1]{\mathbf{#1}}
\newcommand{\mymathbb}[1]{\bm{#1}}

\newcommand{\vc}{\bvec{c}}

\newcommand{\vk}{\bvec{k}}

\newcommand{\vp}{\bvec{p}}
\newcommand{\vq}{\bvec{q}}

\newcommand{\diag}{\mathrm{diag}~}
\renewcommand{\Re}{\mathrm{Re}}
\renewcommand{\Im}{\mathrm{Im}}
\newcommand{\I}{\mathrm{i}}

\newcommand{\mc}[1]{\mathcal{#1}}
\newcommand{\mf}[1]{\mathfrak{#1}}
\newcommand{\wt}[1]{\widetilde{#1}}

\newcommand{\Or}{\mathcal{O}}

\newcommand{\NN}{\mathbb{N}}
\newcommand{\RR}{\mathbb{R}}
\newcommand{\CC}{\mathbb{C}}

\newcommand{\bPP}{\mathbb{P}}

\newcommand{\expt}[1]{\mathbb{E}\left( #1 \right)}
\newcommand{\expl}{\mathrm{exp}}

\renewcommand{\Tr}[1]{\text{tr}\left(#1\right)}

\newcommand{\myargmin}{\mathop{\mathrm{argmin}}}

\newcommand{\ceil}[1]{\left\lceil#1\right\rceil}
\newcommand{\spans}{\mathrm{span}}
\newcommand{\matrepwrt}[2]{\left[#1\right]_{#2}}

\newcommand{\fsim}{\text{FsimGate}\xspace}
\newcommand{\scp}[1]{{(#1)}}

\usepackage[normalem]{ulem}

\newcommand{\REV}[1]{{#1}}

\makeatletter
\def\@hangfrom@section#1#2#3{\@hangfrom{#1#2}#3}
\def\@hangfroms@section#1#2{#1#2}
\makeatother

\usepackage{array}

\usepackage{booktabs}
\newcolumntype{L}[1]{>{\raggedright\arraybackslash}p{#1}}

\newcommand{\SP}{{Supplementary Information}}
\newcommand{\SPN}{{Supplementary Note}\xspace}

\usepackage{titletoc}
  \titlecontents{subsection}
 [1.5em] %
 {\smallskip}
 {\thecontentslabel\hspace{1.02em}}
{\hspace*{2.32em}}
 {\,\,\titlerule*[0.77pc]{.}\contentspage}

\begin{document}
	\title{Optimal Low-Depth Quantum Signal-Processing Phase Estimation   }
	 
        \author{Yulong Dong}
        \email[Electronic address: ]{dongyl@berkeley.edu}
        \affiliation{Google Quantum AI, Venice, California 90291, USA.}
        \affiliation{Department of Mathematics, University of California, Berkeley, California 94720, USA.} 

        	\author{Jonathan A. Gross}
	\affiliation{Google Quantum AI, Venice, California 90291, USA.}
 
\author{Murphy Yuezhen Niu}
	\email[Electronic address: ]{murphyniu@ucsb.edu}
	\affiliation{Google Quantum AI, Venice, California 90291, USA.}
  \affiliation{Department of Computer Science, University of California, Santa Barbara, California, 93106, USA.}

	\begin{abstract}

Quantum effects like entanglement and coherent amplification can be used to drastically enhance the accuracy of quantum parameter estimation beyond classical limits. However, challenges such as decoherence and time-dependent errors hinder Heisenberg-limited amplification. We introduce Quantum Signal-Processing Phase Estimation algorithms that are robust against these challenges and achieve optimal performance as dictated by the Cram\'{e}r-Rao bound. These algorithms use quantum signal transformation to decouple interdependent phase parameters into largely orthogonal ones, ensuring that time-dependent errors in one do not compromise the accuracy of learning the other. Combining provably optimal classical estimation with near-optimal quantum circuit design, our approach achieves a standard deviation accuracy of $10^{-4}$ radians for estimating unwanted swap angles in superconducting two-qubit experiments, using low-depth ($<10$) circuits. This represents up to two orders of magnitude improvement over existing methods. Theoretically and numerically, we demonstrate the optimality of our algorithm against time-dependent phase errors, observing that the variance of the time-sensitive parameter $\varphi$ scales faster than the asymptotic Heisenberg scaling in the small-depth regime. Our results are rigorously validated against the quantum Fisher information, confirming our protocol's ability to achieve unmatched precision for two-qubit gate learning.

\end{abstract}

	\maketitle


\renewcommand{\figurename}{Fig.}
\renewcommand{\tablename}{Tab.}
\setcounter{algorithm}{1}
 
Quantum metrology's efficiency is fundamentally influenced by two critical factors: the Heisenberg limit, which defines how accuracy scales with quantum resources, and the coefficients of this scaling. While a variety of quantum metrology strategies~\cite{kimmel2015robust,neill_accurately_2021,arute_observation_2020} successfully adhere to the Heisenberg scaling, the real challenge lies in achieving or even addressing optimality in the scaling coefficients with realistic constraints. This aspect is particularly vital for applications in quantum error correction, where achieving fault-tolerant thresholds demands exceptionally high accuracy in quantum gate characterization. The necessity for deep circuitry, a significant hurdle in practical applications, stems directly from the lack of optimality in these scaling coefficients. This inefficiency is compounded by the challenges of finite coherence times and the amplification of drift errors from low-frequency noise or control fluctuations.  Therefore, current quantum metrology protocols, limited to accuracy levels between $10^{-2}$ and $10^{-3}$ radians~\cite{neill_accurately_2021,arute_observation_2020} for estimating gate angles, often fall short of the accuracy~($\sim 10^{-4}$) needed to verify the crossing of fault-tolerant error threshold for quantum error correction and other near-term quantum applications.
 
In quantum metrology for gate calibration, two primary approaches are used: robust phase estimation (RPE) and randomized benchmarking. RPE, along with its extensions like Floquet calibration, can achieve the Heisenberg limit under ideal conditions and are robust against state preparation and measurement (SPAM) errors across both single and multi-qubit gates~\cite{kimmel2015robust, neill_accurately_2021, arute_observation_2020}. However, its practical implementation is limited by the need for deep circuits and resource-intensive, iterative black-box optimizations to ensure accurate calibration. Moreover, its precision drops to between $10^{-1}$ and $10^{-2}$ radians when dealing with the time-dependent drifts common in superconducting qubit systems. Though recent progress \cite{NiLiYing2023,DingLin2023} refine the multiplicative overhead of  RPE cost, they focus on the asymptotic regime rather than physically short-depth and noise-robust implementation. Meanwhile, the randomized benchmarking approach, although general, forgoes Heisenberg scaling. It also requires extensive circuit depth to accurately estimate parameters and lacks sensitivity to coherent rotation errors~\cite{PhysRevLett.106.180504,PhysRevA.77.012307,PhysRevA.85.042311,GoogleQuantumSupremacy2019}. As a result, these prevalent quantum metrology techniques have not yet achieved optimal performance in practice for learning two-qubit gates. 

The effectiveness of a quantum metrology scheme can be assessed by the fundamental limits set by both classical and quantum Cram\'{e}r-Rao bounds~\cite{rao2008cramer,BraunsteinCaves1994}. To meet the classical Cram\'{e}r-Rao bounds, the inference subroutines that process measurements to estimate quantum gate parameters must be optimal. Similarly, to achieve the quantum Cram\'{e}r-Rao bounds, 
 the quantum measurement schemes, characterized by any Positive-Operator-Valued Measurements, must also be optimal~\cite{BraunsteinCaves1994}. Realizing optimality in both aspects requires refining classical post-processing techniques and the quantum circuit designs used in quantum gate calibration. In this work, we show that the RPE-based multi-parameter phase estimation method requires an additional phase-matching condition: the diagonal elements of the gates must share the same phase. If this condition is violated, the RPE-based method will fail to achieve both Heisenberg scaling and the classical Cram\'{e}r-Rao bounds when there's more than one phase to learn, even in the absence of quantum noise.

 In this work, we propose a metrology protocol that is, by design, robust against realistic time-dependent errors and only requires shallow~($<10$) circuits to achieve up to two orders of magnitudes of improvement over existing methods in the precision of gate-parameter estimates.

 \section*{Results}
We harness the analytical structure of a class of quantum-metrology circuits using a theoretical toolbox from classical signal processing \cite{Kay1989,ShenLiu2019}, Quantum Signal Processing~(QSP) \cite{LowChuang2017,GilyenSuLowEtAl2019,WangDongLin2021} and polynomial analysis \cite{Markov1890}.
QSP allows us to treat the inherent quantum dynamics as input quantum signals and perform universal transformations on the input to realize targeted quantum dynamics as output. Classical signal processing provides methods \REV{for} analyzing these transformed signals to produce robust estimations.
We propose a general gate model, which we term $U$-gate model, that encapsulates two-level invariant subspace structure in the native gate sets of superconducting, neutral atoms, and ion trap quantum computers. We parameterize the subspace of interest in our model $U$-gates with a set of angle parameters, and provide a metrology algorithm \REV{for} learning the swap angle $\theta$ and the phase difference $\varphi$.

Our metrology algorithm, which we term Quantum Signal-Processing Phase Estimation (QSPE), separates the estimation of the parameter-free from time-dependent errors~($\theta$) from that which is affected by time-dependent drift~($\varphi$).
Interestingly, the parameter $\varphi$ variance shrinks faster than Heisenberg scaling concerning circuit depth in the pre-asymptotic low-depth regime of experimental interest.
We analyze the stability of our protocol in the presence of realistic experimental noise and sampling errors. 
We prove that our method achieves the Cram\'{e}r-Rao lower bound in the presence of sampling errors and achieves up to $10^{-4}$ STD accuracy in learning swap angle $\theta$ in both simulation and experimental deployments on superconducting qubits.
We provide the evaluation of metrology protocol's quantum Fisher information (QFI) and show that our approach is a factor of two above the quantum CRLB (QCRLB). 
Furthermore, we demonstrate an interesting transition of the optimal metrology variance scaling as a function of circuit depth $d$ from the pre-asymptotic regime $d \ll 1/\theta$ to the Heisenberg limit $d \to \infty$.
	
We summarize the main results of our metrology algorithm and start by defining the metrology problem, the learning of a general $U$-gate, followed by an analysis of the QSP circuit with $U$-gates used in our algorithm. Building upon these closed-form results, we propose a phase estimation method combining Fourier analysis with QSP to separate the two gate parameters of interest in their functional forms. 
Our estimation algorithm enables fast and deterministic data post-processing using only direct linear algebra operations rather than iterative black-box optimizations used in multi-parameter robust phase estimation~\cite{kimmel2015robust,neill_accurately_2021}. Moreover, separating the inference of $\theta$ and $\varphi$ enhances the robustness of the phase estimation method against time-dependent errors that predominantly affect the gate parameter $\varphi$, which arise from the time-dependent qubit frequency noise~\cite{martinis2003decoherence,foxen2020,wudarski2023characterizing}. The analysis and modeling of Monte Carlo sampling error also indicate that our phase estimation method achieves the fundamental quantum metrology optimality in a practical regime against realistic errors for near-term devices. We also provide a comprehensive mathematical analysis of methods based on robust phase estimation \cite{arute_observation_2020,neill_accurately_2021}, and prove that the vulnerability of phase angle $\varphi$ to time-dependent errors ultimately renders the estimation accuracy of the swap angle $\theta$ exponentially worse than Heisenberg limit.
In addition, we proposed a noise-robust QSPE protocol that enables the estimation of gate parameters even when the gate angles fall outside the confidence regime for phase estimation.
Furthermore, we demonstrated QSPE experimentally on 34 superconducting qubits using the Google Quantum AI team's hardware, achieving $10^{-4}$ two-qubit phase estimation accuracy in practice, which improves by two orders of magnitude over standard previous methods.
Lastly, we include an empirical noise-robust QSPE protocol that enables the estimation of gate parameters even when the gate angles fall outside of the approximation regime for phase estimation.

\subsection*{General gate model with two-level system invariant subspace}
Our QSPE technique applies to any gate that contains a two-level invariant subspace $\mathcal{B}$, such that states within $\mathcal{B}$ remain within $\mathcal{B}$ when acted upon.
Here, we define a general two-level unitary model, which we term the $U$-gate model, around which we base our framework.
We parameterize this model gate when restricted to the subspace $\mathcal{B}$  as:

\begin{align}
    [U\left(\theta, \varphi, \chi, \psi\right)]_{\mathcal{B}} = \left(\begin{array}{cc}
       e^{-i\varphi - i\psi}\cos\theta  & -ie^{i\chi - i\psi}\sin\theta \\
       -ie^{-i\chi - i\psi}\sin\theta  & e^{i\varphi - i\psi}\cos\theta
    \end{array}\right)
\end{align}

We refer to $\theta$ as the swap angle, $\varphi$ as the phase difference, $\chi$ as the phase accumulation during the swap and $\psi$ as the global phase present in the entire gate. 
Note that $\chi$ cannot be amplified on the same basis as $\theta$ and $\varphi$. 

We emphasize that many quantum gates~(including single- and multi-qubit gates) can be   reduced to the $U$-gate model and thus be characterized by QSPE. For example, in our experimental deployment on the Google Quantum AI superconducting qubits, we study Fermionic Simulation Gates (\fsim{}s), which are native to superconducting qubit computers. 
We also remark that not all parameters in our model are necessary for every gate.

The problem of calibrating $U$-gate is to estimate $\theta$, $\varphi$, and $\chi$ for some invariant subspace of $U$ against realistic noise given access to the $U$-gate and basic quantum operations, which we now formalize:

\begin{problem}[Calibrating $U$-gate]\label{prob:qspc}
    Given access to an unknown $U$-gate, basic quantum gates and projective measurements, how to estimate gate parameters with bounded error and finite measurement samples?
\end{problem}

Previous metrology methods \cite{kimmel2015robust,neill_accurately_2021,arute_observation_2020} based on optimal measurements~\cite{BraunsteinCaves1994} for achieving the Heisenberg limit fall short of providing sufficient accuracy in $\theta$ when  $\theta \ll 1$.
Two significant factors limit these traditionally-regarded ``optimal'' metrology schemes.
First, the accuracy in $\theta$ depends on the amplification factor, i.e., maximum circuit depth.
The relatively low qubit coherence times of superconducting qubits render randomization-based quantum gate learning techniques \cite{PhysRevA.77.012307, PhysRevLett.106.180504,GoogleQuantumSupremacy2019} impractical due to their inefficient circuit depths. The finite low qubit coherence times~\cite{GoogleQuantumSupremacy2019} of superconducting qubits render randomization-based quantum gate learning techniques \cite{PhysRevA.77.012307, PhysRevLett.106.180504,GoogleQuantumSupremacy2019} impractical due to their inefficient circuit depths needed to achieve the desired accuracy close to surface code threshold~\cite{acharya2024quantum}.
Techniques based on robust phase estimation can require prohibitive depths to achieve a \REV{meaningful} full signal-to-noise ratio for small $\theta$ and   require iterative black-box optimizations for their estimators~\cite{neill_accurately_2021} instead of fast, deterministic post-processing for single-parameter phase estimation~\cite{kimmel2015robust}.
Second, time-dependent  unitary error in $\varphi$ is prevalent in architectures like Google's superconducting quantum computers~\cite{google2020hartree}, which invalidates basic assumptions in traditionally optimal and Heisenberg-limit-achieving metrology schemes.

 \subsection*{Quantum signal-processing phase estimation (QSPE)}

 In this work, we provide a low-depth phase estimation method for estimating the angles in some invariant subspace of an unknown $U$-gate when the swap angle is small, of order below $10^{-3}$, while facing  realistic time-dependent phase errors in $\varphi$. The phase estimation method leverages the structure of periodic circuits analyzed by classical and quantum signal processing and provides a framework to engineer quantum metrology from the perspective of universal quantum signal transformation. We call this type of metrology method Quantum Signal-Processing Phase Estimation (QSPE). Let $\omega$ be a tunable phase parameter and $\{\ket{0_\ell}, \ket{1_\ell}\}$ be the logical basis of the two-level space of interest. Then, QSPE measures the transition probabilities of the quantum circuits corresponding to logical Bell states $\ket{+_\ell} := \frac{1}{\sqrt{2}}\left( \ket{0_\ell} + \ket{1_\ell} \right)$ and $\ket{\I_\ell} := \frac{1}{\sqrt{2}}\left( \ket{0_\ell} + \I \ket{1_\ell} \right)$. The transition is measured with respect to the logical basis state $\ket{0_\ell}$. We depict the quantum circuit for QSPE in \cref{fig:qsp-pc-circuit} with an exemplified two-qubit $U$-gate for simplicity. In the example, the two-level subspace is set to the single-excitation subspace with basis $\ket{0_\ell} = \ket{01}$ and $\ket{1_\ell} = \ket{10}$. Then, the logical Bell state coincides with the conventional Bell state. We remark that the quantum circuit for QSPE can be generalized to multi-qubit cases following the recipe outlined in this paragraph. Details and the quantum circuit for QSPE in a general setup are provided in \SPN 2.
 The transition probability corresponding to the logical Bell state $\ket{+_\ell}$ is denoted as $p_X(\omega; \theta, \varphi, \chi)$, and that corresponding to the logical Bell state $\ket{\I_\ell}$ is denoted as $p_Y(\omega; \theta, \varphi, \chi)$.

      The measurement probabilities can be viewed as the expectation values of the logical Pauli operators:

	\begin{align} 
	    & \langle X_\ell \rangle(\omega; \theta, \varphi, \chi) = 2 p_X(\omega; \theta, \varphi, \chi) - 1,\\
            & \langle Y_\ell \rangle(\omega; \theta, \varphi, \chi) = 2 p_Y(\omega; \theta, \varphi, \chi) - 1,\\ 
	   & \mf{h}(\omega; \theta, \varphi, \chi) = \langle \frac{1}{2} \big( X_\ell + \I Y_\ell\big) \rangle(\omega; \theta, \varphi, \chi) = \langle a_\ell \rangle (\omega; \theta, \varphi, \chi).
	\end{align}
The physical meaning of the reconstructed function $\mf{h}(\omega; \theta, \varphi, \chi)$ coincides with the expected value of the logical annihilation operator which gauges the magnitude of the coherent rotation error in the single-excitation subspace. This observation qualitatively justifies the potential of the candidate function in the proposed phase estimation method.

As outlined in \SPN 3,
the reconstructed function derived from measurement probabilities admits an approximated expansion $\mf{h}(\omega; \theta, \varphi, \chi) = \sum_{-d+1}^{d-1} c_k e^{2\I k \omega}$ and when $d \theta \le 1$, the coefficients are
\begin{equation}
    c_k \approx \I \theta e^{-\I\chi} e^{-\I(2k+1)\varphi} \text{ with } k = 0, \cdots, d- 1.
\end{equation}

Due to Fourier expansion, sampling the reconstructed function on $(2d-1)$ distinct $\omega$ points is sufficient to characterize its information completely. For efficient processing with the Fast Fourier Transformation (FFT), we choose a set of $\omega$ points that are equally spaced. This choice of equally-spaced sampling points not only ensures numerical stability, as demonstrated in textbooks on numerical analysis \cite{trefethen2019approximation}, but also simplifies error analysis, as described in \SPN 4.
The second important consequence of this result is that the dependencies on $\theta$ and $\varphi$ are completely separated into the amplitude and the phase of the Fourier coefficients, respectively. The estimation problems of $\theta$ and $\varphi$ are then reduced to two independent linear regression problems. As $\chi$ is not considered, we apply a sequential phase difference to distill the angle $\varphi$:
\begin{equation}
    \boldsymbol{\Delta} = (\Delta_0, \cdots, \Delta_{d - 2})^\top, \Delta_k := \mathsf{phase}(c_k \overline{c_{k + 1}}) = 2 \varphi.
\end{equation}
Considering the Monte Carlo sampling error due to the finite number of measurements, we derive in \SPN 4
the linear-regression-based estimators of the relevant angles:
		\begin{equation}\label{eqn:main-qspcf-estimators}
		\hat{\theta} = \frac{1}{d} \sum_{k=0}^{d-1} \abs{c_k} \quad \text{ and } \quad  \hat{\varphi} = \frac{1}{2} \frac{\boldsymbol{\mymathbb{1}}^\top \mf{D}^{-1} \boldsymbol{\Delta}}{\boldsymbol{\mymathbb{1}}^\top \mf{D}^{-1} \boldsymbol{\mymathbb{1}}}.
		\end{equation}
  Here, $\boldsymbol{\mymathbb{1}}$ is an all-one vector and $\mf{D}$ is a $(d-1)$-by-$(d-1)$ constant tridiagonal matrix which coincides with the discrete Laplacian of a central finite difference form (see Definition 15 in \SPN 4 for more details). 
  The structure of $\mf{D}$ comes from differentiating experimental noises when applying sequential phase difference. To obtain $\hat{\chi}$, we defer the task to the metrology circuit in~\cite[Fig. S5]{arute_observation_2020} and do not use QSPE for the task. The main workflow of the QSPE is depicted in \cref{fig:main-qspc-main} and the Algorithm displayed in Box \ref{tab:alg:main-qsp-pc}.

 \subsection*{Classical and quantum optimality analysis}
	\REV{The performance of the statistical estimators is measured by their biases and variances.} In \SPN 4 B,
    we derive the performance of QSPE estimators with the following theorem by treating QSPE as linear statistical models. Furthermore, in \SPN 6,
    we show that QSPE estimators in \cref{eqn:main-qspcf-estimators} are optimal by saturating the Cram\'{e}r-Rao lower bound (CRLB) of the estimation problem.
    \begin{theorem}\label{prop:variance-qsp-pc-fourier}
    	In the regime $d \ll 1/\theta$, QSPE estimators in \cref{eqn:main-qspcf-estimators} are unbiased and with variances:
    	\begin{equation}
    		\begin{split}
    		& \mathrm{Var}(\hat{\theta}) \approx \frac{1}{8 d^2 M} \quad \text{ and } \quad \mathrm{Var}(\hat{\varphi}) \approx \frac{3}{8 d^4\theta^2 M}
    		\end{split}
    	\end{equation}
     where $M$ is the number of measurement shots in each experiment.
    \end{theorem}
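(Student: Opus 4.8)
\emph{Proof proposal.} The plan is to treat QSPE as a linear statistical model and propagate the Monte Carlo error from the raw Bernoulli data, through the FFT that defines the Fourier coefficients, into the estimators of \cref{eqn:main-qspcf-estimators}. First I would model the data layer: each transition probability $p_X(\omega_j;\cdot)$ and $p_Y(\omega_j;\cdot)$ is an empirical mean of $M$ Bernoulli shots, so $\langle X_\ell\rangle$ and $\langle Y_\ell\rangle$ are unbiased with variance $(1-\langle\cdot\rangle^2)/M$; since in the regime $d\theta\le 1$ the reconstructed function satisfies $\mf{h}=O(\theta)$, this is $1/M+O(\theta^2/M)$, and because $X_\ell$ and $Y_\ell$ are obtained from distinct input logical Bell states the two errors are independent. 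Hence $\delta\mf{h}(\omega_j)=\tfrac12\delta\langle X_\ell\rangle+\tfrac{\I}{2}\delta\langle Y_\ell\rangle$ is, to leading order, a circularly symmetric complex perturbation with per-real-component variance $1/(4M)$, independent across the $2d-1$ equally spaced sampling points.

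Next I would push this noise through the extraction of the $c_k$. Because the sample points are equally spaced, the map $\{\mf{h}(\omega_j)\}\mapsto\{c_k\}$ is $(2d-1)^{-1/2}$ times a unitary DFT, so the $\delta c_k$ inherit mutual independence and a common isotropic variance $\sigma_c^2:=1/\bigl(4M(2d-1)\bigr)\approx 1/(8dM)$ per real component, with vanishing pseudo-covariance. Writing $c_k=\theta u_k+\delta c_k$ with $u_k:=\I e^{-\I\chi}e^{-\I(2k+1)\varphi}$ the unit phasor predicted by the closed-form expansion, I decompose $\delta c_k$ into components parallel and perpendicular to $u_k$, each of variance $\sigma_c^2$. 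For $\hat\theta$, the linearization $\abs{c_k}=\theta+\Re(\overline{u_k}\delta c_k)+O(\sigma_c^2/\theta)$ gives $\EE\hat\theta=\theta$ to leading order, and since the parallel fluctuations are independent across $k$, $\mathrm{Var}(\hat\theta)=d^{-2}\sum_{k=0}^{d-1}\sigma_c^2=\sigma_c^2/d\approx 1/(8d^2M)$.

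For $\hat\varphi$ I would linearize the phase of $c_k\overline{c_{k+1}}$ about its noiseless value $\theta^2 e^{2\I\varphi}$. Using the identities $e^{-2\I\varphi}\overline{u_{k+1}}=\overline{u_k}$ and $e^{-2\I\varphi}u_k=u_{k+1}$, one finds $\Delta_k=2\varphi+\xi_k-\xi_{k+1}+O(\sigma_c^2/\theta^2)$ with $\xi_k:=\theta^{-1}\Im(\overline{u_k}\delta c_k)$ independent across $k$ of variance $\sigma_\xi^2:=\sigma_c^2/\theta^2\approx 1/(8d\theta^2 M)$. Thus $\boldsymbol{\Delta}=2\varphi\,\boldsymbol{\mymathbb{1}}+A\boldsymbol{\xi}$ with $A$ the $(d-1)\times d$ forward-difference matrix, so $\mathrm{Cov}(\boldsymbol{\Delta})=\sigma_\xi^2 AA^\top=\sigma_\xi^2\mf{D}$; the estimator $\hat\varphi$ in \cref{eqn:main-qspcf-estimators} is exactly the generalized-least-squares (hence BLUE) estimator for this model, so it is unbiased and $\mathrm{Var}(\hat\varphi)=\tfrac14\sigma_\xi^2/\bigl(\boldsymbol{\mymathbb{1}}^\top\mf{D}^{-1}\boldsymbol{\mymathbb{1}}\bigr)$. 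The last step is the exact evaluation $\boldsymbol{\mymathbb{1}}^\top\mf{D}^{-1}\boldsymbol{\mymathbb{1}}=(d-1)d(d+1)/12$, obtained by solving $\mf{D}x=\boldsymbol{\mymathbb{1}}$ (solution $x_i=\tfrac12 i(d-i)$) and summing; this yields $\mathrm{Var}(\hat\varphi)\approx 3\sigma_\xi^2/d^3\approx 3/(8d^4\theta^2 M)$.

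The main obstacle I expect is controlling the linearizations uniformly in $k$: the perpendicular fluctuations contribute an $O(\sigma_c^2/\theta)$ bias to $\abs{c_k}$ and an $O(\sigma_c^2/\theta^2)$ correction to each $\Delta_k$ (alongside $2\pi$-wrapping of $\mathsf{phase}$), so the unbiasedness claim and the stated leading-order variances hold only when $\theta\sqrt{dM}\gg 1$; I would make this the explicit shot-budget hypothesis underlying the ``$\approx$'' signs, and separately note that the closed form $c_k\approx\I\theta e^{-\I\chi}e^{-\I(2k+1)\varphi}$ itself carries an $O(d\theta)$ modeling error that is subleading precisely in the stated regime $d\ll 1/\theta$. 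A secondary technical point is that $\mathrm{Var}(\Re\,\delta\mf{h})$ and $\mathrm{Var}(\Im\,\delta\mf{h})$ differ at $O(\theta^2/M)$, slightly breaking the isotropy used when invoking the DFT; this discrepancy is absorbed into the same subleading error.
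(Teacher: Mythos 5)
Your proposal is correct and follows essentially the same route as the paper's Supplementary Note 4: Gaussian modeling of the Bernoulli sampling error, propagation through the unitary DFT to obtain nearly uncorrelated isotropic noise $\sigma_c^2 \approx 1/(4M(2d-1))$ on the Fourier coefficients, Tretter-style linearization into amplitude and phase components, and the generalized-least-squares variance $\tfrac14\sigma_\xi^2/(\boldsymbol{\mymathbb{1}}^\top\mf{D}^{-1}\boldsymbol{\mymathbb{1}})$ with $\boldsymbol{\mymathbb{1}}^\top\mf{D}^{-1}\boldsymbol{\mymathbb{1}}=(d-1)d(d+1)/12$. The only cosmetic difference is that you evaluate the quadratic form by solving the tridiagonal system directly while the paper invokes Kay's closed-form weights, and your concluding caveats (high-SNR condition $dM\theta^2\gg 1$, phase wrapping, residual anisotropy) mirror the paper's own qualifications.
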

    We note that the unbiasedness of these estimators holds up to a high-order bias, which is negligible in the target regime. For further details, please refer to \SPN 3.
    
  \paragraph{Comparison with Heisenberg limit.} According to the framework developed in Ref. \cite{Lloyd2006}, the variance of any quantum metrology is lower bound by the Heisenberg limit. It indicates that in our experimental setup when $d$ is large enough,  optimal variance scales as $1/(d^3 M)$. This seemingly contradicts \cref{prop:variance-qsp-pc-fourier}, where the variance of QSPE $\varphi$-estimator can achieve $1/(d^4 M)$. This counterintuitive conclusion is due to the pre-asymptotic regime $d \ll 1/\theta$. In \SPN 6,
  we analyze the CRLB of QSPE. The optimal variance given by CRLB is exactly solvable in the pre-asymptotic regime $d \ll 1/\theta$.

  The key reason behind such faster than Heisenberg limit scaling in pre-asymptotic regime depends on the unique structure of the QSPE circuit: the measurement outcome (Supplementary Equation (7))  concentrates around a constant value regardless of the gate parameter values. Yet when $d$ is large enough to pass to the asymptotic regime, measurement probabilities will take arbitrary values. Furthermore, the analysis of the CRLB suggests that the optimal asymptotic variance agrees with the Heisenberg limit when $d$ is \REV{large} enough. This nontrivial transition of optimal variance is theoretically analyzed and numerically justified in \SPN 6.
  We summarize this nontrivial transition of the optimal variance scaling of QSPE as a phase diagram in \cref{fig:crlb-qspe} a. To numerically demonstrate the transition, we compute the exact CRLB of QSPE when $\theta = 1\times 10^{-2}$ and $\theta = 1\times 10^{-3}$. In \cref{fig:crlb-qspe} b, the slope of the curve in log-log scale exhibits a clear transition before and after $d = 1/\theta$, which supports the phase diagram in \cref{fig:crlb-qspe} a. 
  Detailed theoretical and numerical discussions of the transition are carried out in \SPN 6.

\paragraph{Optimality analysis using Cram\'{e}r-Rao bounds.} Analyzing Cram\'{e}r-Rao bounds suggests the optimality of a quantum metrology protocol or the suboptimality leading to further improvement. Given an initialization and measurement, the optimality lies in the analysis of the classical CRLB, which investigates the most information one can retrieve from measurement probabilities. As outlined in \SPN 6 A,
the CRLBs are solvable when $d \ll 1 / \theta$, which exactly agree with the variance of our estimators derived in \cref{prop:variance-qsp-pc-fourier}. The optimality of our estimator is also validated from numerical simulation depicted in \cref{fig:crlb-qspe} b. Such agreement reveals the optimality of our data post-processing. Although linear-regression-based estimators are used, this linearization does not sacrifice the information retrieval in the experimental data. Furthermore, in contrast to other iterative inference methods, our estimators directly estimate angles using basic linear algebra operations, to which stability and fast processing are credited.

Despite the informative indication by analyzing CRLB, it cannot provide direct suggestions on improving initialization and measurement. Such generalization demands the switch to quantum Cram\'{e}r-Rao lower bound (QCRLB) which requires upper bounding the quantum Fisher information (QFI). As a quantum analog of the classical Fisher information, QFI lies in the center of quantum metrology by providing a fundamental lower bound on the accuracy one can infer from the system of a given resource limit. According to the analysis in \cite{KullGuerinVerstraete2020}, the QFI is an upper bound on the Fisher information over all possible measurements. For brevity, we only consider the inference of $\theta$ and hold all other unknown parameters constant in the analysis. However, our analysis can be generalized to the multiple parameter inference by adopting the multi-variable QFI in Ref. \cite{KullGuerinVerstraete2020}. In \SPN 6 D,
we derive that the average QFI is upper bounded by an integral:
\begin{equation}
    \mf{F}_\theta \le \frac{4}{\pi} \int_0^\pi \frac{\sin^2(d(\omega - \varphi))}{\sin^2((\omega - \varphi))} \rd \omega = 4 d.
\end{equation}
Here, the integrand gauges the information contained in an experiment with angle $\omega$. The integral stands for the use of equally spaced $\omega$ samples due to the absence of accurate information of $\varphi$. It is worth noting that the integrand is sharply peaked at $4 d^2$ when $\omega = \varphi$ which is also referred to as the phase-matching condition. The missing information of $\varphi$ downgrades the average QFI from $4d^2$ to $4 d$. However, as discussed in the following subsection, the lack of information about $\varphi$ in existing methods can potentially significantly degrade the Fisher information to $\Or(\log(d))$, thereby severely impeding the achievement of Heisenberg-limit scaling in estimation accuracy. Furthermore, this also suggests a finer estimation of $\theta$ when some rough information of $\varphi$ is provided either as a priori or from some preliminary estimation. This improvement is discussed in \SPN 4 C.
Consequently, the QCRLB of the QSPE formalism is
\begin{equation}
    \mathrm{Var}(\hat{\theta}) \ge \mathrm{QCRLB} \ge \frac{1}{16 d^2 M}.
\end{equation}
Compared with \cref{prop:variance-qsp-pc-fourier}, we see differentiation in a constant suboptimal factor of $2$, which is explainable. Note that we use two logical Bell states to perform experiments. The advantage is the experimental probabilities of these two experiments form a conjugate pair to reconstruct a complex function \REV{for the ease of analysis}. This complex function and its properties (see Theorems 7 and 9 in \SPN 3) 
eventually lead to a simple robust statistical estimator requiring only light computation. In contrast, the data generated from the initialization of one Bell state still contains full information on the parameters to be estimated. However, the highly nonlinear and oscillatory dependency renders the practical inference challenging. Hence, the factor of $2$ is due to the use of a pair of Bell states. Although the QFI indicates that inference variance can be lower by removing such redundancy in the initialization, the nature of ignoring practical ease makes it hard to achieve.

\subsection*{Advantage of QSPE over prior arts}

The key behind the success of QSPE is the isolation of $\theta$ and $\varphi$ estimations in Fourier space. This enables the robustness of individual angle estimation against the error and noise in another angle. A prior art that is widely used in the gate calibration in Google's superconducting platform is periodic calibration or Floquet calibration \cite{arute_observation_2020,neill_accurately_2021}. Periodic calibration measures the transition probability between tensor product states $\ket{10}$ and $\ket{01}$ of a periodic quantum circuit with $d$ $U$-gates inside. This differs from our QSPE method which initializes Bell states, though the main body of the quantum circuit is the same. To provide an estimation of angles, periodic calibration uses a black-box optimization to minimize the distance between the parametric ansatz and experimentally measured values. Periodic calibration is based on RPE \cite{kimmel2015robust, neill_accurately_2021, arute_observation_2020} and generalizes RPE to the estimation of multiple angles. Though RPE provably saturates the Heisenberg limit, the actual performance of periodic calibration highly relies on the satisfaction of the so-called phase-matching condition, namely, $\omega = \varphi$. Previous experiments suggest that the violation of the condition would lower the estimation accuracy of $\theta$ angle by a few magnitudes~\cite{neill_accurately_2021}. Because the phase angle $\varphi$ is vulnerable to time-dependent drift errors, the uncertainty of $\varphi$ ultimately ruins the estimation accuracy of the swap angle $\theta$ in periodic calibration. In \SPN 9,
we provide a comprehensive mathematical analysis of periodic calibration and prove that even without complex error and noise, the violation of phase-matching condition makes the estimation variance of $\theta$ scale as $1 / \log_2(d)$. This is exponentially worse than Heisenberg-limit scaling $1 / d^2$ when \REV{the} phase-matching condition is satisfied as depth increases. A formal statement can be found in Theorem 21
in \SPN 9. Moreover, the complex optimization landscape, detailed in \SPN 9 D,
renders the estimation using periodic calibration impractical. Though the periodic calibration with phase-matching, i.e. RPE, has higher Fisher information than QSPE, the hardness of satisfying \REV{the} phase-matching condition due to finite resource and time-dependent drift error renders the ideal high accuracy estimation challenging. In contrast, by averaging over $\omega$ points, our QSPE is more robust against error by separating $\theta$ and $\varphi$ estimation processes. This is also empirically justified using real data derived from quantum devices in \cref{fig:cz_calibrate_exp_qspc,fig:cz_calibrate_compare_var} in later sections.

 \subsection*{Robustness against realistic errors}

 We incorporate error mitigation against three different types of errors in our quantum-metrology routine, which we discuss separately below.
	
	\begin{enumerate}
	    \item Decoherence. Exploiting the analysis in the Fourier space provides a fruitful structure for mitigating decoherence. To illustrate, we propose a mitigation scheme for the globally depolarizing error in \SPN 7 A.
        Numerical simulation shows that the scheme can accurately mitigate the depolarizing error and can drastically improve the performance of QSPE estimators.
	    
	    \item Time-dependent noise. We numerically investigate the robustness of the QSPE estimators against realistic qubit frequency-drift error~\cite{wudarski2023characterizing} based on observation from real experiments. We show in \SPN 7 C
        that the QSPE estimators preserve their accuracy in the presence of this error.
	    
	    \item Readout errors.   In \SPN 7 D,
        we make an explicit resource estimation for sufficiently accurate mitigation of readout errors. 

        \item \REV{Initial state errors. In \SPN 7 E,
        we analyze the induced estimation error due to the error in the initial states. We demonstrate that these induced estimation errors are negligibly small in real experimental settings.} 
	\end{enumerate}
	We deploy these error mitigation techniques to realize QSPE on a real quantum device. The experimental results are given and discussed in the following section.

 \subsection*{Experimental deployment}
 In this section, we review the experimental deployment of our metrology method	and compare it against the leading alternative methods. We consider learning small swap angles in \fsim{}s, which are important for fermionic simulation and native to  Transmon superconducting qubits. \fsim{}s are two-qubit $U$-gates. The invariant subspace, referred to as single-excitation subspace, is spanned by single-excitation basis $\ket{01}$ and $\ket{10}$. CZ gate is a special \fsim{} with zero swap angle. Consequently, \fsim{}s with small swap angles model the imperfect production of CZ gates whose angle parameter estimation is crucial for applications of CZ gate including demonstrating surface code \cite{acharya2022suppressing}.  We refer to \SPN 1 C
 for details. 
 
We use the Google Quantum AI superconducting qubits~\cite{GoogleQuantumSupremacy2019} platform to conduct the experiments described in Algorithm displayed in Box \ref{tab:alg:main-qsp-pc} and \cref{fig:main-qspc-main}. We apply our QSPE method to calibrate $\theta$ and $\varphi$ angles of seventeen pairs of \REV{qubits on which} CZ gates \REV{act}.
Each CZ gate qubit pair is labeled by the coordinates $(x_1, y_1)$ and $(x_2, y_2)$ of both qubits on a two-dimensional-grid architecture. We plot the statistics of the learned gate angles in \cref{fig:cz_calibrate_exp_qspc}: the unwanted swap angle for most qubits are small, of order below $10^{-2}$. In comparison, periodic calibration yields unstable estimates with a highly variant standard deviation across different runs (see Supplementary Figure 2 in \SPN 1 A), a result of its sensitivity to time-dependent errors.

The performance advantage of QSPE over prior art lies in its robustness against time-dependent noise in the single-qubit phase $\varphi$.
In traditional methods, such as XEB and robust phase estimation \cite{neill_accurately_2021}, the measurement observables are nonlinear functions of both $\varphi$ and $\theta$, so if there is time-dependent drift in $\varphi$ during each experiment, or over different repetitions of the same experiment routine, the value of inferred $\theta$ will be directly affected~(see Supplementary Figure 2 in \SPN 1 A).
In comparison, QSPE is tolerant to realistic time-dependent error in $\varphi$ when estimating the swap angle $\theta$ due to the analytic separation between the two parameters through signal transformation, signal processing and Fourier analysis.

To validate the stability of QSPE method, we repeat the same phase estimation routine on each CZ gate pair over 10 independent repetitions.
This allows us to bootstrap the variance of the QSPE estimator on $\theta$ and $\varphi$.
We show the measured variance and mean of the $\theta$ and $\varphi$ estimates on seventeen CZ-gate pairs over different circuit depths $d$ used in QSPE in \cref{fig:cz_calibrate_exp_qspc}. It is important to note that we apply the technique discussed in \SPN 7 A to mitigate globally depolarizing errors using information from Fourier space. This error mitigation procedure estimates a globally depolarizing circuit fidelity $\alpha$ for each pair of qubits on which CZ gates act, as shown in the right-most column of \cref{fig:cz_calibrate_exp_qspc}. We observe that the circuit fidelity demonstrates a clear exponential decay with increasing circuit depth, which is consistent with our theoretical analysis \cite{BoixoIsakovSmelyanskiyEtAl2018}.
We show that on average the variance in $\theta$ estimates is around $10^{-7}$ for a depth-10 QSPE experiment.
This corresponds to $3\times 10^{-4}$ in STD, which is one to two orders of magnitude lower than the value of $\theta$ itself.
In comparison, we also performed the same set of experiments using XEB and compared the results to QSPE in \cref{fig:cz_calibrate_compare_var}.
The variance of $\theta$ inferred by XEB is of order $10^{-4}$~(three orders of magnitude larger than QSPE).
Consequently, we show that XEB and periodic calibration are insufficient to learn the value of $\theta$ in our experiments with a larger than unity signal-to-noise ratio.

\subsection*{Generalization of QSPE for an extended range of swap angles}
In earlier sections, we demonstrate the QSPE algorithm's effectiveness for estimating angles when the swap angle is of small magnitude. The actual use of QSPE is not limited to this parameter regime. In this subsection, we propose a generalization of QSPE for general swap angles. Theoretical analyses in \SPN 4 show that noise in Fourier space is significantly reduced, which consequently suggests the algorithm design using Fourier-transformed data. The key observation is that the exact expression of the amplitude of Fourier coefficients, which is referred to as $A_k(\theta)$, can be efficiently solved. Hence, given a set of experimental data, we can estimate the swap angle $\theta$ by aligning experimental amplitudes with theoretical expressions, effectively solving systems of nonlinear equations. When using $d$ $U$-gates per circuit, as discussed in \SPN 5, we outline an empirical noise-robust algorithm estimating $\theta$ to error $\epsilon$ using $\Or(d \log(d) \epsilon^{-1})$ classical operations. In the numerical results in \SPN 5, we demonstrate that the $\theta$-estimation error remains below $5 \times 10^{-4}$ for general $\theta$, even with only five $U$-gates. Thanks to Fourier transformation, the angle $\varphi$ is inferred from the phases of Fourier coefficients, decoupling its estimation from $\theta$. It also allows the use of the QSPE $\varphi$-estimator in \cref{eqn:main-qspcf-estimators} for varied swap angles, despite the signal-to-noise ratio varies with different swap angles. This is analyzed in the numerical results in \SPN 5.

 \subsection*{Learning quantum crosstalk with QSPE}

 An important  \REV{application} of  QSPE, thanks to its exceptional sensitivity in measuring small rotations, is in learning quantum crosstalk amplitudes. Quantum crosstalk errors arise from unintended quantum interactions between qubits, which become more problematic when gates operate together. These errors create correlations that are either spatial or temporal, posing a challenge to achieving fault-tolerant quantum computation. In the case of tunable superconducting transmon qubits~\cite{neill2017path}, interactions between two qubits are facilitated by a third ``coupler'' qubit placed between each pair. This setup allows for the two-qubit interaction to be controlled—turned on or off—by adjusting the coupler qubit's frequency. Yet, even with control over the coupler qubit's frequency, there remains  a non-zero amount of coupling between neighboring qubits' different levels. This coupling mimics the system's Bose-Hubbard coupling Hamiltonian: $H_{\text{crosstalk}}= g_{\text{crosstalk}}\left(\hat{a}_1\hat{a}_2^\dagger + \hat{a}_1^\dagger\hat{a}_2\right)$, where $\hat{a}_i$ denotes the bosonic annihilation operator for the $i$-th qubit.

Furthermore, the main effect of quantum crosstalk in qubit subspace can be described by a rotation within the single qubit subspaces $\text{span}\{\ket{10}, \ket{01}\}$ of the two qubits affected by crosstalk. Without any gate operation, the crosstalk's impact over a period $\Delta t$ follows the same pattern as shown in Supplementary Equation (4), where $\theta = g_{\text{crosstalk}} \Delta t$ depends on the crosstalk strength and the duration of crosstalk interaction. This insight allows for the learning of crosstalk effects using  QSPE  by substituting the $U$-gate in \cref{fig:qsp-pc-circuit} with an idle gate for an appropriate duration $\Delta t$, ensuring that $g_{\text{crosstalk}} \Delta t$ is sufficiently large to be measurable, yet not so large as to compromise the assumptions underlying QSPE. For example, by setting the circuit depth to $d=5$ and $\Delta t$ to 200 ns, the precision in measuring $g_{\text{crosstalk}}$ can reach around  10 MHz, markedly surpassing the accuracy of state-of-the-art results in similar systems, which are around 1 MHz~\cite{GoogleQuantumSupremacy2019}.

 \section*{Discussion}
 Our proposed QSPE estimators leverage the polynomial structure of periodic circuits through classical and quantum signal processing.
These analytics helped us to design an algorithm where the estimation of the swap angle $\theta$ is largely decoupled from that of the single-qubit phases $\varphi$ and $\chi$.
When some constant phase drift is imposed on the system, the inference is not affected, thanks to the robustness of the Fourier transformation and sequential phase difference.
We demonstrate such robustness against realistic errors, including drift errors in both numerical simulations and deployment on quantum devices. Such robustness is essential in achieving a record level of accuracy not demonstrated before in superconducting qubits.
An additional error-mitigation method against globally depolarizing error is further achieved here using the difference in the Fourier coefficients.

Prior to this work, error mitigation routines had been largely separated from quantum metrology protocols, preventing us from achieving the ultimate limit permitted by physics. Our successful combination of error mitigation with metrology hinges upon treating quantum metrology as a type of quantum signal processing: amplifying a given quantum signal while de-amplifying the unwanted experimental noise.
Our work, therefore, opens directions for using advanced quantum simulation techniques in the design of quantum metrology algorithms in order to achieve properties necessary for high-accuracy gate learning against realistic environmental noise.

Our future work aims to generalize the optimality of the metrology algorithm against more types of errors and schemes. First, our proof of QSPE estimators' optimality is based on analyzing the Monte-Carlo sampling error.
To fully optimize the design of statistical estimators against all types of error in addition to sampling errors requires modeling and studying the behavior and statistics of all types of dominant realistic error using tools from classical statistics, Bayesian inference, and machine learning.  Secondly, here we do not optimize all possible state initialization and measurement schemes.
Although theoretical analysis and numerical simulation prove that QSPE estimators are optimal in the given parameter regime and the given state preparation and measurement scheme, it remains an open question whether we can derive the optimal estimators in the most generic setting in \cref{prob:qspc} by optimizing circuit structure at initialization and measurement steps.
Thirdly, the QSPE estimators are designed for the low-depth regime~($d<10$). 
The restriction to this finite-depth setting is tied in with our main objective of mitigating the detrimental effect of time-dependent noise where deeper circuit depths introduce more drift and decoherence error. A related work employing a dynamical-decoupling-based scheme~\cite{gross_characterizing_2024} achieves similar accuracy; however, it is not effective in the low-depth limit of our algorithm and lacks optimality guarantees due to its reliance on decoupling sequences, necessitating deeper circuits.
This low-depth limit can be lifted if we can fully mitigate various noise effects that kick in at deeper depth. Furthermore, generalizing a deterministic estimator from our work to a variational one can also offer greater flexibility and optimality but requires a deeper understanding of the  landscape inherited from the QSPE structure. Lastly, qubitization techniques \cite{LowChuang2019,GilyenSuLowEtAl2019} and cosine-sine decomposition \cite{Dong2023Dissertation,TangTian2024} provide powerful two-dimensional subspace representations associated with any unitary matrix. These techniques could potentially generalize our estimation method to large systems with a large number of qubits, which will be our future work.

\section*{Methods}

\subsection*{Quantum experiments}

Quantum experiments in our work are conducted using the Google Quantum AI superconducting qubits~\cite{GoogleQuantumSupremacy2019} platform. Our QSPE method is detailed in the algorithm shown in Box \ref{tab:alg:main-qsp-pc} and \cref{fig:main-qspc-main}. The details of the XEB method, used in \cref{fig:cz_calibrate_compare_var}, are discussed in \SPN 1 A.

\subsection*{Numerical simulations}

All numerical tests are implemented in \textsf{python}. The numerical studies in \SPN 4 E and \SPN 7 C are conducted using \textsf{Cirq}. The study of our method's robustness against quantum errors in \SPN 7 C is performed using \textsf{Cirq}'s built-in noisy simulator. Other numerical studies, which do not directly involve quantum circuits, are performed using \textsf{numpy}.

 \bigskip

\noindent{\large \textbf{Data availability}}\\
All data presented in this work are visualized in the figures and tables within the main text and the Supplementary Information file.\\

\noindent{\large \textbf{Code availability}}\\
The codes that support the finding are available at \cite{Niu_Sinha_2023}. \\

\noindent {\large \textbf{Acknowledgments}}\\
	This work is partially supported by the NSF Quantum Leap Challenge Institute (QLCI) program through grant number OMA-2016245 (Y.D.). The authors thank discussions with Ryan Babbush, Connor Clayton, Zhiyan Ding, Zhang Jiang, Lin Lin, Shi Jie Samuel Tan, Vadim Smelyanskiy and K. Birgitta Whaley. The authors thank Rajarishi Sinha from Google Cloud AI for his help in open-sourcing the code supporting this finding.
\\

\noindent {\large \textbf{Author contributions}}\\
 M.N. conceived the project, carried out some of the theoretical analysis, and led and coordinated the project. Y.D. contributed to the original idea and carried out the theoretical analysis and numerical simulation to support the study. M.N. and J.G. discussed and carried out experiments on real quantum devices to support the study. All authors contributed to the writing of the manuscript.\\

\noindent{\large \textbf{Competing interests}}\\
\noindent The authors declare no competing interests.

\renewcommand{\tablename}{Box}
\renewcommand{\thetable}{\arabic{table}}
\begin{table}[htbp]
  \renewcommand{\arraystretch}{1.2}
  \begin{tabular}{*{1}{@{}L{8.5cm}}}
    \toprule
    {\bfseries Algorithm \thealgorithm} \quad Inferring unknown angles in $U$-gate with small swap angle using QSPE \tabularnewline
    \bottomrule
    \textbf{Input:} A $U$-gate $U(\theta,\varphi,\chi,\psi)$, an integer $d$ (the number of applications of the $U$-gate). \tabularnewline
    Initiate a complex-valued data vector $\boldsymbol{\mf{h}} \in \CC^{2d-1}$. \tabularnewline
    \textbf{for} $j = 0, 1, \cdots, 2d-2$ \textbf{do} \tabularnewline
    \quad Set the tunable $Z$-phase angle as $\omega_j = \frac{j}{2d-1} \pi$.  \tabularnewline
    \quad Perform the quantum circuit in \cref{fig:qsp-pc-circuit} and measure the transition probabilities $p_X(\omega_j)$ and $p_Y(\omega_j)$.  \tabularnewline
    \quad Set $\mf{h}_j \leftarrow p_X(\omega_j) - \frac{1}{2} + \I\left(p_Y(\omega_j) - \frac{1}{2}\right)$.  \tabularnewline
    \textbf{end for}  \tabularnewline
    Compute the Fourier coefficients $\mathbf{c} = \mathsf{FFT}\left(\boldsymbol{\mf{h}}\right)$.  \tabularnewline
    Compute estimates $\hat{\theta}$ and $\hat{\varphi}$ according to \cref{eqn:main-qspcf-estimators}. \tabularnewline
    \textbf{Output:} Estimators $\hat{\theta}, \hat{\varphi}$.
    \\ \bottomrule
    \end{tabular}
    \caption{Algorithm for inferring unknown angles in $U$-gate with small swap angle using QSPE}
    \label{tab:alg:main-qsp-pc}
\end{table}

\widetext
\clearpage
\newpage

	\begin{figure}[H]
 \includegraphics[width=\linewidth]{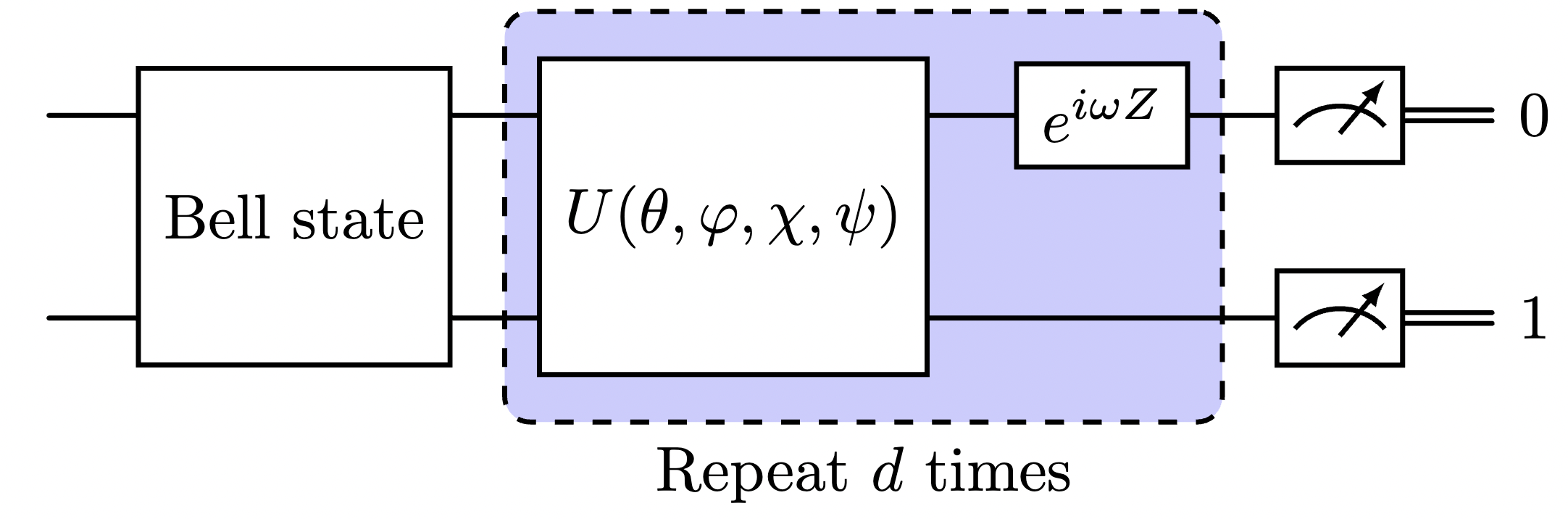}
		\caption{Quantum circuit for QSPE with an exemplified two-qubit $U$-gate. The input quantum state is prepared to be Bell state in either $\ket{+_\ell}$ or $\ket{\I_\ell}$ according to the type of experiment. The quantum circuit enjoys a periodic structure of the unknown $U$-gate and a tunable $Z$ rotation.\label{fig:qsp-pc-circuit}
  }
	\end{figure}

    	\begin{figure*}[htbp]
		\centering
		\includegraphics[width= \textwidth]{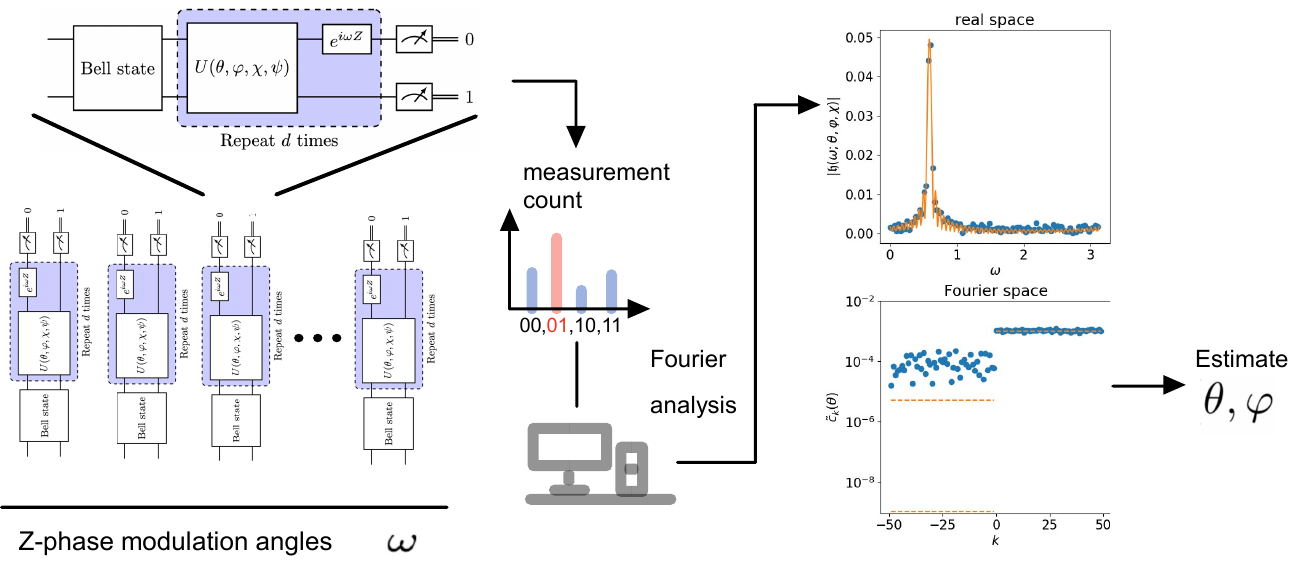}
		\caption{Flowchart of main procedures in QSPE. \REV{The experimental data are collected from depth $d$ quantum circuit experiments featuring equally-spaced phase modulation angles $\omega$, as shown in the left panels. Probabilities from each experiment of different phase modulations are analyzed using Fourier transformation. As illustrated in the right panels, the Fourier-space data are better structured compared to real-space data. Gate angles are then derived using our QSPE estimators.}}
		\label{fig:main-qspc-main}
	\end{figure*}

    \begin{figure}[htbp]
        \centering
        \includegraphics[width=\columnwidth]{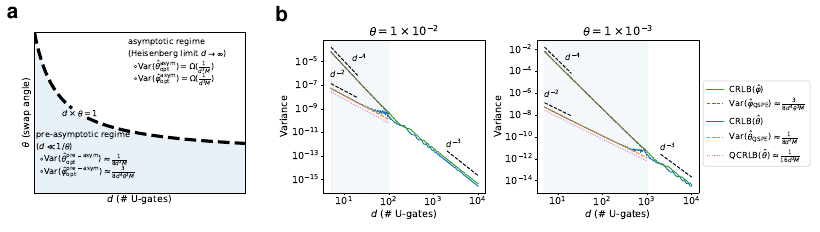}
        \caption{A nontrivial transition of the optimal variance in solving QSPE. The theoretical analysis of the transition is in \SPN 6. \textbf{a} Phase diagram showing the nontrivial transition of the optimal variance in solving QSPE. QSPE estimators attain the optimal variance in the pre-asymptotic regime. \textbf{b} Cram\'{e}r-Rao lower bound (CRLB) and the theoretically derived estimation variance. The single-qubit phases are set to $\varphi = \pi/16$ and $\chi = 5\pi/32$. The number of measurement samples is set to $M = 1\times10^5$.}
        \label{fig:crlb-qspe}
    \end{figure}

	\begin{figure}[htbp]
		\centering
		\includegraphics[width=\columnwidth]{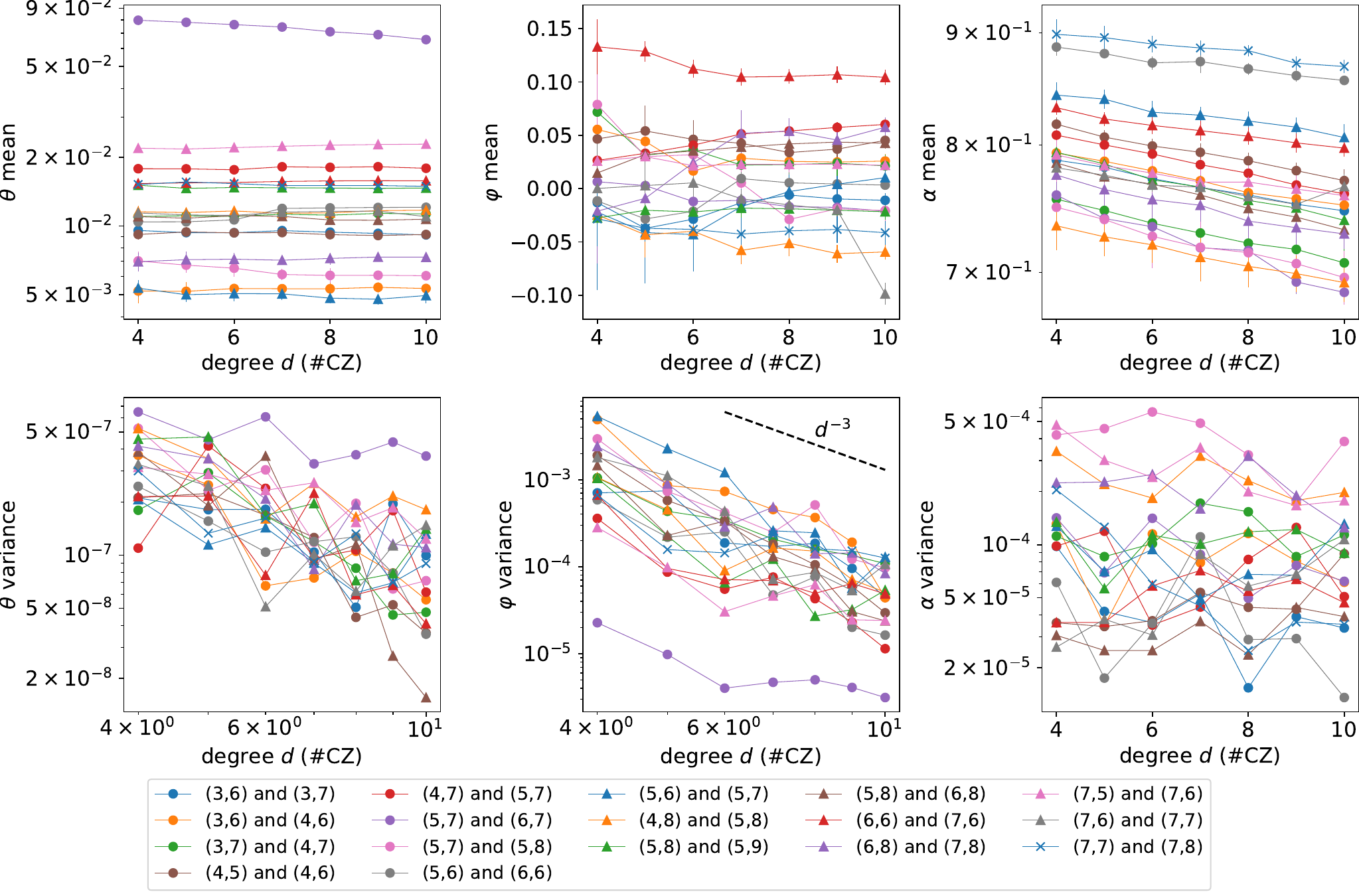}
		\caption{Learning CZ gate with small unwanted swap angle. Each data point is the average of $10$ independent repetitions and the error bars in the top panels stand for the standard deviation across those repetitions. The number of measurement samples is set to $M = 1\times 10^4$. \REV{These columns display the estimated values of gate angles $\theta, \varphi$, and circuit fidelity $\alpha$.}}
		\label{fig:cz_calibrate_exp_qspc}
	\end{figure}
	
	\begin{figure}[htbp]
		\centering
		\includegraphics[width=\columnwidth]{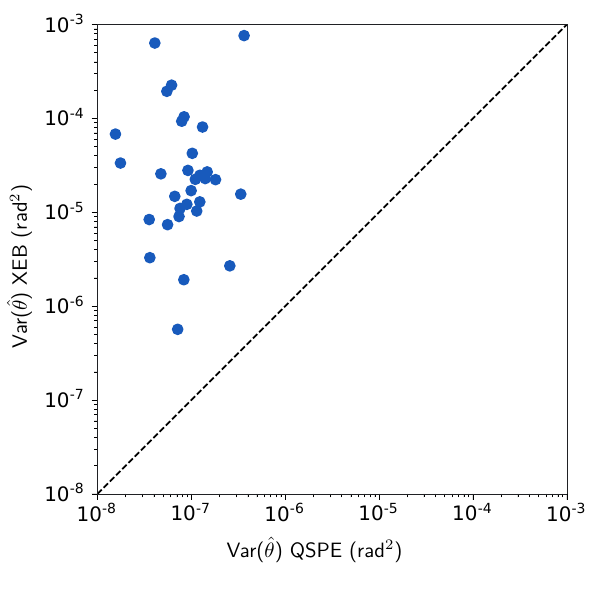}
		\caption{Comparison of the variance in learning swap angle $\theta$ of CZ gates over seventeen pairs of qubits between QSPE and XEB each repeated for 10 times.}
		\label{fig:cz_calibrate_compare_var}
	\end{figure}

\newpage
\clearpage
\onecolumngrid
    
\begin{center}
    {\bf{\Large Supplementary Information for Optimal Low-Depth Quantum Signal-Processing Phase Estimation}}
\end{center}

\setcounter{equation}{0}
\setcounter{figure}{0}
\setcounter{table}{0}
\setcounter{theorem}{0}
\setcounter{algorithm}{0}
\renewcommand{\figurename}{Supplementary Figure}
\renewcommand{\tablename}{Supplementary Table}
\renewcommand{\thetable}{\arabic{table}}
\renewcommand{\thesection}{\SP\ \arabic{section}}

	\section{Preliminaries}
 
	\subsection{Prior art in quantum gate calibration}
	\label{sec:prior-art}

    A widely used gate calibration technique is called Periodic or Floquet calibration~\cite{neill_accurately_2021,arute_observation_2020}, which is an extension of robust phase estimation~\cite{kimmel2015robust} to multi-parameter regime.    It leverages the excitation-preserving structure of the \fsim to measure the parameters using a restricted set of circuits (compared to full-process tomography).
	This technique amplifies unitary errors in the gate through repeated applications between measurements. When phase-matching condition is attained, it leads to variance in the estimated parameters that scales inversely with the square of the number of gate applications, thus achieving the Heisenberg limit. In \cite{neill_accurately_2021}, two calibration circuit types are discussed, each utilizing different initialization and measurement techniques. They are referred to as the phase method and the population method. The phase method is able to calibrate the phase accumulation angle. However, given the simpler parametric expression, the population method is more frequently used for calibrating swap angles and phase differences \cite{arute_observation_2020, neill_accurately_2021}. Consequently, in this paper, we focus on the population method for periodic calibration. The quantum circuit is depicted in \cref{fig:population-method-circuit}. 

    \begin{figure}[H]
         \centering
         \includegraphics[width=.4\textwidth]{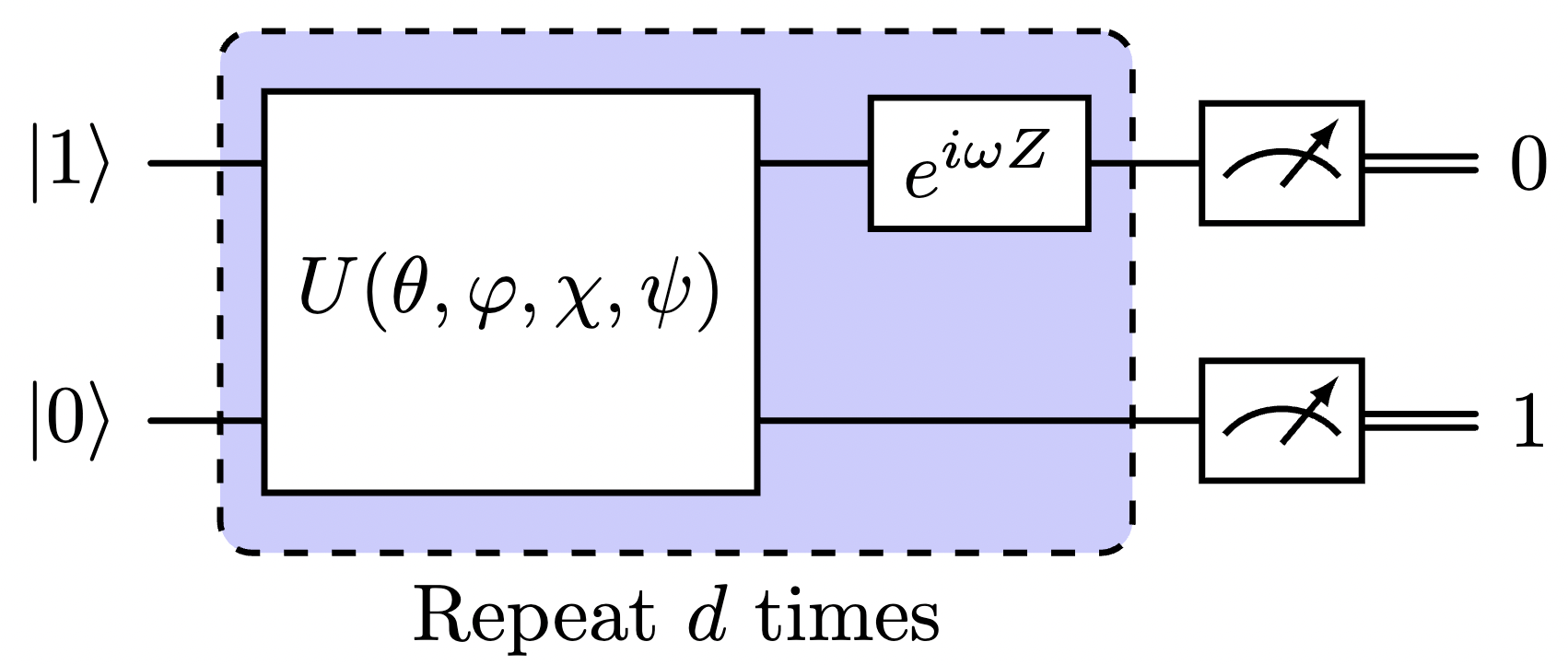}
                \caption{Quantum circuit for periodic calibration using the population method.\label{fig:population-method-circuit}}
    \end{figure}
    
	One difficulty with these techniques is that small values of the swap angle $\theta$ are difficult to \REV{amplify} in the presence of larger single-qubit phases.
	This can be addressed adaptively, by first measuring the unwanted single-qubit phases and applying compensating pulses, but this strategy is limited by the precision with which one can compensate, and the speed with which these single-qubit phases drift relative to the experiment time.
	For these reasons, in practice estimation of the swap angle is often done with a depth-1 circuit, commonly referred to as unitary tomography~\cite{foxen2020}. 

 In \cref{app:periodic-calibration}, we provide a comprehensive analysis of periodic calibration. By bounding the Fisher information, we show that the optimal estimation variance depends heavily on the satisfaction of \REV{the} phase-matching condition. Though Heisenberg-limit scaling is achieved with perfect phase-matching, realistic errors, e.g. time-dependent drift errors, render the perfect satisfaction of phase-matching condition challenging. According to the analysis, when the swap angle is small, a slight violation of the phase-matching condition can completely undermine the optimal Heisenberg-limit scaling and render the estimation variance exponentially worse in terms of depth dependency. These make the use of periodic calibration impractical when the swap angle is small.

 \begin{figure}[htbp]
\centering
\includegraphics[width=.5\columnwidth]{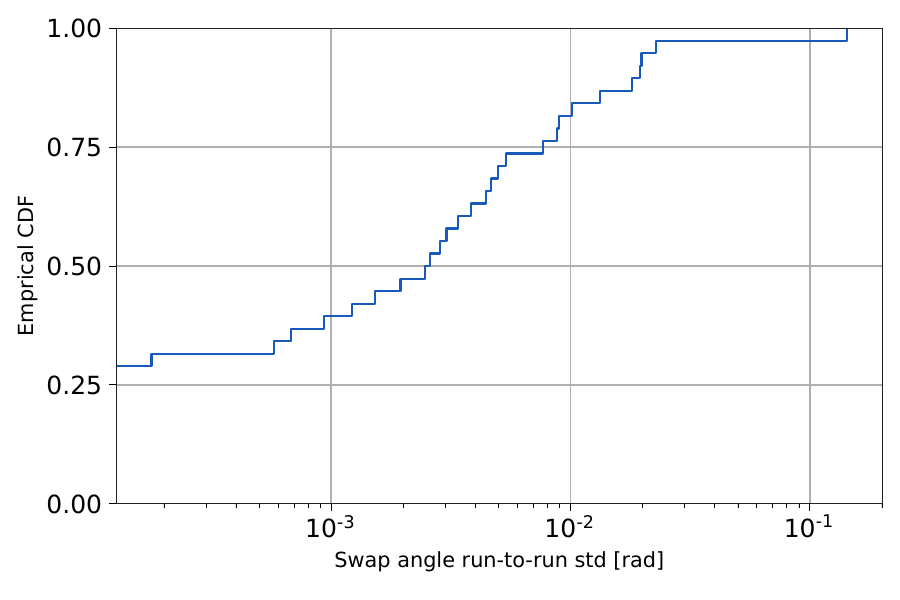}
\caption{Distribution of run-to-run variation of swap-angle estimation across a device.
    The swap angles were estimated using periodic calibration on four independent datasets for each CZ gate, with 10,000 samples per circuit and a maximum depth of 30.
    Due to the behavior of the periodic-calibration estimator for particularly small swap angles, a substantial fraction of swap angles were estimated to be identically 0, leading to the portion of the cumulative distribution function that extends off the plot to the left.
    Discarding these instances leaves us with a median run-to-run standard deviation of close to $4\times10^{-3}$ radians.
    } 
\label{fig:cz_calibrate_exp}
\end{figure}

In \cref{fig:cz_calibrate_exp}, we demonstrate the experimental estimation of swap angles across multiple runs on a Google Quantum AI superconducting device. The results show that periodic calibration fails to achieve the accuracy needed to reliably distinguish such small gate angles from zero. Moreover, the run-to-run standard deviation in swap-angle estimations varies widely, ranging from $10^{-4}$ to $10^{-1}$ radians, with a median of approximately $4\times10^{-3}$ radians. This instability in calibration outcomes stems from the periodic calibration's high sensitivity to time-dependent errors. Linking this to the theoretical analysis, the presence of time-dependent errors also impedes the precise fulfillment of the phase-matching condition, thereby significantly restricting the accuracy of periodic calibration estimations.

    An alternative characterization scheme using cross-entropy-benchmarking (XEB) circuits was described in Sec. C. 2. of the supplemental material for~\cite{GoogleQuantumSupremacy2019}.
    This characterization tool randomizes various noise sources into an effective depolarizing channel, allowing noise to be simply characterized along with unitary parameters.
    Randomization comes at a cost, though, requiring a large number of random circuits to get a representative sample of the distribution.
    Also, randomization interferes with the ability of unitary errors to build up coherently, keeping this method from achieving the Heisenberg limit.
    This makes it difficult for XEB characterization to resolve angles below $10^{-2}$ radians in practice.

	\subsection{Quantum signal processing and polynomial analysis}
	The quantum circuit used in QSPE (see Figure 1 in the main text) contains a periodic circuit structure in which the $U$-gate and a Z-rotation are interleaved. This circuit structure coincides with a quantum algorithm called quantum signal processing (QSP) \cite{LowChuang2017,GilyenSuLowEtAl2019}. QSP is a useful quantum algorithm for solving numerical linear algebra problems such as quantum linear system problems and Hamiltonian simulation by properly choosing a set of phase factors~\cite{DongMengWhaleyEtAl2020,martyn2021grand}. Specifically, in this paper, we will use the polynomial structure induced by the theory of QSP \cite{LowChuang2017,GilyenSuLowEtAl2019,WangDongLin2021}. Though the quantum circuit used in the paper is a special case of general QSP circuit by fixing all phase modulation angles, the general QSP structure may provide a more robust paradigm against stochastic phase drift errors by relaxing the fixed angle constraint. Hence, for completeness, we provide a concise overview of the theory of QSP structure in this subsection.
 
 The following theorem is a simplified version of \cite[Theorem 1]{WangDongLin2021}.
	\begin{theorem}[Polynomial structure of symmetric QSP]\label{thm:qsp}
		Let $d\in \NN$ and $\varOmega := (\omega_0, \cdots, \omega_d) \in \RR^{d+1}$ be a set of phase factors. Then, for any $x \in [-1,1]$, the following product of $\mathrm{SU}(2)$-matrices admits a representation
		\begin{equation}
		\label{eqn:qsp-gslw}
		U(x, \varOmega) = e^{\I \omega_0 Z} \prod_{j=1}^{d} \left( e^{\I \arccos(x) X} e^{\I \omega_j Z} \right) = \left( \begin{array}{cc}
		P(x) & \I Q(x) \sqrt{1 - x^2}\\
		\I Q^*(x) \sqrt{1 - x^2} & P^*(x)
		\end{array} \right)
		\end{equation}
		for some $P,Q\in \CC[x]$ satisfying that
		\begin{enumerate}
			\item[(1)] \label{itm:1} $\deg(P) \leq d, \deg(Q) \leq d-1$,
			\item[(2)] \label{itm:2} $P(x)$ has parity $(d\mod2)$ and $Q(x)$ has parity $(d-1 \mod 2)$,
			\item[(3)]  \label{itm:3} $|P(x)|^2 + (1-x^2) |Q(x)|^2 = 1, \forall x \in [-1, 1]$.
		\end{enumerate}
		Here, the superscript $*$ denotes the complex conjugate of a polynomial, namely $P^*(x) = \sum_i \overline{p_i} x^i$ if $P(x) = \sum_i p_i x^i$ with $p_i \in \CC$. Furthermore, if $\varOmega$ is chosen to be symmetric, namely $\omega_j = \omega_{d-j}$ for any $j$, then $Q \in \RR[x]$ is a real polynomial.
	\end{theorem}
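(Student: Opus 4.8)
The plan is to establish the $2\times2$ matrix representation by induction on $d$, and then derive the final (symmetric) assertion from a transpose-symmetry argument.

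\emph{Induction on $d$.} The base case $d=0$ is immediate: $U(x,\varOmega)=e^{\I\omega_0 Z}=\mathrm{diag}(e^{\I\omega_0},e^{-\I\omega_0})$, so $P(x)=e^{\I\omega_0}$ and $Q(x)=0$ satisfy (1)--(3). For the inductive step, write $\varOmega'=(\omega_0,\dots,\omega_{d-1})$ and factor $U(x,\varOmega)=U(x,\varOmega')\,e^{\I\arccos(x)X}\,e^{\I\omega_d Z}$. Since $\arccos(x)\in[0,\pi]$ for $x\in[-1,1]$, the middle factor $e^{\I\arccos(x)X}$ is the symmetric matrix with diagonal entries $x$ and off-diagonal entries $\I\sqrt{1-x^2}$. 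Substituting the inductive form of $U(x,\varOmega')$ (with polynomials $P',Q'$) and multiplying the three $2\times2$ matrices in order gives
\[
P(x)=\big(xP'(x)-(1-x^2)Q'(x)\big)e^{\I\omega_d},\qquad Q(x)=\big(P'(x)+xQ'(x)\big)e^{-\I\omega_d};
\]
a short computation confirms the $(2,1)$ and $(2,2)$ entries of the product are exactly $\I Q^*(x)\sqrt{1-x^2}$ and $P^*(x)$, so the matrix keeps the claimed shape. From this recursion and the inductive bounds $\deg P'\le d-1,\ \deg Q'\le d-2$, one reads off $\deg P\le d$ and $\deg Q\le d-1$; using $(d-2)\bmod 2 = d\bmod 2$, both terms defining $P$ have parity $d\bmod 2$ and both terms defining $Q$ have parity $(d-1)\bmod 2$, giving (2). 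For (3), observe that every factor $e^{\I\omega Z}$ and $e^{\I\arccos(x)X}$ lies in $\mathrm{SU}(2)$, hence so does $U(x,\varOmega)$ for real $x\in[-1,1]$; the $\mathrm{SU}(2)$ constraint $|a|^2+|b|^2=1$ on the first row is precisely $|P(x)|^2+(1-x^2)|Q(x)|^2=1$ (one may alternatively verify this directly from the recursion, where the cross terms cancel by $x^2+(1-x^2)=1$).

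\emph{The symmetric refinement.} Both $e^{\I\omega Z}$ (diagonal) and $e^{\I\arccos(x)X}$ are symmetric matrices, so transposing the product reverses the order of the factors: $U(x,\varOmega)^\top=e^{\I\omega_d Z}\prod_{j=1}^d\big(e^{\I\arccos(x)X}e^{\I\omega_{d-j}Z}\big)=U(x,\varOmega^{\mathrm{rev}})$ where $\varOmega^{\mathrm{rev}}=(\omega_d,\dots,\omega_0)$. When $\varOmega$ is symmetric, $\varOmega^{\mathrm{rev}}=\varOmega$, so $U(x,\varOmega)$ is itself a symmetric matrix; equating its off-diagonal entries yields $\I Q(x)\sqrt{1-x^2}=\I Q^*(x)\sqrt{1-x^2}$, hence $Q(x)=Q^*(x)$ for all $x\in(-1,1)$ and therefore $Q=Q^*$ as polynomials, i.e.\ $Q\in\RR[x]$.

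\emph{Anticipated difficulty.} No step here is deep. The only points requiring care are the parity bookkeeping in the inductive step (keeping track that $(d-2)\bmod 2 = d\bmod 2$) and performing the $2\times2$ matrix product in the correct left-to-right order so that the $P^*$ and $Q^*$ entries appear automatically. The transpose observation is the one mildly clever ingredient, but it is short once noticed; otherwise this is essentially a textbook QSP computation and a special case of \cite[Theorem 1]{WangDongLin2021}.
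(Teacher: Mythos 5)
Your proof is correct, and its overall skeleton matches the paper's: condition (3) from unitarity of the product, and the symmetric refinement from exactly the same transpose observation (each factor is a symmetric matrix, so transposition reverses the phase-factor order, and a symmetric $\varOmega$ forces $U=U^\top$, hence $Q=Q^*$). Where you diverge is in how (1) and (2) are obtained: you run an explicit induction on $d$ with the recursion $P=\bigl(xP'-(1-x^2)Q'\bigr)e^{\I\omega_d}$, $Q=\bigl(P'+xQ'\bigr)e^{-\I\omega_d}$ and read off degrees and parities term by term, whereas the paper gets (1) by expanding $e^{\I\arccos(x)X}=xI+\I\sqrt{1-x^2}\,X$ and counting Pauli $X$ factors, and gets (2) from the conjugation identity $U(-x,\varOmega)=(-1)^d Z\,U(x,\varOmega)\,Z$, which delivers the parity of $P$ and $Q$ in one line without any bookkeeping. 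Your inductive route is more self-contained and makes the existence of the representation fully explicit (the paper's "follows from expansion and rearranging" is terser), at the cost of the parity arithmetic you flag; the paper's symmetry trick is slicker but relies on already knowing the matrix has the claimed shape. One minor note: you invoke the row-norm consequence of unitarity for (3), while the paper takes the determinant; both are fine for real $x\in[-1,1]$, where $P^*(x)=\overline{P(x)}$.
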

	\begin{proof}
	    We give a concise proof for completeness. 
	    
	    ``Condition (1)'': Note that $\mathrm{SU}(2)$ matrices satisfy
	    \begin{equation*}
	        e^{\I \arccos(x) X} = x I + \I \sqrt{1-x^2} X \quad \mathrm{and} \quad X e^{\I \omega Z} = e^{-\I \omega Z} X.
	    \end{equation*}
	    The polynomial representation in \cref{eqn:qsp-gslw} follows the expansion and rearranging Pauli $X$ matrices. The condition (1) follows the observation that the leading term is at most $x^d$ when there are even number of Pauli $X$ matrices in the expansion while it is at most $x^{d-1}$ when the number of Pauli $X$ matrices is odd. 
	    
	    ``Condition (2)'': To see condition (2), we note that under the transformation $x \mapsto -x$, we have
	    \begin{equation*}
	        e^{\I \arccos(-x) X} = e^{\left(\pi - \arccos(x)\right)X} = - e^{-\I \arccos(x) X} = - Z e^{\I\arccos(x) X} Z.
	    \end{equation*}
	    Therefore
	    \begin{equation*}
	       U(-x,\varOmega) = (-1)^d Z U(x,\varOmega) Z = \left( \begin{array}{cc}
		(-1)^d P(x) & \I (-1)^{d-1} Q(x) \sqrt{1 - x^2}\\
		\I (-1)^{d-1} Q^*(x) \sqrt{1 - x^2} & (-1)^d P^*(x)
		\end{array} \right)
	    \end{equation*}
	    which implies that
	    \begin{equation*}
	        P(-x) = (-1)^d P(x) \quad \mathrm{and} \quad Q(-x) = (-1)^{d-1} Q(x)
	    \end{equation*}
	    which is the parity condition.
	    
	    ``Condition (3)'': Condition (3), which is equivalent to $\det U(x,\varOmega) = 1$, directly follows the special unitarity.
	    
	    ``Symmetric QSP'': Note that when $\Phi$ is symmetric, $U(x, \Phi)$ is invariant under the matrix transpose which reverses the order of phase factors. Using $U(x, \Phi) = U(x,\Phi)^\top$, the condition on the polynomial $Q(x) = Q^*(x)$ follows the transformation of the off-diagonal element. Therefore, $Q \in \RR[x]$ is a real polynomial.
	\end{proof}
	
	The previous theorem bridges the gap between the periodic circuits and the analysis of polynomials. In the paper, we will frequently invoke an important inequality of polynomials to bound error, which is stated as follows.
	
	\begin{theorem}[Markov brothers' inequality \cite{Markov1890}]\label{thm:Markovs-ineq}
	    Let $P \in \RR_d[x]$ be any algebraic polynomial of degree at most $d$. For any nonnegative integer $k$, it holds that
	    \begin{equation}
	        \max_{x \in [-1,1]} \abs{P^{(k)}(x)} \le \max_{x \in [-1,1]} \abs{P(x)} \prod_{j=0}^{k-1} \frac{d^2-j^2}{2j+1}.
	    \end{equation}
	    The equality is attained for Chebyshev polynomial of the first kind $T_d(x)$.
	\end{theorem}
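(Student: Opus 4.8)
This is a classical inequality --- A.~Markov for $k=1$, V.~A.~Markov in general --- and the plan is to take the Chebyshev polynomial $T_d$ as the extremal object and pin down the constant first. Differentiating the Chebyshev differential equation $(1-x^2)T_d''-xT_d'+d^2T_d=0$ a total of $k$ times and evaluating at $x=1$, where the factor $1-x^2$ kills the top-order term, gives the recursion $T_d^{(k+1)}(1)=\frac{d^2-k^2}{2k+1}T_d^{(k)}(1)$, so that $T_d^{(k)}(1)=\prod_{j=0}^{k-1}\frac{d^2-j^2}{2j+1}$ (the factor $j=d$ vanishes, consistent with $P^{(k)}\equiv 0$ when $k>d$). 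Hence it suffices to prove $\max_{x\in[-1,1]}\abs{P^{(k)}(x)}\le T_d^{(k)}(1)\max_{x\in[-1,1]}\abs{P(x)}$, and the choice $P=T_d$ then shows the constant is attained.

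\textbf{Base case $k=1$.} Normalize $\max_{[-1,1]}\abs{P}=1$. Substituting $x=\cos\theta$ turns $P(\cos\theta)$ into a trigonometric polynomial of degree $\le d$, so Bernstein's inequality for trigonometric polynomials yields the pointwise estimate $\abs{P'(x)}\le d(1-x^2)^{-1/2}$ for $\abs{x}<1$. I would then invoke Schur's lemma --- if $\deg r\le d-1$ and $\abs{r(x)}\le(1-x^2)^{-1/2}$ on $(-1,1)$, then $\max_{[-1,1]}\abs{r}\le d$, which one proves by comparing $r$ with the second-kind Chebyshev polynomial $U_{d-1}$ (whose weighted form on $[-1,1]$ is $\sin(d\theta)$ and which satisfies $U_{d-1}(1)=d$) via a zero-counting argument. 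Applying this with $r=P'/d$ gives $\max_{[-1,1]}\abs{P'}\le d^2=T_d'(1)$, which is the $k=1$ statement.

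\textbf{General $k$.} Here naive iteration of the base case only yields the constant $\prod_{j=0}^{k-1}(d-j)^2$, which is strictly larger than $T_d^{(k)}(1)$ once $d\ge 3$, so a genuinely global argument is needed. The crucial ingredient is V.~A.~Markov's comparison theorem: under $\max_{[-1,1]}\abs{P}\le 1$ and $0\le k\le d$, one has $\abs{P^{(k)}(x)}\le\abs{T_d^{(k)}(x)}$ for every real $x$ with $\abs{x}\ge 1$; letting $x\to 1^+$ then bounds $\abs{P^{(k)}(\pm 1)}$ by $T_d^{(k)}(1)$. I would prove this by induction on $k$ refining the $k=1$ scheme --- expanding $P$ in the Lagrange basis at the $d+1$ Chebyshev extremal nodes and analyzing the signs of the iterated derivatives of the basis polynomials so that the resulting estimate collapses onto $\abs{T_d^{(k)}}$ --- and then close the interior of $[-1,1]$ with iterated Bernstein-type pointwise bounds, which for $k\ge 1$ are dominated by their endpoint values since $\abs{T_d^{(k)}}$ is maximized on $[-1,1]$ at $x=\pm 1$. (Alternatively one can import the streamlined Duffin--Schaeffer treatment of this extremal problem.) This yields the stated inequality, with equality at $P=T_d$.

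\textbf{Main obstacle.} The reduction to the constant $T_d^{(k)}(1)$ is bookkeeping, and once Bernstein's inequality is in hand the base case is routine; the real difficulty is the general-$k$ comparison theorem. Controlling the signs of the higher derivatives of the Chebyshev-node Lagrange polynomials --- equivalently, showing that $T_d$ simultaneously extremizes \emph{every} derivative on $[1,\infty)$ --- is the delicate zero-counting core of V.~A.~Markov's argument, and essentially all the work of the theorem lives there.
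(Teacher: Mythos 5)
The paper does not actually prove this statement: it is quoted verbatim from the classical literature (the citation to Markov, 1890) and used as a black box in later error bounds, so there is no in-paper argument to compare against. Judged on its own terms, your outline follows the standard classical route and the parts you carry out are correct: differentiating the Chebyshev equation $(1-x^2)T_d''-xT_d'+d^2T_d=0$ $k$ times and evaluating at $x=1$ does give $T_d^{(k)}(1)=\prod_{j=0}^{k-1}\frac{d^2-j^2}{2j+1}$, and the Bernstein-plus-Schur combination is the textbook proof of the $k=1$ case with the sharp constant $d^2$.

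The issue is that for $k\ge 2$ your proposal defers essentially the entire content of the theorem to V.~A.~Markov's comparison theorem ($\abs{P^{(k)}(x)}\le\abs{T_d^{(k)}(x)}$ for $\abs{x}\ge 1$) and to the interior estimate, neither of which is actually executed. You correctly diagnose that naive iteration of the $k=1$ bound only yields $\prod_{j=0}^{k-1}(d-j)^2$ and that the sign analysis of the derivatives of the Chebyshev-node Lagrange basis is where the work lives, but naming the obstacle is not the same as overcoming it. In addition, the claim that the interior of $[-1,1]$ can be ``closed with iterated Bernstein-type pointwise bounds, which are dominated by their endpoint values'' is glossed too quickly: the pointwise Bernstein bound degenerates near $\pm 1$, and the classical treatments (V.~A.~Markov's original zero-counting argument, or Duffin--Schaeffer's interpolation at the Chebyshev extremal nodes) handle the interior by a genuinely different mechanism rather than by iterating Bernstein. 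As a roadmap to the classical proof your sketch is sound and correctly locates the difficulty; as a proof it has a gap exactly at the step you yourself identify as the core of the theorem. Since the paper treats the result as a known citation, this does not affect anything downstream in the paper.
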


 	\subsection{Fermionic simulation gate (\fsim)}\label{appendix:fsimgate}
	A special instance of $U$-gate is a  fermionic simulation gate (\fsim). It is a class of two-qubit quantum gates preserving the excitation and \REV{describes} all two-qubit gates realizable in Google's superconducting qubit system. Acting on two qubits $A_0$ and $A_1$, the \fsim is parametrized by a few parameters. Ordering the basis as $\mc{B} := \{ \ket{00}, \ket{01}, \ket{10}, \ket{11} \}$ where the qubits are ordered as $\ket{a_0a_1} := \ket{a_0}_{A_0}\ket{a_1}_{A_1}$, the unitary matrix representation of the \fsim is given by
	\begin{equation}
	U_\fsim(\theta,\varphi,\chi,\psi,\phi) = \left(\begin{array}{*{4}c}
	1 & 0 & 0 & 0 \\
	0 & e^{-\I \varphi-\I\psi}\cos\theta & -\I e^{\I \chi - \I \psi} \sin\theta & 0\\
	0 & -\I e^{-\I\chi-\I\psi}\sin\theta & e^{\I\varphi-\I\psi}\cos\theta & 0\\
	0 & 0 & 0 & e^{-\I(\phi + 2 \psi)}
	\end{array}\right).
	\end{equation}
	As a consequence of the preservation of excitation, there is a two-dimensional invariant subspace of the \fsim, which is referred to as the single-excitation subspace spanned by basis states $\mc{B} = \{ \ket{01}, \ket{10} \}$. Restricted on the single-excitation subspace $\mc{E} := \spans\ \mc{B}$, the matrix representation of the \fsim is (up to a global phase)
	\begin{equation}
	\begin{split}
	& \matrepwrt{U_\fsim(\theta, \varphi, \chi, \phi, \psi)}{\mc{B}} =: U_\fsim^{\mc{B}}(\theta, \varphi, \chi)\\
	&= \left(\begin{array}{cc}
	e^{-\I\varphi}\cos\theta & -\I e^{\I\chi}\sin\theta \\
	-\I e^{-\I\chi}\sin\theta & e^{\I\varphi}\cos\theta
	\end{array}
	\right) = e^{-\I \frac{\varphi - \chi - \pi}{2} Z} e^{\I \theta X} e^{-\I \frac{\varphi + \chi + \pi}{2} Z}.
	\end{split}
 \label{eq:single-excite-fsim-gate}
	\end{equation}
	Here, $X$ and $Z$ are logical Pauli operators by identifying logical quantum states $\ket{0}_\ell := \ket{01}$ and $\ket{1}_\ell := \ket{10}$. As a remark, it provides a parametrization of any general $\mathrm{SU}(2)$ matrix. 
	
	One of the most important two-qubit quantum gates is controlled-Z gate (CZ). It forms universal gate sets with several single-qubit gates and it is a pivotal building block for demonstrating surface code \cite{acharya2022suppressing}. CZ is in the gate class of \fsim 's, which can be generated by setting $\theta = \varphi = \chi = \psi = 0$ and $\phi = \pi$. Due to the noisy implementation of CZ, the resulting quantum gate is an \fsim slightly deviating \REV{from} the perfect CZ. In order to perform high-fidelity quantum computation, one has to characterize the erroneous parameters of an \fsim which include CZ as a special case. The characterization of gate parameters relies on quantum phase estimation techniques.

	\subsection{Notation}
	Throughout the paper, $M$ refers to the number of measurement samples unless otherwise noted. For a matrix $A\in\CC^{m\times n}$, the transpose, Hermitian conjugate and complex conjugate are denoted by $A^{\top}$, $A^{\dag}$, $\overline{A}$, respectively. The same notations are also used for the operations on a vector. The complex conjugate of a complex number $a$ is denoted as $\overline{a}$. We define the basis kets of the state space of a qubit as follows
\[
\ket{0} := \begin{pmatrix}
1\\0
\end{pmatrix}, \quad \ket{1} := 
\begin{pmatrix}
0\\ 1
\end{pmatrix}.
\]

\section{Details on QSPE in general cases}\label{app:general-QSPE}
In the main text, we provide a quantum circuit for QSPE exemplified by two-qubit $U$-gates. However, QSPE can be applied to general $U$-gate with a two-level invariant subspace spanned by logical basis states $\mc{B} := \{\ket{0_\ell}, \ket{1_\ell}\}$. Intuitively, we apply a logical $Z$-rotation to modulate the phase angles in the invariant subspace to form the desired functional form of the output signal. Consequently, to perform the modulation in the invariant subspace, the rotation gate is a logical $Z$-rotation $e^{\I \omega Z_\ell}$ with the logical Pauli operator defined on the logical basis $Z_\ell =  \ket{0_\ell} \bra{0_\ell} - \ket{1_\ell}\bra{1_\ell}$. Then, the matrix representation under the two-level basis states coincides with the conventional $Z$-ration, namely $[e^{\I \omega Z_\ell}]_\mc{B} = e^{\I \omega Z}$. The initial state is prepared to two superposition states of logical basis states, which are referred to as logical Bell states: $\ket{+_\ell} := \frac{1}{\sqrt{2}}\left( \ket{0_\ell} + \ket{1_\ell} \right)$ and $\ket{\I_\ell} := \frac{1}{\sqrt{2}}\left( \ket{0_\ell} + \I \ket{1_\ell} \right)$. To read the output signal, we perform a measurement onto the logical state $\ket{0_\ell}$ to measure the transition probability for further analysis. To summarize, the quantum circuit for QSPE in general cases is given in \cref{fig:general-QSPE}.

\begin{figure}[htbp]
    \centering
    \includegraphics[width=.75\textwidth]{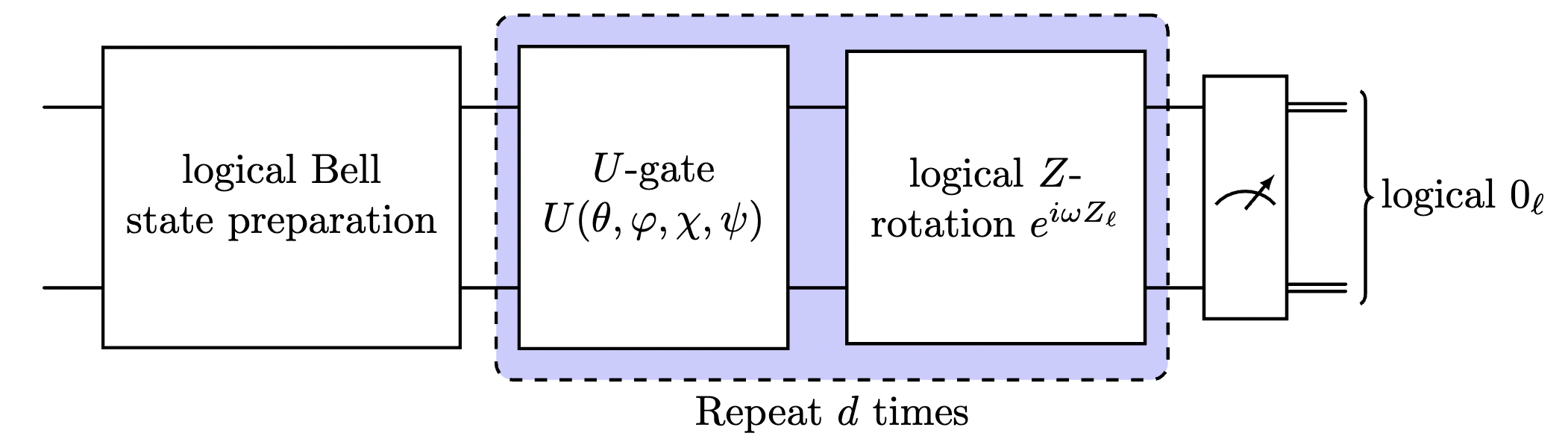}
    \caption{Quantum circuit for QSPE in general cases. The input quantum state is prepared to be logical Bell states in either $\ket{+_\ell}$ or $\ket{\I_\ell}$ according to the type of experiment. The quantum circuit enjoys a periodic structure of the unknown (multi-qubit) $U$-gate and a tunable logical $Z$-rotation. The measurement is performed to measure the transition probability onto the logical basis state $\ket{0_\ell}$.}
    \label{fig:general-QSPE}
\end{figure}

In the invariant subspace spanned by $\mc{B}$, the matrix representation of the periodic part of the circuit is
\begin{equation}
    \mc{U}^\scp{d}(\omega; \theta, \varphi, \chi) = \matrepwrt{\left( e^{\I \omega Z_\ell} U(\theta, \varphi, \chi, \psi) \right)^d}{\mc{B}} = e^{- \I d \psi} \left(e^{\I \omega Z} e^{-\I \frac{\varphi - \chi - \pi}{2} Z} e^{\I \theta X} e^{-\I \frac{\varphi + \chi + \pi}{2} Z} \right)^d.
\end{equation}
Here, the Euler-angle decomposition of the two-dimensional unitary $[U(\theta, \varphi, \chi, \psi)]_\mc{B}$ is used. We also remark that the $\psi$-dependency in the unitary representation is hidden for simplicity because the global phase $e^{- \I d \psi}$ does not affect measurement probability. In the later presentation, we will drop the global phase or mark it as $\psi \leftarrow *$ for simplicity. Note that the adjacent $Z$-rotations can be combined to simplify the expression. Hence, the unitary representation is
\begin{equation}\label{eqn:general-QSPE-matrix}
     \mc{U}^\scp{d}(\omega; \theta, \varphi, \chi) =  e^{\I\frac{\chi+\pi+\varphi}{2} Z} \left( e^{\I(\omega - \varphi) Z} e^{\I \theta X} \right)^d e^{- \I\frac{\chi+\pi+\varphi}{2} Z}.
\end{equation}
It is worth noting that the periodic part is independent of the phase accumulation angle $\chi$. Consequently, despite the amplification of $\theta$ and $\varphi$, the angle $\chi$ is not amplified on this basis. Hence, the estimation accuracy of $\chi$ is not significantly affected by increasing the depth parameter $d$ in the quantum circuit outlined in \cref{fig:general-QSPE}. To conclude, \cref{eqn:general-QSPE-matrix} will be the starting point of our analysis in the later section.
	
	\section{Analytical structure of periodic circuit}\label{sec:analytical-result-qspc}
	The QSPE circuit in Figure 1 in the main text (or \cref{fig:general-QSPE}) enjoys a periodic structure by interleaving $U$-gate and $Z$-rotation. This periodic structure is studied by the theory of QSP (\cref{thm:qsp}). Consequentially, the QSPE circuit admits some polynomial representation. In this section, we will derive the analytical form of the structure of the QSPE circuit. We start from the exact closed-form results of the QSPE circuit in \cref{sec:exact-pc}. In \cref{sec:app-pc}, we derive a good approximation to the closed-form exact results. The analysis in this section proves the following theorem which summarizes findings about the structure of QSPE.

 	\begin{theorem}[Structure of QSPE]\label{thm:structure-of-qsp-pc}
		Let $d \in \NN$ be the number of $U$-gate applications in the QSPE circuit, and 
		\begin{equation}\label{qspc-h-eq}
		    \mf{h}(\omega; \theta, \varphi, \chi) := p_X(\omega; \theta, \varphi, \chi) - \frac{1}{2} 
		  + \I \left(p_Y(\omega; \theta, \varphi, \chi) - \frac{1}{2}\right)
		\end{equation}
		be the reconstructed function derived from the measurement probability. Then, it admits a finite Fourier series expansion
		\begin{equation}
		\mf{h}(\omega; \theta, \varphi, \chi) = \sum_{-d+1}^{d-1} c_k(\theta, \varphi, \chi) e^{2\I k \omega}.
		\end{equation}
		Furthermore, for nonnegative indices $k = 0, 1, \cdots, d-1$, the Fourier coefficients take the form
		\begin{equation}
		c_k(\theta,\chi,\varphi) = \I e^{-\I\chi} e^{-\I(2k+1)\varphi} \theta + \mathrm{max} \left\{ \Or\left(\theta^3\right), \Or\left((d\theta)^5\right) \right\}.
		\end{equation}
	\end{theorem}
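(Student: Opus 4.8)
The plan is to work from the explicit unitary representation \eqref{eqn:general-QSPE-matrix} of the periodic part of the QSPE circuit, conjugate away the outer $Z$-rotations, and then invoke the polynomial structure of symmetric QSP (\cref{thm:qsp}) to obtain the finite Fourier series; finally a Taylor expansion in $\theta$ (and careful tracking of the $d\theta$ dependence via Markov brothers' inequality, \cref{thm:Markovs-ineq}) gives the leading coefficient. Concretely, first I would note that conjugation by $e^{\I\frac{\chi+\pi+\varphi}{2}Z}$ is an orthogonal change of basis that, combined with the explicit initial logical Bell states $\ket{+_\ell},\ket{\I_\ell}$ and the final projector onto $\ket{0_\ell}$, turns the amplitudes $p_X,p_Y$ into simple linear functionals of the matrix entries of $\left(e^{\I(\omega-\varphi)Z}e^{\I\theta X}\right)^d$. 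The substitution $x = \cos\theta$, $e^{\I\arccos(x)X}=xI+\I\sqrt{1-x^2}X$, and the identification $\phi_j \equiv \omega-\varphi$ for all $j$ (a symmetric choice of phase factors) put us exactly in the setting of \cref{thm:qsp}, so that the relevant block is $\left(\begin{smallmatrix}P & \I Q\sqrt{1-x^2}\\ \I Q^*\sqrt{1-x^2} & P^*\end{smallmatrix}\right)$ with $P,Q$ polynomials in $x=\cos\theta$ of degree $\le d,\,d-1$, and $Q\in\RR[x]$ by symmetry; here the dependence on $\omega-\varphi$ enters through the fixed phase $e^{\I(\omega-\varphi)Z}$ factors, which I would make explicit by writing $P,Q$ as trigonometric polynomials in $\omega-\varphi$.

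The second step is bookkeeping: assemble $\mf h(\omega)=p_X-\tfrac12+\I(p_Y-\tfrac12)$ from the matrix entries and show that, because each of the $d$ layers contributes a single factor $e^{\pm\I(\omega-\varphi)}$, the result is a Laurent polynomial in $e^{\I\omega}$ supported on exponents $\{-2(d-1),\dots,2(d-1)\}$ with step $2$ — i.e.\ $\mf h(\omega)=\sum_{k=-(d-1)}^{d-1}c_k e^{2\I k\omega}$, with the phase of $\varphi$ riding along as $e^{-\I(2k+1)\varphi}$-type factors coming from the un-conjugated $Z$-rotations and from the $e^{-\I\varphi Z}$ inside each layer. (The odd shift $2k+1$ versus the even spacing $2k$ is exactly what one expects from the off-diagonal structure — the initial Bell state couples $\ket{0_\ell}$ to $\ket{1_\ell}$, picking up one extra half-period.) The $\chi$-dependence appears only through the outermost conjugation, contributing the overall $e^{-\I\chi}$, consistent with the remark after \eqref{eqn:general-QSPE-matrix} that $\chi$ is not amplified.

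For the leading-order coefficient I would Taylor-expand in $\theta$: at $\theta=0$ the circuit is diagonal so $\mf h\equiv 0$ (both Bell states are eigenstates up to phases, giving $p_X=p_Y=\tfrac12$), hence $c_k=\Or(\theta)$; the linear term comes from exactly one insertion of $e^{\I\theta X}\approx I+\I\theta X$ among the $d$ layers, and summing the $d$ placements with their accumulated phases $e^{\I(\omega-\varphi)Z}$ yields the stated $c_k\approx \I\theta e^{-\I\chi}e^{-\I(2k+1)\varphi}$ for $k=0,\dots,d-1$. The error term is $\Or(\theta^3)$ from the next order in the per-layer expansion; to get the sharper $\Or((d\theta)^5)$ bound valid when $d\theta\lesssim 1$ I would reorganize the expansion in the Chebyshev variable $x=\cos\theta$ and bound the higher Chebyshev coefficients of the degree-$\le d$ polynomials $P,Q$ using Markov brothers' inequality (\cref{thm:Markovs-ineq}), which controls derivatives — hence Taylor/Chebyshev coefficients — of bounded polynomials of degree $d$ by powers of $d^2$.

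The main obstacle I expect is precisely this last step: getting a clean, rigorous $\mathrm{max}\{\Or(\theta^3),\Or((d\theta)^5)\}$ remainder rather than a crude $\Or(\theta^3)$ or $\Or((d\theta)^3)$. One must carefully distinguish the ``small angle'' regime from the ``small $d\theta$'' regime, and the factor-of-$d^2$-per-derivative from Markov's inequality has to be combined with the parity constraints on $P$ and $Q$ (which kill alternate Chebyshev coefficients) to beat the naive count and land on the fifth power. The rest — the Fourier support, the $e^{-\I\chi}e^{-\I(2k+1)\varphi}$ phase structure, the linear-in-$\theta$ amplitude — follows fairly mechanically once \cref{thm:qsp} is applied to \eqref{eqn:general-QSPE-matrix}.
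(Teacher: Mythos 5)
Your proposal follows essentially the same route as the paper: the symmetric-QSP structure theorem supplies the Laurent-polynomial form of $\mf{h}$ and the $e^{-\I\chi}e^{-\I(2k+1)\varphi}$ phase factorization (the paper's Theorem 9), and the remainder is controlled by re-expanding in $x=\cos\theta$ and invoking Markov brothers' inequality (the paper's Theorem 11). The one step you flag as the main obstacle --- landing on $(d\theta)^5$ --- does not in fact require the parity cancellations you speculate about. The paper first derives closed forms $P_\omega^{(d)}(\cos\theta)=e^{\I\omega}\bigl(\cos(d\sigma)+\I\sin\omega\cos\theta\,\tfrac{\sin(d\sigma)}{\sin\sigma}\bigr)$ and $Q_\omega^{(d)}(\cos\theta)=\tfrac{\sin(d\sigma)}{\sin\sigma}$ with $\cos\sigma=\cos\omega\cos\theta$ (by induction on $d$), so that $\mf{h}/(\sin\theta\, e^{-\I\omega})$ becomes an explicit polynomial of degree at most $2d-1$ in $x=\cos\theta$ with sup norm at most $d$; a first-order Taylor expansion about $x=1$ then has remainder $\tfrac12(1-\cos\theta)^2\max_x\abs{\partial_x^2(\cdot)}\lesssim \theta^4\cdot d^4\cdot d=d^5\theta^4$ by Markov, and the overall prefactor $\sin\theta$ converts this into the $2(d\theta)^5$ bound on each Fourier coefficient --- i.e., the fifth power comes from expanding in $1-\cos\theta=\Or(\theta^2)$ rather than in $\theta$, plus the sup-norm factor $d$, not from parity. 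Your perturbative ``single insertion of $\I\theta X$'' argument for the leading coefficient is also legitimate (the paper uses exactly that expansion later when bounding the quantum Fisher information), but the exact Chebyshev closed form is what makes the quantitative remainder bound clean.
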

 
	\subsection{Exact representation of the periodic circuit}\label{sec:exact-pc}
	We abstract a simple $\mathrm{SU}(2)$-product model which can be shown as the building block of the QSPE circuit in Figure 1 in the main text (or \cref{fig:general-QSPE}). It turns out that the model admits a polynomial representation. 
	\begin{definition}[Building block of QSPE]\label{def:qspc-unitary}
		Let $\theta, \omega \in \RR$ be any angles, $d \in \NN$ by any positive integer. Then, the matrix representation of a periodic circuit with $d$ repetitions and $Z$-phase modulation angle $\omega$ is
		\begin{equation}
		U^\scp{d}(\omega,\theta) = \left(e^{\I \omega Z} e^{\I \theta X}\right)^{d} e^{\I\omega Z}.
		\end{equation}
	\end{definition}
	In the quantum circuit defined above, the $X$- and $Z$-rotations are interleaved, which agrees with the structure of QSP in \cref{thm:qsp}. The theory of QSP implies that the $\mathrm{SU}(2)$-product model enjoys a structure represented by polynomials which is given by the following lemma.
	\begin{lemma}\label{lma:qsp-unitary-polynomial}
		Let $x = \cos(\theta) \in [-1,1]$. There exists a complex polynomial $P_\omega^\scp{d} \in \CC_d[x]$ and a real polynomial $Q_\omega^\scp{d} \in \RR_{d-1}[x]$ so that
		\begin{equation}\label{eqn:lma:qsp-unitary-polynomial}
		U^\scp{d}\left(\omega, \arccos(x)\right) = \left(\begin{array}{cc}
		P^\scp{d}_\omega(x) & \I \sqrt{1-x^2} Q^\scp{d}_\omega(x) \\
		\I \sqrt{1-x^2} Q^{\scp{d} *}_\omega(x) & P^{\scp{d} *}_\omega(x)
		\end{array}\right).
		\end{equation}
		Furthermore, the special unitarity of $U^\scp{d}(\omega, \arccos(x))$ yields 
		\begin{equation}\label{eqn:P2+Q2=1}
		P_\omega^\scp{d}(x) P_\omega^{\scp{d} *}(x) + (1-x^2) \left(Q_\omega^\scp{d}(x)\right)^2 = 1.
		\end{equation}
	\end{lemma}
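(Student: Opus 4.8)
The plan is to recognize \cref{lma:qsp-unitary-polynomial} as a direct specialization of the QSP structure theorem \cref{thm:qsp} to a constant — hence automatically symmetric — sequence of phase factors. First I would bring the building block $U^{\scp{d}}(\omega,\theta) = \left(e^{\I\omega Z}e^{\I\theta X}\right)^d e^{\I\omega Z}$ into the canonical QSP form. Using the elementary identity $(AB)^d A = A(BA)^d$ with $A = e^{\I\omega Z}$ and $B = e^{\I\theta X}$, one gets $U^{\scp{d}}(\omega,\theta) = e^{\I\omega Z}\prod_{j=1}^{d}\left(e^{\I\theta X}e^{\I\omega Z}\right)$, which is exactly $U(x,\varOmega)$ of \cref{thm:qsp} with $x = \cos\theta$ (so $\theta = \arccos x$, valid since $\theta$ may be taken in $[0,\pi]$) and the constant phase-factor vector $\varOmega = (\omega,\omega,\dots,\omega) \in \RR^{d+1}$.

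Second, I would simply read off the polynomial representation from \cref{thm:qsp}. The theorem produces $P, Q \in \CC[x]$ with $\deg P \le d$, $\deg Q \le d-1$, having the prescribed parities, and with $U^{\scp{d}}$ equal to the $2\times 2$ matrix whose entries are $P$, $\I\sqrt{1-x^2}\,Q$, $\I\sqrt{1-x^2}\,Q^*$, $P^*$ — which is the claimed form \cref{eqn:lma:qsp-unitary-polynomial} after setting $P_\omega^{\scp{d}} := P$ and $Q_\omega^{\scp{d}} := Q$. Moreover, since $\varOmega$ satisfies $\omega_j = \omega_{d-j}$ trivially, it is symmetric, so the last clause of \cref{thm:qsp} guarantees $Q_\omega^{\scp{d}} \in \RR_{d-1}[x]$, giving the real-polynomial claim.

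Third, for \cref{eqn:P2+Q2=1} I would invoke condition (3) of \cref{thm:qsp} (equivalently, $\det U^{\scp{d}}(\omega,\arccos x) = 1$ by special unitarity), namely $|P_\omega^{\scp{d}}(x)|^2 + (1-x^2)|Q_\omega^{\scp{d}}(x)|^2 = 1$ for all $x \in [-1,1]$. Because $Q_\omega^{\scp{d}}$ has real coefficients, $Q_\omega^{\scp{d}*}(x) = Q_\omega^{\scp{d}}(x)$ and $|Q_\omega^{\scp{d}}(x)|^2 = \left(Q_\omega^{\scp{d}}(x)\right)^2$ for real $x$, so this identity is precisely $P_\omega^{\scp{d}}(x) P_\omega^{\scp{d}*}(x) + (1-x^2)\left(Q_\omega^{\scp{d}}(x)\right)^2 = 1$.

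There is essentially no real obstacle here; the work is bookkeeping to match conventions. The one place needing a moment's care is the rearrangement that places $U^{\scp{d}}$ into the exact form $e^{\I\omega_0 Z}\prod_{j=1}^d(e^{\I\arccos(x)X}e^{\I\omega_j Z})$ used in \cref{thm:qsp}, together with checking that the degenerate ``all phase factors equal'' sequence falls under the symmetry hypothesis. If one preferred a self-contained argument instead of citing \cref{thm:qsp}, the alternative is a short induction on $d$: assuming the degree-$(d-1)$ representation, multiply on the right by one more factor $e^{\I\theta X}e^{\I\omega Z}$, expand $e^{\I\theta X} = xI + \I\sqrt{1-x^2}\,X$, and track how the degrees and parities of $P$ and $Q$ propagate, with reality of $Q$ following from transpose-invariance of the symmetric product — but invoking \cref{thm:qsp} avoids repeating that computation.
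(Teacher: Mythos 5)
Your proposal is correct and follows essentially the same route as the paper: the paper likewise obtains the polynomial representation by viewing $U^{\scp{d}}(\omega,\theta)$ as a QSP unitary (citing the GSLW structure theorem), deduces the reality of $Q_\omega^{\scp{d}}$ from the symmetry of the constant phase-factor sequence, and obtains \cref{eqn:P2+Q2=1} by taking the determinant. Your explicit rearrangement $(AB)^dA = A(BA)^d$ to match the canonical form is a detail the paper leaves implicit, but the argument is the same.
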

	\begin{proof}
		Following \cite[Theorem 4]{GilyenSuLowEtAl2019}, there exists two polynomials $P^\scp{d}_\omega, Q^\scp{d}_\omega \in \CC[x]$ so that \cref{eqn:lma:qsp-unitary-polynomial} holds. Because $U^\scp{d}(\omega, \arccos(x))$ is a QSP unitary with a set of symmetric phase factors, $Q_\omega^\scp{d} \in \RR_{d-1}[x]$ is a real polynomial according to \cite[Theorem 2]{DongMengWhaleyEtAl2020}. \cref{eqn:P2+Q2=1} holds by taking the determinant of \cref{eqn:lma:qsp-unitary-polynomial}.
	\end{proof}
	The exact presentation of the pair of polynomials $(P_\omega^\scp{d}, Q_\omega^\scp{d})$ can be determined via recurrence on a special set of points $d = 2^j, j \in \NN$ (see \cref{lma:poly-rep-special-pts} in Appendix). Based on it, we prove the generalized result to any positive integer $d$ by using induction. This gives a complete characterization of the structure of the $\mathrm{SU}(2)$-product model in \cref{def:qspc-unitary}.
	
	\begin{theorem}\label{thm:QSP-PC-P-Q}
		Let $d = 1, 2, \ldots$ be any positive integer. Then 
		\begin{equation}
		P_\omega^\scp{d}(x) = e^{\I\omega} \left(\cos\left(d \sigma\right) + \I \frac{\sin\left(d \sigma\right)}{\sin\sigma} \left(\sin \omega\right) x\right) \text{ and } Q_\omega^\scp{d}(x) = \frac{\sin\left(d \sigma\right)}{\sin\sigma}
		\end{equation}
		where $\sigma = \arccos\left(\left(\cos \omega\right) x \right)$.
	\end{theorem}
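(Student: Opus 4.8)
The plan is to establish the closed form for $P_\omega^{\scp{d}}$ and $Q_\omega^{\scp{d}}$ by strong induction on $d$, using the structure of the recursion that defines $U^{\scp{d}}(\omega,\theta)$. The base case $d=1$ is a direct computation: $U^{\scp{1}}(\omega,\theta) = e^{\I\omega Z} e^{\I\theta X} e^{\I\omega Z}$, and expanding $e^{\I\theta X} = xI + \I\sqrt{1-x^2}X$ with $x=\cos\theta$ and commuting the $Z$-rotations through $X$ gives the $(1,1)$ entry $e^{2\I\omega}x$ and the $(1,2)$ entry $\I\sqrt{1-x^2}e^{\I\omega}e^{-\I\omega}=\I\sqrt{1-x^2}$. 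Since $\sigma=\arccos((\cos\omega)x)$, at $d=1$ we need $P_\omega^{\scp1}(x) = e^{\I\omega}(\cos\sigma + \I\frac{\sin\sigma}{\sin\sigma}(\sin\omega)x) = e^{\I\omega}((\cos\omega)x + \I(\sin\omega)x)\overset{?}{=}e^{2\I\omega}x$; indeed $e^{\I\omega}((\cos\omega)x+\I(\sin\omega)x)=e^{\I\omega}\cdot e^{\I\omega}x = e^{2\I\omega}x$, and $Q_\omega^{\scp1}=\sin\sigma/\sin\sigma=1$, matching.

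For the inductive step I would use the matrix recursion
\begin{equation*}
U^{\scp{d+1}}(\omega,\theta) = e^{\I\omega Z} e^{\I\theta X} \,\big(e^{-\I\omega Z}\big)\, U^{\scp{d}}(\omega,\theta),
\end{equation*}
or more symmetrically $U^{\scp{d+1}} = U^{\scp{1}}(\omega,\theta)\,\big(e^{\I\omega Z}\big)^{-1}U^{\scp{d}}(\omega,\theta)$; either way one multiplies the polynomial-valued $2\times 2$ matrix from \cref{lma:qsp-unitary-polynomial} by the single building block and reads off the resulting $(1,1)$ and $(1,2)$ entries. Substituting the inductive ansatz for $P_\omega^{\scp d}, Q_\omega^{\scp d}$ and the $d=1$ forms, the $(1,1)$ entry of $U^{\scp{d+1}}$ becomes a combination of $\cos(d\sigma), \sin(d\sigma)$, and the recursion collapses to the angle-addition identities $\cos((d+1)\sigma) = \cos(d\sigma)\cos\sigma - \sin(d\sigma)\sin\sigma$ and $\sin((d+1)\sigma)=\sin(d\sigma)\cos\sigma+\cos(d\sigma)\sin\sigma$ after using $\cos\sigma=(\cos\omega)x$. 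The key algebraic simplification is recognizing that the "extra" terms proportional to $x^2$ or $1-x^2$ reorganize using $\sin^2\sigma = 1 - (\cos\omega)^2 x^2$; this is where the Chebyshev-like recurrence hidden in the QSP product becomes a genuine trigonometric recurrence in the variable $\sigma$.

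Alternatively — and this may be the cleaner route — I would diagonalize the single-period transfer map. Since $Q_\omega^{\scp d}$ is claimed to be independent of the global phase structure and depends on $d$ only through $\sin(d\sigma)/\sin\sigma$, it is natural to conjugate so that $e^{\I\omega Z}e^{\I\theta X}$ (up to the trailing $e^{\I\omega Z}$ bookkeeping) has eigenvalues $e^{\pm\I\sigma}$; one checks $\tfrac12\Tr{e^{\I\omega Z}e^{\I\theta X}} = \cos\omega\cos\theta$, so the rotation angle per period is exactly $\arccos((\cos\omega)(\cos\theta))=\sigma$ since $x=\cos\theta$. Then $\big(e^{\I\omega Z}e^{\I\theta X}\big)^d$ is obtained by Chebyshev evaluation, $A^d = U_{d-1}(\tfrac12\Tr A)\,A - U_{d-2}(\tfrac12\Tr A)\,I$ for $\mathrm{SU}(2)$, and $U_{d-1}(\cos\sigma)=\sin(d\sigma)/\sin\sigma$ delivers the $Q_\omega^{\scp d}$ factor immediately; multiplying by the final $e^{\I\omega Z}$ and carefully tracking the off-diagonal phase reproduces $P_\omega^{\scp d}$. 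The main obstacle in either approach is purely bookkeeping: keeping the $\omega$-phases attached to the correct matrix entries (the asymmetric trailing $e^{\I\omega Z}$ breaks the symmetry that made $Q$ real, so one must verify the stated $e^{\I\omega}$ prefactor on $P$ and the cancellation of phases in $Q$), and verifying the degree and parity bounds $P_\omega^{\scp d}\in\CC_d[x]$, $Q_\omega^{\scp d}\in\RR_{d-1}[x]$ are consistent with the claimed formula (they are, since $\sin(d\sigma)/\sin\sigma$ is a degree-$(d-1)$ polynomial in $\cos\sigma=(\cos\omega)x$ and hence in $x$, and $\cos(d\sigma)$ is degree $d$ but the formula multiplies the odd-in-$x$ piece by one more factor of $x$, matching parity $(d\bmod 2)$ from \cref{thm:qsp}). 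I would close by checking the normalization \cref{eqn:P2+Q2=1} as a consistency test: $|P|^2 + (1-x^2)Q^2 = \cos^2(d\sigma) + \tfrac{\sin^2(d\sigma)}{\sin^2\sigma}\big((\sin^2\omega)x^2 + 1 - x^2\big) = \cos^2(d\sigma)+\sin^2(d\sigma)=1$ using $1-x^2+(\sin^2\omega)x^2 = 1-(\cos^2\omega)x^2=\sin^2\sigma$, which confirms the result.
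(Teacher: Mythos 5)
Your proposal is correct. Your first route is essentially the paper's own proof: the paper also argues by induction on $d$, using the right-multiplication recursion $U^{\scp{d+1}}(\omega,\theta)=U^{\scp{d}}(\omega,\theta)\,e^{\I\theta X}e^{\I\omega Z}$ to obtain $P^{\scp{d+1}}_\omega=e^{\I\omega}\bigl(xP^{\scp{d}}_\omega-(1-x^2)Q^{\scp{d}}_\omega\bigr)$ and $Q^{\scp{d+1}}_\omega=e^{-\I\omega}\bigl(P^{\scp{d}}_\omega+xQ^{\scp{d}}_\omega\bigr)$, and then closes the induction with exactly the angle-addition identities for $\cos((d+1)\sigma)$ and $\sin((d+1)\sigma)$ that you cite; your base case and your verification of \cref{eqn:P2+Q2=1} are both correct. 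One small slip: your first displayed recursion $U^{\scp{d+1}}=e^{\I\omega Z}e^{\I\theta X}\,e^{-\I\omega Z}\,U^{\scp{d}}$ carries a spurious $e^{-\I\omega Z}$ (the correct left-multiplication form is $U^{\scp{d+1}}=e^{\I\omega Z}e^{\I\theta X}\,U^{\scp{d}}$); your ``more symmetric'' version $U^{\scp{1}}\bigl(e^{\I\omega Z}\bigr)^{-1}U^{\scp{d}}$ is the right one, so this is harmless. Your second route is genuinely different from the paper and, in my view, cleaner: setting $A=e^{\I\omega Z}e^{\I\theta X}$, computing $\tfrac12\Tr{A}=\cos\omega\cos\theta=\cos\sigma$, and invoking the $\mathrm{SU}(2)$ Cayley--Hamilton/Chebyshev identity $A^d=U_{d-1}(\cos\sigma)\,A-U_{d-2}(\cos\sigma)\,I$ yields $Q^{\scp{d}}_\omega=U_{d-1}(\cos\sigma)=\sin(d\sigma)/\sin\sigma$ directly from the off-diagonal entry of $A^d e^{\I\omega Z}$, while the diagonal entry collapses to the claimed $P^{\scp{d}}_\omega$ via $\cos\sigma\,U_{d-1}-U_{d-2}=T_d$. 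This bypasses induction on the matrix entries altogether and makes transparent why $\sigma$ is the natural variable (it is the rotation angle per period of the transfer map); the paper's induction is more elementary but obscures that structural point.
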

	\begin{proof}
		Let us prove the theorem by induction. The base case is $d=1$, where $P^\scp{1}_\omega(x) = e^{2\I\omega} x$ and $Q^\scp{1}_\omega(x) = 1$. Assuming that the induction hypothesis holds for $d$, we will prove it also holds for $d+1$. Using \cref{def:qspc-unitary,lma:qsp-unitary-polynomial}, the polynomials can be determined by a recurrence relation
		\begin{equation}
		U^\scp{d+1}(\omega,\theta) = U^\scp{d}(\omega,\theta) e^{\I\theta X} e^{\I \omega Z} \Rightarrow \left\{
		\begin{array}{l}
		P^\scp{d+1}_\omega(x) = e^{\I\omega}\left(x P^\scp{d}_\omega(x) - (1-x^2) Q^\scp{d}_\omega(x)\right),\\
		Q^\scp{d+1}_\omega(x) = e^{-\I\omega} \left(P^\scp{d}_\omega(x) + x Q^\scp{d}_\omega(x)\right).
		\end{array}
		\right.
		\end{equation}
		Using the induction hypothesis, we have
		\begin{equation}
		\begin{split}
		P_\omega^\scp{d+1}(x) &= e^{\I\omega} \left(\cos\sigma\cos(d\sigma) - \left(1-\left(1-\sin^2\omega\right)x^2\right) \frac{\sin(d\sigma)}{\sin\sigma} \right.\\
		&\quad\quad\quad\quad\quad\quad \left.+ \I \left(\sin\omega\right)x \frac{\sin\sigma\cos(d\sigma)+\cos\sigma\sin(d\sigma)}{\sin\sigma}\right)\\
		&= e^{\I\omega} \left(\cos\left((d+1) \sigma\right) + \I \frac{\sin\left((d+1) \sigma\right)}{\sin\sigma} \left(\sin \omega\right) x\right)
		\end{split}
		\end{equation}
		and
		\begin{equation}
		Q_\omega^\scp{d+1}(x) = \cos(d\sigma) + \cos\sigma\frac{\sin(d\sigma)}{\sin\sigma} = \frac{\sin\left((d+1)\sigma\right)}{\sin\sigma}.
		\end{equation}
		Therefore, the theorem follows induction.
	\end{proof}
The closed-form results above help us to analyze the dynamics of Figure 1 in the main text (or \cref{fig:general-QSPE}) where we  apply  a Pauli $Z$ modulation $e^{\I\omega Z_{A_0}}$ to the periodic circuit. Restricted to the single-excitation subspace, the matrix representation of the QSPE circuit in Figure 1 in the main text (or \cref{fig:general-QSPE}) is
	\begin{equation}\label{eqn:circuit-rep-of-qspc-with-building-block}
	\mc{U}^\scp{d}(\omega; \theta, \varphi, \chi) = \matrepwrt{\left(e^{\I \omega Z_\ell} U\left(\theta,\varphi,\chi,*\right)\right)^d}{\mc{B}} = e^{\I\frac{\chi+\pi+\varphi}{2} Z} U^\scp{d}(\omega-\varphi, \theta) e^{-\I\left(\omega + \frac{\chi+\pi-\varphi}{2} \right)Z}.
	\end{equation}
	The initial two-qubit state of the QSP circuit can be prepared as Bell states $\ket{+_\ell}$ or $\ket{\I_\ell}$ by using Hadamard gate, phase gate and CNOT gate. Recall that we denote the probability by measuring qubits $A_0A_1$ with $01$ as
	\begin{equation}
	p_X(\omega; \theta, \varphi, \chi) = \abs{\bra{0_\ell}\mc{U}^\scp{d}(\omega; \theta, \varphi, \chi)\ket{+_\ell}}^2
	\end{equation}
	when the initial state is $\ket{+_\ell}$, and
	\begin{equation}
	p_Y(\omega; \theta, \varphi, \chi) = \abs{\bra{0_\ell}\mc{U}^\scp{d}(\omega; \theta, \varphi, \chi)\ket{\I_\ell}}^2
	\end{equation}
	when the initial state is $\ket{\I_\ell}$ respectively. These bridge the gap between the analytical results derived based on \cref{def:qspc-unitary} and the measurement probabilities from the experimental setting. We are ready to prove the first half of \cref{thm:structure-of-qsp-pc}. 
	
	\begin{theorem}\label{thm:reconstruction-h-Fourier-expansion}
		The function reconstructed from the measurement probability admits the following Fourier series expansion:
		\begin{equation}
		\mf{h}(\omega; \theta, \varphi, \chi) := p_X(\omega; \theta, \varphi, \chi) + \I p_Y(\omega; \theta, \varphi, \chi) - \frac{1+ \I}{2} = \sum_{k = -d+1}^{d-1} c_{k}(\theta, \chi, \varphi) e^{2 \I k \omega}
		\end{equation}
		where
		\begin{equation}
		c_k(\theta,\chi,\varphi) = \I e^{-\I\chi} e^{-\I(2k+1)\varphi} \wt{c}_k(\theta)\quad \text{and}\quad \wt{c}_k(\theta) \in \RR.
		\end{equation}
	\end{theorem}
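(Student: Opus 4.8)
The plan is to collapse the two measurement probabilities into a single closed-form complex-valued function and then extract its Fourier structure directly from the polynomial representation in \cref{lma:qsp-unitary-polynomial} and the explicit formulas in \cref{thm:QSP-PC-P-Q}. As a first step I would write $P:=P^\scp{d}_{\omega-\varphi}(\cos\theta)$, $Q:=Q^\scp{d}_{\omega-\varphi}(\cos\theta)\in\RR$, $\mu:=\tfrac{\chi+\pi+\varphi}{2}$, $\nu:=\omega+\tfrac{\chi+\pi-\varphi}{2}$, and substitute the polynomial form \cref{eqn:lma:qsp-unitary-polynomial} into the matrix identity \cref{eqn:circuit-rep-of-qspc-with-building-block}; conjugating by the two diagonal $Z$-rotations gives
\[ \mc U^\scp{d}(\omega;\theta,\varphi,\chi)=\begin{pmatrix} e^{\I(\mu-\nu)}P & \I\sin\theta\,e^{\I(\mu+\nu)}Q\\ \I\sin\theta\,e^{-\I(\mu+\nu)}Q & e^{-\I(\mu-\nu)}P^* \end{pmatrix}. \]
Expanding $\abs{\bra{0_\ell}\mc U^\scp{d}\ket{+_\ell}}^2$ and $\abs{\bra{0_\ell}\mc U^\scp{d}\ket{\I_\ell}}^2$, the ``$\abs{P}^2+\sin^2\theta\,Q^2$'' contributions collapse to $1$ by the unitarity relation \cref{eqn:P2+Q2=1}, leaving only the cross terms: $p_X-\tfrac12=\sin\theta\,Q\,\Re(\I e^{2\I\nu}\overline P)$ and $p_Y-\tfrac12=-\sin\theta\,Q\,\Re(e^{2\I\nu}\overline P)$. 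Combining them through the elementary identity $\Re(\I z)-\I\Re(z)=-\I\overline z$ (with $z=e^{2\I\nu}\overline P$) and using $e^{-2\I\nu}=-e^{-2\I\omega}e^{-\I\chi}e^{\I\varphi}$ yields the compact form
\[ \mf h(\omega;\theta,\varphi,\chi)=\I\,e^{-\I\chi}e^{\I\varphi}\,\sin\theta\,e^{-2\I\omega}\,P^\scp{d}_{\omega-\varphi}(\cos\theta)\,Q^\scp{d}_{\omega-\varphi}(\cos\theta). \]

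Everything then reduces to the Fourier structure of the trigonometric polynomial $g(\vartheta):=P^\scp{d}_\vartheta(\cos\theta)\,Q^\scp{d}_\vartheta(\cos\theta)$ in an auxiliary real variable $\vartheta$ (with $\theta$ held fixed). I would prove two facts: (a) $g(\vartheta)=\sum_{n=-(d-2)}^{d}b_n(\theta)\,e^{2\I n\vartheta}$ — only even harmonics, in that range — and (b) every $b_n(\theta)$ is real. For (a), I would induct on $d$ using the recurrence for $(P^\scp{d}_\vartheta,Q^\scp{d}_\vartheta)$ established in the proof of \cref{thm:QSP-PC-P-Q}, showing that the $\vartheta$-Fourier support of $P^\scp{d}_\vartheta(\cos\theta)$ is contained in $\{-(d-3),-(d-5),\dots,d-1,d+1\}$ and that of $Q^\scp{d}_\vartheta(\cos\theta)$ in $\{-(d-1),\dots,d-1\}$ (both consisting of integers of the parity of $d+1$); the product of two such polynomials then has only even harmonics, ranging between $-(2d-4)$ and $2d$. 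For (b), conjugating $U^\scp{d}(\vartheta,\theta)$ entrywise and using that $X,Z$ are real together with $e^{-\I\theta X}=Z e^{\I\theta X}Z$ gives $\overline{U^\scp{d}(\vartheta,\theta)}=Z\,U^\scp{d}(-\vartheta,\theta)\,Z$; reading off the $(1,1)$ and $(1,2)$ entries (and recalling $Q^\scp{d}_\vartheta(\cos\theta)\in\RR$) yields $\overline{P^\scp{d}_\vartheta(\cos\theta)}=P^\scp{d}_{-\vartheta}(\cos\theta)$ and $Q^\scp{d}_{-\vartheta}(\cos\theta)=Q^\scp{d}_\vartheta(\cos\theta)$, so $g(-\vartheta)=\overline{g(\vartheta)}$, which is equivalent to all Fourier coefficients of $g$ being real.

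To finish, I would substitute $\vartheta=\omega-\varphi$ into the expansion of $g$, plug it into the closed form above, use $e^{-2\I\omega}e^{2\I n(\omega-\varphi)}=e^{2\I(n-1)\omega}e^{-2\I n\varphi}$, and reindex by $k=n-1$ (so that $k$ runs over $-(d-1),\dots,d-1$); the resulting phase factor $e^{\I\varphi}e^{-2\I(k+1)\varphi}$ simplifies to $e^{-\I(2k+1)\varphi}$, giving $\mf h=\sum_{k=-d+1}^{d-1}c_k e^{2\I k\omega}$ with $c_k=\I e^{-\I\chi}e^{-\I(2k+1)\varphi}\,\wt c_k(\theta)$ and $\wt c_k(\theta)=\sin\theta\,b_{k+1}(\theta)\in\RR$, which is exactly the asserted structure.

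The step I expect to need the most care is the sharp support bound in (a): a naive degree count would leave $P^\scp{d}_\vartheta$ a potential harmonic at $e^{-\I(d-1)\vartheta}$, which after the $e^{-2\I\omega}$ shift would produce a spurious $k=-d$ term in $\mf h$ — one more than the claimed range. The point is that this lowest harmonic of $P^\scp{d}_\vartheta$ actually vanishes, so its support genuinely starts at $-(d-3)$; this is what the induction must (and does) deliver, since the base case $P^\scp{1}_\vartheta=e^{2\I\vartheta}\cos\theta$ already exhibits the reduced support and the recurrence propagates it. The remaining ingredients — the algebra of Step 1, the reality argument (b), and the reindexing — are routine.
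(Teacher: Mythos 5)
Your proposal is correct and follows essentially the same route as the paper: derive the closed form $\mf h=\I e^{-\I\chi}e^{\I\varphi}\sin\theta\, e^{-2\I\omega}P^\scp{d}_{\omega-\varphi}(\cos\theta)Q^\scp{d}_{\omega-\varphi}(\cos\theta)$ from \cref{eqn:circuit-rep-of-qspc-with-building-block} and unitarity, then read off that the remaining factor is a trigonometric polynomial in $\omega-\varphi$ with only even harmonics in the range $\pm(d-1)$ and real coefficients. The only difference is bookkeeping: where the paper obtains the $\pi$-periodicity, the reality, and the index range from the explicit formulas of \cref{thm:QSP-PC-P-Q} (asserting the range ``by straightforward computation''), you obtain the support by induction on the recurrence and the reality from the conjugation symmetry $\overline{U^\scp{d}(\vartheta,\theta)}=ZU^\scp{d}(-\vartheta,\theta)Z$ — and in doing so you correctly pin down the cancellation of the lowest harmonic of $P^\scp{d}_\vartheta$ that makes the lower limit $-(d-1)$ rather than $-d$.
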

	\begin{proof}
		For simplicity, let $\beta \in \mathrm{U}(1)$ and $\ket{\beta} := \frac{1}{\sqrt{2}}\left(\ket{0_\ell} + \beta \ket{1_\ell}\right)$. Then, $\ket{\beta=1} = \ket{+_\ell}$ and $\ket{\beta = \I} = \ket{\I_\ell}$. Given the input quantum state is $\ket{\beta}$, we have the measurement probability
		\begin{equation}
		\begin{split}
		p_\beta(\omega;\theta,\varphi) &= \abs{\bra{0_\ell } \mc{U}^\scp{d}(\omega; \theta, \varphi, \chi)\ket{ \beta}}^2\\
		&= \abs{\frac{1}{\sqrt{2}} e^{\I\frac{\varphi+\chi+\pi}{2}} \bra{0_\ell} U^\scp{d}(\omega-\varphi, \theta) \left(e^{-\I\left(\omega + \frac{\chi+\pi-\varphi}{2} \right)}\ket{0_\ell} + \beta e^{\I\left(\omega + \frac{\chi+\pi-\varphi}{2} \right)}\ket{1_\ell}\right)}^2\\
		&= \frac{1}{2} + \Re\left( \overline{\beta} e^{\I(\varphi-\chi-2\omega)} P_{\omega-\varphi}^\scp{d}(\cos\theta) \I \sin\theta Q_{\omega-\varphi}^\scp{d}(\cos\theta) \right).
		\end{split}
		\end{equation}
		Then, $p_X = p_{\beta=1}$ and $p_Y = p_{\beta = \I}$. Furthermore, it holds that
		\begin{equation}\label{eqn:pX-Re-pY-Im}
		\begin{split}
		& p_X(\omega; \theta, \varphi, \chi) - \frac{1}{2} = \Re\left( e^{\I(\varphi-\chi-2\omega)} P_{\omega-\varphi}^\scp{d}(\cos\theta) \I \sin\theta Q_{\omega-\varphi}^\scp{d}(\cos\theta) \right),\\
		& p_Y(\omega; \theta, \varphi, \chi) - \frac{1}{2} = \Im\left( e^{\I(\varphi-\chi-2\omega)} P_{\omega-\varphi}^\scp{d}(\cos\theta) \I \sin\theta Q_{\omega-\varphi}^\scp{d}(\cos\theta) \right).
		\end{split}
		\end{equation}
		Therefore, the reconstructed function is
		\begin{equation}
		\mf{h}(\omega; \theta, \varphi, \chi) = \I e^{-\I(\chi+\varphi)} \sin\theta e^{-2\I(\omega - \varphi)} P_{\omega-\varphi}^\scp{d}(\cos\theta) Q_{\omega-\varphi}^\scp{d}(\cos\theta) =: \I e^{-\I(\chi+\varphi)} \wt{\mf{h}}(\omega-\varphi,\theta).
		\end{equation}
		Note that following \cref{thm:QSP-PC-P-Q}
		\begin{equation}
		\begin{split}
		&P_{\omega+\pi-\varphi}^\scp{d}(\cos\theta) = (-1)^{d+1} P_{\omega-\varphi}^\scp{d}(\cos\theta),\ Q_{\omega+\pi-\varphi}^\scp{d}(\cos\theta) = (-1)^{d-1} Q_{\omega-\varphi}^\scp{d}(\cos\theta)\\
		& \Rightarrow\ \wt{\mf{h}}(\omega+\pi-\varphi,\theta) = \wt{\mf{h}}(\omega-\varphi,\theta).
		\end{split}
		\end{equation}
		That means $\wt{\mf{h}}(\omega-\varphi,\theta)$ is $\pi$-periodic in the first argument. Furthermore, $\wt{\mf{h}}(\omega-\varphi,\theta)$ is a trigonometric polynomial in $(\omega-\varphi)$. Thus, it admits the Fourier series expansion:
		\begin{equation}
		\wt{\mf{h}}(\omega-\varphi,\theta) = \sum_{k=-d+1}^{d-1} \wt{c}_k(\theta) e^{2\I k (\omega-\varphi)}
		\end{equation}
		with coefficients
		\begin{equation}\label{eqn:tilde-c-k-expr}
		    \wt{c}_k(\theta) = \frac{\sin\theta}{\pi} \int_0^\pi e^{-2\I(k+1)\omega} P_\omega^\scp{d}(\cos\theta) Q_\omega^\scp{d}(\cos\theta) \rd \omega.
		\end{equation}
		The upper limit and lower limit of the summation index $\pm (d-1)$ can be verified by straightforward computation. According to \cref{thm:QSP-PC-P-Q}, we also have
		\begin{equation}
		P_{-\omega}^\scp{d}(\cos\theta) = \overline{P_\omega^\scp{d}(\cos\theta)},\ Q_\omega^\scp{d}(\cos\theta) \in \RR\ \Rightarrow\ \wt{\mf{h}}(\omega,\theta) = \overline{\wt{\mf{h}}(-\omega,\theta)}\ \Rightarrow\ \wt{c}_k(\theta) \in \RR.
		\end{equation}
		The proof is completed.
	\end{proof}
	It is also useful to study the magnitude of the reconstructed function. It gives the intuition of the distribution of the magnitude over different modulation angle $\omega$. The following corollary indicated that the magnitude of the reconstructed function attains its maximum $d\theta$ when the phase matching condition $\omega = \varphi$ is achieved.
	\begin{corollary}\label{cor:modulus-magnitude-sin2theta}
		The magnitude of $p_X(\omega; \theta, \varphi, \chi)-\frac{1}{2}$ and $p_Y(\omega; \theta, \varphi, \chi)-\frac{1}{2}$ are of order $\sin\theta$. Furthermore
		\begin{equation}\label{eqn:px2+py2-calibration}
		\begin{split}
		\mf{p}(\omega-\varphi, \theta) &:= \abs{\mf{h}(\omega;\theta,\varphi,\chi)}^2 =  \left(p_X(\omega; \theta, \varphi, \chi) - \frac{1}{2}\right)^2 + \left(p_Y(\omega; \theta, \varphi, \chi) - \frac{1}{2}\right)^2\\
		&= \sin^2(\theta) \frac{\sin^2(d\sigma)}{\sin^2(\sigma)} \left(1 - \sin^2(\theta) \frac{\sin^2(d\sigma)}{\sin^2(\sigma)}\right).
		\end{split}
		\end{equation}
		Here $\sigma = \arccos\left(\cos(\omega-\varphi)\cos(\theta)\right)$.
	\end{corollary}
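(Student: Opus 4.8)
The plan is to chain together three results already in hand: the explicit formula for $\mf{h}$ produced inside the proof of \cref{thm:reconstruction-h-Fourier-expansion}, the determinant identity \cref{eqn:P2+Q2=1} from \cref{lma:qsp-unitary-polynomial}, and the closed forms of $P_\omega^\scp{d}$ and $Q_\omega^\scp{d}$ from \cref{thm:QSP-PC-P-Q}. No new machinery is needed — the corollary is essentially a substitution once one observes that $\mf{p}=\abs{\mf{h}}^2=(p_X-\tfrac12)^2+(p_Y-\tfrac12)^2$ by the very definition of $\mf{h}$.

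First I would recall from \cref{eqn:pX-Re-pY-Im} that $p_X-\tfrac12=\Re(z)$ and $p_Y-\tfrac12=\Im(z)$ with
\[
z := e^{\I(\varphi-\chi-2\omega)}\,\I\,\sin\theta\, P_{\omega-\varphi}^\scp{d}(\cos\theta)\, Q_{\omega-\varphi}^\scp{d}(\cos\theta).
\]
Since $Q_{\omega-\varphi}^\scp{d}$ is real by \cref{lma:qsp-unitary-polynomial} and the scalar prefactors $e^{\I(\varphi-\chi-2\omega)}$ and $\I$ are unimodular, one has $\abs{z}=\sin\theta\,\bigl|P_{\omega-\varphi}^\scp{d}(\cos\theta)\bigr|\,\bigl|Q_{\omega-\varphi}^\scp{d}(\cos\theta)\bigr|$, and therefore $\mf{p}=\abs{\Re z}^2+\abs{\Im z}^2=\abs{z}^2=\sin^2\theta\,\bigl|P_{\omega-\varphi}^\scp{d}(\cos\theta)\bigr|^2\bigl(Q_{\omega-\varphi}^\scp{d}(\cos\theta)\bigr)^2$. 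This already reduces the corollary to evaluating $|P|^2$ and $Q^2$ at $x=\cos\theta$.

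Next I would substitute the explicit expressions. \cref{thm:QSP-PC-P-Q} gives $Q_{\omega-\varphi}^\scp{d}(\cos\theta)=\sin(d\sigma)/\sin\sigma$ with $\sigma=\arccos(\cos(\omega-\varphi)\cos\theta)$, which is precisely the $\sigma$ named in the corollary (the shift $\omega\mapsto\omega-\varphi$ being inherited from \cref{eqn:circuit-rep-of-qspc-with-building-block}). For $|P|^2$ I would avoid expanding the closed form directly and instead use the determinant identity \cref{eqn:P2+Q2=1} with $x=\cos\theta$, so $1-x^2=\sin^2\theta$ and
\[
\bigl|P_{\omega-\varphi}^\scp{d}(\cos\theta)\bigr|^2 = 1 - \sin^2\theta\,\bigl(Q_{\omega-\varphi}^\scp{d}(\cos\theta)\bigr)^2 = 1 - \sin^2\theta\,\frac{\sin^2(d\sigma)}{\sin^2\sigma}.
\]
Plugging this and $Q^2=\sin^2(d\sigma)/\sin^2\sigma$ into the expression for $\mf{p}$ yields the claimed formula. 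For the ``order $\sin\theta$'' statement I would note that $|P|\le 1$ (again from \cref{eqn:P2+Q2=1}) and $|Q|=\abs{\sin(d\sigma)/\sin\sigma}\le d$, so $\abs{p_X-\tfrac12},\abs{p_Y-\tfrac12}\le\abs{z}\le d\sin\theta$ while the leading behavior is linear in $\sin\theta$. Finally, for the interpretive claim about the maximum at phase matching, I would observe that $\sigma\ge\theta$ with equality iff $\omega=\varphi\ (\mathrm{mod}\ 2\pi)$, and that at $\omega=\varphi$ we get $\sigma=\theta$, hence $\mf{p}=\sin^2(d\theta)\cos^2(d\theta)=\tfrac14\sin^2(2d\theta)$ and $\abs{\mf{h}}=\tfrac12\abs{\sin(2d\theta)}\approx d\theta$ in the regime $d\theta\le 1$.

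The argument is routine, so there is no genuine obstacle; the only steps requiring a moment of care are the bookkeeping of the unimodular and real factors, so that $\abs{\mf{h}}^2$ collapses cleanly to $\sin^2\theta\,|P|^2 Q^2$, and the verification that the $\sigma$ appearing in \cref{thm:QSP-PC-P-Q} (defined through the argument $\omega-\varphi$) coincides with the corollary's $\sigma$ after the phase shift. Using the determinant relation rather than expanding $P_\omega^\scp{d}$ explicitly is what keeps the computation short; an equivalent but longer route would compute $\abs{P_{\omega-\varphi}^\scp{d}(\cos\theta)}^2=\cos^2(d\sigma)+\tfrac{\sin^2(d\sigma)}{\sin^2\sigma}\sin^2(\omega-\varphi)\cos^2\theta$ directly and simplify using $\sin^2\sigma=1-\cos^2(\omega-\varphi)\cos^2\theta$.
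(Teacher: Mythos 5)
Your proposal is correct and follows essentially the same route as the paper's proof: both compute $\mf{p}=\abs{\mf{h}}^2$ from the representation in \cref{eqn:pX-Re-pY-Im}, strip the unimodular prefactors to get $\sin^2\theta\,\abs{P}^2\abs{Q}^2$, eliminate $\abs{P}^2$ via the determinant identity \cref{eqn:P2+Q2=1}, and substitute the closed form of $Q$ from \cref{thm:QSP-PC-P-Q}. The extra remarks on the $\Or(\sin\theta)$ bound and the phase-matching maximum are correct additions but do not change the argument.
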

	\begin{proof}
		Using \cref{thm:QSP-PC-P-Q,eqn:P2+Q2=1,eqn:pX-Re-pY-Im} as intermediate steps, we have
		\begin{equation}
		\begin{split}
		\mf{p}(\omega-\varphi, \theta) &= \abs{e^{\I(\varphi-\chi-2\omega)} P_{\omega-\varphi}^\scp{d}(\cos\theta) \I \sin\theta Q_{\omega-\varphi}^\scp{d}(\cos\theta)}^2\\
		&= \sin^2(\theta) \abs{Q_{\omega-\varphi}^\scp{d}(\cos\theta)}^2 \abs{P_{\omega-\varphi}^\scp{d}(\cos\theta)}^2\\
		&= \sin^2(\theta) \abs{Q_{\omega-\varphi}^\scp{d}(\cos\theta)}^2 \left(1 - \sin^2(\theta) \abs{Q_{\omega-\varphi}^\scp{d}(\cos\theta)}^2\right)\\
		&= \sin^2(\theta) \frac{\sin^2(d\sigma)}{\sin^2(\sigma)} \left(1 - \sin^2(\theta) \frac{\sin^2(d\sigma)}{\sin^2(\sigma)}\right)
		\end{split}
		\end{equation}
		which completes the proof.
	\end{proof}
	
Notice that if the transition probability between tensor-product states is measured, the magnitude of the signal (the nontrivial $\theta$ dependence in the transition probability) is $\Or\left(\sin^2\theta\right)$. Nonetheless, by preparing the input quantum state as Bell states, \cref{cor:modulus-magnitude-sin2theta} reveals that the magnitude of the signal is lifted to $\Or(\sin\theta)$ instead. Therefore, when $\theta$ is small, it is a significant improvement of the SNR especially in the presence of realistic errors.
	
	\subsection{Approximate Fourier coefficients}\label{sec:app-pc}
	\cref{thm:reconstruction-h-Fourier-expansion} shows that the $\theta$ and $\varphi$ dependence are factored completely in the amplitude and the phase of the Fourier coefficients of the reconstructed function $\mf{h}$ respectively. Given the angle $\omega$ of the $Z$-rotation is tunable, we can sample the data point by performing the QSPE circuit in Figure 1 in the main text (or \cref{fig:general-QSPE}) with equally spaced angles $\omega_j = \frac{j}{2d-1} \pi$ where $j = 0, \cdots, 2d-2$. These $2(2d-1)$ quantum experiments yield two sequences of measurement probabilities $\vp_X^\expl := \left(p^\expl_X(\omega_0), p^\expl_X(\omega_1), \cdots, p^\expl_X(\omega_{2d-2})\right)$ and $\vp_Y^\expl := \left(p^\expl_Y(\omega_0), p^\expl_Y(\omega_1), \cdots, p^\expl_Y(\omega_{2d-2})\right)$. Therefore, we can compute $\mf{h}^\expl = \vp_X^\expl + \vp_Y^\expl - \frac{1+\I}{2}$ from experimental data. The Fourier coefficients of $\mf{h}$ can be computed by fast Fourier transform (FFT). The Fourier coefficients $\vc^\expl := \left(c_{-d+1}^{(\expl)}, c_{-d+2}^{(\expl)}, \cdots, c_{d-1}^{(\expl)}\right) = \mathsf{FFT}(\mf{h}^\expl)$ can be computed efficiently using FFT. In order to infer $\theta$ and $\varphi$ accurately and efficiently from the data, we need to study the approximate structure of the Fourier coefficients first.
	
	\begin{theorem}\label{thm:approx-coef-first-order}
		Let $\hat{\mf{h}}(\omega, \cos\theta) := \wt{\mf{h}}(\omega,\theta)/(\sin\theta e^{-\I \omega})$. There is an approximation to it: 
		\begin{equation}\label{eqn:hat_c_k_star_expr}
		\begin{split}
		& \hat{\mf{h}}^\star(\omega,\cos\theta) = \sum_{k=-d+1}^{d-1} \hat{c}_k^\star(\theta) e^{\I(2k+1)\omega},\quad \text{where }\\
		 & \hat{c}_k^\star(\theta) = \left\{
		\begin{array}{ll}
		1 - \frac{1}{2}\left(3d^2-k^2-(k+1)^2-\left(d-(2k+1)\right)^2\right) (1-\cos\theta) \ \text{ if } 0 \le k \le d-1, \\
		- \frac{1}{2} \left(d^2 + (d+2k+1)^2 - k^2 - (k+1)^2\right) (1-\cos\theta) \quad \text{ if } -d+1 \le k \le -1.
		\end{array} \right.
		\end{split}
		\end{equation}
		The approximation error is upper bounded as
		\begin{equation}
		\max_{\omega \in [0,\pi]} \abs{\hat{\mf{h}}(\omega,\cos\theta) - \hat{\mf{h}}^\star(\omega,\cos\theta)} \le 2 d^5 \theta^4
		\end{equation}
		and for any $k$
		\begin{equation}
		\abs{\wt{c}_k(\theta) - \sin\theta \hat{c}_k^\star(\theta)} \le 2(d\theta)^5.
		\end{equation}
	\end{theorem}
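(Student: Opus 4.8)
The proof rests on the exact closed form of the reconstructed function. Combining \cref{lma:qsp-unitary-polynomial} and \cref{thm:QSP-PC-P-Q}, one has $\hat{\mf{h}}(\omega,\cos\theta) = e^{-\I\omega} P_\omega^\scp{d}(\cos\theta)\, Q_\omega^\scp{d}(\cos\theta)$, which by \cref{lma:qsp-unitary-polynomial} is a polynomial $g(x;\omega)$ in $x=\cos\theta$ of degree at most $2d-1$ and, by \cref{thm:reconstruction-h-Fourier-expansion}, a trigonometric polynomial $\sum_{k=-d+1}^{d-1}\hat{c}_k(\theta)e^{\I(2k+1)\omega}$ in $\omega$ with $\hat{c}_k(\theta)=\wt{c}_k(\theta)/\sin\theta$. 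Since $g(\cdot;\omega)$ is a polynomial, each $\hat{c}_k$ is a polynomial in $(1-\cos\theta)$, and $\hat{\mf{h}}^\star$ is precisely its degree-$\le 1$ truncation. So the plan is: expand $g(\cdot;\omega)$ about $x=1$ (equivalently $\theta=0$); read off the zeroth- and first-order coefficients in $(1-\cos\theta)$ to match $\hat{c}^\star_k(\theta)$; and bound the remainder via \cref{thm:Markovs-ineq}.

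For the zeroth-order term, at $x=1$ one has $\sigma=\omega$ in \cref{thm:QSP-PC-P-Q}, so $g(1;\omega)=e^{\I d\omega}\sin(d\omega)/\sin\omega=\sum_{k=0}^{d-1}e^{\I(2k+1)\omega}$, i.e. $\hat{c}^\star_k(0)=1$ for $0\le k\le d-1$ and $0$ otherwise. For the linear term one must compute $\partial_x g(x;\omega)|_{x=1}$ and expand it in the basis $\{e^{\I(2k+1)\omega}\}$. Two equivalent routes: differentiate the closed form of \cref{thm:QSP-PC-P-Q} directly, using $\tfrac{\rd\sigma}{\rd x}\big|_{x=1}=-\cot\omega$, and re-expand the resulting product of trigonometric functions into exponentials; or expand the $\mathrm{SU}(2)$ product $\left(e^{\I\omega Z}e^{\I\theta X}\right)^d e^{\I\omega Z}$ to second order in $\theta$, read off the $\theta^2$ parts of $P_\omega^\scp{d}$ and $Q_\omega^\scp{d}$, and form the product. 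Either way the coefficient of $(1-\cos\theta)$ in $\hat{c}_k$ arises as a sum over pairs/triples of gate-insertion positions of exponentials with a fixed frequency; re-indexing by $2k+1$ and counting the number of contributing configurations produces the stated quadratic expressions $-\tfrac12\big(3d^2-k^2-(k+1)^2-(d-(2k+1))^2\big)$ for $0\le k\le d-1$ and $-\tfrac12\big(d^2+(d+2k+1)^2-k^2-(k+1)^2\big)$ for $-d+1\le k\le -1$. I would sanity-check these against direct computation at $d=1,2$. \emph{This coefficient identification is the main obstacle:} it is where the combinatorial structure of the circuit enters, and the lattice-point count is where sign and index errors creep in.

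For the remainder, write the first-order Taylor remainder in integral form, $\hat{\mf{h}}(\omega,\cos\theta)-\hat{\mf{h}}^\star(\omega,\cos\theta)=\int_1^{\cos\theta}(\cos\theta-t)\,\partial_t^2 g(t;\omega)\,\rd t$, so that $\abs{\hat{\mf{h}}-\hat{\mf{h}}^\star}\le\tfrac12(1-\cos\theta)^2\max_{t\in[-1,1]}\abs{\partial_t^2 g(t;\omega)}$. The key input is $\max_{x\in[-1,1]}\abs{g(x;\omega)}\le d$: from $\abs{P_\omega^\scp{d}(x)}^2+(1-x^2)\abs{Q_\omega^\scp{d}(x)}^2=1$ (\cref{eqn:P2+Q2=1}) we get $\abs{P_\omega^\scp{d}(x)}\le 1$, while $\abs{Q_\omega^\scp{d}(x)}=\abs{\sin(d\sigma)/\sin\sigma}\le d$ for real $\sigma$ (and $\sigma=\arccos(x\cos\omega)$ is real on $[-1,1]$), hence $\abs{g}=\abs{P_\omega^\scp{d}}\,\abs{Q_\omega^\scp{d}}\le d$. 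Since $\deg_x g\le 2d-1$, \cref{thm:Markovs-ineq} with $k=2$ gives $\max_{[-1,1]}\abs{\partial_x^2 g}\le d\cdot\frac{(2d-1)^2((2d-1)^2-1)}{3}\le\tfrac{16}{3}d^5$. Combined with $1-\cos\theta\le\theta^2/2$, this yields $\abs{\hat{\mf{h}}-\hat{\mf{h}}^\star}\le\tfrac12\cdot\tfrac{\theta^4}{4}\cdot\tfrac{16}{3}d^5=\tfrac23 d^5\theta^4\le 2d^5\theta^4$ for all $\omega\in[0,\pi]$.

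Finally, passing to the Fourier coefficients: by orthogonality of $\{e^{\I(2k+1)\omega}\}$ on $[0,\pi]$, $\hat{c}_k(\theta)-\hat{c}^\star_k(\theta)=\tfrac1\pi\int_0^\pi\big(\hat{\mf{h}}(\omega,\cos\theta)-\hat{\mf{h}}^\star(\omega,\cos\theta)\big)e^{-\I(2k+1)\omega}\,\rd\omega$, so $\abs{\hat{c}_k(\theta)-\hat{c}^\star_k(\theta)}\le\max_\omega\abs{\hat{\mf{h}}-\hat{\mf{h}}^\star}\le 2d^5\theta^4$. Multiplying by $\sin\theta$ and using $\wt{c}_k(\theta)=\sin\theta\,\hat{c}_k(\theta)$ together with $\sin\theta\le\theta$ gives $\abs{\wt{c}_k(\theta)-\sin\theta\,\hat{c}^\star_k(\theta)}\le 2(d\theta)^5$, as claimed.
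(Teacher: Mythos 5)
Your overall strategy is the same as the paper's: write $\hat{\mf{h}}(\omega,x)=e^{-\I\omega}P^{\scp{d}}_\omega(x)Q^{\scp{d}}_\omega(x)$ as a polynomial in $x=\cos\theta$ of degree at most $2d-1$, Taylor-expand about $x=1$, identify $\hat{\mf{h}}^\star$ as the linear truncation, and control the remainder by combining $\max_x\abs{\hat{\mf{h}}}\le d$ with the Markov brothers' inequality before transferring the uniform bound to the Fourier coefficients by orthogonality. Your zeroth-order identification, the remainder estimate, and the final coefficient bound are all correct; the only technical slip is that you apply \cref{thm:Markovs-ineq} directly to the complex-valued polynomial $g$, whereas it is stated for real polynomials. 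The paper splits $\hat{\mf{h}}$ into real and imaginary parts (each bounded by $d$ and of degree at most $2d-1$), applies Markov to each, and recombines at the cost of a $\sqrt{2}$; your constant still clears $2d^5\theta^4$ comfortably, so this is cosmetic.

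The genuine gap is that you never derive the linear coefficients, which are the actual content of the formula for $\hat{c}^\star_k$. You correctly flag this as ``the main obstacle'' and propose two routes, but then assert that ``either way'' the stated quadratic expressions come out; a sanity check at $d=1,2$ is not a proof for general $d$ and $k$. The paper carries this out explicitly: it writes $\hat{\mf{h}}(\omega,x)=\tfrac12 U_{2d-1}(x\cos\omega)+\I\sin\omega\,x\,U_{d-1}^2(x\cos\omega)$, differentiates at $x=1$, and uses the Chebyshev identities $U_{2d-1}'(\cos\omega)=2\sum_{j=0}^{d-1}(2j+1)U_{2j}(\cos\omega)$ and $U_{2j}(\cos\omega)=\sum_{k=-j}^{j}e^{2\I k\omega}$ to obtain $\Re\bigl(\partial_x\hat{\mf{h}}\big|_{x=1}\bigr)=\cos\omega\sum_{k=-(d-1)}^{d-1}(d^2-k^2)e^{2\I k\omega}$, with an analogous expansion of the imaginary part via $\sin(d\omega)U_{d-1}(\cos\omega)$ products; collecting the $e^{\I(2k+1)\omega}$ modes then yields exactly the two quadratic expressions, including the asymmetry between nonnegative and negative $k$. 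Until you perform that expansion (or the equivalent second-order expansion of the $\mathrm{SU}(2)$ product and the associated lattice-point count), the central claim of the theorem remains unverified.
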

	\begin{proof}
		Following \cref{thm:QSP-PC-P-Q}, we have
		\begin{equation}
		\begin{split}
		\wt{\mf{h}}(\omega,\theta) &= \sin\theta e^{-2\I \omega} P_\omega^\scp{d}(\cos\theta) Q_\omega^\scp{d}(\cos\theta) = \sin\theta e^{-\I\omega} \left(\cos(d\sigma) + \I \sin\omega \cos\theta \frac{\sin(d\sigma)}{\sin\sigma}\right)\frac{\sin(d\sigma)}{\sin\sigma}\\
		&= \sin\theta e^{-\I\omega} \left(T_d(\cos\sigma) + \I \sin\omega \cos\theta U_{d-1}(\cos\sigma)\right)U_{d-1}(\cos\sigma)
		\end{split}
		\end{equation}
		where $\cos\sigma = \cos\omega \cos\theta$ and $T_d \in \RR_d[x], U_{d-1} \in \RR_{d-1}[x]$ are Chebyshev polynomials of the first and second kind respectively. Then
		\begin{equation}
		\begin{split}
		\hat{\mf{h}}(\omega,\cos\theta) &:= \left(T_d(\cos\sigma) + \I \sin\omega \cos\theta U_{d-1}(\cos\sigma)\right)U_{d-1}(\cos\sigma) \\
		&= \frac{1}{2} U_{2d-1}(\cos\sigma) + \I \sin\omega\cos\theta U_{d-1}^2(\cos\sigma).
		\end{split}
		\end{equation}
		Therefore, for a given $\omega$, $\hat{\mf{h}}(\omega, \cos\theta)$ is a polynomial in $\cos\theta$ of degree at most $2d-1$. According to \cref{cor:modulus-magnitude-sin2theta}, we have for any $\omega$
		\begin{equation}\label{eqn:h-re-im-upper-bound}
		\max_{\theta \in [0,\pi]} \max\left\{ \abs{\Re\left(\hat{\mf{h}}(\omega,\cos\theta)\right)}, \abs{\Im\left(\hat{\mf{h}}(\omega,\cos\theta)\right)} \right\} \le \max_{\theta \in [0,\pi]} \abs{\hat{\mf{h}}(\omega,\cos\theta)} \le \max_{\theta \in [0,\pi]} \abs{\frac{\sin(d\sigma)}{\sin\sigma}} \le d.
		\end{equation}
		Applying Taylor's theorem and expanding $\hat{\mf{h}}(\omega,\cos\theta)$ with respect to $1-\cos\theta$, there exists $\xi \in (\cos\theta, 1)$ so that
		\begin{equation}\label{eqn:taylor-remainder}
		\hat{\mf{h}}(\omega,\cos\theta) = \hat{\mf{h}}(\omega,1) + \frac{\partial \hat{\mf{h}}(\omega,x)}{\partial x}\bigg|_{x=1} (\cos\theta - 1) + \frac{1}{2} \frac{\partial^2 \hat{\mf{h}}(\omega,x)}{\partial x^2}\bigg|_{x=\xi}(\cos\theta - 1)^2.
		\end{equation}
		Here
		\begin{equation}
		\hat{\mf{h}}(\omega,1) = e^{\I d \omega} U_{d-1}(\cos\omega) = e^{\I \omega} \sum_{k=0}^{d-1} e^{2\I k \omega}.
		\end{equation}
		Furthermore,
		\begin{equation}
		\begin{split}
		& \Re\left(\frac{\partial \hat{\mf{h}}(\omega,x)}{\partial x}\bigg|_{x=1} \right) = \frac{1}{2} \frac{\partial U_{2d-1}(x\cos\omega)}{\partial x}\bigg|_{x=1} = \frac{\cos\omega}{2} U_{2d-1}^\prime(\cos\omega)\\
		&= \cos\omega \sum_{j=0}^{d-1} (2j+1) U_{2j}(\cos\omega) = \cos\omega \sum_{j=0}^{d-1} (2j+1) \sum_{k=-j}^j e^{2\I k \omega}\\
		&= \cos\omega \sum_{k=-(d-1)}^{d-1} (d^2-k^2) e^{2\I k \omega}
		\end{split}
		\end{equation}
		and
		\begin{equation}
		\begin{split}
		& \Im\left(\frac{\partial \hat{\mf{h}}(\omega,x)}{\partial x}\bigg|_{x=1} \right) = \sin\omega U_{d-1}(\cos\omega) \left(U_{d-1}(\cos\omega) + 2\cos\omega U_{d-1}^\prime(\cos\omega)\right)\\
		&= \sin(d\omega)\left(\sum_{\substack{k=-(d-1)\\ \text{stepsize 2}}}^{d-1} e^{\I k \omega} + \frac{1}{2} \left(e^{\I\omega} + e^{-\I\omega}\right) \sum_{\substack{k=-(d-2)\\ \text{stepsize 2}}}^{d-2} (d^2-k^2) e^{\I k \omega}\right)\\
		&= \sin(d\omega) \sum_{\substack{k=-(d-1)\\ \text{stepsize 2}}}^{d-1} (d^2-k^2) e^{\I k \omega} = \frac{1}{2\I} \left(\sum_{\substack{k=-2d+1\\ \text{stepsize 2}}}^{-1} (2d+k)k e^{\I k \omega} + \sum_{\substack{k=1\\ \text{stepsize 2}}}^{2d-1} (2d-k)k e^{\I k \omega}\right).
		\end{split}
		\end{equation}
		Let the approximation of $\hat{\mf{h}}(\omega,\cos\theta)$ be
		\begin{equation}
		\hat{\mf{h}}^\star(\omega,\cos\theta) := \hat{\mf{h}}(\omega,1) + \frac{\partial \hat{\mf{h}}(\omega,x)}{\partial x}\bigg|_{x=1} (\cos\theta - 1).
		\end{equation}
		Then, the previous computation shows it admits a Fourier series expansion:
		\begin{equation}
		\begin{split}
		& \hat{\mf{h}}^\star(\omega,\cos\theta) = \sum_{k=-d+1}^{d-1} \hat{c}_k^\star(\theta) e^{\I(2k+1)\omega},\ \text{where } \\
		& \hat{c}_k^\star(\theta) = \left\{
		\begin{array}{ll}
		1 + \frac{1}{2}\left(3d^2-k^2-(k+1)^2-\left(d-(2k+1)\right)^2\right) (\cos\theta - 1) \ \text{ if } 0 \le k \le d-1, \\
		\frac{1}{2} \left(d^2 + (d+2k+1)^2 - k^2 - (k+1)^2\right) (\cos\theta - 1) \quad \text{ if } -d+1 \le k \le -1.
		\end{array} \right.
		\end{split}
		\end{equation}
		The approximation error can be bounded by using \cref{eqn:taylor-remainder}. For any $\omega \in [0,\pi]$, we have
		\begin{equation}
		\begin{split}
		& \abs{\hat{\mf{h}}(\omega,\cos\theta) - \hat{\mf{h}}^\star(\omega,\cos\theta)} \le \frac{(1-\cos\theta)^2}{2} \max_{x \in [-1,1]} \abs{\frac{\partial^2 \hat{\mf{h}}(\omega,x)}{\partial x^2}}\\
		& \le \frac{\theta^4}{8} \sqrt{\left(\max_{x \in [-1,1]}\abs{\frac{\partial^2 \Re\left(\hat{\mf{h}}(\omega,x)\right)}{\partial x^2}}\right)^2 + \left(\max_{x \in [-1,1]}\abs{\frac{\partial^2 \Im\left(\hat{\mf{h}}(\omega,x)\right)}{\partial x^2}}\right)^2}.
		\end{split}
		\end{equation}
		Note that $\Re\left(\hat{\mf{h}}(\omega,x)\right)$ and $\Im\left(\hat{\mf{h}}(\omega,x)\right)$ are real polynomials in $x$ of degree at most $2d-1$. Invoking the Markov brothers' inequality (\cref{thm:Markovs-ineq}), we further get
		\begin{equation}
		\begin{split}
		& \abs{\hat{\mf{h}}(\omega,\cos\theta) - \hat{\mf{h}}^\star(\omega,\cos\theta)}\\
		& \le \frac{\sqrt{2} (2d-1)^2 d(d-1)}{6} \theta^4 \max_{x \in [-1,1]} \max\left\{ \abs{\Re\left(\hat{\mf{h}}(\omega,x)\right)}, \abs{\Im\left(\hat{\mf{h}}(\omega,x)\right)} \right\}\\
		& \le \frac{\sqrt{2} (2d-1)^2 d^2(d-1)}{6} \theta^4 \le 2 d^5 \theta^4,
		\end{split}
		\end{equation}
		where \cref{eqn:h-re-im-upper-bound} is used. The error bound can be transferred to that of the Fourier coefficients. Using the previous result and triangle inequality, one has
		\begin{equation}
		\begin{split}
		\abs{\wt{c}_k(\theta) - \sin\theta \hat{c}_k^\star(\theta)} &= \abs{\frac{\sin\theta}{\pi} \int_0^\pi e^{-\I (2k+1) \omega} \left(\hat{\mf{h}}(\omega,\cos\theta) - \hat{\mf{h}}^\star(\omega,\cos\theta)\right) \rd \omega}\\
		& \le \sin\theta \frac{1}{\pi} \int_0^\pi \abs{\hat{\mf{h}}(\omega,\cos\theta) - \hat{\mf{h}}^\star(\omega,\cos\theta)} \rd \omega \le 2 \left(d\theta\right)^5. 
		\end{split}
		\end{equation}
		The proof is completed.
	\end{proof}
	
	There are two implications of the previous theorem. First, it suggests that the magnitude of the Fourier coefficients of negative indices are $\Or\left(\sin^3\theta\right)$. If they are included in the formalism of the inference problem in the Fourier space, the accuracy of inference may be heavily contaminated because of the nearly vanishing SNR when $\theta \ll 1$. On the other hand, the amplitude of the Fourier coefficients of nonnegative indices tightly concentrates at $\sin\theta$ when $\theta \ll 1$. Therefore, a nice linear approximation of the Fourier coefficients holds in the case of small swap angle: for any $k = 0, \cdots, d-1$
	\begin{equation}
	c_k(\theta,\chi,\varphi) = \I e^{-\I\chi} e^{-\I(2k+1)\varphi} \theta + \mathrm{max} \left\{ \Or\left(\theta^3\right), \Or\left((d\theta)^5\right) \right\}.
	\end{equation}
	This proves the second half of \cref{thm:structure-of-qsp-pc}.

To numerically demonstrate the previous theorem, we depicted the exactly computed Fourier coefficients and their approximation in \cref{fig:approximate-Fourier-coef}. The numerical results support that the approximated coefficients are close to the exact ones when $d \theta \le 1$. Furthermore, the vanishing values of negatively indexed Fourier coefficients and the approximately linear growth of nonnegatively indexed Fourier coefficients are visualized in the figure.
    \begin{figure}[htbp]
        \centering
        \subfigure[]{
            \includegraphics[width=.475\columnwidth]{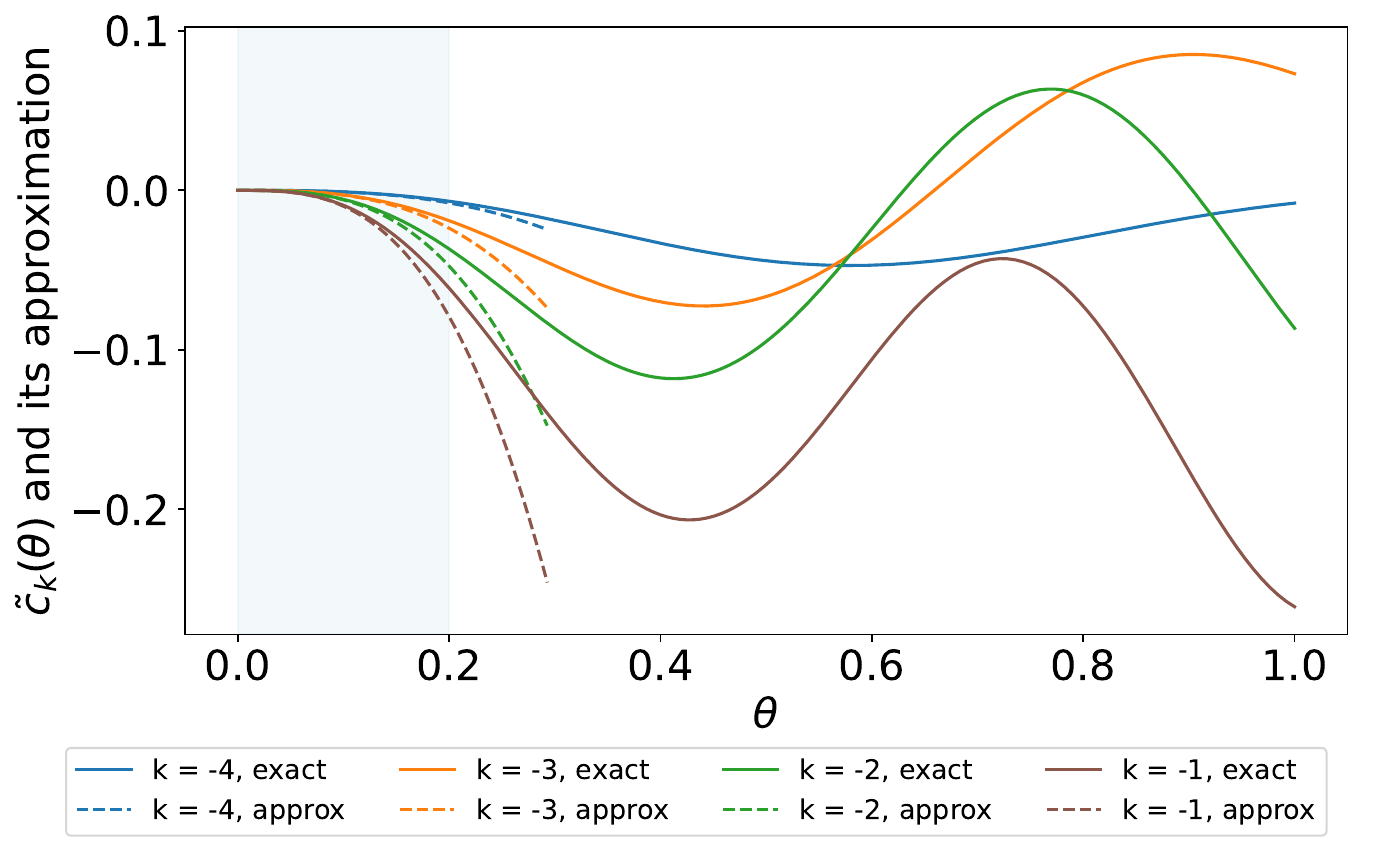}
        }
        \subfigure[]{
            \includegraphics[width=.475\columnwidth]{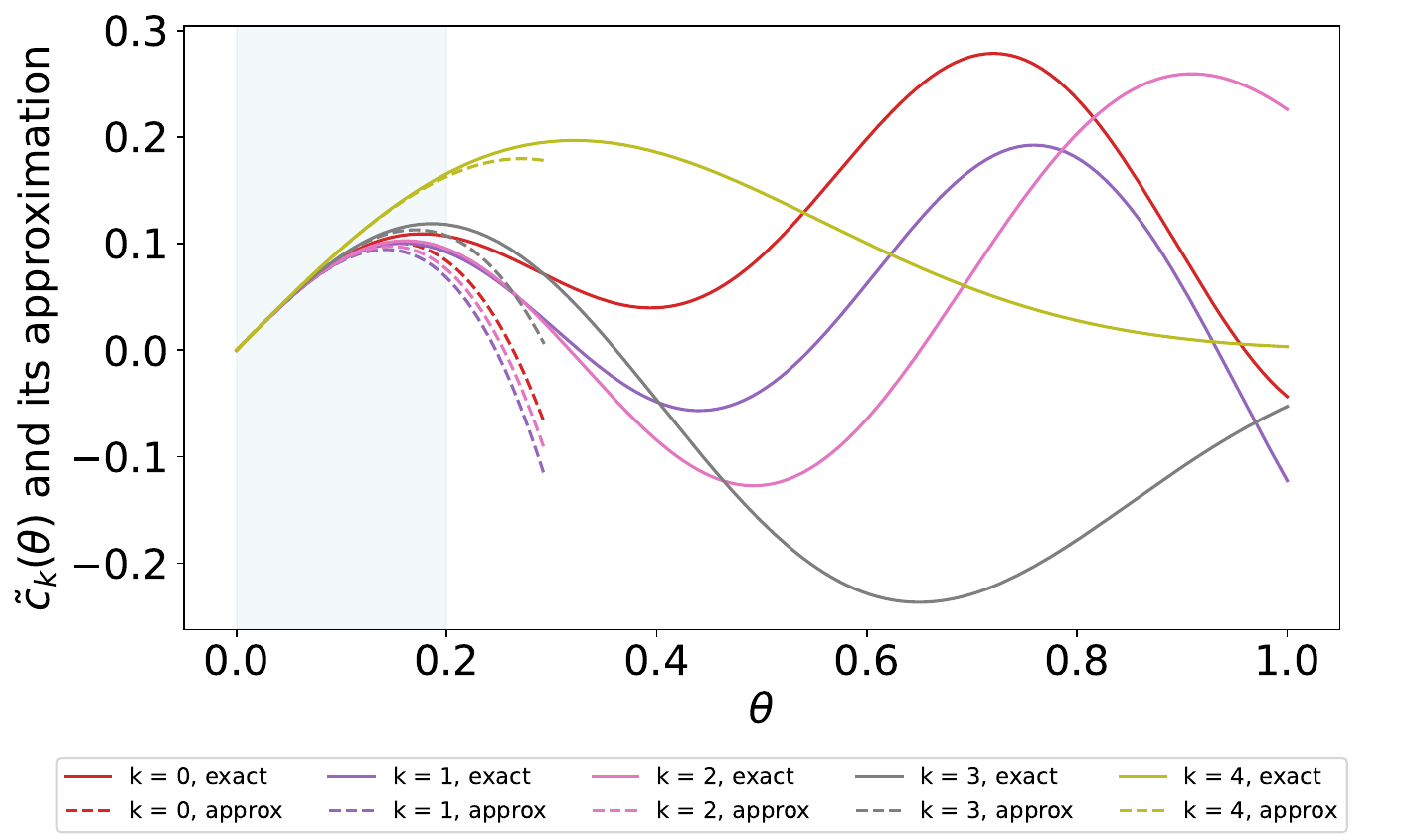}
        }
        \caption{Numerical justification of the approximated Fourier coefficients. We set $d = 5$. The solid curve is computed by integrating the Fourier transformation. The dashed curve is derived from the approximation in the previous theorem. The shaded area stands for $d \theta \le 1$. (a) Fourier coefficients of negative indices. (b) Fourier coefficient of nonnegative indices.\label{fig:approximate-Fourier-coef}}
    \end{figure}
 
	\section{Robust estimator against Monte Carlo sampling error}\label{sec:Monte-Carlo-sampling-error}
 The estimation problem of QSPE is formalized as follows.

\begin{problem}[Calibrating $U$-gate using  QSPE]\label{prob:inference-Fourier}
    (1) QSPE: Given experimentally measured probabilities of QSPE circuits in Figure 1 in the main text (or \cref{fig:general-QSPE}) on $\{\omega_j: j = 0, \cdots, 2d-2\}$, infer $\theta$ and $\varphi$ accurately. 
    
    (2) QSPE in Fourier space: Given experimentally measured Fourier coefficients of nonnegative indices, infer $\theta$ and $\varphi$ accurately.
\end{problem}
A dominant and unavoidable source of errors in quantum metrology is Monte Carlo sampling error due to the finite sample size in quantum measurements. Such limitation derives from both practical concerns of the efficiency of quantum metrology, and realistic constraints where some system parameters can drift over time and can only be monitored by sufficiently fast protocols. In this section, we  analyze the effect of Monte Carlo sampling error in our proposed metrology algorithm by characterizing the sampling error as a function of quantum circuit depth, \fsim parameters and sample size. The result will also be used in \cref{sec:lower-bound-qspc-metrology} to prove that our estimator based on QSPE is optimal. In the following analysis, we annotate  with superscript ``$\expl$'' to represent experimentally measured probability as opposed to expected probability from theory.
 
	\subsection{Modeling the Monte Carlo sampling error}\label{subsec:model_MC}
	We start the analysis by statistically modeling the Monte Carlo sampling error on the measurement probabilities. Furthermore, we also derive the sampling error induced on the Fourier coefficients derived from experimental data. The result is summarized in the following lemma.
	\begin{lemma}\label{lma:monte-carlo-error-magnitude}
		Let $M$ be the number of measurement samples in each experiment. When $M$ is large enough, the measurement probability $p_X^\expl(\omega_j)$ is approximately normal distributed
		\begin{equation}
		p_X^\expl(\omega_j) = p_X(\omega_j;\theta,\varphi,\chi) + \Sigma_{X,j} u_{X,j},\text{ where } u_{X,j} \sim N(0,1) \text{ and } \frac{1-4(d\theta)^2}{4M} \le \Sigma_{X,j}^2 \le \frac{1}{4M}.
		\end{equation}
		The same conclusion holds for $p_Y^\expl(\omega_j)$. Furthermore, by computing the Fourier coefficients via FFT, the Fourier coefficients are approximately complex normal distributed
		\begin{equation}
		c_k^\expl = c_k(\theta,\varphi,\chi) + \left\{ \begin{array}{ll}
		v_k &, k = 0,\cdots, d-1,   \\
		v_{2d-1+k} &, k = -d+1,\cdots, -1. 
		\end{array}\right.
		\end{equation}
		where $v_k$'s are complex normal distributed random variables so that
		\begin{equation}
		\begin{split}
		&\text{for any }k:\ \expt{v_k} = 0,\ \frac{1 - 2(d\theta)^2}{2M(2d-1)} \le \expt{\abs{v_k}^2} \le \frac{1}{2M(2d-1)},\\
		&\text{ and for any } k \ne k^\prime:\ \abs{\expt{v_k \overline{v_{k^\prime}}}} \le \frac{(d\theta)^2}{M(2d-1)}.
		\end{split}
		\end{equation}
		Consequentially, when $d\theta \ll 1$, these random variables $v_k$'s can be approximately assumed to be uncorrelated.
	\end{lemma}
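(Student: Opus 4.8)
The plan is to establish the result in two stages: first for the raw measurement probabilities $p_X^\expl(\omega_j)$ and $p_Y^\expl(\omega_j)$, then propagate the error through the linear (and orthogonal) map implemented by the FFT. For the first stage, I would note that each experiment at a fixed $\omega_j$ is a sequence of $M$ i.i.d.\ Bernoulli trials with success probability $p_X(\omega_j;\theta,\varphi,\chi)$, so the empirical frequency $p_X^\expl(\omega_j)$ has mean $p_X(\omega_j;\theta,\varphi,\chi)$ and variance $p_X(1-p_X)/M$. By the central limit theorem (or a Berry--Esseen bound to make the ``$M$ large enough'' quantitative), this is approximately $N(0,1)$-distributed after centering and scaling, giving the stated form with $\Sigma_{X,j}^2 = p_X(\omega_j)(1-p_X(\omega_j))/M$. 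The upper bound $\Sigma_{X,j}^2 \le 1/(4M)$ is immediate since $t(1-t)\le 1/4$ on $[0,1]$. For the lower bound, I would use \cref{cor:modulus-magnitude-sin2theta}: since $\abs{p_X(\omega_j)-\tfrac12}\le\abs{\mf{h}(\omega_j;\theta,\varphi,\chi)} = \sin\theta\,\abs{Q_{\omega_j-\varphi}^\scp{d}(\cos\theta)}\,\abs{P_{\omega_j-\varphi}^\scp{d}(\cos\theta)}$, and $\abs{Q_\omega^\scp{d}(\cos\theta)} = \abs{\sin(d\sigma)/\sin\sigma}\le d$ from \cref{thm:QSP-PC-P-Q} while $\abs{P}\le 1$, we get $\abs{p_X(\omega_j)-\tfrac12}\le d\sin\theta\le d\theta$, hence $p_X(1-p_X) = \tfrac14 - (p_X-\tfrac12)^2 \ge \tfrac14 - (d\theta)^2 = \tfrac14(1-4(d\theta)^2)$. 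The same argument applies verbatim to $p_Y^\expl(\omega_j)$.

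For the second stage I would write the FFT relation explicitly: $c_k^\expl = \frac{1}{2d-1}\sum_{j=0}^{2d-2} \mf{h}_j^\expl\, e^{-2\I k \omega_j}$ where $\mf{h}_j^\expl = (p_X^\expl(\omega_j)-\tfrac12) + \I(p_Y^\expl(\omega_j)-\tfrac12)$, and substitute the decompositions from stage one. Linearity immediately gives $c_k^\expl = c_k(\theta,\varphi,\chi) + v_k$ (with the index wraparound for negative $k$ as stated), where $v_k = \frac{1}{2d-1}\sum_j (\Sigma_{X,j}u_{X,j} + \I\,\Sigma_{Y,j}u_{Y,j})e^{-2\I k\omega_j}$. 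Since this is a linear combination of independent (approximately Gaussian) real random variables, $v_k$ is approximately complex normal with $\expt{v_k}=0$. The variance is $\expt{\abs{v_k}^2} = \frac{1}{(2d-1)^2}\sum_j (\Sigma_{X,j}^2 + \Sigma_{Y,j}^2)$, and plugging in the two-sided bounds $\frac{1-4(d\theta)^2}{4M}\le \Sigma_{\bullet,j}^2 \le \frac{1}{4M}$ over the $2d-1$ values of $j$ yields $\frac{1-4(d\theta)^2}{2M(2d-1)} \le \expt{\abs{v_k}^2} \le \frac{1}{2M(2d-1)}$; absorbing the factor of $2$ in the numerator (i.e.\ $1-4(d\theta)^2 \le 1-2(d\theta)^2$ suffices, but one can state it either way) gives the claimed lower bound. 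For the cross-correlation, $\expt{v_k \overline{v_{k'}}} = \frac{1}{(2d-1)^2}\sum_j (\Sigma_{X,j}^2 + \Sigma_{Y,j}^2)e^{-2\I(k-k')\omega_j}$; here the leading constant part $\frac{1}{4M}$ of $\Sigma^2$ contributes $\frac{1}{2M(2d-1)^2}\sum_j e^{-2\I(k-k')\omega_j}$, which vanishes exactly by the discrete orthogonality of the Fourier modes on the equally-spaced grid $\omega_j = \frac{j}{2d-1}\pi$ when $k\ne k'$, so only the $\omega_j$-dependent deviation of $\Sigma_{\bullet,j}^2$ from its mean survives, and that deviation is bounded in absolute value by $(d\theta)^2/M$ per term, giving $\abs{\expt{v_k\overline{v_{k'}}}} \le \frac{1}{(2d-1)^2}\cdot (2d-1)\cdot\frac{2(d\theta)^2}{2M} = \frac{(d\theta)^2}{M(2d-1)}$.

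The main obstacle I anticipate is not any single estimate but making the Gaussian approximation honest: strictly speaking the empirical frequencies are scaled binomials, not Gaussians, and their FFT is a fixed linear image of a sum of independent bounded variables, so one must invoke a quantitative multivariate CLT (Berry--Esseen type) to control the total-variation distance to the claimed complex-normal law uniformly in $k$, with an error that shrinks as $M\to\infty$. In the low-depth, moderate-$M$ regime of interest this is benign, but the phrase ``when $M$ is large enough'' is doing real work and I would state the rate explicitly. The only other subtlety is the exact vanishing of the constant-part cross term, which requires that the sampling grid has exactly $2d-1$ equally spaced points so that $\sum_{j=0}^{2d-2} e^{2\I (k-k')\pi j/(2d-1)} = 0$ for $0 < |k-k'| < 2d-1$; since the Fourier series of $\mf{h}$ is supported on exactly $|k|\le d-1$, all relevant index differences satisfy this, so the cancellation is clean. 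The final remark that the $v_k$ are approximately uncorrelated when $d\theta\ll 1$ then follows by comparing the off-diagonal bound $(d\theta)^2/(M(2d-1))$ to the diagonal magnitude $\sim 1/(2M(2d-1))$.
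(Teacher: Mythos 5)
Your proposal follows the paper's proof essentially step for step: Bernoulli/CLT modeling of the raw probabilities with $\Sigma_{X,j}^2 = p_X(1-p_X)/M$, the two-sided bound via \cref{cor:modulus-magnitude-sin2theta}, linearity of the FFT, independence of the $2(2d-1)$ experiments, and discrete orthogonality of the Fourier factors to kill the constant part of the cross-covariance when $k \ne k'$. The one place your argument does not actually deliver the stated conclusion is the lower bound on $\expt{\abs{v_k}^2}$: you bound $\Sigma_{X,j}^2$ and $\Sigma_{Y,j}^2$ each from below by $\frac{1-4(d\theta)^2}{4M}$ and add, which yields $\expt{\abs{v_k}^2} \ge \frac{1-4(d\theta)^2}{2M(2d-1)}$. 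Since $1-4(d\theta)^2 \le 1-2(d\theta)^2$, this is a \emph{weaker} statement than the claimed $\expt{\abs{v_k}^2} \ge \frac{1-2(d\theta)^2}{2M(2d-1)}$, and your parenthetical that the inequality ``suffices, but one can state it either way'' runs in the wrong direction — a lower bound with a smaller constant does not imply one with a larger constant. The fix is precisely the identity you already invoke for the cross term: $\Sigma_{X,j}^2 + \Sigma_{Y,j}^2 = \frac{1}{2M} - \frac{1}{M}\bigl[(p_X-\frac{1}{2})^2 + (p_Y-\frac{1}{2})^2\bigr] = \frac{1}{2M} - \frac{\mf{p}(\omega_j-\varphi,\theta)}{M}$, and \cref{cor:modulus-magnitude-sin2theta} bounds the \emph{sum} of the two squared deviations by $(d\theta)^2$, not each separately; using this joint bound (as the paper does) recovers the factor $2$ rather than $4$. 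With that substitution your argument coincides with the paper's; your suggestion to make the Gaussian approximation quantitative via a Berry--Esseen bound is a reasonable strengthening that the paper does not carry out.
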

	\begin{proof}
		Given a quantum experiment with angle $\omega_j$, the measurement generates i.i.d. Bernoulli distributed outcomes $b_i$'s, namely $\bP(b_i=0) = 1-\bP(b_i=1) = p_X(\omega_j;\theta,\varphi,\chi)$. Then, the measurement probability is estimated by $p_X^\expl(\omega_j) = \frac{1}{M}\sum_{i=1}^M (1-b_i)$. When the sample size $M$ is large enough, $p^\expl_X(\omega_j)$ is approximately normal distributed following the central limit theorem where the mean is $\expt{p_X^\expl(\omega_j)} = p_X(\omega_j;\theta,\varphi,\chi)$ and the variance is 
		\begin{equation}\label{eqn:variance-mc-error-concentration}
		\begin{split}
		\Sigma_{X,j}^2 &:= \var\left(p^\expl_X(\omega_j)\right) = \frac{p_X(\omega_j;\theta,\varphi,\chi)\left(1-p_X(\omega_j;\theta,\varphi,\chi)\right)}{M}\\
		& = \frac{1}{4M} - \frac{\left(p_X(\omega_j;\theta,\varphi,\chi) - \frac{1}{2}\right)^2}{M} \le \frac{1}{4M}.
		\end{split}
		\end{equation}
		The other side of the inequality $\Sigma_{X,j}^2 \ge \frac{1-4(d\theta)^2}{4M}$ follows that $(p_X-\frac{1}{2})^2 \le (d\theta)^2$ from \cref{cor:modulus-magnitude-sin2theta}. The same analysis is applicable to $p_Y^\expl(\omega_j)$. 
		
		To compute the Fourier coefficients from the experimental data, we perform FFT on $\mf{h}^\expl_j := p^\expl_X(\omega_j) + \I p_Y^\expl(\omega_j) - \frac{1+\I}{2}$ reconstructed from the experimental data. We have $\expt{\mf{h}^\expl_j} = \mf{h}(\omega_j;\theta,\varphi,\chi)$. Furthermore, let $\wt{u}_j = \mf{h}^\expl_j - \expt{\mf{h}^\expl_j} = \Sigma_{X,j}u_{X,j} + \I \Sigma_{Y,j}u_{Y,j}$, then it holds that
		\begin{equation}
		\begin{split}
		& \text{for any }j,\ \expt{\wt{u}_j} = 0,\ \expt{\abs{\wt{u}_j}^2} = \Sigma_{X,j}^2 + \Sigma_{Y,j}^2 = \frac{1}{2M} - \frac{\mf{p}(\omega_j-\varphi,\theta)}{M}, \\
		&\text{and for any }j \ne j^\prime, \expt{\wt{u}_j \overline{\wt{u}_{j^\prime}}} = 0.
		\end{split}
		\end{equation}
		The FFT gives the Fourier coefficients as
		\begin{equation}
		\left(\begin{array}{c}
		c_0^\expl\\ \vdots \\ c_{d-1}^\expl \\ c_{-d+1}^\expl \\ \vdots \\ c_{-1}^\expl
		\end{array}\right) = \frac{1}{2d-1} \Omega^\dagger \left(\begin{array}{l}
		\mf{h}^\expl_0 \\ \vdots  \\ \mf{h}^\expl_{2d-2}
		\end{array}\right), \text{ where } \Omega_{jk} = e^{\I \frac{2\pi jk}{2d-1}}.
		\end{equation}
		Using the linearity, we get 
		\begin{equation}\label{eqn:monte-carlo-error-in-Fourier-coef}
		v_k = \frac{1}{2d-1}\left(\Omega^\dagger \mf{h}^\expl\right)_k = \frac{1}{2d-1} \sum_{j=0}^{2d-2} \overline{\Omega_{kj}} \wt{u}_j.
		\end{equation}
		The mean is $\expt{v_k} = \frac{1}{2d-1} \sum_{j=0}^{2d-2} \overline{\Omega_{kj}} \expt{\wt{u}_j} = 0$. The covariance is
		\begin{equation}
		\expt{v_k \overline{v_{k^\prime}}} = \frac{1}{(2d-1)^2} \sum_{j, j^\prime = 0}^{2d-2} \overline{\Omega_{kj}} \Omega_{k^\prime j^\prime} \expt{\wt{u}_j \overline{\wt{u}_{j^\prime}}} = \frac{1}{(2d-1)^2} \sum_{j=0}^{2d-2} e^{\I \frac{2\pi}{2d-1}(k^\prime-k)j} \expt{\abs{\wt{u}_j}^2}.
		\end{equation}
		When $k = k^\prime$, it gives
		\begin{equation}
		\begin{split}
		& \expt{\abs{v_k}^2} = \frac{1}{2M(2d-1)} - \frac{1}{M (2d-1)^2} \sum_{j=0}^{2d-2} \mf{p}(\omega_j-\varphi, \theta)\\
		& \Rightarrow \frac{1 - 2(d\theta)^2}{2M(2d-1)} \le \expt{\abs{v_k}^2} \le \frac{1}{2M(2d-1)}
		\end{split}
		\end{equation}
		where $0 \le \mf{p}(\omega_j-\varphi, \theta) \le (d\theta)^2$ is used which follows \cref{cor:modulus-magnitude-sin2theta}.
		
		On the other hand, when $k \ne k^\prime$, the constant term $\frac{1}{2M}$ in $\expt{\abs{\wt{u}_j}^2}$ vanishes because $\sum_{j=0}^{2d-2} e^{\I \frac{2\pi}{2d-1}(k^\prime-k)j} = (2d-1) \delta_{k k^\prime}$. Then, using triangle inequality and \cref{cor:modulus-magnitude-sin2theta}, we get
		\begin{equation}
		\begin{split}
		\abs{\expt{v_k \overline{v_{k^\prime}}}} &= \abs{\frac{1}{(2d-1)^2} \sum_{j=0}^{2d-2} e^{\I \frac{2\pi}{2d-1}(k^\prime-k)j} \mf{p}(\omega_j-\varphi, \theta)}\\
		& \le \frac{1}{M (2d-1)^2} \sum_{j=0}^{2d-2} \abs{\mf{p}(\omega_j-\varphi, \theta)} \le \frac{(d\theta)^2}{M(2d-1)}.
		\end{split}
		\end{equation}
		The proof is completed.
	\end{proof}
	With a characterization of Monte Carlo sampling error, we are able to measure the robustness of the signal against error by the signal-to-noise ratio (SNR). The SNR of each Fourier coefficient is defined as the ratio between the squared Fourier coefficient and the variance of its associated additive sampling error. We define the SNR of QSPE in \cref{prob:inference-Fourier} by the minimal component-wise SNR. The following theorem gives a characterization of the SNR.
	\begin{theorem}\label{thm:snr-qspc}
		When $d^{\frac{5}{4}} \theta \ll 1$, the signal-to-noise ratio satisfies
		\begin{equation}
		\mathrm{SNR}_k := \frac{\abs{c_k(\theta,\varphi,\chi)}^2}{\expt{\abs{v_k}^2}} \ge \mathrm{SNR} := 2(2d-1)M\sin^2\theta \left(1 - \frac{4}{3}(d\theta)^2\left(1 + 3d^3\theta^2\right)\right).
		\end{equation}
	\end{theorem}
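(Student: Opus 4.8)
The plan is to bound $\abs{c_k(\theta,\varphi,\chi)}^2$ from below and $\expt{\abs{v_k}^2}$ from above, each uniformly over the nonnegative index range $0\le k\le d-1$, and then take the worst case to define $\mathrm{SNR}$. The noise side is already done: \cref{lma:monte-carlo-error-magnitude} gives $\expt{\abs{v_k}^2}\le \frac{1}{2M(2d-1)}$, so it suffices to prove $\abs{c_k(\theta,\varphi,\chi)}^2=\wt c_k(\theta)^2\ge \sin^2\theta\bigl(1-\tfrac{4}{3}(d\theta)^2(1+3d^3\theta^2)\bigr)$ for every such $k$, where I have used \cref{thm:reconstruction-h-Fourier-expansion} to write $c_k=\I e^{-\I\chi}e^{-\I(2k+1)\varphi}\wt c_k(\theta)$ with $\wt c_k(\theta)\in\RR$, so that $\abs{c_k}=\abs{\wt c_k(\theta)}$.

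First I would control the leading-order coefficient $\hat c_k^\star(\theta)$ from \cref{thm:approx-coef-first-order}. Expanding the quadratic in its definition, $\tfrac12\bigl(3d^2-k^2-(k+1)^2-(d-(2k+1))^2\bigr)=d^2+2dk+d-3k^2-3k-1$, which is a concave parabola in $k$ with vertex at $k=(2d-3)/6$ and maximal value $\tfrac{4d^2}{3}-\tfrac14<\tfrac{4d^2}{3}$; it is also positive at $k=0$ (value $d^2+d-1$) and at $k=d-1$ (value $2d-1$), hence positive on all of $[0,d-1]$. Therefore, using $0\le 1-\cos\theta\le\theta^2/2$,
\begin{equation}
1-\tfrac{2}{3}(d\theta)^2\le \hat c_k^\star(\theta)\le 1,\qquad 0\le k\le d-1.
\end{equation}
Combining this with the polynomial approximation bound $\abs{\wt c_k(\theta)-\sin\theta\,\hat c_k^\star(\theta)}\le 2(d\theta)^5$ from \cref{thm:approx-coef-first-order} yields
\begin{equation}
\wt c_k(\theta)\ge \sin\theta\Bigl(1-\tfrac{2}{3}(d\theta)^2-\frac{2(d\theta)^5}{\sin\theta}\Bigr),
\end{equation}
which is strictly positive in the regime $d^{5/4}\theta\ll 1$.

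To close, I would bound $(d\theta)^5/\sin\theta$: since $\theta$ is small in this window, $\sin\theta\ge\theta(1-\theta^2/6)$, so $(d\theta)^5/\sin\theta\le d^5\theta^4=(d\theta)^2\,d^3\theta^2$ up to a negligible higher-order correction. Substituting, the parenthesized factor above is at least $1-\tfrac{2}{3}(d\theta)^2(1+3d^3\theta^2)$, and squaring (using $(1-x)^2\ge 1-2x$, valid for all real $x$, together with $\hat c_k^\star\le 1$ so the cross term in $(\sin\theta\,\hat c_k^\star-2(d\theta)^5)^2$ only helps) gives $\wt c_k(\theta)^2\ge \sin^2\theta\bigl(1-\tfrac{4}{3}(d\theta)^2(1+3d^3\theta^2)\bigr)$. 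Dividing by $\expt{\abs{v_k}^2}\le\frac{1}{2M(2d-1)}$ and minimizing over $k\in\{0,\dots,d-1\}$ produces the stated lower bound on $\mathrm{SNR}$.

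I expect the main obstacle to be the uniform-in-$k$ control of $\hat c_k^\star(\theta)$ — ensuring the worst case over the discrete index set $\{0,\dots,d-1\}$ is genuinely captured by the smooth envelope $\tfrac{4d^2}{3}(1-\cos\theta)$ — together with the bookkeeping that consolidates the three distinct small corrections (the $(d\theta)^2$ term from $\hat c_k^\star$, the $(d\theta)^5$ approximation term, and the $\theta/\sin\theta$ overhead) into the single closed form $1-\tfrac{4}{3}(d\theta)^2(1+3d^3\theta^2)$, which is an honest inequality only in the pre-asymptotic window $d^{5/4}\theta\ll 1$ where $\theta$ itself is small.
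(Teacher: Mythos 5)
Your proposal is correct and follows essentially the same route as the paper: lower-bound $\abs{c_k}$ via the uniform bound $1-\tfrac{2}{3}(d\theta)^2\le\hat c_k^\star(\theta)\le 1$ combined with the $2(d\theta)^5$ approximation error from the Fourier-coefficient theorem, square using $(1-x)^2\ge 1-2x$, and divide by the noise upper bound $\expt{\abs{v_k}^2}\le\frac{1}{2M(2d-1)}$. Your treatment is in fact slightly more careful than the paper's at two points the paper glosses over — the explicit verification that the concave parabola in $k$ is bounded by $\tfrac{4d^2}{3}$ and positive on the whole index range, and the $\theta/\sin\theta$ overhead in converting $2(d\theta)^5$ into $2d^5\theta^4\sin\theta$ — but these are refinements of the same argument, not a different one.
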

	\begin{proof}
		According to \cref{eqn:hat_c_k_star_expr}, for any $k = 0, \cdots, d-1$
		\begin{equation}
		1 - \frac{2}{3} (d\theta)^2 \le \hat{c}_k^\star(\theta) \le 1.
		\end{equation}
		Applying \cref{thm:approx-coef-first-order}, we have
		\begin{equation}
		c_k(\theta,\varphi,\chi) \ge \sin\theta\left(\hat{c}_k^\star(\theta) - 2d^5\theta^4 \right) \ge \sin\theta \left(1 - \frac{2}{3}(d\theta)^2\left(1 + 3d^3\theta^2\right) \right).
		\end{equation}
		Furthermore, by Bernoulli's inequality,
		\begin{equation}
		\abs{c_k(\theta,\varphi,\chi)}^2 \ge \sin^2\theta \left(1 - \frac{4}{3}(d\theta)^2\left(1 + 3d^3\theta^2\right) \right).
		\end{equation}
		Combing the derived inequality with \cref{lma:monte-carlo-error-magnitude}, it gives
		\begin{equation}
		\mathrm{SNR}_k = \frac{\abs{c_k(\theta,\varphi,\chi)}^2}{\expt{\abs{v_k}^2}} \ge 2(2d-1)M\sin^2\theta \left(1 - \frac{4}{3}(d\theta)^2\left(1 + 3d^3\theta^2\right)\right),
		\end{equation}
		which completes the proof.
	\end{proof}
	
	\subsection{Statistical estimator against Monte Carlo sampling error}\label{subsec:stat-estimator-MC}
	As a consequence of \cref{thm:snr-qspc}, when SNR is high, namely $d M \theta^2 \gg 1$, the noise modeling in Ref. \cite{Tretter1985} suggests that a linear model with normal distributed noise can well approximate the problem in which the $\theta$- and $(\varphi,\chi)$-dependence are decoupled following \cref{thm:reconstruction-h-Fourier-expansion}.
	When $k = 0, \cdots, d-1$ and $d^5 \theta^4 \ll 1$, we have 
	\begin{equation}\label{eqn:modeled-problem-Fourier-space}
	\begin{split}
	& \mathsf{amplitude}\left(c^\expl_k\right) = \wt{c}_k(\theta) + v_k^\scp{\mathrm{amp}} \approx \theta + v_k^\scp{\mathrm{amp}},\\
	& \mathsf{phase}(c^\expl_k) = \frac{\pi}{2} - \chi - (2k+1) \varphi + v_k^\scp{\mathrm{pha}} \text{ (up to } 2\pi\text{-periodicity)}
	\end{split}
	\end{equation}
	where $v_k^\scp{\mathrm{amp}}$ and $v_k^\scp{\mathrm{pha}}$ are normal distributed and are approximately $v_k^\scp{\mathrm{amp}} = \Re(v_k)$, $v_k^\scp{\mathrm{pha}} = \Im(v_k)/\wt{c}_k(\theta)$ according to Ref. \cite{Tretter1985}. Let the covariance matrices be $\mc{C}^\scp{\mathrm{amp}}$ and $\mc{C}^\scp{\mathrm{pha}}$. For any $k$ and $k^\prime$
	\begin{equation}
	\mc{C}_{k,k^\prime}^\scp{\mathrm{amp}} := \expt{v_k^\scp{\mathrm{amp}}v_{k^\prime}^\scp{\mathrm{amp}}} \text{ and } \mc{C}_{k,k^\prime}^\scp{\mathrm{pha}} := \expt{v_k^\scp{\mathrm{pha}}v_{k^\prime}^\scp{\mathrm{pha}}}.
	\end{equation}
	Let the data vectors be
	\begin{equation}
	\abs{\bvec{c}^\expl} := \left(\mathsf{amplitude}(c_0^\expl), \cdots, \mathsf{amplitude}(c_{d-1}^\expl)\right)^\top,\ \boldsymbol{\wt{\mymathbb{1}}} = (\underbrace{1,\cdots,1}_d)^\top.
	\end{equation}
	The maximum likelihood estimator (MLE) is found by minimizing the negated log-likelihood function
	\begin{equation}
	\hat{\theta} = \myargmin_\theta \left(\abs{\bvec{c}^\expl} - \theta \boldsymbol{\wt{\mymathbb{1}}}\right)^\top \left(\mc{C}^\scp{\mathrm{amp}}\right)^{-1} \left(\abs{\bvec{c}^\expl} - \theta \boldsymbol{\wt{\mymathbb{1}}}\right).
	\end{equation}
	which follows the normality in \cref{lma:monte-carlo-error-magnitude}.
	
	In order to estimate $\varphi$, we can apply the Kay's phase unwrapping estimator in Ref.\cite{Kay1989}, a.k.a. weighted phase average estimator (WPA). The estimator is based on the sequential phase difference of the successive coefficients:
	\begin{equation}\label{eqn:sequential-phase-difference}
	\mathsf{phase}\left(c_k^\expl \overline{c_{k+1}^\expl}\right) = 2 \varphi + v_k^\scp{\mathrm{pha}} - v_{k+1}^\scp{\mathrm{pha}},\ k = 0, 1, \cdots, d-2.
	\end{equation}
	Remarkably, by computing the sequential phase difference, the troublesome $(2\pi)$-periodicity in \cref{eqn:modeled-problem-Fourier-space} can be overcome. According to this equation, the noise is turned into a colored noise process. Let the covariance be
	\begin{equation}\label{eqn:covariance-WPA}
	\mc{D}_{k,k^\prime} := \expt{\left(v_k^\scp{\mathrm{pha}} - v_{k+1}^\scp{\mathrm{pha}}\right)\left(v_{k^\prime}^\scp{\mathrm{pha}} - v_{k^\prime+1}^\scp{\mathrm{pha}}\right)} = \mc{C}_{k,k^\prime}^\scp{\mathrm{pha}} + \mc{C}_{k+1,k^\prime+1}^\scp{\mathrm{pha}} - \mc{C}_{k,k^\prime+1}^\scp{\mathrm{pha}} - \mc{C}_{k+1,k^\prime}^\scp{\mathrm{pha}}.
	\end{equation}
	Then, the WPA estimator is derived by the following MLE:
	\begin{equation}
	\hat{\varphi} = \myargmin_{\varphi} \left(\boldsymbol{\Delta} - 2 \varphi \boldsymbol{\mymathbb{1}}\right)^\top \mc{D}^{-1} \left(\boldsymbol{\Delta} - 2 \varphi \boldsymbol{\mymathbb{1}}\right)
	\end{equation}
	where the data vectors are
	\begin{equation}
	\boldsymbol{\Delta} = \left( \mathsf{phase}\left(c_0^\expl \overline{c_{1}^\expl}\right), \cdots, \mathsf{phase}\left(c_{d-2}^\expl \overline{c_{d-1}^\expl}\right) \right)^\top, \text{ and } \boldsymbol{\mymathbb{1}} = (\underbrace{1,\cdots,1}_{d-1})^\top.
	\end{equation}
	To solve the MLE, we need to study the structure of covariance matrices, which is given by the following lemma.
	\begin{lemma}\label{lma:covariance-mod-pha-estimation}
		When $d\theta \le \frac{1}{5}$ and $d^3\theta^2 \le 1$, for any $k \ne k^\prime$
		\begin{equation}
		\abs{\mc{C}_{k,k^\prime}^\scp{\mathrm{amp}}} \le \frac{(d\sin\theta)^2}{M(2d-1)}, \text{ and } \abs{\mc{C}_{k,k^\prime}^\scp{\mathrm{pha}}} \le \frac{4 d^2}{3 M(2d-1)}.
		\end{equation}
		For any $k$
		\begin{equation}
		\abs{\mc{C}_{k,k}^\scp{\mathrm{amp}} - \frac{1}{4M(2d-1)}} \le \frac{2(d\sin\theta)^2}{M(2d-1)}, \text{ and } \abs{\mc{C}_{k,k}^\scp{\mathrm{pha}} - \frac{1}{4M(2d-1)\sin^2\theta}} \le \frac{10 d^2}{3 M(2d-1)}.
		\end{equation}
	\end{lemma}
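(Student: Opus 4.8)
The plan is to express all four quantities through the Hermitian \emph{and} pseudo second moments of the $v_k$'s, and push those through the FFT back to the per-experiment noise $\wt u_j=\Sigma_{X,j}u_{X,j}+\I\Sigma_{Y,j}u_{Y,j}$ of \cref{lma:monte-carlo-error-magnitude}. Since the $2(2d-1)$ experiments (two Bell states per angle $\omega_j$) are mutually independent, $\mathbb{E}[\wt u_j\overline{\wt u_{j'}}]=\delta_{jj'}(\Sigma_{X,j}^2+\Sigma_{Y,j}^2)$ and $\mathbb{E}[\wt u_j\wt u_{j'}]=\delta_{jj'}(\Sigma_{X,j}^2-\Sigma_{Y,j}^2)$. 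The first moment was already computed in \cref{lma:monte-carlo-error-magnitude}. For the pseudo-variance, \cref{eqn:variance-mc-error-concentration} together with \cref{eqn:pX-Re-pY-Im} gives $\Sigma_{X,j}^2-\Sigma_{Y,j}^2=-\tfrac1M\big((p_X(\omega_j)-\tfrac12)^2-(p_Y(\omega_j)-\tfrac12)^2\big)=-\tfrac1M\Re(g_j^2)$, where $g_j$ is the complex number with real and imaginary parts $p_X(\omega_j)-\tfrac12$ and $p_Y(\omega_j)-\tfrac12$; by \cref{cor:modulus-magnitude-sin2theta} one has $\abs{g_j}^2=\mf{p}(\omega_j-\varphi,\theta)\le(d\sin\theta)^2$, so $\abs{\mathbb{E}[\wt u_j\wt u_{j'}]}\le\delta_{jj'}(d\sin\theta)^2/M$.

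Feeding $v_k=\tfrac1{2d-1}\sum_j\overline{\Omega_{kj}}\wt u_j$ in, exactly as in \cref{lma:monte-carlo-error-magnitude} but now also for the pseudo-covariance, yields $\mathbb{E}[v_kv_{k'}]=\tfrac1{(2d-1)^2}\sum_j e^{-\I\frac{2\pi(k+k')j}{2d-1}}(\Sigma_{X,j}^2-\Sigma_{Y,j}^2)$, hence $\abs{\mathbb{E}[v_kv_{k'}]}\le(d\sin\theta)^2/(M(2d-1))$ for all $k,k'$; the same $\mf{p}\le(d\sin\theta)^2$ estimate sharpens the Hermitian bounds of \cref{lma:monte-carlo-error-magnitude} to $\abs{\mathbb{E}[v_k\overline{v_{k'}}]}\le(d\sin\theta)^2/(M(2d-1))$ for $k\ne k'$ and $\abs{\mathbb{E}[|v_k|^2]-\tfrac1{2M(2d-1)}}\le(d\sin\theta)^2/(M(2d-1))$. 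From $\Re v_k=\tfrac12(v_k+\overline{v_k})$ and $\Im v_k=\tfrac1{2\I}(v_k-\overline{v_k})$ one obtains $\mc{C}^\scp{\mathrm{amp}}_{k,k'}=\tfrac12\Re(\mathbb{E}[v_k\overline{v_{k'}}]+\mathbb{E}[v_kv_{k'}])$ and $\mathbb{E}[\Im v_k\,\Im v_{k'}]=\tfrac12\Re(\mathbb{E}[v_k\overline{v_{k'}}]-\mathbb{E}[v_kv_{k'}])$, with $\mc{C}^\scp{\mathrm{pha}}_{k,k'}=\mathbb{E}[\Im v_k\,\Im v_{k'}]/(\wt{c}_k(\theta)\wt{c}_{k'}(\theta))$ (the fixed rotation placing $c_k$ onto the positive real axis leaves every modulus above unchanged). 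For $k\ne k'$ the constant $\tfrac1{2M}$ in $\mathbb{E}[v_k\overline{v_{k'}}]$ drops out because $\sum_j e^{\I\frac{2\pi(k'-k)j}{2d-1}}=0$, so both numerators are $\le(d\sin\theta)^2/(M(2d-1))$: this is the amplitude off-diagonal bound, and dividing by $\wt{c}_k\wt{c}_{k'}$ gives the phase off-diagonal bound once one checks $\wt{c}_k(\theta)\wt{c}_{k'}(\theta)\ge\tfrac34\sin^2\theta$.

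For the diagonal I would use $\mc{C}^\scp{\mathrm{amp}}_{k,k}=\tfrac12(\mathbb{E}[|v_k|^2]+\Re\mathbb{E}[v_k^2])$ and $\mathbb{E}[(\Im v_k)^2]=\tfrac12(\mathbb{E}[|v_k|^2]-\Re\mathbb{E}[v_k^2])$. With $\abs{\mathbb{E}[v_k^2]}\le(d\sin\theta)^2/(M(2d-1))$ and the sharpened estimate on $\mathbb{E}[|v_k|^2]$, this gives at once $\abs{\mc{C}^\scp{\mathrm{amp}}_{k,k}-\tfrac1{4M(2d-1)}}\le(d\sin\theta)^2/(M(2d-1))$ (comfortably inside the stated factor-$2$ bound) and $\abs{\mathbb{E}[(\Im v_k)^2]-\tfrac1{4M(2d-1)}}\le(d\sin\theta)^2/(M(2d-1))$. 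The one delicate step is dividing the latter by $\wt{c}_k(\theta)^2$ to reach $\mc{C}^\scp{\mathrm{pha}}_{k,k}$: here I would invoke \cref{thm:approx-coef-first-order} ($\abs{\wt{c}_k(\theta)-\sin\theta\,\hat{c}_k^\star(\theta)}\le2(d\theta)^5$ and $1-\tfrac23(d\theta)^2\le\hat{c}_k^\star(\theta)\le1$) together with $\sin\theta\ge\theta(1-\theta^2/6)$, and use $d\theta\le\tfrac15$ and $d^3\theta^2\le1$ to absorb the degree-five remainder $2(d\theta)^5/\sin\theta$ into an $O((d\theta)^2)$ term, writing $\wt{c}_k(\theta)^2=\sin^2\theta(1+\varepsilon_k)$ with $\varepsilon_k=O((d\theta)^2)$ and a small explicit constant. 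Writing likewise $\mathbb{E}[(\Im v_k)^2]=\tfrac1{4M(2d-1)}(1+\delta_k)$ with $\abs{\delta_k}\le4(d\sin\theta)^2$, the relative perturbation $(1+\delta_k)/(1+\varepsilon_k)-1$ is $O((d\theta)^2)$, and dividing by $\sin^2\theta$ converts that $(d\theta)^2$ into $d^2$ times $\theta^2/\sin^2\theta\le(150/149)^2$, giving $\abs{\mc{C}^\scp{\mathrm{pha}}_{k,k}-\tfrac1{4M(2d-1)\sin^2\theta}}\le\tfrac{10d^2}{3M(2d-1)}$. The main obstacle is precisely this last constant count — the hypotheses $d\theta\le\tfrac15$ and $d^3\theta^2\le1$ are calibrated so the compounded slack lands just below $\tfrac{10}{3}$ (and $\tfrac43$ off-diagonal) — while everything else is routine bookkeeping on finite geometric sums.
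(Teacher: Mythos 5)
Your proposal is correct and follows essentially the same route as the paper's proof: the paper likewise splits the FFT noise into Hermitian and pseudo second moments (bounding $\abs{\expt{v_k\overline{v_{k'}}}}$ and $\abs{\expt{v_kv_{k'}}}$ by $(d\sin\theta)^2/(M(2d-1))$ via $\mf{p}\le(d\sin\theta)^2$), recombines them into $\Re/\Im$ covariances by the same triangle-inequality bookkeeping, and controls the division by $\wt{c}_k(\theta)$ through the bound $\abs{\sin\theta/\wt{c}_k(\theta)}<\sqrt{4/3}$ from the Fourier-coefficient approximation. The only part you leave slightly informal is the final constant count for the diagonal phase term, which the paper carries out explicitly to land at $10d^2/(3M(2d-1))$, but your structure and intermediate bounds match.
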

	\begin{proof}
		We first estimate the covariance of the real and imaginary components of the Monte Carlo sampling error in Fourier coefficients. Following \cref{eqn:monte-carlo-error-in-Fourier-coef,lma:monte-carlo-error-magnitude}, for any $k \ne k^\prime$,
		\begin{equation}
		\abs{\expt{\Re(v_k)\Re(v_{k^\prime}) + \Im(v_k)\Im(v_{k^\prime})}} \le \abs{\expt{v_k \overline{v_{k^\prime}}}} \le \frac{(d\sin\theta)^2}{M(2d-1)},
		\end{equation}
		and for any $k$,
		\begin{equation}
		\frac{1-2(d\sin\theta)^2}{2M(2d-1)} \le \expt{\Re^2(v_k) + \Im^2(v_k)} = \expt{\abs{v_k}^2} \le \frac{1}{2M(2d-1)}.
		\end{equation}
		For any $k\ne k^\prime$,
		\begin{equation}
		\begin{split}
		&\abs{\expt{\Re(v_k)\Re(v_{k^\prime}) - \Im(v_k)\Im(v_{k^\prime})}}  \le \abs{\expt{v_k v_{k^\prime}}} = \abs{\frac{1}{(2d-1)^2} \sum_{j, j^\prime = 0}^{2d-2} \overline{\Omega_{kj} \Omega_{k^\prime j^\prime}} \expt{\wt{u}_j \wt{u}_{j^\prime}}}\\
		&\le  \frac{1}{(2d-1)^2} \sum_{j= 0}^{2d-2} \abs{\expt{\wt{u}^2_j }} = \frac{1}{(2d-1)^2} \sum_{j= 0}^{2d-2} \abs{\Sigma_{X,j}^2 - \Sigma_{Y,j}^2} \\
		&\le \frac{1}{M (2d-1)^2} \sum_{j= 0}^{2d-2} \mf{p}(\omega_j-\varphi,\theta) \le \frac{(d\sin\theta)^2}{M(2d-1)}.
		\end{split}
		\end{equation}
		Similarly for any $k$,
		\begin{equation}
		\begin{split}
		\abs{\expt{\Re^2(v_k) - \Im^2(v_k)}} &\le \abs{\expt{\Re^2(v_k) - \Im^2(v_k) + 2\I \Re(v_k)\Im(v_k)}}\\
		&= \abs{\expt{v_k^2}} \le \frac{1}{(2d-1)^2}\sum_{j=0}^{2d-2} \abs{\expt{\wt{u}_j^2}} \le \frac{(d\sin\theta)^2}{M(2d-1)}.
		\end{split}
		\end{equation}
		Using triangle inequality and the derived results, we have for any $k$
		\begin{equation}
		\begin{split}
		\abs{\expt{\Re^2(v_k)} - \frac{1}{4M(2d-1)}} \le& \frac{1}{2}\abs{\expt{\Re^2(v_k) + \Im^2(v_k)} - \frac{1}{2M(2d-1)}} \\
		&\quad + \frac{1}{2} \abs{\expt{\Re^2(v_k) - \Im^2(v_k)}} \le \frac{(d\sin\theta)^2}{M(2d-1)}.
		\end{split}
		\end{equation}
		The same argument is applicable to the imaginary component
		\begin{equation}
		\abs{\expt{\Im^2(v_k)} - \frac{1}{4M(2d-1)}} \le \frac{(d\sin\theta)^2}{M(2d-1)}.
		\end{equation}
		When $k \ne k^\prime$,
		\begin{equation}
		\begin{split}
		\abs{\expt{\Re(v_k)\Re(v_{k^\prime}}} \le& \frac{1}{2}\abs{\expt{\Re(v_k)\Re(v_{k^\prime}) + \Im(v_k)\Im(v_{k^\prime})}} \\
		&\quad + \frac{1}{2} \abs{\expt{\Re(v_k)\Re(v_{k^\prime}) - \Im(v_k)\Im(v_{k^\prime})}} \le \frac{(d\sin\theta)^2}{M(2d-1)}
		\end{split}
		\end{equation}
		and
		\begin{equation}
		\begin{split}
		\abs{\expt{\Im(v_k)\Im(v_{k^\prime}}} \le& \frac{1}{2}\abs{\expt{\Re(v_k)\Re(v_{k^\prime}) + \Im(v_k)\Im(v_{k^\prime})}} \\
		&\quad+ \frac{1}{2} \abs{\expt{\Re(v_k)\Re(v_{k^\prime}) - \Im(v_k)\Im(v_{k^\prime})}} \le \frac{(d\sin\theta)^2}{M(2d-1)}.
		\end{split}
		\end{equation}
		Estimating \cref{eqn:hat_c_k_star_expr} gives for any $k \ge 0$
		\begin{equation}
		1 - \frac{2}{3} (d\theta)^2 \le \hat{c}_k^\star(\theta) \le 1.
		\end{equation}
		Assuming that $d^3\theta^2 \le 1$, and applying \cref{thm:approx-coef-first-order}, it holds that for any $k$
		\begin{equation}
		\sin\theta\left(1 - \frac{8}{3} (d\theta)^2\right) \le \wt{c}_k(\theta) \le\sin\theta + 2 (d\theta)^5.
		\end{equation}
		Furthermore, if $d\theta \le \frac{1}{5}$, it holds that
		\begin{equation}
		\abs{\frac{\sin\theta}{\wt{c}_k(\theta)}} \le \frac{1}{1 - \frac{8}{3}(d\theta)^2} < \sqrt{\frac{4}{3}}.
		\end{equation}
		Then, for any $k \ne k^\prime$
		\begin{equation}
		\abs{\expt{v_k^\scp{\mathrm{pha}}v_{k^\prime}^\scp{\mathrm{pha}}}} = \frac{1}{\abs{\wt{c}_k(\theta)\wt{c}_{k^\prime}(\theta)}} \abs{\expt{\Im(v_k)\Im(v_{k^\prime}}} \le \frac{4 d^2}{3 M(2d-1)}.
		\end{equation}
		For any $k$, applying triangle inequality, it yields that 
		\begin{equation}
		\begin{split}
		& \abs{\expt{\left(v_k^\scp{\mathrm{pha}}\right)^2} - \frac{1}{4M(2d-1) \sin^2\theta}} \\
		&\le \frac{1}{\wt{c}_k^2(\theta)} \abs{\expt{\Im^2(v_k)} - \frac{1}{4M(2d-1)}} + \frac{1}{4M(2d-1)\sin^2\theta} \frac{\abs{\wt{c}_k^2(\theta) - \sin^2\theta}}{\wt{c}_k^2(\theta)}\\
		&\le \frac{\sin^2\theta}{\wt{c}_k^2(\theta)} \frac{d^2}{M(2d-1)} \left( 1 + \frac{4}{3} \frac{\theta^2}{\sin^2\theta} \left(1 + d^5\theta^4\right)\right) \le \frac{5}{2} \frac{\sin^2\theta}{\wt{c}_k^2(\theta)} \frac{d^2}{M(2d-1)} \le \frac{10 d^2}{3 M(2d-1)},
		\end{split}
		\end{equation}
		where the inequality $\frac{\theta^2}{\sin^2\theta} \le \frac{1}{25 \sin^2(1/5)} < \frac{9}{8}$ when $\theta \le \frac{1}{5d} \le \frac{1}{5}$ is used to simplify the constant.
		The proof is completed.
	\end{proof}
	
	Because of the sequential phase difference, we also need to study the structure of the covariance matrix of the colored noise in \cref{eqn:sequential-phase-difference}. It is given by the following corollary.
	\begin{corollary}\label{cor:elementwise-bound-discrete-Laplacian}
		Let
		\begin{equation}
		\wt{D} := \frac{1}{4M(2d-1)\sin^2\theta} \mf{D}, \text{ where } \mf{D}_{k, k^\prime} = \left\{\begin{array}{ll}
		2 & ,\ k = k^\prime, \\
		-1 & ,\ \abs{k-k^\prime} = 1,\\
		0 & ,\ \text{otherwise}.
		\end{array}\right.
		\end{equation}
		Then, when $d\theta \le \frac{1}{5}$ and $d^3\theta^2 \le 1$,
		\begin{equation}\label{eqn:covariance-WPA-elementwise-bound}
		\abs{D_{k,k^\prime} - \wt{D}_{k,k^\prime}} \le \frac{d^2}{M(2d-1)}  \times \left\{
		\begin{array}{ll}
		\frac{28}{3} & , k = k^\prime, \\
		\frac{22}{3} & , \abs{k - k^\prime} = 1, \\
		\frac{16}{3} & , \text{otherwise}.
		\end{array}\right.
		\end{equation}
	\end{corollary}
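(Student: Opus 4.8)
The plan is to reduce this corollary to Lemma~\ref{lma:covariance-mod-pha-estimation} by recognizing that both $\mc{D}$ (defined in \cref{eqn:covariance-WPA}) and $\wt{D}$ arise from applying the same second-difference stencil to a single $(d\times d)$ matrix. Concretely, I would introduce the ``ideal'' matrix $\widetilde{\mc{C}}_{k,k'} := \frac{1}{4M(2d-1)\sin^2\theta}\,\delta_{k,k'}$ and observe that applying the stencil $A \mapsto A_{k,k'} + A_{k+1,k'+1} - A_{k,k'+1} - A_{k+1,k'}$ to $\widetilde{\mc{C}}$ reproduces exactly $\wt{D}_{k,k'}$: the four Kronecker deltas collapse to $2$ when $k=k'$, to $-1$ when $|k-k'|=1$, and to $0$ otherwise, which is precisely $\frac{1}{4M(2d-1)\sin^2\theta}\,\mf{D}_{k,k'}$. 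Since by definition $\mc{D}_{k,k'}$ is the same stencil applied to $\mc{C}^{\scp{\mathrm{pha}}}$, linearity gives $\mc{D}_{k,k'} - \wt{D}_{k,k'} = E_{k,k'} + E_{k+1,k'+1} - E_{k,k'+1} - E_{k+1,k'}$ with $E := \mc{C}^{\scp{\mathrm{pha}}} - \widetilde{\mc{C}}$.

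Next I would invoke Lemma~\ref{lma:covariance-mod-pha-estimation}, which under the present hypotheses $d\theta \le \frac15$ and $d^3\theta^2 \le 1$ bounds $|E_{j,j}| \le \frac{10 d^2}{3M(2d-1)}$ on the diagonal and $|E_{j,j'}| = |\mc{C}^{\scp{\mathrm{pha}}}_{j,j'}| \le \frac{4 d^2}{3M(2d-1)}$ off the diagonal. Applying the triangle inequality to the four-term identity above, the resulting bound depends only on how many of the four index pairs $(k,k')$, $(k+1,k'+1)$, $(k,k'+1)$, $(k+1,k')$ are diagonal. I should also note that all indices stay within $\{0,\dots,d-1\}$ since $k,k'\in\{0,\dots,d-2\}$ for the $(d-1)\times(d-1)$ matrix $\mc{D}$, so the lemma applies uniformly with no special boundary treatment.

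Then the three cases are pure bookkeeping. For $k=k'$, the pairs $(k,k)$ and $(k+1,k+1)$ are diagonal while the other two are off-diagonal, giving the bound $2\cdot\frac{10}{3} + 2\cdot\frac{4}{3} = \frac{28}{3}$ in units of $\frac{d^2}{M(2d-1)}$. For $|k-k'|=1$, exactly one of the four pairs is diagonal (it is $(k+1,k+1)$ if $k'=k+1$ and $(k,k)$ if $k'=k-1$), yielding $\frac{10}{3} + 3\cdot\frac{4}{3} = \frac{22}{3}$. For $|k-k'|\ge 2$, none of the four pairs can be diagonal, since the only candidates $(k,k'+1)$ and $(k+1,k')$ would force $|k-k'|=1$; hence the bound is $4\cdot\frac{4}{3} = \frac{16}{3}$. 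This establishes \cref{eqn:covariance-WPA-elementwise-bound}.

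I do not anticipate a genuine obstacle here; the only content beyond arithmetic is correctly matching the $\wt{D}$ stencil to the shifted-index expansion of $\mc{D}$ and keeping the off-by-one shifts consistent on both sides, after which the collision count and Lemma~\ref{lma:covariance-mod-pha-estimation} do all the work.
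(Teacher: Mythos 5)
Your proposal is correct and is essentially the paper's own argument: the paper proves this corollary in one line by applying the triangle inequality to the four-term stencil in \cref{eqn:covariance-WPA} together with the bounds of \cref{lma:covariance-mod-pha-estimation}, and your case-by-case collision count ($2{\cdot}\tfrac{10}{3}+2{\cdot}\tfrac{4}{3}$, $\tfrac{10}{3}+3{\cdot}\tfrac{4}{3}$, $4{\cdot}\tfrac{4}{3}$) is exactly the bookkeeping that makes that one line rigorous. The index-range check confirming no boundary issues is a nice touch but otherwise your write-up matches the paper's route.
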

	\begin{proof}
		The element-wise bound \cref{eqn:covariance-WPA-elementwise-bound} follows immediately by applying triangle inequality with \cref{lma:covariance-mod-pha-estimation} and the defining equation \cref{eqn:covariance-WPA}. 
	\end{proof}
	
	Consequentially, the log-likelihood functions are well approximated by quadratic forms in terms constant matrices. The approximate forms yield the maximum likelihood estimators of QSPE.
		\begin{definition}[QSPE estimators]\label{def:estimator-qsp-pc}
		For any $k = 0, \ldots, d-2$, the sequential phase difference is defined as 
		\begin{align}\nonumber
		\Delta_k &:= \mathsf{phase}\left(c_k^\expl \overline{c_{k+1}^\expl}\right),\\
		&\text{ and }\ \boldsymbol{\Delta} := \left(\Delta_0, \Delta_1, \ldots, \Delta_{d-2}\right)^\top.
		\end{align}
		Let the all-one vector be $\boldsymbol{\mymathbb{1}} = (\underbrace{1,\ldots,1}_{d-1})^\top$ and the discrete Laplacian matrix be
		\begin{equation*}
		\mf{D} = \left(\begin{array}{rrrrr}
		2 & -1 & 0 & \cdots & 0\\
		-1 & 2 & -1 & \cdots & 0\\
		0 & -1 & 2 & \cdots & 0\\
		\vdots & \vdots & \vdots & & \vdots\\
		0 & 0 & 0 & \cdots & 2
		\end{array}\right) \in \RR^{(d-1)\times (d-1)}.
		\end{equation*}
		The statistical estimators solving QSPE are
		\begin{equation}\label{estimator_eq}
		\hat{\theta} = \frac{1}{d} \sum_{k=0}^{d-1} \abs{c_k^\expl} \quad \text{ and } \quad  \hat{\varphi} = \frac{1}{2} \frac{\boldsymbol{\mymathbb{1}}^\top \mf{D}^{-1} \boldsymbol{\Delta}}{\boldsymbol{\mymathbb{1}}^\top \mf{D}^{-1} \boldsymbol{\mymathbb{1}}}.
		\end{equation}
	\end{definition}
	Their variances can also be computed by using approximate covariance matrices, which gives
	\begin{equation}\label{eqn:var-qspc-estimator}
	\begin{split}
	& \mathrm{Var}\left(\hat{\theta}\right) \approx \frac{1}{4Md(2d-1)} \approx \frac{1}{8 M d^2} \quad \text{ and } \quad \mathrm{Var}\left(\hat{\varphi}\right) \approx \frac{3}{4Md(2d-1)(d^2-1)\theta^2} \approx \frac{3}{8Md^4\theta^2}.
	\end{split}
	\end{equation}
	In practice, an additional moving average filter in Ref. \cite{ShenLiu2019} can be applied to the data to further numerically boost the SNR. 
	For completeness, we exactly compute the optimal variance from Cram\'{e}r-Rao lower bound and discuss the optimality of the estimators in \cref{sec:lower-bound-qspc-metrology}.
	
	\subsection{Improving the estimator of swap angle using the peak information provided by \texorpdfstring{$\hat{\varphi}$}{hat varphi}}\label{appendix:peak-differentiation}
	In this subsection, we explicitly write down the dependence on $d$ as the subscript of relevant functions because $d$ is variable in the analysis.
	
	Once we have a priori $\hat{\varphi}_\mathrm{pri}$, it gives an accurate estimation of phase making $\abs{\mf{h}}$ attain its maximum, which is often referred to as the phase matching condition. The a priori phase $\hat{\varphi}_\mathrm{pri}$ can be some statistical estimator from other subroutines. For example, it can be the QSPE $\varphi$-estimator. By setting the phase modulation angle to $\omega = \hat{\varphi}_\mathrm{pri}$ in the QSPE circuit, we compute the amplitude of the reconstructed function for variable degrees and compute the differential signal by
	\begin{equation}
	\begin{split}
	&\{ \abs{\mf{h}_j^\expl} : j = d, d+2, d+4, \cdots, 3d \} \\
	&\Rightarrow \boldsymbol{\Gamma} := \left( \abs{\mf{h}_{d+2}^\expl} - \abs{\mf{h}_d^\expl}, \abs{\mf{h}_{d+4}^\expl} - \abs{\mf{h}_{d+2}^\expl}, \cdots, \abs{\mf{h}_{3d}^\expl} - \abs{\mf{h}_{3d-2}^\expl}\right)^\top \in \RR^d.
	\end{split}
	\end{equation}
	Let $\mf{D}$ be the $d$-by-$d$ discrete Laplacian matrix and $\mymathbb{1} := (1, 1, \cdots, 1) \in \RR^d$. The swap angle can be estimated by the statistical estimator
	\begin{equation}
	\hat{\theta}_\mathrm{pd} = \frac{1}{2} \frac{\mymathbb{1}^\top \mf{D}^{-1} \boldsymbol{\Gamma}}{\mymathbb{1}^\top \mf{D}^{-1} \mymathbb{1}}. 
	\end{equation}
	The performance guarantee of this estimator is given in the following theorem. We also discuss the case that the a priori is given by the QSPE estimator in the next corollary.
	\begin{theorem}\label{thm:bias-prog-diff}
		Assume an unbiased estimator $\hat{\varphi}_\mathrm{pri}$ with variance $\mathrm{Var}\left(\hat{\varphi}_\mathrm{pri}\right)$ is used as a priori. When $d\theta \le \frac{1}{9}$, the estimator $\hat{\theta}_\mathrm{pd}$ is a biased estimator with bounded bias
		\begin{equation}
		\abs{\mathrm{Bias}_\mathrm{pd}} := \abs{\expt{ \hat{\theta}_\mathrm{pd}} - \theta} \le \frac{13}{2} d^2\theta \mathrm{Var}\left( \hat{\varphi} \right) + 37 (d\theta)^3
		\end{equation}
		and variance
		\begin{equation}
		\mathrm{Var}\left( \hat{\theta}_\mathrm{pd} \right) = \frac{3}{4Md(d+1)(d+2)} \approx \frac{3}{4d^3 M}.
		\end{equation}
	\end{theorem}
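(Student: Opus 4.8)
The plan is to condition on the realised a priori value, $\hat\varphi_\mathrm{pri} = \varphi + \delta$, exploit the linearity of $\hat\theta_\mathrm{pd}$ in the data vector $\boldsymbol\Gamma$, evaluate the conditional expectation $\expt{\hat\theta_\mathrm{pd}\mid\delta}$ in closed form from the exact signal amplitude, and then average over $\delta$ using only that $\hat\varphi_\mathrm{pri}$ is unbiased with variance $\mathrm{Var}(\hat\varphi_\mathrm{pri})$. Write $L(\boldsymbol v) := \tfrac12\,\boldsymbol{\mymathbb{1}}^\top\mf{D}^{-1}\boldsymbol v / (\boldsymbol{\mymathbb{1}}^\top\mf{D}^{-1}\boldsymbol{\mymathbb{1}})$ for the (linear) estimator functional, so $\hat\theta_\mathrm{pd} = L(\boldsymbol\Gamma)$. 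The conditional mean of $\boldsymbol\Gamma$ is $\gamma_k(\delta) := \abs{\mf{h}_{d+2k}(\varphi+\delta)} - \abs{\mf{h}_{d+2k-2}(\varphi+\delta)}$, up to the amplitude‑estimation (Rician) bias of $\abs{\mf{h}_m^\expl}$, which I handle separately. By \cref{cor:modulus-magnitude-sin2theta} applied with $m$ (rather than $d$) repetitions of the $U$-gate,
\[
\abs{\mf{h}_m(\varphi+\delta)} = \abs{\sin\theta}\,\abs{U_{m-1}(\cos\sigma)}\sqrt{1-\sin^2\theta\,U_{m-1}(\cos\sigma)^2},\qquad \sigma = \arccos(\cos\delta\cos\theta),
\]
and since $\abs{U_{m-1}(\cos\sigma)}\le m\le 3d$ together with $d\theta\le 1/9$ gives $\abs{\sin\theta\,U_{m-1}(\cos\sigma)}\le 1/3$, the square root is analytic and $U_{m-1}(\cos\sigma)>0$ throughout the window of $\delta$ in which $\hat\varphi_\mathrm{pri}$ is usable.

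Next I would Taylor‑expand $\abs{\mf{h}_m(\varphi+\delta)}$ in the small quantities $1-\cos\sigma = 1-\cos\delta\cos\theta\ (=\tfrac12(\delta^2+\theta^2)+\cdots)$ and $\sin^2\theta$, exactly as in the proof of \cref{thm:approx-coef-first-order}: the leading term is $m\theta$, and the correction is an explicit polynomial in $m$ of degree $3$ with coefficients $\Or(\theta^3)$ and $\Or(\theta\delta^2)$, plus a remainder $R_m$ bounded by the Markov brothers' inequality (\cref{thm:Markovs-ineq}) controlling the higher $\cos\sigma$-derivatives of $U_{m-1}$. Taking the first difference $\gamma_k(\delta)$, the degree-$\le1$ part of $\abs{\mf{h}_m}$ contributes a $k$-independent constant, while the cubic part contributes $6(d+2k)^2-12(d+2k)+8$ times its coefficient. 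Applying $L$ then reduces to the explicit solution $\mf{D}^{-1}\boldsymbol{\mymathbb{1}}=\boldsymbol w$ with $w_i = i(d+1-i)/2$ (verified directly from $\mf{D}\boldsymbol w=\boldsymbol{\mymathbb{1}}$, treating the free‑boundary rows by hand) together with the elementary power sums $\boldsymbol w^\top\boldsymbol{\mymathbb{1}} = \tfrac{d(d+1)(d+2)}{12}$, $\boldsymbol w^\top(i)_i$, $\boldsymbol w^\top(i^2)_i$. This yields $\expt{\hat\theta_\mathrm{pd}\mid\delta} = \theta + c_1(\theta) + c_2(\theta)\,\delta^2 + (\text{h.o.t.})$ with $\abs{c_1(\theta)}=\Or(d^2\theta^3)$ and $\abs{c_2(\theta)}=\Or(d^2\theta)$. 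Averaging over $\delta$, using $\expt{\delta}=0$ and $\expt{\delta^2}=\mathrm{Var}(\hat\varphi_\mathrm{pri})$, and bounding the rare event that $\hat\varphi_\mathrm{pri}$ is far enough off to break the expansion by boundedness of $\hat\theta_\mathrm{pd}$ and Chebyshev, then collecting terms and inserting $d\theta\le 1/9$ produces the explicit constants $\tfrac{13}{2}$ and $37$; the Rician bias of $\abs{\mf{h}_m^\expl}$ is $\Or\big(1/(M m\theta)\big)$ per term, and after the difference and $L$ it is $\Or\big(1/(Md^2\theta)\big)$, which is subsumed by the stated bound in the target regime.

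For the variance, linearity gives $\mathrm{Var}(\hat\theta_\mathrm{pd}) = \tfrac14\,\boldsymbol w^\top\mathrm{Cov}(\boldsymbol\Gamma)\boldsymbol w / (\boldsymbol w^\top\boldsymbol{\mymathbb{1}})^2$. The $d+1$ amplitudes $\abs{\mf{h}_{d+2j}^\expl}$ come from independent experiments, and in the high‑SNR regime each has linearised variance $\tfrac{1}{4M}$ — half of the reconstructed‑function variance $\Sigma_X^2+\Sigma_Y^2\le\tfrac{1}{2M}$ from \cref{lma:monte-carlo-error-magnitude}, projected onto the amplitude direction (cf.\ Ref.~\cite{Tretter1985}). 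Since $\boldsymbol\Gamma$ is the first difference of these amplitudes, $\mathrm{Cov}(\boldsymbol\Gamma) = \tfrac{1}{4M}\,\mf{D}$ with $\mf{D}$ the $(d\times d)$ free‑boundary discrete Laplacian of \cref{def:estimator-qsp-pc}, so that $\mathrm{Var}(\hat\theta_\mathrm{pd}) = \tfrac{1}{16M}\,\boldsymbol w^\top\mf{D}\boldsymbol w / (\boldsymbol w^\top\boldsymbol{\mymathbb{1}})^2 = \tfrac{1}{16M}\,/\,(\boldsymbol w^\top\boldsymbol{\mymathbb{1}}) = \tfrac{3}{4Md(d+1)(d+2)}$, where $\boldsymbol w^\top\mf{D}\boldsymbol w = \boldsymbol w^\top\boldsymbol{\mymathbb{1}}$ by $\mf{D}\boldsymbol w = \boldsymbol{\mymathbb{1}}$.

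The main obstacle is the bias bound: extracting the explicit numerical constants $\tfrac{13}{2}$ and $37$ requires careful bookkeeping of every Taylor remainder — the higher terms of the $\sin(m\sigma)/\sin\sigma$ expansion, the $\sqrt{1-\,\cdot\,}$ expansion, $\sin\theta$ versus $\theta$, and the cross terms in $\sigma^2 = \delta^2+\theta^2+\cdots$ — all bounded uniformly in $m\le 3d$ via \cref{thm:Markovs-ineq}, in the same spirit as \cref{thm:approx-coef-first-order}. A secondary subtlety is justifying that the event where $\hat\varphi_\mathrm{pri}$ is too inaccurate for the small‑angle expansion contributes negligibly, and pinning down the precise order of the amplitude‑estimator bias so that it is genuinely absorbed into the stated expression.
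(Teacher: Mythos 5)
Your proposal is correct in substance and follows the same skeleton as the paper's proof: condition on $\hat{\varphi}_\mathrm{pri}=\varphi+\delta$, use linearity of the estimator in $\boldsymbol{\Gamma}$, model $\mathrm{Cov}(\boldsymbol{\Gamma})=\tfrac{1}{4M}\mf{D}$ from the differenced amplitude noise, compute the variance from $\mf{D}\boldsymbol{w}=\boldsymbol{\mymathbb{1}}$ with $\boldsymbol{w}^\top\boldsymbol{\mymathbb{1}}=\tfrac{d(d+1)(d+2)}{12}$, and split the bias into a good event $\abs{\delta}\le\pi/(3d)$ handled by the small-$\delta$ expansion and a bad event handled by Chebyshev plus boundedness of the conditional mean. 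Your variance derivation via $\boldsymbol{w}^\top\mf{D}\boldsymbol{w}=\boldsymbol{w}^\top\boldsymbol{\mymathbb{1}}$ is exactly equivalent to the paper's use of Kay's convex-combination weights $\mu_k$. The one place where you diverge is the systematic-error analysis of $\Gamma_d$: you propose a full Taylor expansion of $\abs{\mf{h}_m}$ in $1-\cos\sigma$ with degree-three polynomial bookkeeping in $m$ and the power sums $\boldsymbol{w}^\top(i)_i$, $\boldsymbol{w}^\top(i^2)_i$, which is precisely the obstacle you flag. The paper sidesteps this with the exact identity $\wt{\mf{f}}^{\circ}_{d+2}(x,\theta)-\wt{\mf{f}}^{\circ}_{d}(x,\theta)=2\sin\theta\cos\bigl((d+1)x\bigr)$ for the Dirichlet-kernel model $\wt{\mf{f}}^{\circ}_d(x,\theta)=\sin\theta\,\sin(dx)/\sin(x)$, together with a uniform bound $\abs{\mf{f}_d-\mf{f}^{\circ}_d}\le C(d\theta)^3$ (obtained by splitting off the square-root factor and applying the intermediate value theorem plus Markov brothers' inequality to $U_{d-1}$); the systematic error then reads off as $\theta(d+1)^2\delta^2+O\bigl((d\theta)^3\bigr)$ term by term, no power sums needed, and the constants $\tfrac{13}{2}$ and $37$ follow from combining this with the Chebyshev tail bound. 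Two minor corrections: the matrix $\mf{D}$ here is the Dirichlet-boundary (not free-boundary) discrete Laplacian — your formula $w_i=i(d+1-i)/2$ is the one for Dirichlet boundary and is right; and the Rician bias of the amplitude estimator is not addressed in the paper at all (it relies on the high-SNR linearization of Ref.~\cite{Tretter1985}), so your claim that it is ``subsumed'' would need the additional hypothesis $Md^2\theta$ large, which is implicit in the high-SNR regime but not in the theorem statement.
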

	\begin{proof}
		Let the amplitude of the reconstructed function be
		\begin{equation}
		\begin{split}
			\mf{f}_d(\omega-\varphi, \theta) := \abs{\mf{h}_d(\omega;\theta,\varphi,\chi)} = \sin\theta\abs{\frac{\sin(d\sigma)}{\sin\sigma}} \sqrt{1 - \sin^2(\theta) \frac{\sin^2(d\sigma)}{\sin^2(\sigma)}},
		\end{split}
		\end{equation}
		which follows \cref{cor:modulus-magnitude-sin2theta} and $\sigma := \arccos\left(\cos(\theta)\cos(\omega-\varphi)\right)$. Furthermore, let
		\begin{equation}
			\wt{\mf{f}}^\circ_d(\omega-\varphi, \theta) := \sin\theta \frac{\sin\left(d(\omega-\varphi)\right)}{\sin(\omega-\varphi)},\quad \mf{f}^\circ_d(\omega-\varphi, \theta) := \abs{\wt{\mf{f}}^\circ_d(\omega-\varphi, \theta)}.
		\end{equation}
		Note that when $\abs{\omega-\varphi} \le \frac{\pi}{d}$, the defined function agrees with the amplitude of itself $\mf{f}^\circ_d(\omega-\varphi, \theta) = \wt{\mf{f}}^\circ_d(\omega-\varphi, \theta)$. Furthermore, for any $\omega$, we have the following bound by using triangle inequality
		\begin{equation}
			\begin{split}
				&\abs{\mf{f}_d^\circ(\omega-\varphi,\theta) - \mf{f}_d(\omega-\varphi,\theta)} \le \abs{\sin\theta \frac{\sin(d\sigma)}{\sin\sigma} \sqrt{1 - \sin^2(\theta) \frac{\sin^2(d\sigma)}{\sin^2(\sigma)}} - \sin\theta \frac{\sin\left(d(\omega-\varphi)\right)}{\sin(\omega-\varphi)}}\\
				&\le \sin\theta \abs{\frac{\sin(d\sigma)}{\sin\sigma}} \left(1 - \sqrt{1 - \sin^2(\theta) \frac{\sin^2(d\sigma)}{\sin^2(\sigma)}} \right) + \sin\theta \abs{ \frac{\sin(d\sigma)}{\sin\sigma} - \frac{\sin\left(d(\omega-\varphi)\right)}{\sin(\omega-\varphi)} }\\
				& := J_1(d) + J_2(d).
			\end{split}
		\end{equation}
		The first term can be further upper bounded by using the fact that $\max_x \abs{\frac{\sin(dx)}{\sin x}} = d$
		\begin{equation}
			J_1(d) = \frac{\sin^3\theta \abs{\frac{\sin(d\sigma)}{\sin\sigma}}^3}{1 + \sqrt{1 - \sin^2(\theta) \frac{\sin^2(d\sigma)}{\sin^2(\sigma)}}}  \le \frac{(d\theta)^3}{1 + \sqrt{1-(d\theta)^2}} \le \frac{(d\theta)^3}{1 + 2\sqrt{2}/3}
		\end{equation}
		where the last inequality uses the condition $3d\theta \le \frac{1}{3}$. The last inequality is established so that it holds for any $J_1(d), \cdots, J_1(3d)$. Note that the Chebyshev polynomial of the second kind is $U_{d-1}(\cos\sigma) = \frac{\sin(d\sigma)}{\sin\sigma}$ and it is related to the derivative of the Chebyshev polynomial of the first kind as $U_{d-1} = \frac{1}{d-1} T_{d-1}^\prime$. Using the intermediate value theorem, there exists $\xi$ in between $\cos\theta\cos(\omega-\varphi)$ and $\cos(\omega-\varphi)$ so that
		\begin{equation}
			\begin{split}
				J_2(d) &= \sin\theta\abs{U_{d-1}\left(\cos\theta\cos(\omega-\varphi)\right) - U_{d-1}\left(\cos(\omega-\varphi)\right)}\\
				&= \sin\theta \abs{U_{d-1}^\prime(\xi)}\abs{\cos(\omega-\varphi)}\left( 1 - \cos\theta \right) \le \frac{\theta^3}{2(d-1)}  \max_{-1\le x \le 1} \abs{T_{d-1}^{\prime\prime}(x)}\\
				&\le \frac{\theta^3}{2(d-1)} \frac{(d-1)^2\left((d-1)^2-1\right)}{3} \max_{-1\le x \le 1} \abs{T_{d-1}(x)} = \frac{d(d-1)(d-2)\theta^3}{6} \le \frac{(d\theta)^3}{6}.
			\end{split}
		\end{equation}
		Here, the Markov brothers' inequality (\cref{thm:Markovs-ineq}) is invoked to bound the second order derivative. Thus, the approximation error is
		\begin{equation}\label{eqn:prog-diff-err1}
			\max_{\omega \in [0,\pi]} \abs{\mf{f}_d^\circ(\omega-\varphi,\theta) - \mf{f}_d(\omega-\varphi,\theta)} \le C (d\theta)^3 \quad \text{where } C = \frac{1}{1 + 2\sqrt{2}/3} + \frac{1}{6} \approx 0.6814.
		\end{equation}
		When $\abs{\omega-\varphi} \le \frac{\pi}{d}$, the absolute value can be discarded and we can consider $\wt{\mf{f}}_d^\circ$ instead. Taking the difference of the function, it yields
		\begin{equation}\label{eqn:prog-diff-err2}
		\wt{\mf{f}}_{d+2}^\circ(\omega-\varphi, \theta) - \wt{\mf{f}}_{d}^\circ(\omega-\varphi, \theta) = 2 \sin\theta \cos\left( (d+1) (\omega-\varphi) \right).
		\end{equation}
		Let the differential signal be
		\begin{equation}
		\Gamma_d(\omega-\varphi, \theta)  := \mf{f}_{d+2}(\omega-\varphi, \theta) - \mf{f}_{d}(\omega-\varphi, \theta) = 2\theta + \delta_d(\omega-\varphi, \theta)
		\end{equation}
		where $\delta_d(\omega-\varphi, \theta)$ is the systematic error raising in the linearization of the model. Using \cref{eqn:prog-diff-err1,eqn:prog-diff-err2}, when $\abs{\omega-\varphi} \le \frac{\pi}{d+2}$, the systematic error is bounded as
		\begin{equation}\label{eqn:systematic-error-bound}
		\begin{split}
		&\abs{\delta_d(\omega-\varphi, \theta)} \le \abs{\wt{\mf{f}}_{d+2}^\circ(\omega-\varphi, \theta) - \wt{\mf{f}}_{d}^\circ(\omega-\varphi, \theta) - 2\theta} + C\theta^3\left( d^3 + (d+2)^3 \right)\\
		&\le 2\abs{\cos\left( (d+1) (\omega-\varphi) \right)}\left(\theta - \sin\theta\right) + 2\theta\left( 1 - \cos\left( (d+1) (\omega-\varphi) \right) \right) + C \theta^3\left( d^3 + (d+2)^3 \right)\\
		&\le \theta \left(d+1\right)^2\left( \omega - \varphi \right)^2 + C \theta^3\left( d^3 + (d+2)^3 \right) + 2\theta^3.
		\end{split}
		\end{equation}
		Furthermore, the differential signal is also bounded
		\begin{equation}
			\begin{split}
				&\abs{\Gamma_d(\omega-\varphi,\theta)} \le \abs{\wt{\mf{f}}_{d+2}^\circ(\omega-\varphi, \theta) - \wt{\mf{f}}_{d}^\circ(\omega-\varphi, \theta)} + C\theta^3\left( d^3 + (d+2)^3 \right)\\
				&= 2\sin\theta \abs{\cos\left( (d+1) (\omega-\varphi) \right)} + C\theta^3\left( d^3 + (d+2)^3 \right) \le 2\theta + C\theta^3\left( d^3 + (d+2)^3 \right).
			\end{split}
		\end{equation}
		In the experimental implementation, we perform the QSPE circuit with $\omega = \hat{\varphi}_\mathrm{pri}$ and degree $d, d+2, d+4, \cdots, 3d$. The resulted dataset contains $\left\{ \mf{f}_j^\expl := \abs{\mf{h}^\expl_j} : j = d, d+2, \cdots, 3d \right\}$ and the differential signal can be computed respectively
		\begin{equation}\label{eqn:prog-diff-lm}
		\Gamma_j^\expl := \mf{f}_{j+2}^\expl - \mf{f}_j^\expl = \Gamma_j(\hat{\varphi}-\varphi, \theta) + w_{j+2} - w_j = 2\theta + \delta_j(\hat{\varphi}-\varphi, \theta) + w_{j+2} - w_j
		\end{equation}
		where $w_j := \mf{f}_j^\expl - \mf{f}_j(\hat{\varphi} - \varphi)$ is the noise of the sampled data. When the SNR is large, Ref. \cite{Tretter1985} suggests the noise can be approximated by the real component of the noise on the complex-valued data $\mf{h}_j^\expl$. Analyzed in the proof of \cref{lma:monte-carlo-error-magnitude}, the variance of the noise concentrates around a constant
		\begin{equation}
		\expt{w_j} = 0 \quad \text{ and } \quad \frac{1}{4M} - \frac{(j\theta)^2}{M} \le \mathrm{Var}(w_j) \le \frac{1}{4M}.
		\end{equation}
		Assume $3d\theta \ll 1$, the covariance matrix of the colored noise $w_{j+2} - w_j$ is well approximated by a constant matrix
		\begin{equation}
		\expt{\left(w_{d + 2(j+1)} - w_{d + 2j}\right)\left(w_{d + 2(k+1)} - w_{d + 2k}\right)} \approx \frac{1}{4M} \mf{D}_{j,k}.
		\end{equation}
		Let the data vector be 
		\begin{equation}
		\boldsymbol{\Gamma} = \left( \Gamma_d^\expl, \Gamma_{d+2}^\expl, \cdots, \Gamma_{3d-2}^\expl \right)^\top \in \RR^d
		\end{equation}
		and the systematic error vector be
		\begin{equation}
		\mathrm{\boldsymbol{\delta}}(\hat{\varphi}-\varphi, \theta) = \left( \delta_d(\hat{\varphi}-\varphi, \theta), \delta_{d+2}(\hat{\varphi}-\varphi, \theta), \cdots, \delta_{3d-2}(\hat{\varphi}-\varphi, \theta) \right)^\top \in \RR^d.
		\end{equation}
		The statistical estimator solving the linearized problem of \cref{eqn:prog-diff-lm} is
		\begin{equation}
		\hat{\theta}_\mathrm{pd} = \frac{1}{2} \frac{\mymathbb{1}^\top \mf{D}^{-1} \boldsymbol{\Gamma}}{\mymathbb{1}^\top \mf{D}^{-1} \mymathbb{1}}. 
		\end{equation}
		 According to Ref. \cite{Kay1989}, the matrix-multiplication form can be exactly represented as a convex combination: for any $d$-dimensional vector $\bvec{X} = (X_0, \cdots, X_{d-1})^\top$
		\begin{equation}
			\frac{\mymathbb{1}^\top \mf{D}^{-1} \bvec{X}}{\mymathbb{1}^\top \mf{D}^{-1} \mymathbb{1}} = \sum_{k=0}^{d-1} \mu_k X_k
		\end{equation}
		where
		\begin{equation}
		\mu_k := \frac{\frac{3}{2} (d+1)}{(d+1)^2-1} \left( 1 - \left( \frac{k - \frac{d-1}{2}}{\frac{d+1}{2}}\right)^2 \right) > 0 \text{ and } \sum_{k=0}^{d-1} \mu_k = 1.
		\end{equation}
		The variance of the estimator is
		\begin{equation}
		\mathrm{Var}\left( \hat{\theta}_\mathrm{pd} \right) = \frac{1}{4} \frac{1}{4M} \frac{1}{\mymathbb{1}^\top \mf{D}^{-1} \mymathbb{1}} = \frac{3}{4Md(d+1)(d+2)} \approx \frac{3}{4d^3 M}.
		\end{equation}
		The conditional mean of the estimator is bounded as
		\begin{equation}
			\abs{\expt{\hat{\theta}_\mathrm{pd} \bigg| \hat{\varphi}_\mathrm{pri}}} = \abs{\frac{1}{2} \sum_{k=0}^{d-1} \mu_k \Gamma_{d+2k}(\hat{\varphi}_\mathrm{pri}-\varphi,\theta)} \le \frac{1}{2} \max_{k=0,\cdots,d-1} \abs{\Gamma_{d+2k}(\hat{\varphi}_\mathrm{pri}-\varphi,\theta)} \le \theta + C(3d\theta)^3.
		\end{equation}
		To make the bound in \cref{eqn:systematic-error-bound} justified, we first assume that $\abs{\hat{\varphi}_\mathrm{pri}-\varphi} \le \frac{\pi}{3d}$. Invoking Chebyshev's inequality, the assumption fails with probability
		\begin{equation}
		\bP\left(\abs{\hat{\varphi}_\mathrm{pri}-\varphi} > \frac{\pi}{3d}\right) \le \frac{9d^2}{\pi^2}\mathrm{Var}\left(\hat{\varphi}_\mathrm{pri}\right) \le d^2 \mathrm{Var}\left(\hat{\varphi}_\mathrm{pri}\right).
		\end{equation}
		When $\abs{\hat{\varphi}_\mathrm{pri}-\varphi} \le \frac{\pi}{3d}$, the conditional expectation of the estimator is
		\begin{equation}
		\expt{\hat{\theta}_\mathrm{pd} \mathbb{I}_{\abs{\hat{\varphi}_\mathrm{pri}-\varphi} \le \frac{\pi}{3d}} \bigg| \hat{\varphi}_\mathrm{pri}} = \left(\theta + \frac{1}{2} \frac{\mymathbb{1}^\top \mf{D}^{-1} \boldsymbol{\delta}(\hat{\varphi}_\mathrm{pri}-\varphi, \theta)}{\mymathbb{1}^\top \mf{D}^{-1} \mymathbb{1}}\right)\mathbb{I}_{\abs{\hat{\varphi}_\mathrm{pri}-\varphi} \le \frac{\pi}{3d}}. 
		\end{equation}
		Invoking \cref{eqn:systematic-error-bound}, when $\abs{\hat{\varphi}_\mathrm{pri}-\varphi} \le \frac{\pi}{3d}$, the bias of the estimator is bounded as
		\begin{equation}
		\begin{split}
		&\abs{\expt{\left(\hat{\theta}_\mathrm{pd} - \theta\right)\mathbb{I}_{\abs{\hat{\varphi}_\mathrm{pri}-\varphi} \le \frac{\pi}{3d}}}} = \abs{\expt{\expt{\left(\hat{\theta}_\mathrm{pd} - \theta\right)\mathbb{I}_{\abs{\hat{\varphi}_\mathrm{pri}-\varphi} \le \frac{\pi}{3d}} \bigg| \hat{\varphi}_\mathrm{pri}}}}\\
		&= \frac{1}{2}\abs{\sum_{k=0}^{d-1} \mu_k \expt{\delta_{d+2k}(\hat{\varphi}_\mathrm{pri}-\varphi, \theta)\mathbb{I}_{\abs{\hat{\varphi}_\mathrm{pri}-\varphi} \le \frac{\pi}{3d}}}} \le \frac{1}{2}\max_{k=0,\cdots,d-1} \expt{\abs{\delta_{d+2k}(\hat{\varphi}_\mathrm{pri}-\varphi, \theta)}} \\
		&\le \frac{1}{2} \theta\left(3d-1\right)^2 \mathrm{Var}\left( \hat{\varphi}_\mathrm{pri} \right) + C\theta^3\left( \frac{1}{C} + \frac{(3d-2)^3 + (3d)^3}{2} \right) \le \frac{1}{2}\theta (3d)^2 \mathrm{Var}\left( \hat{\varphi}_\mathrm{pri} \right) + C(3d\theta)^3.
		\end{split}
		\end{equation}
		On the other hand, when $\abs{\hat{\varphi}_\mathrm{pri}-\varphi} > \frac{\pi}{3d}$, the bias of the estimator is bounded as
		\begin{equation}\label{eqn:bias-upper-bound-outside}
		\begin{split}
			&\abs{\expt{\left(\hat{\theta}_\mathrm{pd} - \theta\right)\mathbb{I}_{\abs{\hat{\varphi}_\mathrm{pri}-\varphi} > \frac{\pi}{3d}}}} =  \abs{\expt{\left(\expt{\hat{\theta}_\mathrm{pd} \bigg| \hat{\varphi}_\mathrm{pri}} - \theta\right)\mathbb{I}_{\abs{\hat{\varphi}_\mathrm{pri}-\varphi} > \frac{\pi}{3d}}}}\\
			&\le \left(2\theta + C(3d\theta)^3\right) \bP\left(\abs{\hat{\varphi}_\mathrm{pri}-\varphi} > \frac{\pi}{3d}\right) \le 2d^2\theta \mathrm{Var}\left(\hat{\varphi}_\mathrm{pri}\right) + C (3d\theta)^3.
		\end{split}
		\end{equation}
		Combining these two cases and using triangle inequality, the bias is bounded as
		\begin{equation}
		\begin{split}
			\abs{\mathrm{Bias}_\mathrm{pd}} &\le \abs{\expt{\left(\hat{\theta}_\mathrm{pd} - \theta\right)\mathbb{I}_{\abs{\hat{\varphi}_\mathrm{pri}-\varphi} \le \frac{\pi}{3d}}}} + \abs{\expt{\left(\hat{\theta}_\mathrm{pd} - \theta\right)\mathbb{I}_{\abs{\hat{\varphi}_\mathrm{pri}-\varphi} > \frac{\pi}{3d}}}}\\
			&\le \frac{13}{2} d^2\theta \mathrm{Var}\left( \hat{\varphi}_\mathrm{pri} \right) + 37 (d\theta)^3.
		\end{split}
		\end{equation}
		Here, we use $54C \le 37$ to simplify the preconstant. The proof is completed.
	\end{proof}
	\begin{corollary}\label{cor:bias-prog-diff-QSPC-F}
		When $\hat{\varphi}_\mathrm{pri} = \hat{\varphi}$ is the QSPE $\varphi$-estimator in \cref{def:estimator-qsp-pc}, the bias of the estimator is bounded as
		\begin{equation}
			\abs{\mathrm{Bias}_\mathrm{pd}} \le \frac{39}{16 d^2 M \theta} + \frac{7 d\theta}{M} + 19 \left(d\theta\right)^3.
		\end{equation}
	\end{corollary}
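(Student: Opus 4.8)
The statement is a specialization of \cref{thm:bias-prog-diff} to the choice $\hat\varphi_\mathrm{pri}=\hat\varphi$, the QSPE $\varphi$-estimator of \cref{def:estimator-qsp-pc}, so the plan is to verify the two hypotheses of that theorem for this choice and then re-run its final assembly step while tracking the (now very small) size of $\mathrm{Var}(\hat\varphi)$. First, $\hat\varphi$ is unbiased in the sense required: in the linearized model \cref{eqn:modeled-problem-Fourier-space} one has $\expt{\Delta_k}=2\varphi$ for every $k$ since $\expt{v_k^\scp{\mathrm{pha}}}=0$ by \cref{lma:monte-carlo-error-magnitude}, and because $\hat\varphi=\tfrac12\,\boldsymbol{\mymathbb{1}}^\top\mf{D}^{-1}\boldsymbol\Delta/(\boldsymbol{\mymathbb{1}}^\top\mf{D}^{-1}\boldsymbol{\mymathbb{1}})$ is linear in $\boldsymbol\Delta$ it follows that $\expt{\hat\varphi}=\varphi$, up to the high-order bias already flagged after \cref{prop:variance-qsp-pc-fourier}, which is negligible for $d\theta\le 1/9$. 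Second, I take $\mathrm{Var}(\hat\varphi)=\tfrac{3}{8Md^4\theta^2}$ from \cref{eqn:var-qspc-estimator}; this value is obtained by working in the high-SNR regime $dM\theta^2\gg 1$ guaranteed by \cref{thm:snr-qspc}, where the linear Gaussian model of \cref{subsec:stat-estimator-MC} with the constant covariance $\wt{D}\propto\mf{D}$ applies and one uses $\boldsymbol{\mymathbb{1}}^\top\mf{D}^{-1}\boldsymbol{\mymathbb{1}}=d(d^2-1)/12$ for the $(d-1)\times(d-1)$ discrete Laplacian. The corrections to this value coming from $\sin^2\theta$ versus $\theta^2$, from $(2d-1)(d^2-1)$ versus $2d^4$, and from the off-band covariance error $E$ of \cref{cor:elementwise-bound-discrete-Laplacian} are of strictly higher order in the relevant regime and will be carried into the $(d\theta)^3$ slack.

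Next I substitute into the bias bound assembled at the end of the proof of \cref{thm:bias-prog-diff}, using its intermediate form rather than the packaged one, since the new $\tfrac{7d\theta}{M}$ term carries an $M$ and therefore must originate from a piece proportional to $\mathrm{Var}(\hat\varphi)$. Conditioning on $\{|\hat\varphi-\varphi|\le\pi/(3d)\}$, the ``inside'' contribution is controlled by $\tfrac12(3d-1)^2\theta\,\mathrm{Var}(\hat\varphi)$ plus the deterministic linearization error $C\theta^3\bigl(\tfrac1C+\tfrac{(3d-2)^3+(3d)^3}{2}\bigr)$, while the ``outside'' contribution is controlled by $\bigl(2\theta+C(3d\theta)^3\bigr)\,\bP\bigl(|\hat\varphi-\varphi|>\pi/(3d)\bigr)\le\bigl(2\theta+C(3d\theta)^3\bigr)\,d^2\,\mathrm{Var}(\hat\varphi)$ via Chebyshev's inequality, with the same $C\approx 0.6814$ as in \cref{thm:bias-prog-diff}. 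Inserting $\mathrm{Var}(\hat\varphi)=\tfrac{3}{8Md^4\theta^2}$, the two pieces proportional to $\theta\,\mathrm{Var}(\hat\varphi)$ — namely $\tfrac12(3d-1)^2\theta\,\mathrm{Var}(\hat\varphi)\le\tfrac{27}{16Md^2\theta}$ and $2\theta d^2\,\mathrm{Var}(\hat\varphi)=\tfrac{12}{16Md^2\theta}$ — sum to $\tfrac{39}{16Md^2\theta}$; the deterministic linearization error is at most $19(d\theta)^3$ once constants are collected. The one deviation from the proof of \cref{thm:bias-prog-diff} is in the ``outside'' term: there the factor $d^2\,\mathrm{Var}(\hat\varphi)$ multiplying $C(3d\theta)^3$ was discarded using $d^2\,\mathrm{Var}(\hat\varphi)\le 1$, but that is wasteful here, so I keep it and get $C(3d\theta)^3\cdot d^2\,\mathrm{Var}(\hat\varphi)=C\cdot\tfrac{81}{8}\cdot\tfrac{d\theta}{M}\le\tfrac{7d\theta}{M}$. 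Adding the three contributions yields the claimed bound.

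The only real obstacle is bookkeeping rather than ideas. One must verify that $C\cdot 81/8\le 7$ for the explicit constant $C\approx 0.6814$ inherited from \cref{thm:bias-prog-diff} — the value is $\approx 6.9$, so the margin is narrow and the auxiliary inequalities used to reach it (such as $(3d)^2\ge(3d-1)^2$ and $9/\pi^2\le1$, applied to the right pieces) must be routed carefully — and one must check that each higher-order correction to $\mathrm{Var}(\hat\varphi)$ listed above, after multiplication by the prefactor $\tfrac{13}{2}d^2\theta$ of \cref{thm:bias-prog-diff}, still fits inside the $19(d\theta)^3$ budget (and, where necessary, the $\tfrac{7d\theta}{M}$ budget) throughout the regime $d\theta\le 1/9$.
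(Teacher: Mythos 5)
Your proposal is correct and follows essentially the same route as the paper's proof: substitute $\mathrm{Var}(\hat{\varphi}) \approx \tfrac{3}{8Md^4\theta^2}$ into the bound of \cref{thm:bias-prog-diff} to get the $\tfrac{39}{16 d^2 M\theta}$ term, and recover the $\tfrac{7d\theta}{M}$ term by retaining the factor $\bP(|\hat{\varphi}-\varphi|>\pi/(3d)) \le d^2\,\mathrm{Var}(\hat{\varphi})$ in the ``outside'' contribution of \cref{eqn:bias-upper-bound-outside} rather than bounding it by $1$, which is exactly the refinement the paper performs (including the check $81C/8 \le 7$ and the resulting reduction of the cubic term from $37(d\theta)^3$ to the inside-only $27C(d\theta)^3 \le 19(d\theta)^3$).
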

	\begin{proof}
		The upper bound follows the substitution $\mathrm{Var}\left(\hat{\varphi}\right) \approx \frac{3}{8 d^4\theta^2 M}$. Furthermore, the second term comes from the refinement in the upper bound in \cref{eqn:bias-upper-bound-outside}
		\begin{equation}
			C (3d\theta)^3 \bP\left(\abs{\hat{\varphi}-\varphi} > \frac{\pi}{3d}\right) \le C (3d\theta)^3 d^2 \mathrm{Var}\left(\hat{\varphi}\right) \le \frac{81 C d \theta}{8 M} \le \frac{7d\theta}{M}.
		\end{equation}
	\end{proof}
	The analysis in this section indicates that trusting the a priori phase as the ``peak'' location and estimating $\theta$ from the differential signal at the ``peak'' will unavoidably introduce bias to the $\theta$-estimator. Unless the a priori is deterministic and is exactly equal to $\varphi$, the ``peak'' is not the exact peak even subjected to the controllable statistical fluctuation of $\hat{\varphi}_\mathrm{pri}$. Hence, it suggests that we need to interpret the a priori $\hat{\varphi}_\mathrm{pri}$ as an estimated peak location which is close to the exact peak location $\varphi$. This gives rise to the regression-based methods in the next subsection.
	
	\subsection{Peak regression and peak fitting}
	In order to circumvent the over-confident reliance on the a priori guess of $\varphi$, the method can be improved by regressing distinct samples with respect to analytical expressions on the unknown angle parameters. Suppose $n$ samples are made with $\{(\omega_j, d_j, \mf{h}^{\expl, j}) : j = 1, \cdots, n\}$. One can consider \REV{performing} a nonlinear regression on the data to infer the unknown parameters, which is given by the following minimization problem
	\begin{equation}
	    \hat{\theta}_\mathrm{pr}, \hat{\varphi}_\mathrm{pr}, \hat{\chi}_\mathrm{pr} = \myargmin_{\theta, \varphi, \chi} \sum_{j = 1}^n \abs{\mf{h}_{d_j}(\omega_j; \theta,\varphi,\chi) - \mf{h}^{\expl, j}}^2.
	\end{equation}
	When the number of additional samples $n$ is large enough, the estimator derived from the minimization problem is expected to be unbiased and the variance scales as $\Or(1/(d^2nM))$ according to the M-estimation theory \cite{KeenerTheoreticalStatistics2010}. However, the practical implementation of these estimators is easily affected by the complex landscape of the minimization problem. Meanwhile, the sub-optimality and the run time of black-box optimization algorithms also limit the use of these \REV{estimators}.
	
	To overcome the difficulty due to the complex landscape of nonlinear regression, we propose another technique to improve the accuracy of the swap-angle estimator by fitting the peak of the amplitude function $\mf{f}_d(\omega-\varphi, \theta)$. We observe that the amplitude function is well captured by a parabola on the interval $\mc{I} := \left[ \varphi - \frac{\pi}{2d}, \varphi + \frac{\pi}{2d} \right]$. Consider $n_\mathrm{pf}$ equally spaced sample points on the interval $\mc{I}$: $\omega_j^\mathrm{(pf)} = \hat{\varphi}_\mathrm{pri} + \frac{\pi}{d} \left(\frac{j}{n_\mathrm{pf}-1} - \frac{1}{2}\right)$ where $j = 0, 1, \cdots, n_\mathrm{pf}-1$. We find the best parabola fitting the sampled data $\mf{f}_d^\expl\left(\omega_j^\mathrm{(pf)} \right)$ whose maximum $\mf{f}_d^\mathrm{(pf\ max)}$ attains at $\omega^\mathrm{(pf\ max)}$. Given that $\hat{\varphi}_\mathrm{pri}$ is an accurate estimator of the angle $\varphi$, we accept the parabolic fitting result if the peak location does not deviate $\hat{\varphi}_\mathrm{pri}$ beyond some threshold $\varepsilon^\mathrm{thr}$, namely, the fitting is accepted if $\abs{\omega^\mathrm{(pf\ max)} - \hat{\varphi}_\mathrm{pri}} < \varepsilon^\mathrm{thr}$. Upon the acceptance, the estimator is $\hat{\theta}_\mathrm{pf} := \mf{f}_d^\mathrm{(pf\ max)} / d$. Ignoring the systematic bias caused by the overshooting of $\hat{\varphi} \neq \varphi$, the variance of the estimator is approximately $\Or\left( \frac{1}{d^2 n_\mathrm{pf}} \right)$. The detailed procedure is given in \cref{alg:qspc-peak-fitting}.
 
	\begin{algorithm}
		\caption{Improving $\theta$ estimation using peak fitting}
		\label{alg:qspc-peak-fitting}
		\begin{algorithmic}
			\STATE{\textbf{Input:} A $U$-gate $U(\theta,\varphi,\chi,*)$, an integer $d$ (the number of applications of $U$-gate), an integer $n$ (the number of sampled angles), a priori $\hat{\varphi}_\mathrm{pri}$ (can be generated by QSPE), a threshold $\beta^\mathrm{thr} \in [0, 1]$.}
			\STATE{\textbf{Output:} Estimators $\hat{\theta}_\mathrm{pf}$}
			\STATE{}
			\STATE{Initiate real-valued data vectors $\boldsymbol{\mf{p}}^\expl, \boldsymbol{\mf{w}} \in \RR^n$.}
			\FOR{$j = 0, 1, \cdots, n-1$}
			\STATE{Set the tunable $Z$-phase modulation angle as $\omega_j = \hat{\varphi}_\mathrm{pri} + \frac{\pi}{d}\left(\frac{j}{n-1}-\frac{1}{2}\right)$.}
			\STATE{Peform the quantum circuit in Figure 1 in the main text (or \cref{fig:general-QSPE}) and measure the transition probabilities $p_X^\expl(\omega_j)$ and $p_Y^\expl(\omega_j)$.}
			\STATE{Set $\boldsymbol{\mf{p}}^\expl_j \leftarrow \sqrt{\left(p_X^\expl(\omega_j) - \frac{1}{2}\right)^2 + \left(p_Y^\expl(\omega_j) - \frac{1}{2}\right)^2}$ and $\boldsymbol{\mf{w}}_j \leftarrow \omega_j$.}
			\ENDFOR
			\STATE{Fit $\left(\boldsymbol{\mf{w}}, \boldsymbol{\mf{p}}^\expl\right)$ with respect to to parabolic model $\mf{p} = \beta_0\left(\mf{w} - \beta_1\right)^2 + \beta_2$.}
            \IF{$\beta_0 < 0$ (concavity) and $\abs{\beta_1 - \hat{\varphi}_\mathrm{pri}} < \beta^\mathrm{thr}$ (small deviation from a priori)}
			\STATE{Set $\hat{\theta}_\mathrm{pf} \leftarrow \beta_2 / d$. The improvement is accepted.}
			\ELSE
			\STATE{Set $\hat{\theta}_\mathrm{pf} \leftarrow \mathrm{None}$. The improvement is rejected.}
			\ENDIF
		\end{algorithmic}
	\end{algorithm}
 
	\subsection{Numerical performance of QSPE on \fsim against Monte Carlo sampling error}\label{subsec:num-result-MC}
	To numerically test the performance of QSPE and validate the analysis in the presence of Monte Carlo sampling error, we simulate the quantum circuit and perform the inference. In \cref{fig:degree_mc}, we plot the squared error of each estimator as a function of the number of \fsim{}s $d$ in each quantum circuit. Consequentially, each data point is the mean squared error (MSE), which is a metric of the performance according to the bias-variance decomposition $\mathrm{MSE} = \mathrm{Var} + \mathrm{bias}^2$. The numerical results in \cref{fig:degree_mc} indicate that although $\theta = 1\times 10^{-3}$ is small, QSPE estimators achieve an accurate estimation with a very small $d$. The numerical results also show that the performance of the estimator does not significantly depend on the value of the single-qubit phase $\varphi$. Meanwhile, using the peak fitting in \cref{alg:qspc-peak-fitting}, the variance in $\theta$-estimation is improved so that the MSE curve is lowered. Zooming the MSE curve in log-log scale, the curve scales as a function of $d$ as the theoretically derived variance scaling in Theorem 2 in the main text. We will discuss the scaling of the variance in \cref{sec:lower-bound-qspc-metrology} in more detail.
	
	In \cref{fig:meas_mc}, we perform the numerical simulation with variable swap angle $\theta$ and \REV{the} number of measurement samples $M$. The numerical results show that the accuracy of $\varphi$-estimation is more vulnerable to decreasing $\theta$. This is explainable from the theoretically derived variance in Theorem 2 in the main text which depends on the swap angle as $1/\theta^2$. Although the theoretical variance of $\theta$ is expected to be invariant for different $\theta$ values, the numerical results show that the MSE of $\theta$-estimation gets larger when smaller $\theta$ is used, and the scaling of the curve differs from the classical scaling $1/M$. The reason is that when $\theta \le 5\times 10^{-4}$, the SNR is not large enough so that the theoretical derivation can be justified. When using a bigger $d$ or $M$, the curve will converge to the theoretical derivation. When $\theta = 1\times 10^{-3}$, the setting of the experiments is enough to get a large enough SNR. Hence, the scaling of the MSE curves in the bottom panels in \cref{fig:meas_mc} agrees with the classical scaling $1/M$ of Monte Carlo sampling error.
	
	\begin{figure}[h]
		\centering
		\includegraphics[width=\columnwidth]{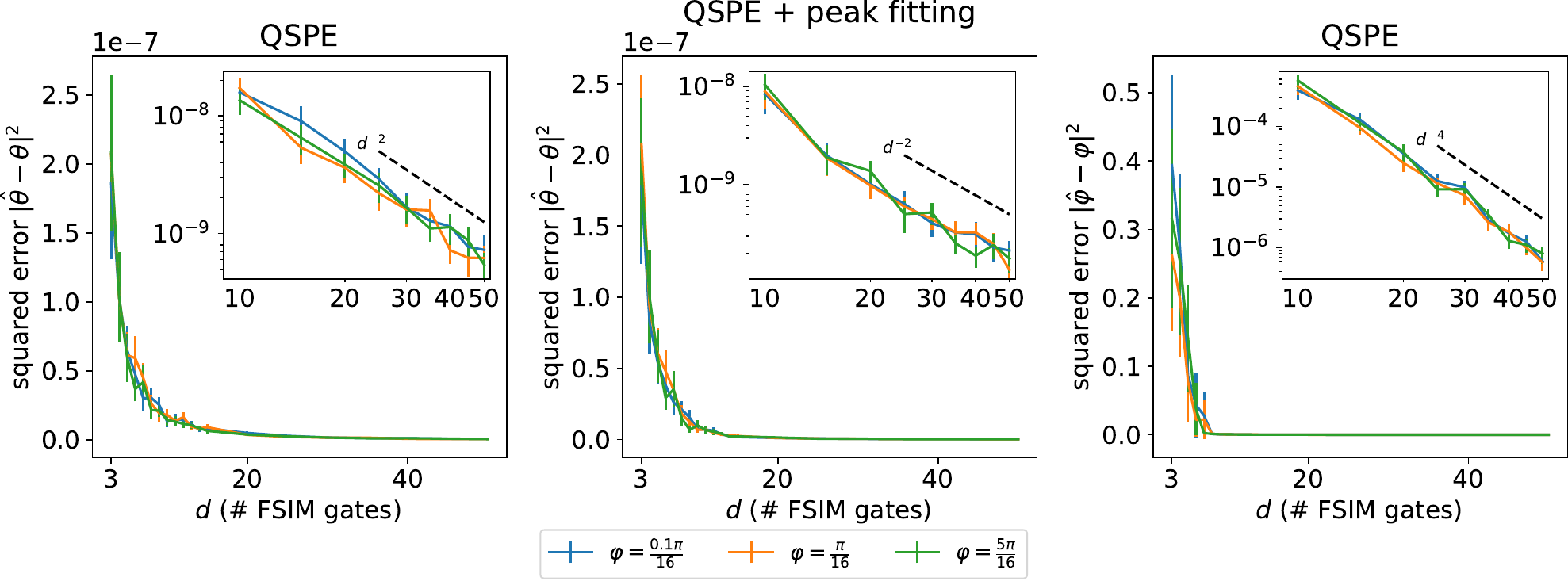}
		\caption{Squared error of estimators as a function of the number of \fsim's. The only source of noise in the numerical experiments is \REV{the} Monte Carlo sampling error. The number of measurement samples is $M = 1\times 10^5$, and $n_\mathrm{pf} = 15$ is used in the peak fitting. The swap angle is set to $\theta = 1 \times 10^{-3}$ and the  phase parameter is set to $\chi = 5\pi/32$. The error bar of each point stands for the confidence interval derived from $96$ independent repetitions.}
		\label{fig:degree_mc}
	\end{figure}
	
	\begin{figure}[h]
		\centering
		\includegraphics[width=\columnwidth]{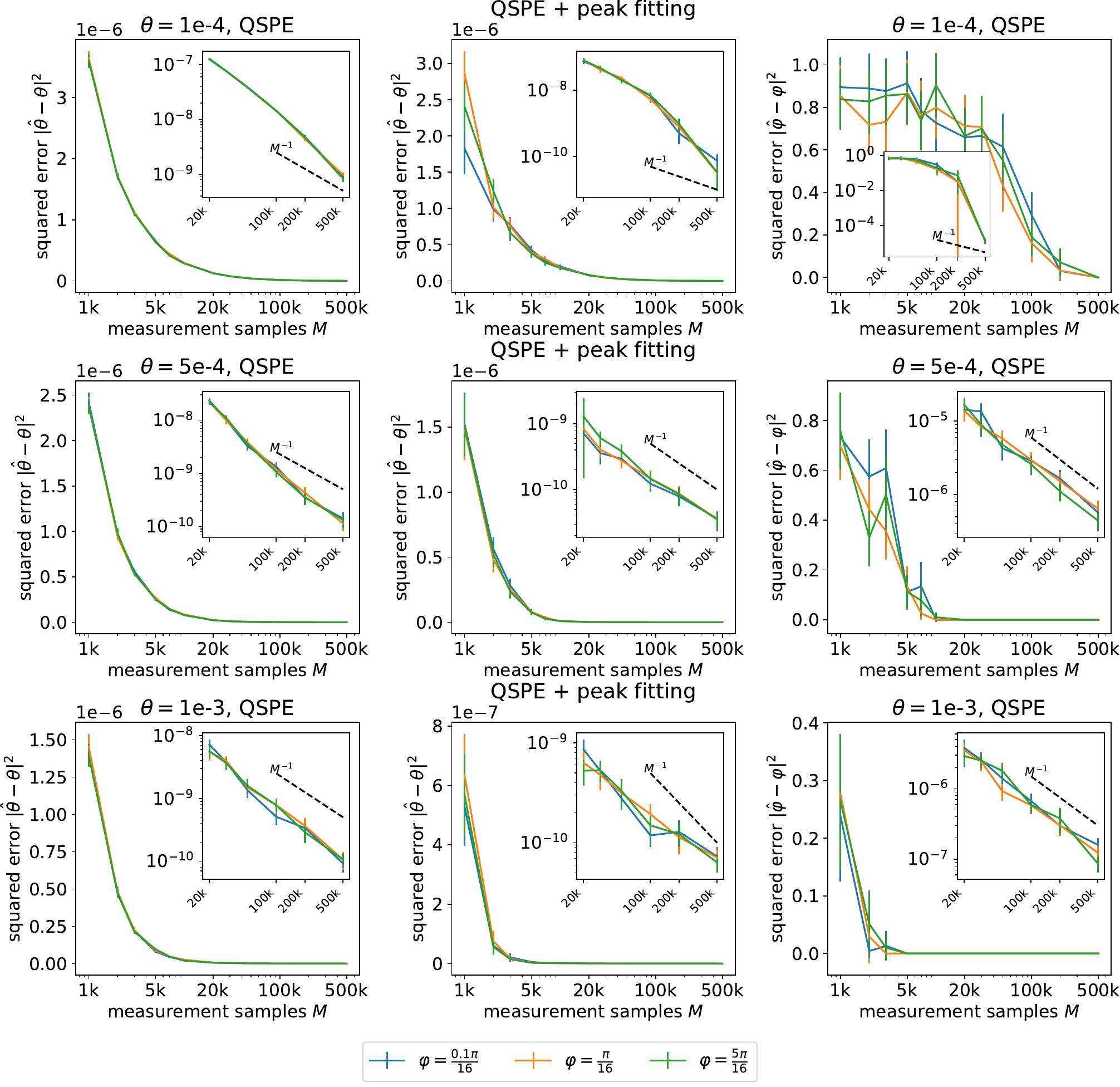}
		\caption{Squared error of estimators as a function of the number of measurement samples. The only source of noise in the numerical experiments is \REV{the} Monte Carlo sampling error. The circuit degree is set to $d = 50$ and the \fsim phase parameter is set to $\chi = 5\pi/32$. $n_\mathrm{pf} = 15$ is used in the peak fitting. The error bar of each point stands for the confidence interval derived from $96$ independent repetitions.}
		\label{fig:meas_mc}
	\end{figure}

    \section{Solving QSPE with arbitrary swap-angle value}\label{sec:qspe-general-theta}
    In the early section, we derive robust statistical estimators when $d \theta \le 1$. These estimators are based on linear statistical models which \REV{rely} on the approximation of Fourier coefficients in the desired regime. In this section, we aim to generalize the solution to arbitrary swap-angle value which expands the use of QSPE to generic $U$-gates. 

    Recall that in \cref{thm:reconstruction-h-Fourier-expansion}, we show that by performing Fourier transformation, the swap angle and phase angles are fully decoupled in terms of the dependencies in amplitude and phase. Furthermore, the analysis in \cref{sec:Monte-Carlo-sampling-error} reveals that the noise magnitude, namely, the variance of the noise, in the Fourier space is reduced by a factor of $d$. Consequently, these results suggest the design of algorithms in Fourier space. 

    According to \cref{thm:reconstruction-h-Fourier-expansion,eqn:tilde-c-k-expr}, the amplitudes of Fourier coefficients are exactly computable by solving the integral:
    \begin{equation}\label{eqn:def-eqn-Aks}
        A_k(\theta) := \wt{c}_k(\theta) = \frac{\sin\theta}{\pi} \int_0^\pi e^{-2\I(k+1)\omega} P_\omega^\scp{d}(\cos\theta) Q_\omega^\scp{d}(\cos\theta) \rd \omega.
    \end{equation}
    It is worth noting that the integrand is a finite-degree Laurent polynomial in terms of $e^{\pm \I \omega}$. Consequently, these coefficients can be exactly computed efficiently using $\Or(d \log(d))$ floating-point operations. Given the experimental amplitudes $\{\abs{c_k^\expl} : k = -d + 1, \cdots, d - 1\}$, the problem can be modeled as a system of nonlinear equations:
    \begin{equation}
        A_k(\theta) = \alpha_k \cdot \abs{c_k^\expl}, \quad k = -d + 1, \cdots, d - 1.
    \end{equation}
    Here, $\alpha_k \in \{-1, 1\}$ is an undetermined sign which is dropped when taking the amplitude. When formulating the nonlinear equation, we manually add it back. The set of candidate solutions of the $k$-th equation is $A^{-1}_k(\alpha_k \cdot \abs{c_k^\expl})$ which may contain more than one values as $A_k$ may not be injective. However, the solution should satisfy all equations. Hence, the solution set is $\cup_{\{\alpha_k\}} \cap_{k = - d + 1}^{d - 1} A^{-1}_k(\alpha_k \cdot \abs{c_k^\expl})$. The difficulties of solving this problem lie in three major aspects:
    \begin{enumerate}
        \item Sign problem. The undetermined sign factor renders it different from classic nonlinear equation problems. There are seemingly $2^{2 d - 1}$ potential systems of nonlinear equations associated with $\{\alpha_k\}$ to be solved. However, most of them do not have solution, namely, $\cap_{k = - d + 1}^{d - 1} A^{-1}_k(\alpha_k \cdot \abs{c_k^\expl}) = \emptyset$.
        \item Nonconvexity. Each function $A_k$ is nonconvex which contributes to the difficulty in solving nonlinear equation. Furthermore, the function is not injective which makes $A_k^{-1}$ multi-valued. These render the numerical solution to the system of nonlinear equations \REV{challenging}. 
        \item Imperfection due to noise and error. It is worth noting that the experimental values $\{\abs{c_k^\expl}\}$ may not match exact values due to the presence of sampling error and noise. The deviation from the exact value might make taking intersection among solutions to each equation hard because they may not match exactly. To account for noise and error, and to make the solution robust, we may slightly relax the range of the solution $\mathcal{S}_{k, \gamma}(\alpha_k \cdot \abs{c_k^\expl}) := \{ A_k^{-1}(\alpha_k \cdot \abs{c_k^\expl}) + x : \abs{x} \le \gamma \}$. Then, the set of relaxed solution is $\cup_{\{\alpha_k\}} \cap_{k = - d + 1}^{d - 1} \mathcal{S}_{k, \gamma}(\alpha_k \cdot \abs{c_k^\expl})$. The relaxation radius $\gamma$ \REV{accounts} for the uncertainty due to potential noise and error. It is worth noting that the choice of the relaxation radius $\gamma$ is a tradeoff between noise magnitude and solution precision which is also referred to as bias-variance tradeoff. When $\gamma$ is large, the variance of the solution is small which is robust, but the bias of the solution is large. Conversely, when $\gamma$ is small, the variance of the solution is large as it is more vulnerable to noise but the bias is small.
    \end{enumerate}

    It is worth noting that this hardness does not apply to the case when $d \theta \le 1$. According to the analysis in the previous sections, $A_k(\theta) \approx \theta \mathbb{I}_{k \ge 0}$ which is linear and positive. Hence, the sign problem and the nonconvexity in general cases are not applicable. The third issue is resolved by taking average in the construction of the statistical estimator which is proven to be robust against sampling error.
    
    As a consequence of the previous discussion, we can consider a simple interval-based solution to the general problem which approximately costs only $\Or(d \epsilon^{-1})$ operations where $\epsilon$ is the target precision. The procedure of the algorithm is as follows. Let the full $\theta$-range partitioned into small intervals $\{[l_i, r_i) : r_i - l_i \le \epsilon\}$. Then, by continuity, we are able to determine whether there exists points in this interval satisfying certain nonlinear equation by examining $\min\{A_k(l_i), A_k(r_i)\} \le \alpha_k \cdot \abs{c_k^\expl} \le \max\{A_k(l_i), A_k(r_i)\}$ for $\alpha_k = \pm 1$. We maintain a counter to count the number of satisfactions in each interval. The counter can be derived with $\Or(d \log(d) \epsilon^{-1})$ operations. Finally, we scan all intervals to get all intervals with full $2d-1$ satisfactions of equations and set the potential solution to $\theta = (l_i + r_i) / 2$. 

    This procedure is depicted in \cref{fig:generalized_theta_estimation} where each panel stands for an individual nonlinear equation with two distinct \REV{choices} of $\alpha_k = \pm 1$. In each panel, all intersection points form the potential solution set $A_k^{-1}(\alpha_k \cdot \abs{c_k^\expl})$. As the degree parameter is set to $d = 5$, the system has nine nonlinear equations. By solving them using the previously discussed interval-based method, we obtain the counter which is visualized as a histogram in \cref{fig:generalized_theta_estimation_interval_count}. We see the full satisfaction is attained at $\theta = 1$ and $\theta = \pi - 1$. They form the final solution output of the algorithm. It is worth noting that these two $\theta$-values are equivalent as the distinction in negating cosine can be accounted by redefining other phase angles.

    To estimate phase angle $\varphi$, we note that \cref{thm:reconstruction-h-Fourier-expansion} indicates that the $\varphi$-estimation procedure in \REV{the} general case is identical to that when $\theta$ is small. Hence, we may still use the derived estimator to estimate $\varphi$ by taking the sequential phase difference in the phases of Fourier coefficients. However, we note that the $\theta$-value may affect the estimation accuracy of $\varphi$ as the value of $A_k(\theta)$ \REV{modulates} the magnitude of the Fourier mode and \REV{limits} the signal-to-noise ratio as revealed in the analysis in \cref{sec:Monte-Carlo-sampling-error}.

    In \cref{fig:generalized_theta_varphi_estimation_error_meas1e5_tol1e-3}, we test the algorithm performance with variable $\theta$-values. We see that the absolute error in $\theta$-estimation remains well bounded below $5 \times 10^{-4}$ except for the singularity near $\pi / 2$. At this value, the variable part of the transition probabilities vanishes and $\mf{h} = 0$. Hence, angle inference becomes increasingly challenging due to the absence of information. It is also worth noting that as $\theta$ gets close to $\pi / 2$, the magnitude of $\cos(\theta)$ becomes more vanishing. Hence, the signal strength of the phase angle dependent part is increasingly weakened \REV{compared} to noise. This low signal-to-noise ratio leads to the increase in the $\varphi$-estimation error in \cref{fig:generalized_theta_varphi_estimation_error_meas1e5_tol1e-3}. 

\begin{algorithm} 
\caption{Inferring unknown angles in $U$-gate with general swap angle using QSPE procedure}
\label{alg:qspe-general-theta}
\begin{algorithmic}
    \STATE{\textbf{Input:} A  $U$-gate $U(\theta,\varphi,\chi,*)$, an integer $d$ (the number of applications of the $U$-gate), a precision parameter $\epsilon$.}
    \STATE{\textbf{Output:} Estimates $\hat{\theta}, \hat{\varphi}$}
    \STATE{}
    \STATE{Initiate a complex-valued data vector $\boldsymbol{\mf{h}}^\expl \in \CC^{2d-1}$.}
    \FOR{$j = 0, 1, \cdots, 2d-2$}
    \STATE{Set the tunable $Z$-phase modulation angle as $\omega_j = \frac{j}{2d-1} \pi$.}
    \STATE{Perform the quantum circuit in Figure 1 in the main text (or \cref{fig:general-QSPE}) and measure the transition probabilities $p_X^\expl(\omega_j)$ and $p_Y^\expl(\omega_j)$.}
    \STATE{Set $\boldsymbol{\mf{h}}^\expl_j \leftarrow p_X^\expl(\omega_j) - \frac{1}{2} + \I\left(p_Y^\expl(\omega_j) - \frac{1}{2}\right)$.}
    \ENDFOR
    \STATE{Compute the Fourier coefficients $\bvec{c}^\expl = \mathsf{FFT}\left(\boldsymbol{\mf{h}}^\expl\right)$.}
    \STATE{Compute estimates $\hat{\varphi}$ according to \cref{def:estimator-qsp-pc} using $\mathsf{phase}(\bvec{c}^\expl)$.}
    \STATE{Set number of intervals to $m = \lceil \pi / \epsilon \rceil$ and initiate a all-zero counter $z \in \RR^m$.}
    \STATE{Set $A_k^{(l)} = 0$ for $k = - d + 1, \cdots, d - 1$}
    \FOR{$j = 0, \cdots, m - 1$}
        \STATE{Set $l = j \pi / m$ and $r = (j + 1) \pi / m$.}
        \STATE{Set $A_k^{(r)} = A_k(r)$ which is derived by solving \cref{eqn:def-eqn-Aks} with FFT.}
        \FOR{$\alpha_k = \pm 1$}
        \IF{$\min\{A_k^{(l)}, A_k^{(r)}\} \le \alpha_k \cdot \abs{c_k^\expl} \le \max\{A_k^{(l)}, A_k^{(r)}\}$}
            \STATE{$z_j \leftarrow z_j + 1$}
        \ENDIF
        \ENDFOR
        \STATE{Set $A_k^{(r)} \leftarrow A_k^{(l)}$.}
    \ENDFOR
    \STATE{Initiate a list $\hat{\theta} = []$ and set $z_{\max} = \max_j z_j$.}
    \FOR{$j = 0, \cdots, m - 1$}
    \IF{$z_j = z_{\max}$}
        \STATE{Append $(j + 1 / 2) \pi / m$ into $\hat{\theta}$.}
    \ENDIF
    \ENDFOR
\end{algorithmic}
\end{algorithm}

    \begin{figure}[htbp]
        \centering
        \includegraphics[width=\linewidth]{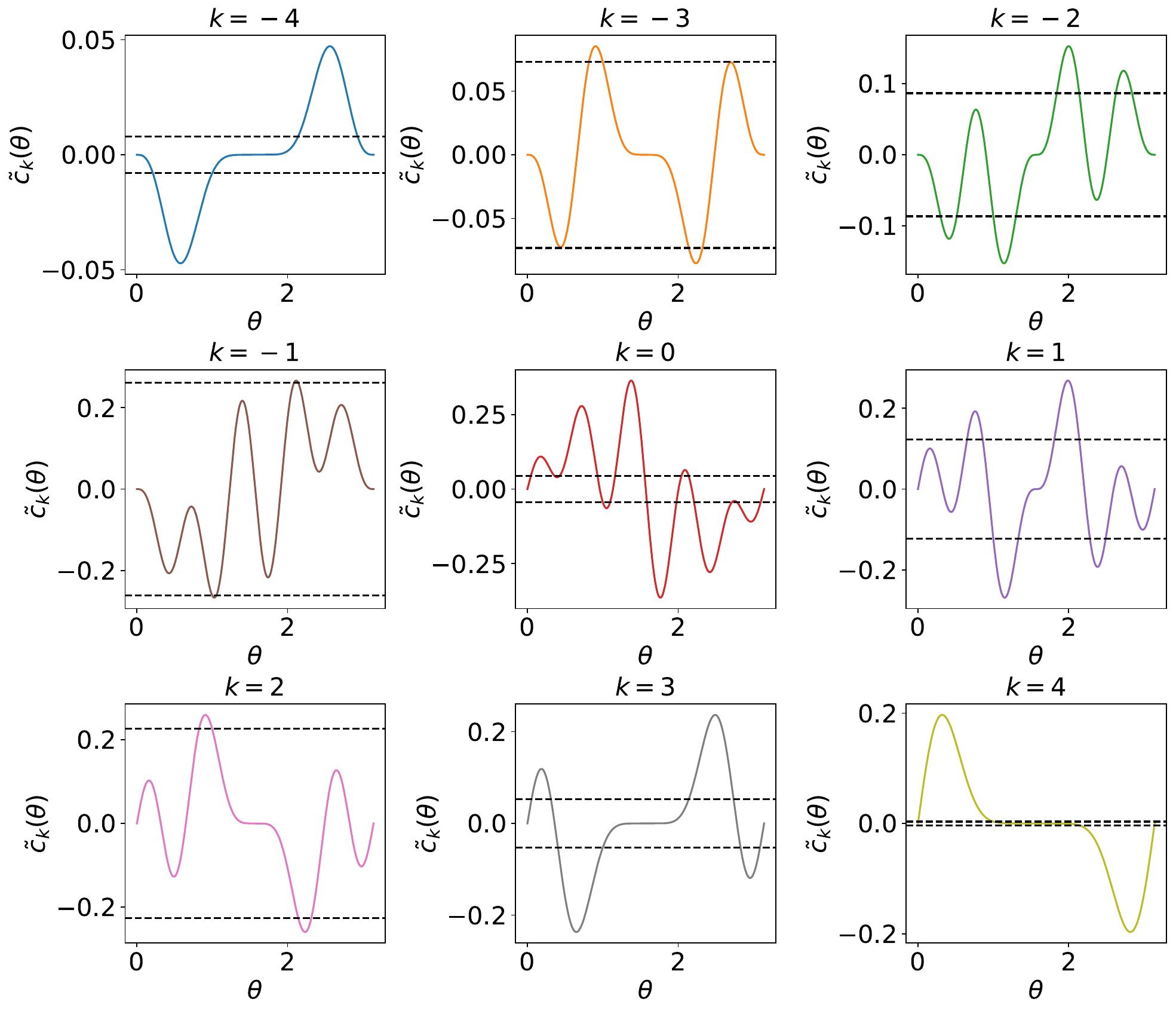}
        \caption{An example of formulating QSPE as solving systems of nonlinear equations. We set $d = 5$ and $M = 1 \times 10^5$. The relevant angles are set to $\theta = 1, \varphi = \pi / 16$. In each panel, the solid curve is the amplitude of the Fourier coefficients derived by solving the defining integral. The two horizontal dashed lines stand for the experimental amplitude with positive or negative signs ($\alpha_k = \pm 1$). The intersecting points are candidate solutions $A_k^{-1}(\alpha_k \cdot \abs{c_k^\expl})$.}
        \label{fig:generalized_theta_estimation}
    \end{figure}

    \begin{figure}[htbp]
        \centering
        \includegraphics[width=\linewidth]{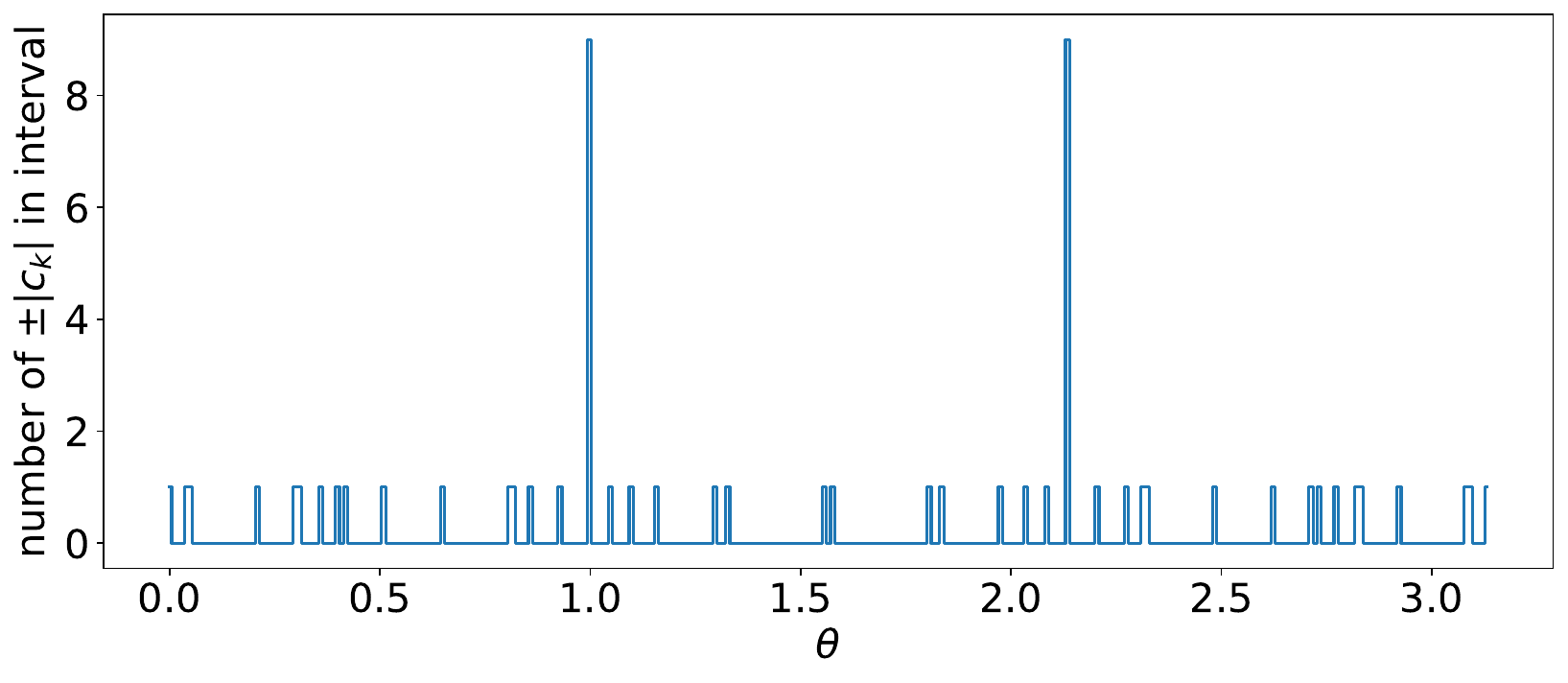}
        \caption{An example of the counter of interval-based algorithm for solving QSPE. The setup is identical to that in \cref{fig:generalized_theta_estimation}. The $\theta$-values satisfy all nine nonlinear equations are the output of the algorithm, which are estimators of the swap angle. }
        \label{fig:generalized_theta_estimation_interval_count}
    \end{figure}

    \begin{figure}[htbp]
        \centering
        \includegraphics[width=\linewidth]{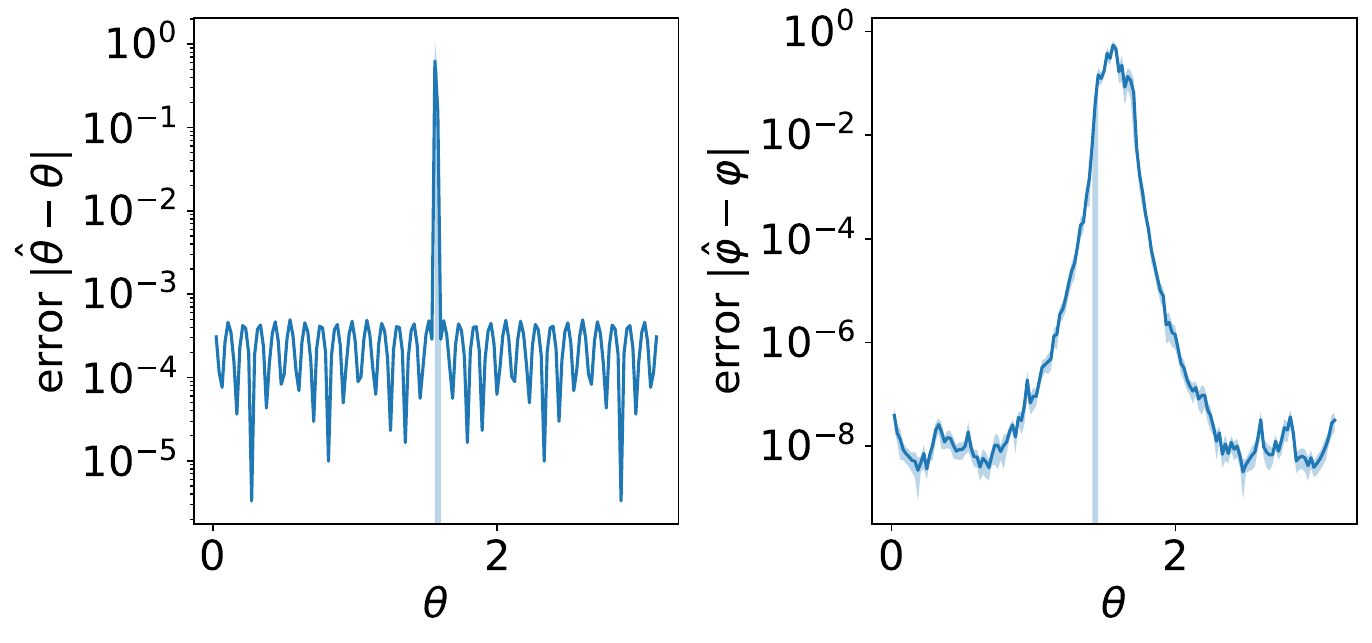}
        \caption{Estimation errors of $\theta$ and $\varphi$ using the generalized QSPE estimation algorithm for general swap angles. We set $d = 5$ and $M = 1 \times 10^5$. We fix $\varphi = \pi / 16$ and make $\theta$ variable. The shaded area stands for the confidence interval (error bar) determined from ten independent repetitions.}
        \label{fig:generalized_theta_varphi_estimation_error_meas1e5_tol1e-3}
    \end{figure}

	\section{Lower bounding the performance of QSPE}\label{sec:lower-bound-qspc-metrology}
	In the designed phase estimation algorithm, gate parameters are estimated from experimental data by running $2(2d-1)$ quantum circuits whose depths are $\Theta(d)$. If we simply think under the philosophy of the Heisenberg limit of quantum metrology in Ref. \cite{Lloyd2006}, we would expect the variance of statistical estimators bounded from below as 
	$$\Omega\left(1/\left((\text{classical repetition})\times(\text{quantum repetition})^2)\right)\right) = \Omega(1/d^3)$$
	when $d$ is large enough. However, theoretical analysis in Theorem 2 in the main text and numerical simulation in \cref{fig:degree_mc} show that the variance of the $\varphi$-estimator in QSPE depends on the parameter $d$ as $\mathrm{Var}\left(\hat{\varphi}\right) \sim 1/d^4$. In this section, we will analyze this nontrivial counterintuitive result. In the end, we prove that for a fixed unknown \fsim, the $1/d^4$-dependency only appears in the pre-asymptotic regime where the condition of the theorems holds, i.e., $d \theta \ll 1$. When passing to the limit of large enough $d$, the variances of statistical estimators agree with that suggested by the Heisenberg limit. Although such faster than Heisenberg limit scaling only applies in a finite range of circuit depth~($d \theta \ll 1$), it has drastically increased our metrology performance in practice against time-dependent errors, and thus deserves further investigation in its generalization to other domains of noise learning.
	
	\subsection{Pre-asymptotic regime \texorpdfstring{$d \ll 1/\theta$}{d ll 1/theta}}\label{subsec:CRLB-pre-asymptotic}
	We derive the optimal variance scaling permitted using our metrology method in finite circuit depth, i.e. pre-asymptotic regime in this subsection. More particularly, we require that for a given range of gate parameter $\theta \in [\theta_{\text{min}},\theta_{\text{max}}] $, our metrology circuit depth obeys: $d \ll 1/\theta_{\text{min}}$ in the pre-asymptotic regime. This also implies that for any $\theta$ under the consideration we have $d\theta \ll 1$.
	
	The quantum circuits in QSPE form a class of parametrized quantum circuits whose measurement probabilities are trigonometric polynomials in a tunable variable $\omega$. For simplicicty, the gate parameters of the unknown \fsim is denoted as $\Xi = (\xi_k) = (\theta, \varphi, \chi)$. According to the modeling of Monte Carlo sampling error in \cref{lma:monte-carlo-error-magnitude}, the experimentally estimated probabilities are approximately normal distributed. Given the normality and assuming the limit $M \gg 1$, the element of the Fisher information matrix is
	\begin{equation}\label{eqn:Fisher-information-element-definition}
		I_{kk^\prime}(\Xi) = \sum_{j=0}^{2d-2} \Sigma_{X, j}^{-2} \frac{\partial p_X(\omega_j; \Xi)}{\partial \xi_k}\frac{\partial p_X(\omega_j; \Xi)}{\partial \xi_{k^\prime}} + \sum_{j=0}^{2d-2} \Sigma_{Y, j}^{-2} \frac{\partial p_Y(\omega_j; \Xi)}{\partial \xi_k}\frac{\partial p_Y(\omega_j; \Xi)}{\partial \xi_{k^\prime}}.
	\end{equation}
	According to \cref{eqn:variance-mc-error-concentration}, the variance of the Monte Carlo sampling error concentrates near a constant. Hence
	\begin{equation}
	I_{kk^\prime}(\Xi) = 4M\left(1 + \Or(d^2\theta^2)\right) \sum_{j=0}^{2d-2} \left(\frac{\partial p_X(\omega_j; \Xi)}{\partial \xi_k}\frac{\partial p_X(\omega_j; \Xi)}{\partial \xi_{k^\prime}} + \frac{\partial p_Y(\omega_j; \Xi)}{\partial \xi_k}\frac{\partial p_Y(\omega_j; \Xi)}{\partial \xi_{k^\prime}}\right).
	\end{equation}
	Using the reconstructed function, the element of the Fisher information matrix can be expressed as
	\begin{align}
		I_{kk^\prime}(\Xi) &= 4M\left(1 + \Or(d^2\theta^2)\right) \Re\left(\sum_{j=0}^{2d-2} \frac{\partial \mf{h}(\omega_j; \Xi)}{\partial \xi_k}\frac{\partial \overline{\mf{h}(\omega_j; \Xi)}}{\partial \xi_{k^\prime}}\right)\\
		&= 4M(2d-1)\left(1 + \Or(d^2\theta^2)\right) \Re\left(\sum_{j=-d+1}^{d-1} \frac{\partial c_j(\Xi)}{\partial \xi_k}\frac{\partial \overline{c_j(\Xi)}}{\partial \xi_{k^\prime}}\right) \label{eqn:Fisher-information-Fourier-coefficients}\\
		&= \frac{4M(2d-1)}{\pi} \left(1 + \Or(d^2\theta^2)\right) \Re\left(\int_{-\pi/2}^{\pi/2} \frac{\partial \mf{h}(\omega; \Xi)}{\partial \xi_k}\frac{\partial \overline{\mf{h}(\omega; \Xi)}}{\partial \xi_{k^\prime}} \rd \omega\right). \label{eqn:Fisher-information-integral}
	\end{align}
	Here, we use the construction of QSPE in which the tunable angles are equally spaced in one period of the reconstructed function. The second equality (\cref{eqn:Fisher-information-Fourier-coefficients}) invokes \cref{thm:structure-of-qsp-pc} and the discrete orthogonality of Fourier factors. The last equality (\cref{eqn:Fisher-information-integral}) is due to the Parseval's identity. 
	
	When $d\theta \ll 1$ and $\theta \ll 1$, the Fourier coefficients are well captured by the approximation in \cref{thm:structure-of-qsp-pc} which gives $c_j(\Xi) \approx \I e^{-\I \chi} e^{-\I(2j+1)\varphi} \theta \bI_{j \ge 0}$. Consequentially, using \cref{eqn:Fisher-information-Fourier-coefficients}, in the pre-asymptotic regime $d \ll 1/\theta$, the Fisher information matrix is approximately
	\begin{equation}
	    I(\Xi) \approx 4M(2d-1) \left(\begin{array}{ccc}
	        d & 0 & 0 \\
	        0 & \frac{d(4d^2-1)}{3}\theta^2 & d^2\theta^2\\
	        0 & d^2\theta^2 & d \theta^2
	    \end{array}\right).
	\end{equation}
	Invoking Cram\'{e}r-Rao bound, the covariance matrix of any statistical estimator is lower bounded as
	\begin{equation}
	   \mathrm{Cov}\left(\hat{\theta}_\text{any}, \hat{\varphi}_\text{any}, \hat{\chi}_\text{any}\right) \succeq I^{-1}(\Xi) \approx \frac{1}{4Md(2d-1)} \left(\begin{array}{ccc}
	        1 & 0 & 0 \\
	        0 & \frac{3}{(d^2-1)\theta^2} & -\frac{3d}{(d^2-1)\theta^2}\\
	        0 & -\frac{3d}{(d^2-1)\theta^2} & \frac{4d^2-1}{(d^2-1)\theta^2}
	    \end{array}\right).
	\end{equation}
	Consequentially, in the pre-asymptotic regime, the optimal variances of the statistical estimator are 
	\begin{align}
	    &\mathrm{Var}\left(\hat{\theta}_\text{opt}\right) = \frac{1}{4Md(2d-1)} \approx \frac{1}{8 M d^2},\label{eqn:opt-theta-var}\\
	    &\mathrm{Var}\left(\hat{\varphi}_\text{opt}\right) = \frac{3}{4Md(2d-1)(d^2-1)\theta^2} \approx \frac{3}{8Md^4\theta^2},\label{eqn:opt-varphi-var}\\
	    &\mathrm{Var}\left(\hat{\chi}_\text{opt}\right) = \frac{1}{4Md(2d-1)\theta^2} \frac{4d^2-1}{d^2-1} \approx \frac{1}{2Md^2\theta^2}.\label{eqn:opt-chi-var}
	\end{align}
	
	Remarkably, the variances of QSPE estimators in Theorem 2 in the main text exactly match the optimality given in \cref{eqn:opt-theta-var,eqn:opt-varphi-var}. We thus proves the optimality of our QSPE estimator for inferring gate parameter $\theta$ and $\varphi$. Moreover, we like to point out that the faster than Heisenberg-limit scaling of parameter $\varphi$ in this asymptotic regime is critical to the successful  experimental deployment of our methods. This is because the dominant time-dependent error results in a   time-dependent drift error in $\varphi$, and a faster convergence in circuit depth provides faster metrology runtime to minimize such drift error during the measurements. 
	
	\subsection{Asymptotic regime \texorpdfstring{$d \to \infty$}{L to infty}}
	Thinking under the framework of Heisenberg limit in Ref. \cite{Lloyd2006}, for a fixed $\theta$, the optimal variances of $\theta$ and $\varphi$ estimators are expected to scale as $1/d^3$ while that of $\chi$ estimator scales as $1/d$ due to the absence of amplification in the quantum circuit. In contrast to these scalings, we show in the last subsection that the scalings of $\varphi$ and $\chi$ estimators can achieve $1/d^4$ and $1/d^2$ in the pre-asymptotic regime $d \ll 1/\theta$. In this subsection, we will argue that the scalings predicted by the Heisenberg scaling hold if further passing to the asymptotic limit $d \to \infty$. As a consequence, there is a nontrivial transition of variance scalings of QSPE estimators in pre-asymptotic regime and the asymptotic regime. We demonstrate such subtle transition in the fundamental   efficiency allowed for the given metrology protocol with both numerical simulation and analytic reasoning in this  section.
	
	As $d \to \infty$, the measurement probabilities no longer admit the property of concentration around constants. Using the variance derived in \cref{eqn:variance-mc-error-concentration}, the diagonal element of Fisher information matrix is exactly equal to
	\begin{equation}
	\begin{split}
	    &I_{kk}(\Xi) = M \sum_{j=0}^{2d-2} \left(\frac{1}{p_X(\omega_j; \Xi)\left(1-p_X(\omega_j; \Xi)\right)} \frac{\partial p_X(\omega_j; \Xi)}{\partial \xi_k}\frac{\partial p_X(\omega_j; \Xi)}{\partial \xi_k}\right.\\
	    &\hspace*{5em}\ignorespaces\left.+ \frac{1}{p_Y(\omega_j; \Xi)\left(1-p_Y(\omega_j; \Xi)\right)} \frac{\partial p_Y(\omega_j; \Xi)}{\partial \xi_k}\frac{\partial p_Y(\omega_j; \Xi)}{\partial \xi_k}\right)\\
	    &= M \sum_{j=0}^{2d-2} \left(- \frac{\partial \log p_X(\omega_j; \Xi)}{\partial \xi_k}\frac{\partial \log\left(1- p_X(\omega_j; \Xi)\right)}{\partial \xi_k} - \frac{\partial \log p_Y(\omega_j; \Xi)}{\partial \xi_k}\frac{\partial \log\left(1- p_Y(\omega_j; \Xi)\right)}{\partial \xi_k} \right).
	\end{split}
	\end{equation}
Moreover $p_X(\omega_j; \Xi)$ and $p_Y(\omega_j; \Xi)$ are trigonometric polynomials in $\theta$ and $\varphi$ of degree at most $d$ while in $\chi$ of degree $1$ due to the absence of amplification. Therefore the log-derivatives of $\theta$ and $\varphi$ are $\Or(d)$ in most regular cases while they are $\Or(1)$ for $\chi$. Hence, we expect from the Cram\'{e}r-Rao bound that
	\begin{equation}\label{eqn:crlb-asymptotic}
	    \mathrm{Var}\left(\hat{\theta}_\mathrm{opt}\right), \mathrm{Var}\left(\hat{\varphi}_\mathrm{opt}\right) = \Omega\left(\frac{1}{d^3}\right),\quad \text{and } \mathrm{Var}\left(\hat{\chi}_\mathrm{opt}\right) = \Omega\left(\frac{1}{d}\right) \quad \text{as } d \to \infty.
	\end{equation}
	These results match the scalings predicted by the Heisenberg limit which holds in the asymptotic limit $d \to \infty$.
	
	\subsection{Numerical results}
	
	We compute the Cram\'{e}r-Rao lower bound (CRLB) of the statistical inference problem defined by QSPE. The lower bound is given by the diagonal element of inverse Fisher information matrix
	\begin{equation}
	    \mathrm{CRLB}\left(\hat{\xi}_k\right) = \left(I^{-1}(\Xi)\right)_{kk}
	\end{equation}
	where the Fisher information matrix is element-wisely defined in \cref{eqn:Fisher-information-element-definition}. At the same time, we also compute the approximation to the optimal variance in the pre-asymptotic regime $d \ll 1/\theta$ derived in \cref{eqn:opt-theta-var,eqn:opt-varphi-var,eqn:opt-chi-var}. The numerical results are given in Figure 3 (b) in the main text. It can be seen that the approximated optimal variance agrees very well with the exact CRLB. In the asymptotic regime with large enough $d$, the optimal variance scaling given by the CRLB is as predicted in \cref{eqn:crlb-asymptotic}. Furthermore, the numerical results validate that there exists a nontrivial transition around $d \approx 1/\theta$ making the optimal variance scalings completely different in the pre-asymptotic and asymptotic regime.
	
	\begin{figure}[htbp]
	    \centering
	    \includegraphics[width=\columnwidth]{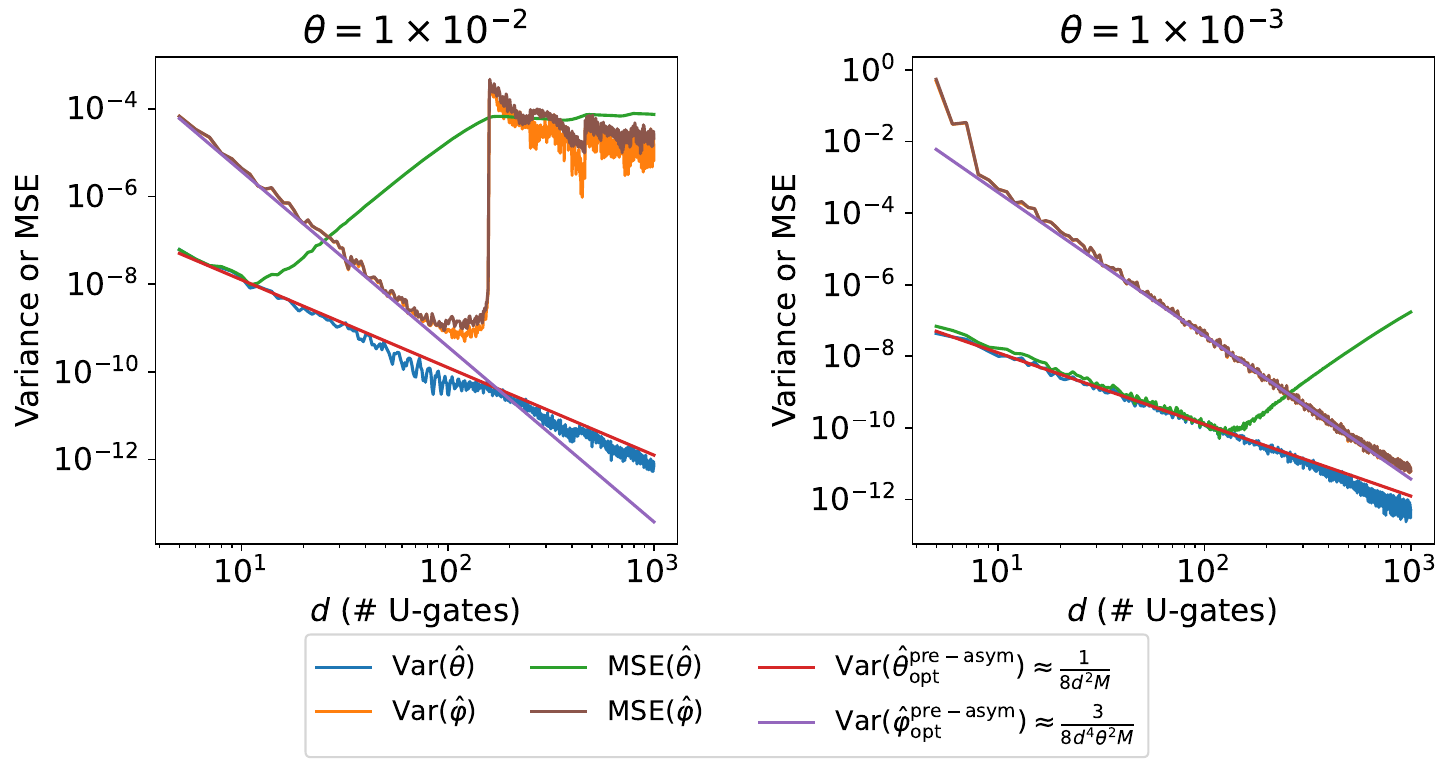}
	    \caption{Variance and mean-square error (MSE) of QSPE estimators. The left panel corresponds to the case where $\theta$ is relatively large and $d\theta \ll 1$ condition fails quickly at around $d=10$, beyond which bias dominates the estimator's  MSE since our inference model assumption~($d\theta \ll 1$) fails. The right panel corresponds to the case where $d\theta \ll 1$ condition holds all the way to around $d=100$.  The single-qubit phases are set to $\varphi = \pi/16$ and $\chi = 5\pi/32$. The number of measurement samples is set to $M = 1\times10^5$. Each data point is derived from $100$ independent repetitions.}
	    \label{fig:variance-qspcf}
	\end{figure}
	
	To prove the optimality of QSPE and investigate the situation where the conditions for deriving QSPE hold, we numerically estimate the variances of QSPE estimators and compare them with the derived optimal variances in the pre-asymptotic regime in \cref{eqn:opt-theta-var,eqn:opt-varphi-var}. The QSPE estimators are derived by approximating the original statistical inference problem by a linear model. When $d$ gets large, the model violation due to the approximation contributes to the bias of QSPE estimators. We compute the mean-square error (MSE) and using the bias-variance decomposition $\mathrm{MSE} = \mathrm{Var} + \mathrm{bias}^2$ to quantify the bias. The numerical results are displayed in \cref{fig:variance-qspcf}. Our simulation shows that the bias of $\theta$-estimator dominates the MSE and contaminates the inference accuracy after $d$ becomes larger than a threshold determined by the pre-asymptotic regime $d\theta \ll1$. Despite the bias due to the model violation, the MSE of the $\theta$-estimator still achieves some accuracy of order $\theta^2$ which suggests that the $\theta$-estimator might give a reasonable estimation of a similar order with model violation in larger $d$. The numerical results show that the $\varphi$-estimator is more robust where the MSE deviates significantly from the theoretical scaling in the pre-asymptotic regime after $d \ge 1/\theta$ is large enough to pass to the asymptotic regime. Furthermore, the MSE well matches the variance which implies that the bias in $\varphi$-estimator is always small. The difference in the robustness of the $\theta$- and $\varphi$-estimators is credited to the construction of QSPE in which the inferences of $\theta$ and $\varphi$ are completely decoupled due to the data post-processing using FFT. 
	
	\cref{fig:variance-qspcf} and Figure 3 (b) in the main text suggest the following. (1) In the pre-asymptotic regime, QSPE estimators achieve the optimality in the sense of saturating the Cram\'{e}r-Rao lower bound and exhibit   robustness against time-dependent errors in $\varphi$ in both simulation and experimental deployments. Furthermore, the construction of QSPE estimators only involves direct algebraic operations rather than iterative optimization, and the reduced inference problems in Fourier space are linear statistical models whose global optimum is unique for each realization. This not only  enables the fast and reliable data post-processing but also allows us to analyze its performance analytically. (2) Passing to the asymptotic regime, given the significant bias of $\theta$-estimator and the sharp transition of the variance of $\varphi$-estimator, one has to use other estimators to saturate the optimal variance scaling and unbiasedness, for example, maximum-likelihood estimators (MLE). Furthermore,  the analysis based on the Cram\'{e}r-Rao lower bound is made by fixing the data generation (measuring quantum circuits) but varying data post-processing.

    \subsection{Comparing with quantum Cram\'{e}r-Rao lower bound}\label{sec:quantum-CRLB}
    As a quantum analog of classical Fisher information, quantum Fisher information (QFI) lies in the center of quantum metrology by providing a fundamental lower bound on the accuracy one can infer from the system of a given resource limit. 

    In general case, the QFI is defined as
    \begin{equation}
        \mf{F}(\theta) := \Tr{\varrho(\theta) L_\varrho^2(\theta)}
    \end{equation}
    where the symmetric logarithmic derivative $L_\varrho(\theta)$ is defined implicitly \cite{BraunsteinCaves1994,KullGuerinVerstraete2020}. For a system whose density matrix evolves as $\varrho(\theta) = e^{\I \theta \mc{H}} \varrho_0 e^{-\I \theta \mc{H}}$, the QFI can be explicitly computed by diagonalizing $\varrho(\theta)$. According to the analysis in \cite{KullGuerinVerstraete2020}, the QFI is an upper bound on the Fisher information over all possible measurements. For brevity, we only consider the inference of $\theta$ and hold all other unknown parameters constant in the analysis. However, our analysis can be generalized to the multiple parameter inference by adopting the multi-variable QFI in Ref. \cite{KullGuerinVerstraete2020}.

    In our case, the collection of quantum circuits with variable modulation angle $\omega$ can be written in a density matrix which is a uniform average over all circuit realizations, namely
    \begin{equation}
        \varrho(\theta) = \frac{1}{2 d - 1} \sum_{j = 0}^{2d - 2} \mc{U}^\scp{d}(\omega_j; \theta, \varphi, \chi) \varrho_0 \mc{U}^\scp{d}(\omega_j; \theta, \varphi, \chi)^\dagger.
    \end{equation}

    Note that the optimization is intended to be performed over all potential initialization and measurement. Given that only $\theta$ is considered and other angle parameters are held constant, it suffices to consider a simpler case
    \begin{equation}
        \varrho(\theta) = \frac{1}{2 d - 1} \sum_{j = 0}^{2d - 2} U^\scp{d}(\omega_j - \varphi, \theta) \varrho_0 U^\scp{d}(\omega_j - \varphi, \theta)^\dagger.
    \end{equation}
    According to \cref{eqn:circuit-rep-of-qspc-with-building-block}, this alternative density matrix is equivalent to absorbing some constant rotation gates into initialization and measurement. When $\theta \ll 1$ is small, the following expansion holds:
    \begin{equation}\label{eqn:approximate-U-d-expansion}
        U^\scp{d}(\omega, \theta) = I + \I \theta \sum_{k = 1}^d e^{\I k \omega Z} X e^{\I (d - k + 1) \omega Z} + \Or((d\theta)^2) = I + \I \theta \underbrace{X \sum_{k = 1}^d e^{\I (d - 2 k + 1) \omega Z}}_{\mc{H}(\omega)} + \Or((d\theta)^2).
    \end{equation}
    Note that $\mc{H}(\omega)$ is Hermitian because
    \begin{equation}
        \mc{H}^\dagger(\omega) = \sum_{k = 1}^d e^{- \I (d - 2k + 1) \omega Z} X = X \sum_{k = 1}^d e^{\I (d - 2k + 1) \omega Z} = \mc{H}(\omega).
    \end{equation}
    Furthermore,
    \begin{equation}\label{eqn:approximate-U-d-H}
        \mc{H}(\omega) = X \diag\left\{\sum_{k = 1}^d e^{\I(d - 2k + 1) \omega}, \sum_{k = 1}^d e^{- \I(d - 2k + 1) \omega} \right\} = X \frac{e^{\I d \omega} - e^{-\I d \omega}}{e^{\I \omega} - e^{- \I \omega}} = X U_{d-1}(\cos(\omega))
    \end{equation}
    where $U_{d - 1}$ is the Chebyshev polynomial of the second kind.
    Then, when $d \theta \ll 1$ is small enough, it approximately holds that
    \begin{equation}
        \varrho_j(\theta) := U^\scp{d}(\omega_j - \varphi, \theta) \varrho_0 U^\scp{d}(\omega_j - \varphi, \theta)^\dagger = \varrho_0 + \I \theta \left[\mc{H}(\omega_j - \varphi), \varrho_0\right] + \Or((d\theta)^2) \approx e^{\I \theta \mc{H}(\omega_j - \varphi)} \varrho_0 e^{- \I \theta \mc{H}(\omega_j - \varphi)}.
    \end{equation}
    Then, as a two-dimensional density matrix, its QFI is 
    \begin{equation}
        \mf{F}_j(\theta) := 4 \frac{(\lambda_0 - \lambda_1)^2}{\lambda_0 + \lambda_1} \abs{\bra{\psi_0} \mc{H}(\omega_j - \varphi) \ket{\psi_1}}^2
    \end{equation}
    where $\lambda_i, \ket{\psi_i}$ are the eigenvalue and eigenvector of the density matrix. It is upper bounded as
    \begin{equation}\label{eqn:QFI-omega}
        \mf{F}_j(\theta) \le 4 \norm{\mc{H}(\omega_j - \varphi)}_2^2 = 4 U_{d - 1}^2(\cos(\omega_j - \varphi)).
    \end{equation}
    Because the overall density matrix is a uniform combination of $\varrho_j(\theta)$ and all generator Hamiltonians are scaled Pauli $X$ operators, the convexity of the QFI implies that
    \begin{equation}
        \mf{F}(\theta) \le \frac{1}{2 d - 1} \sum_{j = 0}^{2d - 2} \mf{F}_j(\theta) \le \frac{4}{2d - 1} \sum_{j = 0}^{2d - 2} U_{d - 1}^2(\cos(\omega_j - \varphi)) = 4 d
    \end{equation}
    where \cref{lem:integral-Usq} is used. Note that the upper bound on the QFI derived here is independent of the initialization $\varrho$, and the formalism of QFI provides a bound on the inference variance regardless of the choice of measurements. Hence, we achieve a lower bound on the variance of $\hat{\theta}$ with respect to any initialization, measurement, and classical data processing by invoking the quantum Cram\'{e}r-Rao bound:
    \begin{equation}\label{eqn:quantum-CRLB}
        \inf \mathrm{Var}(\hat{\theta}) \ge \frac{1}{M 2 (2d - 1) \mf{F}(\theta)} \ge \frac{1}{8 M d (2d - 1)} 
    \end{equation}
	where it uses the fact that there are $2 (2d-1) M$ experiments in total. 

    Before closing the analysis of inference error, we could gain more understanding of our method from the lower bounds derived in this section.
    \begin{enumerate}
        \item Compared with the result in \cref{eqn:opt-theta-var} derived by applying classical Cram\'{e}r-Rao bound, we see that the bound in \cref{eqn:quantum-CRLB} derived from QFI is lower, namely, $\text{quantum bound} = 0.5 \times \text{classical bound}$. This differentiation is explainable. Note that we use two logical Bell states to perform experiments. The advantage is the experimental probabilities of these two experiments form a conjugate pair to reconstruct a complex function that for the ease of analysis. This complex function and its properties (see \cref{thm:reconstruction-h-Fourier-expansion,thm:approx-coef-first-order}) eventually lead to a simple robust statistical estimator requiring only light computation. In contrast, the data generated from the initialization of one Bell state still contains full information of the parameters to be estimated. However, the highly nonlinear dependency renders the practical inference challenging. Hence, the factor of $2$ is due to the use of a pair of Bell states. Although the QFI indicates that inference variance can be lower by removing such redundancy in the initialization, the nature of ignoring practical ease makes it hard to achieve.
        \item The importance of the phase matching condition is also reflected in the analysis of QFI. It is worth noting that when we set $\omega_j = \varphi$, the QFI of a single experiment is $\mf{F}_j(\theta) | _{\omega_j = \varphi} \le 4 U_{d - 1}^2(1) = 4 d^2$ which attains the maximum of the Chebyshev polynomial. However, in practice, due to the absence of accurate information of $\varphi$, our method samples the data on equally spaced $\omega$ points and processes the data via FFT to isolate the dependencies of $\theta$ and $\varphi$. This further sampling procedure averages the QFI and lowers its value from $4d^2$ to $4d$. 
    \end{enumerate}

 \begin{lemma}\label{lem:integral-Usq}
 Given an integer $d$ and any $\varphi \in \RR$, it holds that
 \begin{equation}
     \frac{1}{2d - 1} \sum_{j = 0}^{2d-2} U_{d - 1}^2(\cos(\omega_j - \varphi)) = d
 \end{equation}
 where $\omega_j := j \pi / (2d - 1)$ where $j = 0, \cdots, 2d - 2$.
 \end{lemma}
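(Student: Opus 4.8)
The plan is to recognize $U_{d-1}^2(\cos(\cdot-\varphi))$ as a bandlimited trigonometric polynomial and to read off the identity from discrete Fourier orthogonality at the $2d-1$ equally spaced nodes $\omega_j$. First I would recall the standard exponential form of the Chebyshev polynomial of the second kind, $U_{d-1}(\cos\theta) = \frac{\sin(d\theta)}{\sin\theta} = \sum_{k=0}^{d-1} e^{\I(d-1-2k)\theta}$, which already appears implicitly in the proof of \cref{thm:QSP-PC-P-Q}. Squaring this sum gives
\begin{equation}
U_{d-1}^2(\cos\theta) = \sum_{k=0}^{d-1}\sum_{l=0}^{d-1} e^{\I\,(2d-2-2k-2l)\,\theta} = \sum_{m=-(d-1)}^{d-1} a_m\, e^{2\I m\theta},
\end{equation}
so that $U_{d-1}^2$ is a trigonometric polynomial whose only nonzero Fourier modes have frequencies $2m$ with $|m|\le d-1$, hence period $\pi$. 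The constant coefficient $a_0$ equals the number of pairs $(k,l)$ with $0\le k,l\le d-1$ and $k+l=d-1$, of which there are exactly $d$; thus $a_0=d$.

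Next I would substitute $\theta = \omega_j-\varphi$ with $\omega_j = j\pi/(2d-1)$ and average over $j=0,\dots,2d-2$. By linearity it suffices to evaluate, for each fixed $m$,
\begin{equation}
\frac{1}{2d-1}\sum_{j=0}^{2d-2} e^{2\I m(\omega_j-\varphi)} = \frac{e^{-2\I m\varphi}}{2d-1}\sum_{j=0}^{2d-2}\bigl(e^{2\pi\I m/(2d-1)}\bigr)^{j}.
\end{equation}
Since $|m|\le d-1 < 2d-1$, the base $e^{2\pi\I m/(2d-1)}$ is a nontrivial $(2d-1)$-th root of unity unless $m=0$, so the geometric sum equals $2d-1$ when $m=0$ and $0$ otherwise. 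Summing the Fourier expansion term by term then yields
\begin{equation}
\frac{1}{2d-1}\sum_{j=0}^{2d-2} U_{d-1}^2(\cos(\omega_j-\varphi)) = \sum_{m=-(d-1)}^{d-1} a_m\,\delta_{m,0} = a_0 = d.
\end{equation}

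The only point requiring care — and it is what makes the whole argument work — is that the bandwidth of $U_{d-1}^2$ (frequencies at most $2(d-1)$) is strictly below the sampling resolution furnished by $2d-1$ equispaced nodes over one $\pi$-period, so that no aliasing occurs and the discrete average exactly reproduces the mean Fourier coefficient; combined with the elementary count $a_0=d$ this closes the proof. The independence of the result from $\varphi$ is then automatic, since a full equispaced grid over a period is translation invariant. (Equivalently, one could note $\frac{1}{\pi}\int_0^\pi U_{d-1}^2(\cos\theta)\,\rd\theta = d$ and invoke exactness of the $(2d-1)$-point trapezoidal rule for periodic trigonometric polynomials of this degree, but the direct root-of-unity computation above is self-contained.)
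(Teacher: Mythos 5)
Your proof is correct, and it takes a noticeably different route from the paper's. The paper first uses the same no-aliasing observation to replace the discrete average by the integral $\frac{1}{\pi}\int_0^\pi U_{d-1}^2(\cos(\omega-\varphi))\,\rd\omega$, then spends a separate step eliminating the $\varphi$-shift by splitting the integration range and invoking the parity of $U_{d-1}$, and finally evaluates $\frac{1}{\pi}\int_{-1}^1 U_{d-1}^2(x)(1-x^2)^{-1/2}\,\rd x$ via the product formula $U_{d-1}^2 = \sum_{p=0}^{d-1} U_{2p} = \sum_{p=0}^{d-1}\bigl(1 + 2\sum_{k=1}^{p} T_{2k}\bigr)$ together with the weighted orthogonality of the $T_{2k}$. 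You instead expand $U_{d-1}(\cos\theta)=\sum_{k=0}^{d-1}e^{\I(d-1-2k)\theta}$, square, and read off the answer as the constant Fourier coefficient $a_0=\#\{(k,l): k+l=d-1\}=d$, with the roots-of-unity sum killing every other mode. What your version buys is twofold: the translation invariance in $\varphi$ comes for free (the phase $e^{-2\I m\varphi}$ only multiplies terms that vanish anyway), so the paper's separate reflection argument is unnecessary; and the mean value is obtained by an elementary counting of lattice pairs rather than by the Chebyshev product formula and weighted orthogonality. The paper's route, on the other hand, makes the connection to the continuous quantum Fisher information integral $\frac{4}{\pi}\int_0^\pi \sin^2(d(\omega-\varphi))/\sin^2(\omega-\varphi)\,\rd\omega$ used elsewhere in the surrounding section more explicit. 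Both arguments are complete; yours is the more self-contained of the two.
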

 \begin{proof}
     Note that the discrete orthogonality implies that
     \begin{equation}
         \frac{1}{2d - 1} \sum_{j = 0}^{2d-2} U_{d - 1}^2(\cos(\omega_j)) = \frac{1}{\pi} \int_0^\pi U_{d - 1}^2(\cos(\omega - \varphi)) \rd \omega = \frac{1}{\pi} \left(\int_0^\pi + \int_{-\varphi}^0 - \int_{\pi - \varphi}^\pi\right) U_{d - 1}^2(\cos(\omega)) \rd \omega.
     \end{equation}
     Using the parity condition of Chebyshev polynomials, it holds that
     \begin{equation}
         \int_{\pi - \varphi}^\pi U_{d - 1}^2(\cos(\omega)) \rd \omega \stackrel{\omega \leftarrow \omega - \pi}{=} \int_{- \varphi}^0 \left((-1)^{d - 1} U_{d - 1}(\cos(\omega)) \right)^2 \rd \omega = \int_{- \varphi}^0 U_{d - 1}^2(\cos(\omega)) \rd \omega.
     \end{equation}
     Hence, for any $\varphi \in \RR$, it holds that
     \begin{equation}
         \frac{1}{2d - 1} \sum_{j = 0}^{2d-2} U_{d - 1}^2(\cos(\omega_j)) = \frac{1}{\pi} \int_0^\pi U_{d - 1}^2(\cos(\omega)) \rd \omega = \frac{1}{\pi} \int_{-1}^1 \frac{U_{d - 1}^2(x)}{\sqrt{1 - x^2}} \rd x.
     \end{equation}
     Using the product formula of Chebyshev polynomials, we have
     \begin{equation}
         U_{d - 1}^2(x) = \sum_{p = 0}^{d - 1} U_{2p}(x) = \sum_{p = 0}^{d - 1} \left(1 + 2 \sum_{k = 1}^p T_{2k}(x) \right).
     \end{equation}
     Then, the weighted orthogonality of Chebyshev polynomials of the first kind implies that
     \begin{equation}
         \frac{1}{\pi} \int_{-1}^1 \frac{U_{d - 1}^2(x)}{\sqrt{1 - x^2}} \rd x = \sum_{p = 0}^{d - 1} \left(1 + \frac{2}{\pi} \sum_{k = 1}^p \int_{-1}^1 \frac{T_{2k}(x) T_0(x)}{\sqrt{1 - x^2}} \rd x\right) = d
     \end{equation}
     which completes the proof.
 \end{proof}

	\section{Analysis of realistic error}\label{sec:realistic-error}
	Although QSPE estimators are derived from modeling Monte Carlo sampling error, we numerically show their robustness against realistic errors in this section. This section is organized as follows. We discuss the sources of realistic errors including depolarizing error, time-dependent error, and readout error in each subsection. We study the methods for correcting some realistic errors by analyzing experimental data. Furthermore, we perform numerical experiments to validate the robustness of our proposed quantum metrology scheme.
	
	\subsection{Depolarizing error}\label{sec:depolarizing}
	The quantum error largely contaminates the signal. In the two-qubit system, we assume the quantum error is captured by a depolarizing quantum channel, where the density matrix is transformed to the convex combination of the correctly implemented density matrix and that of the uniform distribution on bit-strings. Therefore, assuming the infinite number of measurement samples (vanishing Monte Carlo sampling error), the measurement probability is 
	\begin{equation}
	p_{X(Y)|\alpha}(\omega; \theta, \varphi, \chi) = \alpha p_{X(Y)}(\omega; \theta, \varphi, \chi) + \frac{1-\alpha}{4}
	\end{equation}
	where $\alpha \in [0, 1]$ is referred to as the circuit fidelity. Then, the sampled reconstructed function is also shifted and scaled accordingly $\mf{h}_\alpha(\omega; \theta, \varphi, \chi) = \alpha \mf{h}(\omega; \theta, \varphi, \chi) - \frac{1-\alpha}{4}(1+\I)$. Consequentially, the Fourier coefficients are expected to be scaled by $\alpha$ simultaneously and the constant shift only contributes to the zero-indexed Fourier coefficient, namely
	\begin{equation}
	\abs{c^\expl_{0|\alpha}} = \abs{\alpha c^\expl_0 - \frac{1-\alpha}{4}(1+\I)} \approx \alpha\theta + \frac{1-\alpha}{2\sqrt{2}},\quad \abs{c^\expl_{k|\alpha}} \approx \alpha \theta, \ \forall k = 1, \cdots, d-1.
	\end{equation}
	The approximation of $\abs{c^\expl_{0|\alpha}}$ holds when the circuit fidelity is not close to one, namely, $\theta \ll 1-\alpha$. Yet when the circuit fidelity is close to one, the depolarizing error can be neglected as a higher-order effect. Using this feature, the circuit fidelity can be estimated from the difference between the Fourier coefficient of zero index and those of nonzero indices. Then, the estimators of the circuit fidelity and the swap angle are given by
	\begin{equation}\label{eqn:estimate-alpha}
	\begin{split}
	& \hat{\alpha} = 1 - 2\sqrt{2}\left(\abs{c^\expl_{0|\alpha}} - \frac{1}{d-1} \sum_{k=1}^{d-1} \abs{c^\expl_{k|\alpha}}\right),\\
	& \hat{\theta} = \frac{1}{\hat{\alpha}} \times \frac{1}{d-1} \sum_{k=1}^{d-1} \abs{c^\expl_{k|\alpha}}.
	\end{split}
	\end{equation}
	We numerically test the accuracy of these estimators in \cref{sec:additional-numerical}. In \cref{tab:qubit-pair-error-rate}, we list the effective depolarizing error rate on the single-excitation subspace inferred from the exponential decay of circuit fidelities derived from QSPE methods on our experimental measurements. These values agree with our   estimation results using a conventional cross-entropy benchmark that is much slower to run and requires a $10X$ deeper circuit for randomization.

\begin{table}[htbp]
\centering
\begin{tabular}{@{} *{5}{c} @{}}\midrule
(3,6) and (3,7) & (3,6) and (4,6) & (3,7) and (4,7) & (4,5) and (4,6) & (4,7) and (5,7) \\
$4.52\times 10^{-3}$ & $4.73\times 10^{-3}$ & $5.39\times 10^{-3}$ & $4.69\times 10^{-3}$ & $5.15\times 10^{-3}$ \\\midrule
(5,7) and (6,7) & (5,7) and (5,8) & (5,6) and (6,6) & (5,6) and (5,7) & (4,8) and (5,8) \\
$8.25\times 10^{-3}$ & $5.89\times 10^{-3}$ & $2.81\times 10^{-3}$ & $3.59\times 10^{-3}$ & $4.96\times 10^{-3}$ \\\midrule
(5,8) and (5,9) & (5,8) and (6,8) & (6,6) and (7,6) & (6,8) and (7,8) & (7,5) and (7,6) \\
$5.84\times 10^{-3}$ & $5.70\times 10^{-3}$ & $3.36\times 10^{-3}$ & $5.13\times 10^{-3}$ & $3.32\times 10^{-3}$ \\\midrule
(7,6) and (7,7) & (7,7) and (7,8) &  &  &  \\
$2.36\times 10^{-3}$ & $2.89\times 10^{-3}$ &  &  &  \\\midrule
\end{tabular}
\caption{Qubit pairs and the inferred effective error rate on the single-excitation subspace. The error rate is estimated by the regression with respect to the exponential decay. The regression data are the circuit fidelity estimated from QSPE in Figure 4 in the main text (top-right panel).}
\label{tab:qubit-pair-error-rate}
\end{table}

\subsection{Time-dependent error}\label{coherentdriftSec}
The dominant time-dependent noise in superconducting qubits two-qubit control is in the frequency of the qubits. It can be modeled by time-dependent Z phase error in \fsim.  Observed from experimental data, the magnitude of the time-dependent drift error increases when more gates are applied to the circuit. To emulate the realistic time-dependent noise, we model the noise by introducing a random deviation in angle parameters, which is referred to as the coherent angle uncertainty. Given a perfect \fsim parametrized as $U_\fsim(\theta,\varphi,\chi,*)$, the erroneous quantum gate due to the coherent angle uncertainty is another \fsim parametrized as $U_\fsim(\theta_\mathrm{unc},\varphi_\mathrm{unc},\chi_\mathrm{unc},*)$. Here, angle parameters subjected to the uncertainty are distributed uniformly at random around the perfect value
	\begin{equation}\label{eqn:coherent-noise-1}
	\theta_\mathrm{unc} \in [\theta - D_\theta, \theta + D_\theta],\ \varphi_\mathrm{unc} \in [\varphi - D_\varphi, \varphi + D_\varphi],\ \chi_\mathrm{unc} \in [\chi - D_\chi, \chi + D_\chi]
	\end{equation}
	where $D_\theta, D_\varphi, D_\chi$ stand for the maximal deviations of uncertain parameters. Inspired by experimental results, maximum deviations of phase angles are increasing when more \fsim's are applied. Moreover, there is a Gaussian noise~\cite{niu2019universal} in the analog pulse realizations, causing small fluctuations on all gate parameters. To capture this feature and the rough estimate from the experimental data, we set the uncertainty model when the $j$-th \fsim is applied as
	\begin{equation}\label{eqn:coherent-noise-2}
	D_\theta^\scp{j} = 0.1 \times \theta,\ D_\varphi^\scp{j} = D_\chi^\scp{j} = 0.3 \times \frac{j}{d}.
	\end{equation}
	
Noticeably, the proposed model has already taken the phase drift in $Z$-rotation gates into account, which is effectively factored in the random phase drift in the single-qubit phase $\varphi$ and $\chi$ in the \fsim.
	
	\subsection{Numerical performance of the estimation against depolarizing error and time-dependent drift error}\label{sec:additional-numerical}
	In the numerical simulation, we add a depolarizing error channel after each individual gate. In terms of the quantum channel, it is quantified as
	\begin{equation}
	\begin{split}
	\mc{E}_{A_0}\left(\varrho\right) =& \left(1-\frac{3}{4}r\right) \varrho + \frac{r}{4} \left( \left(X_{A_0}\otimes I_{A_1}\right)\varrho\left(X_{A_0}\otimes I_{A_1}\right)\right.\\
	&\left.+ \left(Y_{A_0}\otimes I_{A_1}\right)\varrho\left(Y_{A_0}\otimes I_{A_1}\right) + \left(Z_{A_0}\otimes I_{A_1}\right)\varrho\left(Z_{A_0}\otimes I_{A_1}\right)\right),\\
	\mc{E}_{A_0,A_1}\left(\varrho\right) =& (1-r)\varrho + r \frac{I_{A_0,A_1}}{4}
	\end{split}
	\end{equation}
	where $r$ is the error rate. At the same time, the quantum circuit is subject to  drift error according to \cref{eqn:coherent-noise-1,eqn:coherent-noise-2}. 
	
	In \cref{fig:alpha_degree}, we numerically test the accuracy of estimating the circuit fidelity using the Fourier space data according to the estimator in \cref{eqn:estimate-alpha}. The reference value of the circuit fidelity is computed from the digital error model (DEM) \cite{BoixoIsakovSmelyanskiyEtAl2018} with 
	\begin{equation}
	    \alpha_\text{DEM} := (1-r)^{n_\text{gates}} \approx (1-r)^{2d+5} + \Or(r).
	\end{equation}
	Here, $n_\text{gates}$ stands for the number of total gates in the quantum circuit. Because of the additional phase gate used in the Bell-state preparation, the quantum circuit for computing $p_Y$ uses $n_\text{gates} = 2d+6$ gates while that for $p_X$ uses $n_\text{gates} = 2d+5$ gates. This ambiguity in a gate makes the left-hand side approximate the circuit fidelity up to $\Or(r)$. In \cref{fig:alpha_degree}, the performance of the circuit fidelity estimation is quantified by the deviation $\abs{\hat{\alpha} - \alpha_\text{DEM}}$. As the circuit depth of QSPE increases, it turns out that the deviation decreases to $\sim 0.001$, which is equal to the error rate $r$. The decreasing deviation is due to the improvement of the SNR when increasing the circuit depth. Furthermore, the plateau near $0.001$ is due to the ambiguity discussed in the reference $\alpha_\text{DEM}$. In the left panel, we turn off the time-dependent drift error and the quantum circuit is only subject to Monte Carlo sampling error and depolarizing error. However, the performance of the circuit fidelity estimation does not differ significantly after turning on the time-dependent drift error. The numerical results suggest that the depolarizing error can be inferred with considerable accuracy even in the presence of more complex time-dependent errors. 
	
	\begin{figure}[htbp]
		\centering
		\includegraphics[width=.75\columnwidth]{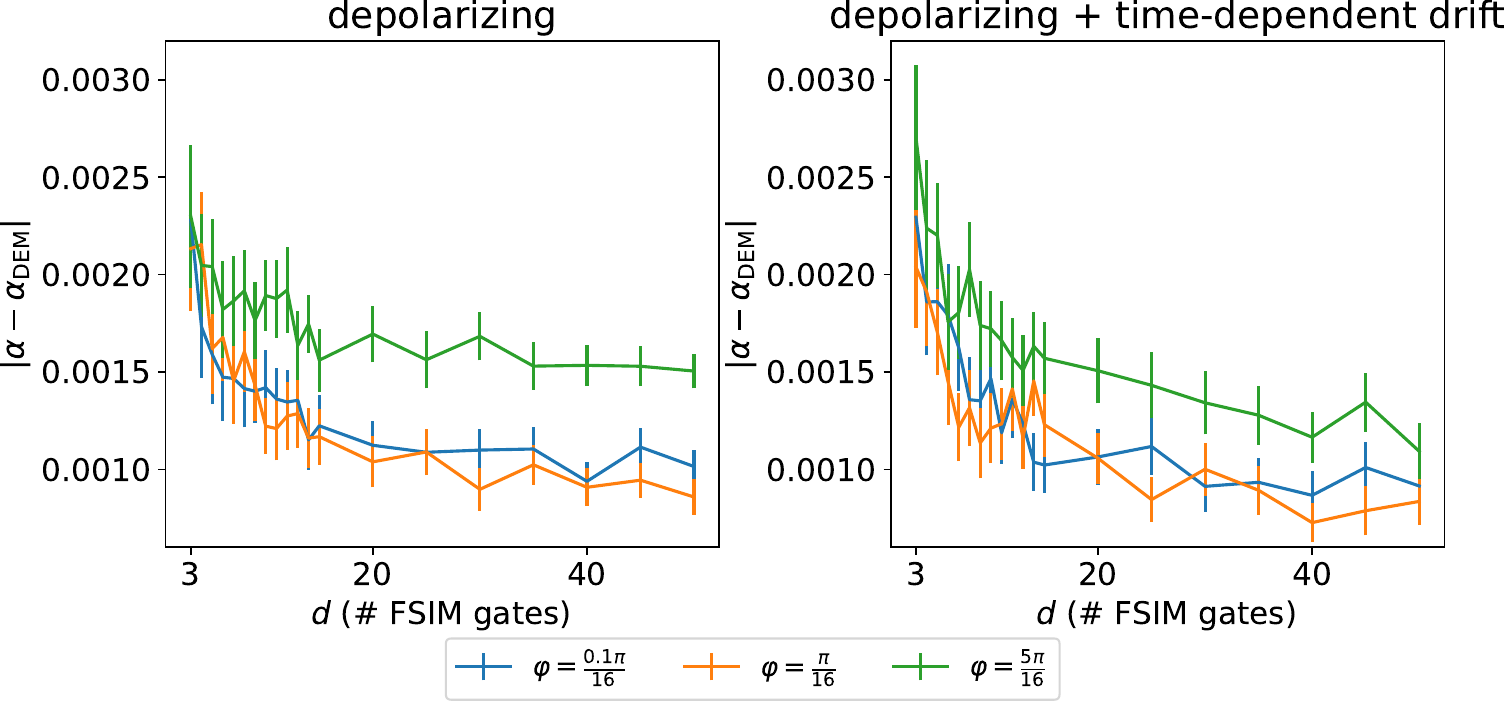}
		\caption{Estimating circuit fidelity using QSPE. The reference value $\alpha_\mathrm{DEM}$ is the circuit fidelity estimated from the digital error model. The sources of noise in the numerical experiments are Monte Carlo sampling error, depolarizing error and drift error. The depolarizing error rate is set to $r = 1 \times 10^{-3}$ and the number of measurement samples is set to $M = 1 \times 10^{5}$. The parameters of \fsim are set to $\theta = 1\times 10^{-3}$ and $\chi = 5\pi/32$. The error bar of each point stands for the confidence interval derived from $96$ independent repetitions.}
		\label{fig:alpha_degree}
	\end{figure}
	
	In \cref{fig:degree_cu,fig:meas_cu}, we test our proposed metrology scheme in the presence of Monte Carlo sampling error, depolarizing error, and time-dependent error. Although the system is subjected to realistic errors, the numerical results suggest that the QSPE estimators show some robustness against errors, and they can give reasonable estimation results with one or two correct digits. Furthermore, the accuracy of $\varphi$-estimation is also not fully contaminated by the time-dependent error on it. The improvement due to the peak fitting becomes less significant under realistic errors because the structure of the highest peak is heavily distorted in the presence of realistic errors. More interestingly, the numerical results show the accuracy of $\theta$-estimation does not decay and even increases after some $d^*$. This transition is due to a tradeoff. When $d$ becomes larger, the inference is expected to be more accurate because the gate parameters are more amplified. However, in the presence of realistic error, the \fsim is subjected to both time-independent errors and time-dependent drift errors. A quantum circuit with more \fsim s violates the model derived from the noiseless setting more. The competition between these two opposite effects makes the estimation error attain some minimum at $d^*$. This observation also suggests that in the experimental deployment, one can consider using a moderate $d$ with respect to the tradeoff.
	
	In \cref{fig:meas_cu}, we perform the numerical simulation with variable swap angle and number of measurement samples. Similar to the case of Monte Carlo sampling error, the estimation results are less accurate when $\theta$ is small because of the insufficient SNR. The numerical results indicate that the estimation accuracy cannot be further improved after the number of measurement samples is greater than some $M^*$. That is because increasing $M$ can only mitigate Monte Carlo sampling error. When $M$ is large enough, the sources of errors are dominated by depolarizing error and time-dependent drift error, which cannot be sufficiently mitigated by large $M$. Combing with the discussion on $d^*$, the numerical results suggest that the experimental deployment does not require large $d$ and $M$, and using a moderate choice of $d^*$ and $M^*$ suffices to get some accurate estimation.
	
	\begin{figure}[htbp]
		\centering
		\includegraphics[width=\columnwidth]{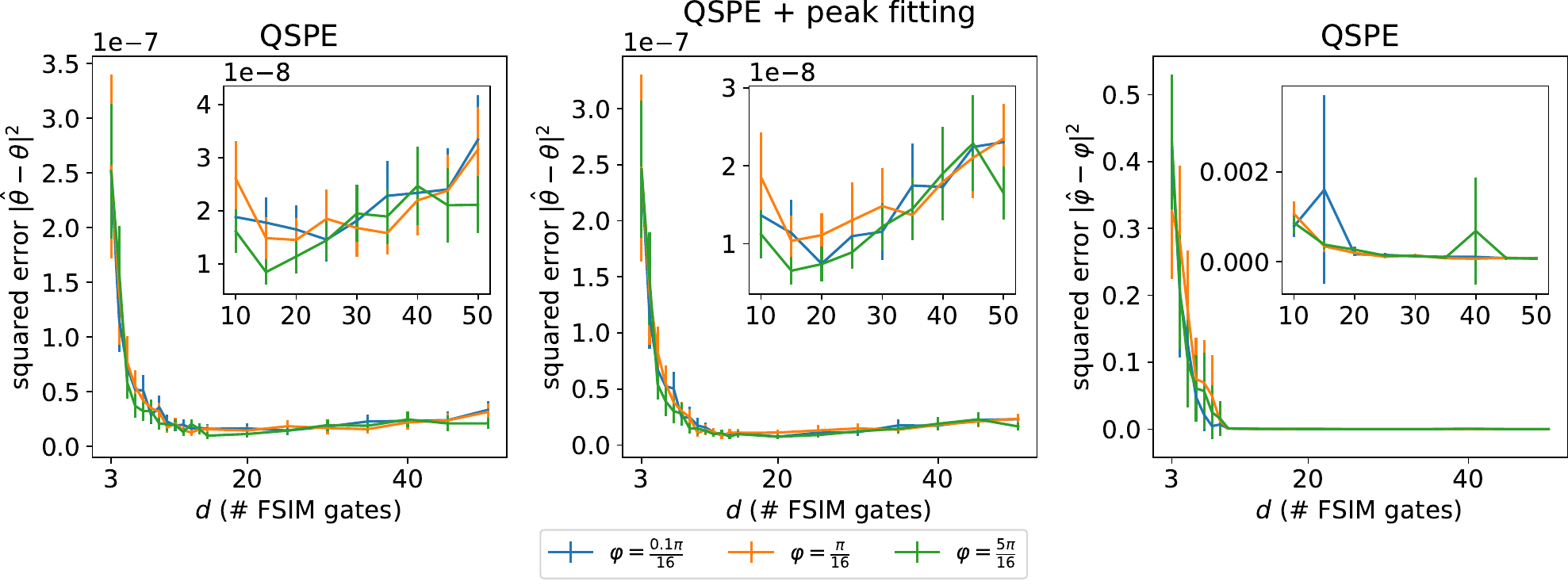}
		\caption{Accuracy of estimators as a function of the number of \fsim s. The sources of noise in the numerical experiments are Monte Carlo sampling error, depolarizing error and time-dependent drift error. The depolarizing error rate is set to $r = 1 \times 10^{-3}$ and the number of measurement samples is set to $M = 1 \times 10^{5}$. The swap angle is set to $\theta = 1 \times 10^{-3}$ and the phase parameter is set to $\chi = 5\pi/32$. The error bar of each point stands for the confidence interval derived from $96$ independent repetitions.}
		\label{fig:degree_cu}
	\end{figure}
	
	\begin{figure}[htbp]
		\centering
		\includegraphics[width=\columnwidth]{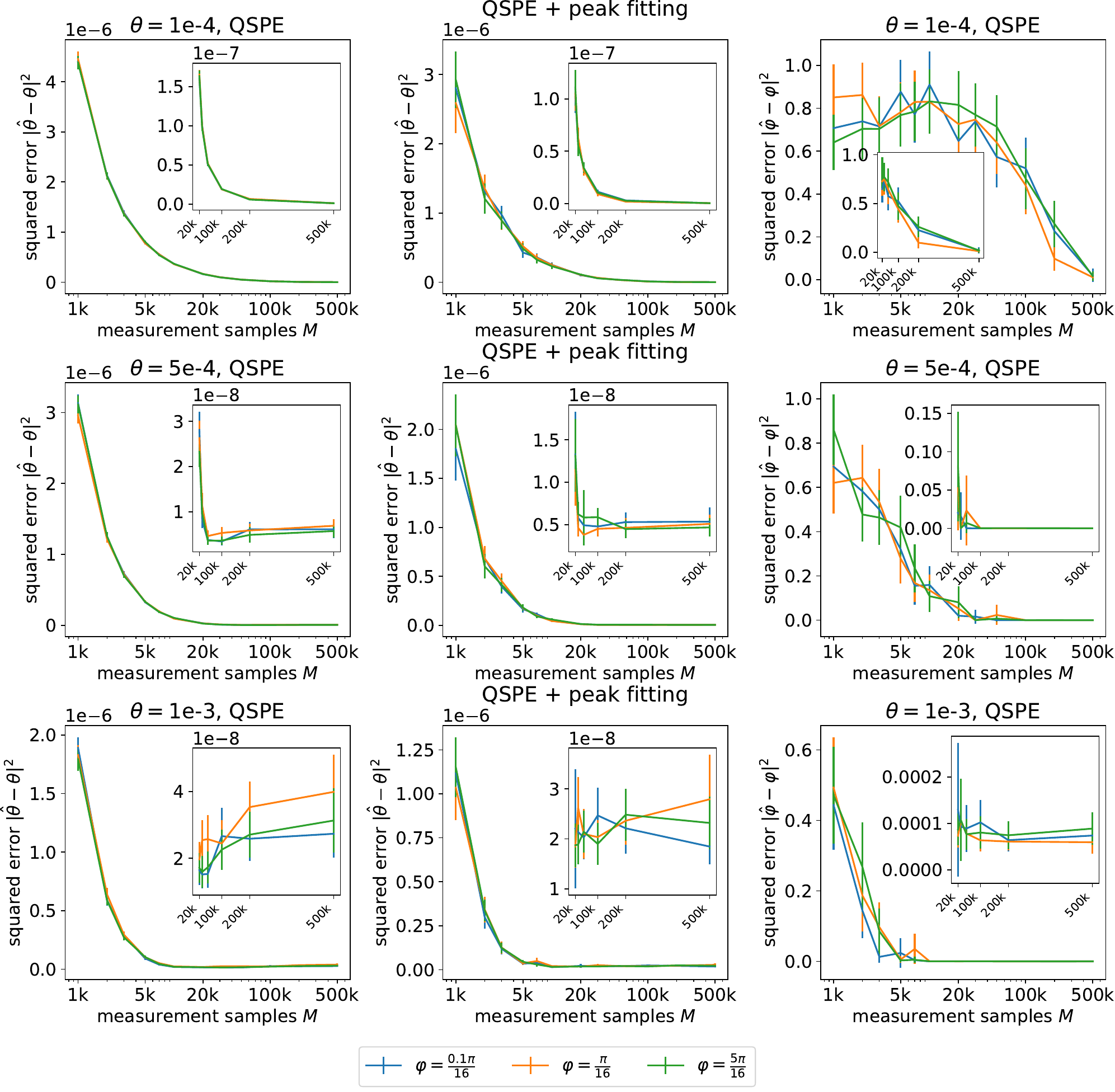}
		\caption{Accuracy of estimators as a function of the number of measurement samples. The sources of noise in the numerical experiments are Monte Carlo sampling error, depolarizing error and time-dependent drift error. The depolarizing error rate is set to $r = 1 \times 10^{-3}$. The circuit degree is set to $d = 50$ and the \fsim phase parameter is set to $\chi = 5\pi/32$.The error bar of each point stands for the confidence interval derived from $96$ independent repetitions.}
		\label{fig:meas_cu}
	\end{figure}
	
	\subsection{Readout error}\label{sec:readout}
	The readout error is modeled by a stochastic matrix whose entry is interpreted as a conditional probability. This matrix is referred to as the confusion matrix in the readout. For a two-qubit system, it takes the form
	\begin{equation}
	R := [\bPP(\mathrm{binary}(j) | \mathrm{binary}(i))]_{i, j = 0}^3 = \left( \begin{array}{*4{c}}
	\bPP(00 | 00) & \bPP(01 | 00) & \bPP(10 | 00) & \bPP(11 | 00)\\
	\bPP(00 | 01) & \bPP(01 | 01) & \bPP(10 | 01) & \bPP(11 | 01)\\
	\bPP(00 | 10) & \bPP(01 | 10) & \bPP(10 | 10) & \bPP(11 | 10)\\
	\bPP(00 | 11) & \bPP(01 | 11) & \bPP(10 | 11) & \bPP(11 | 11)
	\end{array} \right)
	\end{equation}
	where $\bPP(\mathrm{binary}(j) | \mathrm{binary}(j))$ is the conditional probability of measuring the qubits with the bit-string $\mathrm{binary}(j)$ given that the quantum state is $\ket{\mathrm{binary}(i)}$. The sum of each row of the confusion matrix is equal to one due to the normalization of probability. The confusion matrix can be determined by performing additional quantum experiments in which $I \otimes I$, $I \otimes X$, $X \otimes I$, and $X \otimes X$ are measured to determine each row, respectively. If the probability vector from the measurement with readout error is $\bvec{q}^\expl = (q^\expl(00), q^\expl(01), q^\expl(10), q^\expl(11))^\top$, the probability vector after correcting the readout error is given by inverting the confusion matrix
	\begin{equation}
	\bvec{p}^\expl = (p^\expl(00), p^\expl(01), p^\expl(10), p^\expl(11))^\top = \left( R^\top \right)^{-1} \bvec{q}^\expl.
	\end{equation}
	
	In practice, the confusion matrix is determined by finite measurement samples which could introduce error to the confusion matrix due to the statistical fluctuation. We analyze the error and its effect in \cref{thm:confmat}. As a consequence, the theorem indicates a minimal requirement on the measurement sample size so that the readout error can be accurately corrected.
	\begin{theorem}\label{thm:confmat}
		Let $\bvec{p}^\expl_\mathrm{fs}$ be the probability vector computed by inverting the confusion matrix estimated by finite samples. To achieve the bounded error $\norm{\bvec{p}^\expl - \bvec{p}^\expl_\mathrm{fs}}_2 \le \epsilon$ with confidence level $1 - \alpha$, it suffices to set the number of measurement samples in each experiment determining the confusion matrix as
		\begin{equation}
		M_\mathrm{cmt} = \ceil{\frac{2\kappa^2(\kappa+\epsilon)^2 \ln\left(32/\alpha\right)}{\epsilon^2}}
		\end{equation}
		where
		\begin{equation}
		\kappa = \max_{i = 0, \cdots, 3} \frac{1}{2R_{ii} - 1}.
		\end{equation}
	\end{theorem}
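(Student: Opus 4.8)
The plan is to treat the finite-sample confusion matrix $\hat{R}$ as a small perturbation of the exact matrix $R$, propagate that perturbation through the matrix inversion, and close the loop with an elementary concentration bound on the entries of $\hat{R}$. Throughout, set $E := \hat{R} - R$ and $\eta := \max_{i,j}\abs{\hat{R}_{ij} - R_{ij}}$.

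\textbf{Step 1 (perturbation of the inverse).} Use the identity $A^{-1} - B^{-1} = B^{-1}(B - A)A^{-1}$ with $A = R^{\top}$, $B = \hat{R}^{\top}$. Since $\bvec{p}^\expl = (R^{\top})^{-1}\bvec{q}^\expl$ and $\bvec{p}^\expl_\mathrm{fs} = (\hat{R}^{\top})^{-1}\bvec{q}^\expl$, this gives $\bvec{p}^\expl - \bvec{p}^\expl_\mathrm{fs} = (\hat{R}^{\top})^{-1} E^{\top}\,\bvec{p}^\expl$. Because $\bvec{p}^\expl$ is a genuine probability vector, $\|\bvec{p}^\expl\|_2 \le \|\bvec{p}^\expl\|_1 = 1$, so $\|\bvec{p}^\expl - \bvec{p}^\expl_\mathrm{fs}\|_2 \le \|(\hat{R}^{\top})^{-1}\|_2\,\|E\|_2$. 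It remains to bound each of the two factors.

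\textbf{Step 2 (conditioning from diagonal dominance).} The physical content of the model is that each outcome is reported correctly more often than not, i.e. $2R_{ii} - 1 > 0$; since $R$ is row-stochastic the off-diagonal mass in row $i$ is $1 - R_{ii} < R_{ii}$, so $R$, and hence $R^{\top}$, is strictly diagonally dominant with margin $2R_{ii}-1$ in row (resp. column) $i$. A Varah-type bound then yields $\|(R^{\top})^{-1}\|_2 \le \kappa$. The empirical matrix $\hat{R}$ is still exactly row-stochastic and its diagonal entries lie within $\eta$ of those of $R$, so the same argument with the degraded margin $2R_{ii}-1-2\eta$ (equivalently, a Neumann expansion of $(R+E)^{-1}$) shows that, provided $\eta$ is small enough relative to $\epsilon$ and $\kappa$, one has $\|(\hat{R}^{\top})^{-1}\|_2 \le \kappa + \epsilon$ --- this is the origin of the $(\kappa+\epsilon)$ factor. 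Combining with $\|E\|_2 = O(\eta)$ (each row of $E$ is zero-sum with entries bounded by $\eta$), Step 1 gives $\|\bvec{p}^\expl - \bvec{p}^\expl_\mathrm{fs}\|_2 \le \kappa(\kappa+\epsilon)\,\|E^\top\|_2$, and requiring this to be at most $\epsilon$ fixes the target accuracy $\eta \lesssim \epsilon/\bigl(\kappa(\kappa+\epsilon)\bigr)$, with the constant chosen so that $\|E\|_2 \le 2\eta$ translates the entrywise bound into exactly $\|\bvec{p}^\expl-\bvec{p}^\expl_\mathrm{fs}\|_2\le\epsilon$.

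\textbf{Step 3 (concentration and the sample count).} Each row of $R$ is the outcome distribution of a four-way measurement, estimated by $M_\mathrm{cmt}$ i.i.d.\ shots, so every entry $\hat{R}_{ij}$ is an average of $[0,1]$-valued Bernoulli variables with mean $R_{ij}$. Hoeffding's inequality gives $\bPP\bigl(\abs{\hat{R}_{ij}-R_{ij}}\ge t\bigr)\le 2e^{-2M_\mathrm{cmt}t^2}$, and a union bound over the $16$ entries gives $\bPP(\eta\ge t)\le 32\,e^{-2M_\mathrm{cmt}t^2}$. Setting $t = \epsilon/\bigl(2\kappa(\kappa+\epsilon)\bigr)$, the threshold from Step 2, and demanding $32\,e^{-2M_\mathrm{cmt}t^2}\le\alpha$ yields $M_\mathrm{cmt}\ge \ln(32/\alpha)/(2t^2) = 2\kappa^2(\kappa+\epsilon)^2\ln(32/\alpha)/\epsilon^2$; taking the ceiling makes $M_\mathrm{cmt}$ an integer, which is the claimed bound. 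The main obstacle is Step 2: one must show the \emph{estimated} matrix $\hat{R}$ is itself invertible and well conditioned, for which the only handle is the smallness of $\eta$ --- yet $\eta$ is controlled only through the very sample size $M_\mathrm{cmt}$ being bounded. Making this near-circular dependence benign, and doing the norm bookkeeping tightly enough that the constants $2$ and $32$ emerge exactly as stated, is where the real care is needed; the concentration step itself is routine.
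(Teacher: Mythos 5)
Your overall strategy is the same as the paper's: entrywise Hoeffding plus a union bound over the $16$ entries of the confusion matrix (yielding the $32$), a perturbation bound for the inverse, and control of $\norm{R^{-1}}_2$ via the diagonal-dominance margin $2R_{ii}-1$ (the paper uses Gershgorin on the eigenvalues where you invoke a Varah-type bound; both routes produce the same $\kappa$). The one organizational difference is that you bound $\norm{(\hat{R}^{\top})^{-1}}_2 \le \kappa+\epsilon$ directly on the good event, whereas the paper expands $R_\mathrm{fs}^{-1}-R^{-1}$ in a Neumann series and solves $\wt{\epsilon}\kappa^2/(1-\wt{\epsilon}\kappa)=\epsilon$ for the target accuracy $\wt{\epsilon}=\epsilon/(\kappa(\kappa+\epsilon))$; these are equivalent ways of taming the circularity you flag at the end, and your version is arguably cleaner.

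Two steps in your bookkeeping are asserted rather than proved, and they are exactly where the stated constant lives. First, $\bvec{p}^\expl=(R^{\top})^{-1}\bvec{q}^\expl$ is \emph{not} a genuine probability vector (the readout-corrected entries can be negative), so $\norm{\bvec{p}^\expl}_1=1$ fails; the safe bound is $\norm{\bvec{p}^\expl}_2\le\norm{(R^{\top})^{-1}}_2\norm{\bvec{q}^\expl}_2\le\kappa$, which is evidently what silently produces the extra factor of $\kappa$ between your Step 1 ($(\kappa+\epsilon)\norm{E}_2$) and your Step 2 conclusion ($\kappa(\kappa+\epsilon)\norm{E}_2$). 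The paper sidesteps this by writing the difference as $((R^{\top})^{-1}-(R_\mathrm{fs}^{\top})^{-1})\bvec{q}^\expl$ and using that $\bvec{q}^\expl$ genuinely is a probability vector. Second, $\norm{E}_2\le 2\eta$ does not follow from the zero-row-sum structure alone: for a $4\times 4$ matrix with $\abs{E_{ij}}\le\eta$ the standard bounds ($\norm{E}_2\le\norm{E}_F$ or $\norm{E}_2\le\sqrt{\norm{E}_1\norm{E}_\infty}$) give $4\eta$, and with $4\eta$ your computation yields the prefactor $8$ rather than $2$. It is worth knowing that the paper's own proof also terminates with $8\kappa^2(\kappa+\epsilon)^2\ln(32/\alpha)/\epsilon^2$, in tension with the constant $2$ in the theorem statement; so your derivation happens to reproduce the stated constant only by way of the unproved $\norm{E}_2\le 2\eta$. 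Up to this constant-factor accounting, your argument is sound and matches the paper's.
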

	\begin{proof}
		In each experiment given the exact outcome $u \in \{0,1\}^2$ without readout error and exact measurement probability vector $\vp^{(u)} := \left(p(00|u), p(01|u), p(10|u), p(11|u)\right)$ taking readout error into account, the number of measurement outcomes corresponding to each bit-string is multinomial distributed
		\begin{equation}
		\vk^{(u)} := \left(k(00|u), k(01|u), k(10|u), k(11|u)\right) \sim \mathrm{Multinomial}(M_\mathrm{cmt}, \vp^{(u)})
		\end{equation}
		where $k(s|u) := \#(\text{outcome is } s \text{ in }M_\mathrm{cmt} \text{ samples})$. The bit-string frequency
		\begin{equation}
		\vq^{(u)} = \left(q(00|u), q(01|u), q(10|u), q(11|u)\right) := \left(\frac{k(00|u)}{M_\mathrm{cmt}}, \frac{k(01|u)}{M_\mathrm{cmt}}, \frac{k(10|u)}{M_\mathrm{cmt}}, \frac{k(11|u)}{M_\mathrm{cmt}}\right)
		\end{equation}
		is therefore an estimate to the measurement probability since $\expt{\vq^{(u)}} = \vp^{(u)}$. However, the statistical fluctuation makes the estimate deviate the exact probability. Applying Hoeffding's inequality, we have
		\begin{equation}
		\bP\left(\abs{q(s|u) - p(s|u)} > \frac{\wt{\epsilon}}{4} \right) = \bP\left(\abs{k(s|u) - M_\mathrm{cmt} p(s|u)} > \frac{\wt{\epsilon} M_\mathrm{cmt}}{4}\right) \le 2 e^{- \frac{\wt{\epsilon}^2 M_\mathrm{cmt}}{8}}.
		\end{equation}
		Let the confusion matrix determined by finite samples be $R_\mathrm{fs}$ where $\left(R_\mathrm{fs}\right)_{ij} = q\left(\mathrm{binary}(j) | \mathrm{binary}(i)\right)$ and the subscript ``fs'' abbreviates ``finite sample''. Then, the deviation can be bounded as
		\begin{equation}
		\begin{split}
		&\bP\left(\norm{R_\mathrm{fs} - R}_2 > \wt{\epsilon}\right) \le \bP\left(\norm{R_\mathrm{fs} - R}_F > \wt{\epsilon}\right) = \bP\left(\sum_{s, u \in \{0,1\}^2} \abs{q(s|u) - p(s|u)}^2 > \wt{\epsilon}^2\right)\\
		&\le \bP\left(\bigcup_{s, u \in \{0,1\}^2} \left\{ \abs{q(s|u) - p(s|u)} > \frac{\wt{\epsilon}}{4} \right\}\right) \le \sum_{s, u \in \{0,1\}^2} \bP\left(\abs{q(s|u) - p(s|u)} > \frac{\wt{\epsilon}}{4} \right)\\
		&\le 32 e^{- \frac{\wt{\epsilon}^2 M_\mathrm{cmt}}{8}}.
		\end{split}
		\end{equation}
		Therefore, to achieve $\norm{R_\mathrm{fs} - R}_2 \le \wt{\epsilon}$ with confidence level $1 - \alpha$, it suffices to set the number of measurement samples in each experiment as
		\begin{equation}
		M_\mathrm{cmt} = \ceil{\frac{8 \ln\left(32/\alpha\right)}{\wt{\epsilon}^2}}.
		\end{equation}
		Expanding the matrix inverse in terms of power series and denoting $\Delta_\mathrm{fs} := R_\mathrm{fs} - R$ for convenience, we have
		\begin{equation}
		R_\mathrm{fs}^{-1} = \left(R + \Delta_\mathrm{fs}\right)^{-1} = R^{-1} \left(I + \Delta_\mathrm{fs} R^{-1}\right)^{-1} = R^{-1} + \sum_{j=1}^\infty R^{-1} \left(\Delta_\mathrm{fs} R^{-1}\right)^j.
		\end{equation}
		Furthermore, we get
		\begin{equation}
		\norm{R_\mathrm{fs}^{-1} - R^{-1}}_2 \le \norm{R^{-1}}_2 \sum_{j=1}^\infty \norm{\Delta_\mathrm{fs} R^{-1}}_2^j \le \frac{\norm{\Delta_\mathrm{fs}}_2 \norm{R^{-1}}_2^2}{1 - \norm{\Delta_\mathrm{fs}}_2 \norm{R^{-1}}_2}.
		\end{equation}
		Note that $\norm{R^{-1}}_2 = \lambda_\mathrm{min}^{-1}(R)$. To proceed, we have to lower bound the smallest eigenvalue of the confusion matrix. Note that all eigenvalues of the confusion matrix are real as a property of the stochastic matrix. Applying the Gershgorin circle theorem, all eigenvalues of the confusion matrix are contained in the union of intervals
		\begin{equation}
		\bigcup_{i = 0}^3 \left[R_{ii} - \sum_{j \ne i} R_{ij}, R_{ii} + \sum_{j \ne i} R_{ij}\right].
		\end{equation}
		Consequentially, the smallest eigenvalue of the confusion matrix is lower bounded
		\begin{equation}
		\lambda_\mathrm{min}(R) \ge \min_{i=0, \cdots, 3} \left(R_{ii} - \sum_{j \ne i} R_{ij}\right) = \min_{i = 0, \cdots, 3} \left(2 R_{ii} - 1\right) =: \kappa^{-1}.
		\end{equation}
		Thus, by properly choosing the number of measurement samples, with confidence level $1-\alpha$, we can bound the inverse confusion matrix as
		\begin{equation}
		\norm{R_\mathrm{fs}^{-1} - R^{-1}}_2 \le \frac{\wt{\epsilon} \kappa^2}{1 - \wt{\epsilon} \kappa}.
		\end{equation}
		When computing the probability vector by inverting the confusion matrix determined by finite measurement samples, the error is bounded as
		\begin{equation}
		\norm{\bvec{p}^\expl - \bvec{p}^\expl_\mathrm{fs}}_2 \le \norm{R_\mathrm{fs}^{-1} - R^{-1}}_2 \norm{\bvec{q}^\expl}_2 \le \norm{R_\mathrm{fs}^{-1} - R^{-1}}_2 \norm{\bvec{q}^\expl}_1 \le \frac{\wt{\epsilon} \kappa^2}{1 - \wt{\epsilon} \kappa}.
		\end{equation}
		Let 
		\begin{equation}
		\frac{\wt{\epsilon} \kappa^2}{1 - \wt{\epsilon} \kappa} = \epsilon \Rightarrow \wt{\epsilon} = \frac{\epsilon}{\kappa(\kappa+\epsilon)}
		\end{equation}
		Thus, to achieve the bounded error $\norm{\bvec{p}^\expl - \bvec{p}^\expl_\mathrm{fs}}_2 \le \epsilon$ with confidence level $1 - \alpha$, it suffices to set the number of measurement samples in each experiment determining the confusion matrix as
		\begin{equation}
		M_\mathrm{cmt} = \ceil{\frac{8\kappa^2(\kappa+\epsilon)^2 \ln\left(32/\alpha\right)}{\epsilon^2}}.
		\end{equation}
		The proof is completed.
	\end{proof}

    \subsection{\REV{Initial state preparation error}}\label{sec:initial_state_error}

    In this subsection, we analyze the effect of the error in the initialization of Bell states in the quantum circuits. One case is that the error brings the initial state out of the subspace $\spans\{\ket{0_\ell}, \ket{1_\ell}\}$. Because the rest of the circuit preserves that two-dimensional subspace, such error can be mitigated by post-selecting the measurement outcomes in the two-dimensional subspace. Hence, in the rest of this subsection, we focus on the initial state error within the two-dimensional subspace. 
    
    The initial state error is modeled by a unitary rotation $E_\eta := e^{- \I \eta K}$ where $\eta \in \mathbb{R}$ is the error magnitude and $K$ is the normalized Hermitian generator of the error with $\norm{K}_2 = 1$. Suppose the ideal Bell state is $\ket{\beta}$ with $\beta \in \{+, \I\}$, the experimentally prepared initial state is $\ket{\beta^\prime} := E_\eta \ket{\beta} \approx \ket{\beta} - \I \eta K \ket{\beta} + \Or(\eta^2)$. The experimental measurement probability is then 
    \begin{equation}
        \begin{split}
            p_\beta^\prime &= \abs{\bra{0_\ell} \mc{U}^{(d)}(\omega; \theta, \varphi, \chi) \ket{\beta^\prime}}^2 \approx \abs{\bra{0_\ell} \mc{U}^{(d)}(\omega; \theta, \varphi, \chi) \ket{\beta} - \I \eta \bra{0_\ell} \mc{U}^{(d)}(\omega; \theta, \varphi, \chi) K \ket{\beta}}^2 + \Or(\eta^2)\\
            &\approx p_\beta\left(1 + \Or(p_\beta^{-1/2} \eta \bra{0_\ell} \mc{U}^{(d)}(\omega; \theta, \varphi, \chi) K \ket{\beta})\right) + \Or(\eta^2)\\
            &\approx p_\beta + \Or(\eta \bra{0_\ell} \mc{U}^{(d)}(\omega; \theta, \varphi, \chi) K \ket{\beta}) + \Or(\eta^2 + d \theta \eta)
        \end{split}
    \end{equation}
    Here, the second and the third lines use the conclusion that $\bra{0_\ell} \mc{U}^{(d)}(\omega; \theta, \varphi, \chi) \ket{\beta} = \Or(\sqrt{p_\beta}) = \Or(1) + \Or(d \theta)$ when $d \theta \ll 1$. Note that \cref{eqn:approximate-U-d-expansion,eqn:approximate-U-d-H} indicate that
    \begin{equation}
        \Or(\mc{U}^{(d)}(\omega; \theta, \varphi, \chi)) = \Or(I + \I \theta X U_{d - 1}(\cos(\omega - \varphi))) + \Or((d\theta)^2). 
    \end{equation}
    Note that the identity matrix in the expansion only contributes to the constant term, which we explicitly annotate in the following result:
    \begin{equation}
        p_\beta^\prime \approx p_\beta + \text{const} \times \eta + \text{const} \times \eta \theta U_{d-1}(\omega - \varphi) + \Or(\eta^2 + d \theta \eta). 
    \end{equation}
    We remark that the constant-shift term $\text{const} \times \eta$ only affects the zeroth Fourier mode, which can be dropped by post-selecting Fourier modes of positive indices. This perturbative analysis implies that when subjected to initial state error with magnitude $\eta$, each Fourier component has an additive error magnitude $\Or(\eta \theta)$. Thus, the initial state error contributes to a relative error in the estimation, $|\hat{\theta} - \theta| / \theta = \Or(\eta)$. 

    The perturbative analysis above gives intuition in understanding the initial state error. However, the derivation relies on several approximations that may not cover all cases. In the following text, we perform numerical analysis to have a more comprehensive understanding of the estimation error induced by initial state error. We turn off all other noises including sampling error to focus on the effect of initial state error. We perform three different strategies to add initial state error to the system.
    \begin{enumerate}
        \item ``identical'': The initial state error $E_\eta = \exp(-\I \eta (X + Y) / \sqrt{2})$ is identical in all experiments for any $\beta \in \{+, \I\}$ and any $\omega$.
        \item ``fixed'': The initial state error $E_\eta$ only depends on the initial state to be prepared. It is fixed in all experiments for any $\omega$ but takes two possible values $E_{\eta, +} = \exp(-\I \eta (X + Y) / \sqrt{2})$ and $E_{\eta, \I} = \exp(-\I \eta (2X + Y) / \sqrt{5})$.
        \item ``random'': The initial state error is drawn randomly in each experiment for each $\beta \in \{+, \I\}$ and $\omega$, namely, $E_\eta = \exp(-\I \eta K)$ and $K$ is a random normalized Hermitian error generator.
    \end{enumerate}
    The numerical results are given in \cref{fig:spam_error_analysis}. These results suggest that the error scaling depends weakly on the strategy of adding initial state error. The study in \cref{fig:spam_error} suggests that the estimation error is linearly additive in the initial state error $\eta$, which agrees with our perturbative analysis. Moreover, we see that when $\theta = 10^{-3}$ and $\eta = 10^{-3}$, the induced estimation error of $\theta$ is around $10^{-6}$ and that of $\varphi$ is around $10^{-3}$ which are negligible compared to their own magnitudes. It is also worth noting that the plateau around $10^{-7}$ of $\theta$ estimation error (left panel in \cref{fig:spam_error}) is because the approximation error derived in \cref{sec:analytical-result-qspc} dominates the estimation error. In \cref{fig:spam_error_vary_theta}, we study the error scaling in a wide range of $\theta$ values. We see that the function form is nontrivial across different $\theta$ values. When $\theta$ is around $10^{-3}$, the $\theta$ estimation error scales quadratically in $\theta$, which might be due to some complex cancellation in Fourier transformation. The $\varphi$ estimation error scales reciprocally in $\theta$, because the signal-to-noise ratio of the complex phase increases as $\theta$ gets large. It agrees with the analysis in \cref{sec:analytical-result-qspc}. Moreover, \cref{fig:spam_error_vary_deg} shows that the spam error can be mitigated by increasing the circuit depth as the estimations error decays as $d^{-1}$ and $d^{-2}$ for $\theta$ and $\varphi$ respectively. Theoretically understanding these error scalings will be part of our future work.

\begin{figure}[htbp]
    \centering
    \subfigure[Estimation error as a function of initial state error $\eta$ ($\theta = 10^{-3}, d = 20$)\label{fig:spam_error}]{
        \includegraphics[width=.475\columnwidth]{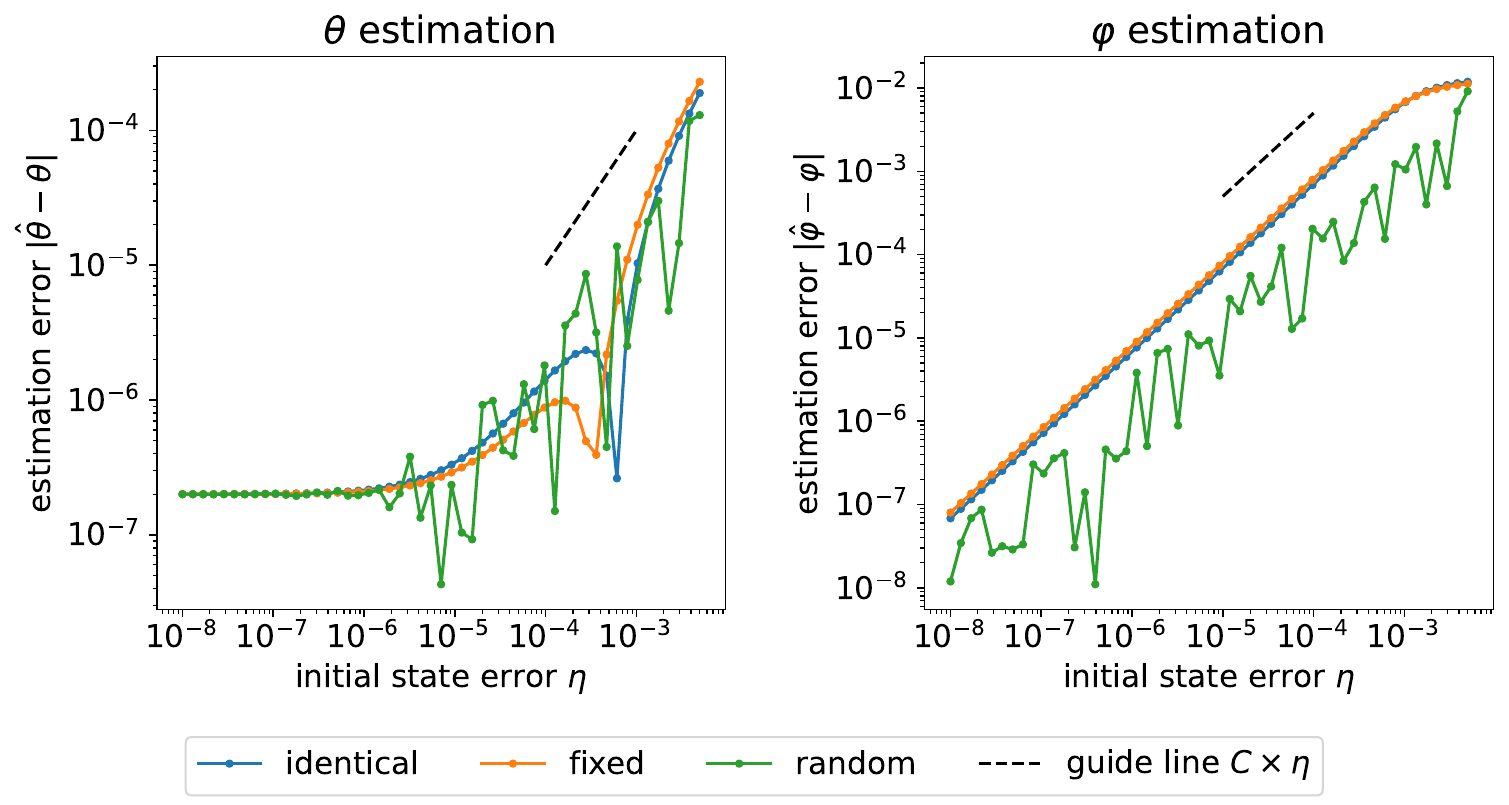}
    }
    \subfigure[Estimation error as a function of swap angle $\theta$  ($\eta = 10^{-6}, d = 20$) \label{fig:spam_error_vary_theta}]{
        \includegraphics[width=.49\columnwidth]{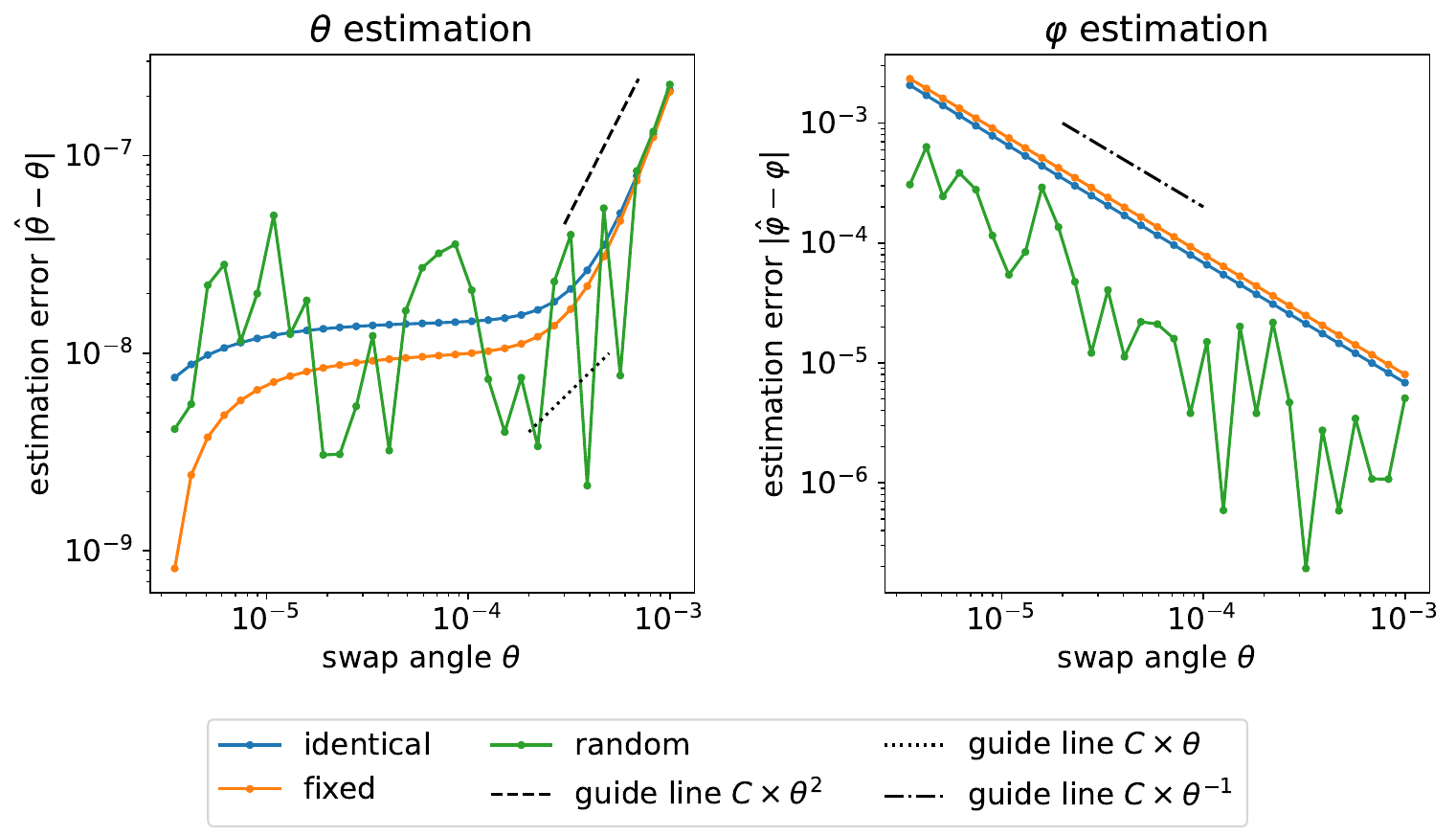}
    }
    \subfigure[Estimation error as a function of depth parameter $d$ ($\theta = 10^{-5}, \eta = 10^{-6}$) \label{fig:spam_error_vary_deg}]{
        \includegraphics[width=.5\columnwidth]{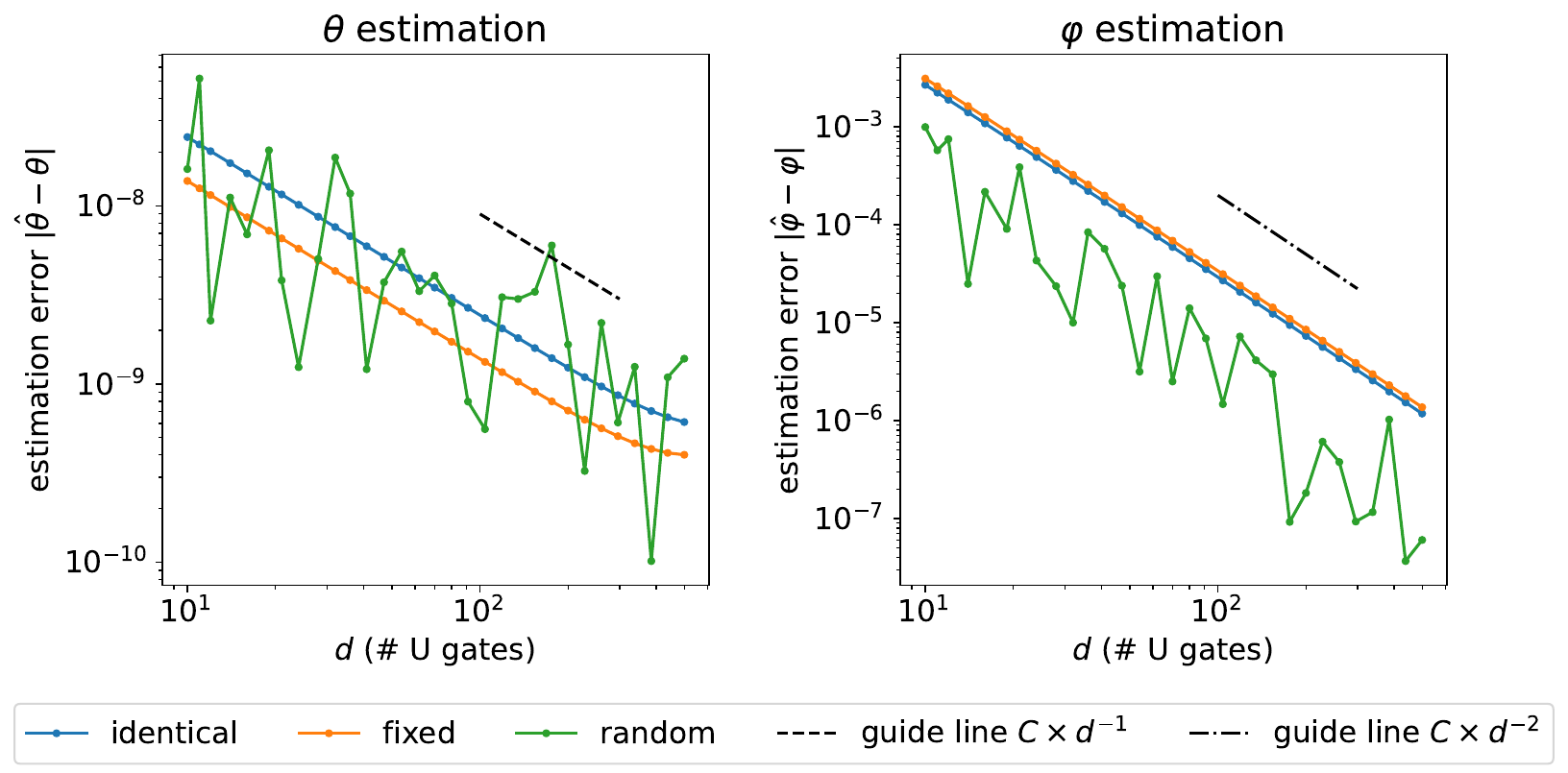}
    }
    \caption{Estimation error induced by initial state error. We set $\varphi = 3 \pi  / 16$ and $\chi = 5 \pi / 32$ for all experiments. \label{fig:spam_error_analysis}}
\end{figure}

	\section{Computing the polynomial representation on a special set of points}
\begin{lemma}\label{lma:poly-rep-special-pts}
	Let $d = 2^j$ for some $j = 0, 1, 2, \cdots$. Then 
	\begin{equation}
	P_\omega^\scp{d}(x) = e^{\I\omega} \left(\cos\left(d \sigma\right) + \I \frac{\sin\left(d \sigma\right)}{\sin\sigma} \left(\sin \omega\right) x\right) \text{ and } Q_\omega^\scp{d}(x) = \frac{\sin\left(d \sigma\right)}{\sin\sigma}
	\end{equation}
	where $\sigma = \arccos\left(\left(\cos \omega\right) x \right)$.
\end{lemma}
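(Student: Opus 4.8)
The plan is to prove the identity by induction on $j$, i.e.\ along the dyadic sequence $d = 1, 2, 4, \dots, 2^j, \dots$, using a doubling step rather than the unit-increment recurrence that is used later to upgrade to all $d$. For the base case $d=1$ I would expand directly: $U^{(1)}(\omega,\arccos x) = e^{\I\omega Z}\bigl(xI + \I\sqrt{1-x^2}\,X\bigr)e^{\I\omega Z}$ gives $P^{(1)}_\omega(x) = e^{2\I\omega}x$ and $Q^{(1)}_\omega(x) = 1$, which agree with the claimed formulas since $\cos\sigma = x\cos\omega$ forces $e^{\I\omega}\bigl(\cos\sigma + \I\sin\omega\,x\bigr) = e^{\I\omega}\cdot x e^{\I\omega}$ and $\sin\sigma/\sin\sigma = 1$.

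The key structural fact is a doubling identity: from $U^{(d)}(\omega,\theta) = \bigl(e^{\I\omega Z}e^{\I\theta X}\bigr)^d e^{\I\omega Z}$ the interior phase factors telescope, so $U^{(2d)}(\omega,\theta) = U^{(d)}(\omega,\theta)\,e^{-\I\omega Z}\,U^{(d)}(\omega,\theta)$. Substituting the polynomial matrix form of \cref{lma:qsp-unitary-polynomial} (and using that $Q^{(d)}_\omega$ is real, so $Q^{(d)*}_\omega = Q^{(d)}_\omega$) and multiplying out the $2\times 2$ matrices, I would read off the $(1,1)$ and $(1,2)$ entries to obtain the quadratic recurrences
\begin{align*}
P^{(2d)}_\omega(x) &= e^{-\I\omega}\bigl(P^{(d)}_\omega(x)\bigr)^2 - e^{\I\omega}(1-x^2)\bigl(Q^{(d)}_\omega(x)\bigr)^2,\\
Q^{(2d)}_\omega(x) &= 2\,Q^{(d)}_\omega(x)\,\Re\!\bigl(e^{-\I\omega}P^{(d)}_\omega(x)\bigr).
\end{align*}

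Assuming the formulas hold for $d$, the inductive step is a computation. From the hypothesis $\Re\bigl(e^{-\I\omega}P^{(d)}_\omega\bigr) = \cos(d\sigma)$, so the second recurrence gives $Q^{(2d)}_\omega = \tfrac{\sin(d\sigma)}{\sin\sigma}\cdot 2\cos(d\sigma) = \tfrac{\sin(2d\sigma)}{\sin\sigma}$ at once. For $P$ I would expand the squares; the one identity that drives the collapse is $\sin^2\omega\,x^2 + (1-x^2) = 1 - x^2\cos^2\omega = \sin^2\sigma$, which reduces the real part of the resulting bracket to $\cos^2(d\sigma) - \sin^2(d\sigma) = \cos(2d\sigma)$, while the imaginary part becomes $\tfrac{\sin(2d\sigma)}{\sin\sigma}\sin\omega\,x$ by the double-angle formula — exactly the target with $d\mapsto 2d$. (That these expressions are genuine polynomials in $x$ of degree $\le d$ and $\le d-1$ respectively needs no separate argument: it follows from $\cos(d\sigma) = T_d(x\cos\omega)$, $\sin(d\sigma)/\sin\sigma = U_{d-1}(x\cos\omega)$, or directly from \cref{lma:qsp-unitary-polynomial}.)

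The main thing to watch is the algebra of the inductive step: keeping the signs in the quadratic recurrence correct and recognizing the $\sin^2\sigma$ identity. I do not expect any genuine obstacle, since the doubling identity is tailored precisely to the $d\mapsto 2d$ structure of the claimed closed form, and the base case is immediate.
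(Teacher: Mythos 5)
Your proof is correct and follows essentially the same route as the paper: both exploit the doubling identity $U^{\scp{2d}}(\omega,\theta)=U^{\scp{d}}(\omega,\theta)\,e^{-\I\omega Z}\,U^{\scp{d}}(\omega,\theta)$ to obtain the identical quadratic recurrences for $P$ and $Q$, and both collapse them via the relation $\sin^2\omega\,x^2+(1-x^2)=\sin^2\sigma$ together with double-angle/Chebyshev identities. The only cosmetic difference is that you verify the closed form by induction on $j$, whereas the paper derives it forward by recognizing $\Re(e^{-\I\omega}P^{\scp{d}}_\omega)\mapsto T_2(\cdot)$ and composing to $T_d$.
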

\begin{proof}
	A system of recurrence relations can be established by inserting the resolution of identity in the matrix multiplication:
	\begin{equation}\label{eqn:recurrence-Q}
	\begin{split}
	\I \sqrt{1-x^2} Q^\scp{d}(x) &= \bra{0}U^\scp{d}(\omega,\theta)\ket{1} = \bra{0}U^\scp{d/2}(\omega,\theta) e^{-\I\omega Z} \left(\ket{0}\bra{0} + \ket{1}\bra{1}\right) U^\scp{d/2}(\omega,\theta) \ket{1}\\
	&= e^{-\I\omega} P_\omega^\scp{d/2}(x) \I \sqrt{1-x^2} Q_\omega^\scp{d/2}(x) + e^{\I\omega} \I\sqrt{1-x^2} Q_\omega^\scp{d/2}(x) P_\omega^{\scp{d/2} *}(x)\\
	&= \I \sqrt{1-x^2} Q_\omega^\scp{d/2}(x) 2 \Re\left(e^{-\I\omega} P_\omega^\scp{d/2}(x)\right)\\
	\Rightarrow Q_\omega^\scp{d}(x) &= 2 Q_\omega^\scp{d/2}(x) \Re\left(e^{-\I\omega} P_\omega^\scp{d/2}(x)\right),
	\end{split}
	\end{equation}
	and
	\begin{equation}\label{eqn:recurrence-P}
	\begin{split}
	P_\omega^\scp{d}(x) &= \bra{0}U^\scp{d}(\omega,\theta)\ket{0} = \bra{0}U^\scp{d/2}(\omega,\theta) e^{-\I\omega Z} \left(\ket{0}\bra{0} + \ket{1}\bra{1}\right) U^\scp{d/2}(\omega,\theta) \ket{0}\\
	&= e^{-\I\omega} \left(P_\omega^\scp{d/2}(x)\right)^2 - e^{\I\omega} (1-x^2) \left(Q_\omega^\scp{d/2}(x)\right)^2\\
	&\stackrel{(\star)}{=} -e^{\I\omega} + 2 P_\omega^\scp{d/2}(x) \Re\left(e^{-\I\omega} P_\omega^\scp{d/2}(x) \right)\\
	\Rightarrow &\Re\left(e^{-\I\omega} P_\omega^\scp{d}(x) \right) = -1 + 2 \Re^2\left(e^{-\I\omega} P_\omega^\scp{d/2}(x) \right),\\
	\text{and } & \Im\left(e^{-\I\omega} P_\omega^\scp{d}(x) \right) = 2 \Im\left(e^{-\I\omega} P_\omega^\scp{d/2}(x) \right) \Re\left(e^{-\I\omega} P_\omega^\scp{d/2}(x) \right).
	\end{split}
	\end{equation}
	Here, equation $(\star)$ uses the special unitarity of $U^\scp{d/2}(\omega,\theta)$ which yields $P_\omega^\scp{d/2}(x) P_\omega^{\scp{d/2} *}(x) + (1-x^2) \left(Q_\omega^\scp{d/2}(x)\right)^2 = 1$ by taking determinant. We will first solve the nonlinear recurrence relation for $\Re\left(e^{-\I\omega} P_\omega^\scp{d}\right)$ in \cref{eqn:recurrence-P}. Note that the second-order Chebyshev polynomial of the first kind is $T_2(x) = 2x^2 - 1$. Then,
	\begin{equation}
	\Re\left(e^{-\I\omega} P_\omega^\scp{d}(x) \right) = T_2\left(\Re\left(e^{-\I\omega} P_\omega^\scp{d/2}(x) \right)\right) = \cdots = \underbrace{T_2\circ \cdots \circ T_2}_{\log_2(d)}\left(\Re\left(e^{-\I\omega} P_\omega^\scp{1}(x) \right)\right)
	\end{equation}
	Using the composition identity of the Chebyshev polynomials $T_n \circ T_m = T_{nm}$, we have $\underbrace{T_2\circ \cdots \circ T_2}_{\log_2(d)} = T_{d}$. On the other hand, when $d=1$, we have
	\begin{equation}
	\begin{split}
	U^\scp{1}(\omega, \arccos(x)) &= e^{\I\omega Z} e^{\I \arccos(x) X} e^{\I\omega Z} = \left(
	\begin{array}{cc}
	e^{2\I\omega} x & \I\sqrt{1-x^2} \\
	\I \sqrt{1-x^2} & e^{-2\I\omega} x
	\end{array}
	\right)\\
	\Rightarrow e^{-\I\omega} P_\omega^\scp{1}(x) &= e^{\I\omega} x,\ Q_\omega^\scp{1}(x) = 1.
	\end{split}
	\end{equation}
	Therefore
	\begin{equation}
	\Re\left(e^{-\I\omega} P_\omega^\scp{d}(x) \right) = T_{d}\left(\left(\cos\omega\right) x\right).
	\end{equation}
	Furthermore, $Q_\omega^\scp{d}$ and $\Im\left(e^{-\I\omega} P_\omega^\scp{d} \right)$ can be determined from the recurrence relation in \cref{eqn:recurrence-Q,eqn:recurrence-P}
	\begin{equation}
	Q_\omega^\scp{d}(x) = d \prod_{j=0}^{\log_2(d)-1} T_{2^j}\left(\left(\cos\omega\right) x\right),\ \Im\left(e^{-\I\omega} P_\omega^\scp{d}(x) \right) = Q_\omega^\scp{d}(x) \left(\sin\omega\right) x.
	\end{equation}
	For convenience, let $\cos \sigma := \left(\cos \omega\right) x = \cos\omega \cos\theta$. Then
	\begin{equation}
	\begin{split}
	Q_\omega^\scp{d}(x) \sin\sigma &= \left(\frac{d}{2} \prod_{j=1}^{\log_2(d)-1}\right) 2 \cos \sigma \sin\sigma = \left(\frac{d}{4} \prod_{j=2}^{\log_2(d)-1}\right) 2 \cos(2\sigma) \sin(2\sigma)\\
	&= \cdots = 2 \cos\left(\frac{d}{2} \sigma\right) \sin\left(\frac{d}{2} \sigma\right) = \sin\left(d \sigma\right).
	\end{split}
	\end{equation}
	Therefore
	\begin{equation}
	P_\omega^\scp{d}(x) = e^{\I\omega} \left(\cos\left(d \sigma\right) + \I \frac{\sin\left(d \sigma\right)}{\sin\sigma} \left(\sin \omega\right) x\right), \text{ and } Q_\omega^\scp{d}(x) = \frac{\sin\left(d \sigma\right)}{\sin\sigma}.
	\end{equation}
\end{proof}

\section{Analysis of periodic calibration: variance lower bound and shortcomings}\label{app:periodic-calibration}
\subsection{Overview of the methodology of periodic calibration}
In this subsection, we provide an overview of periodic calibration, also known as Floquet calibration, which is proposed in \cite[Appendix C]{arute_observation_2020} and \cite[Appendix A]{neill_accurately_2021}. Periodic calibration is a generalization of the robust single-qubit gate calibration \cite{kimmel2015robust} to entangling gates. The main component of the quantum circuit for performing periodic calibration is equivalent to the periodic part used in our method (see the shaded part in Figure 1 in the main text). However, instead of the initialization in terms of Bell states, periodic calibration measures the transition probability between tensor product states $\ket{01}$ and $\ket{10}$, namely, between logical quantum state $\ket{0_\ell}$ and $\ket{1_\ell}$. The exact parametric expression of this probability can be derived from the results presented in \cref{sec:analytical-result-qspc}, which is consistent with that in literature (upon convention difference):
\begin{equation}
    P_\mathrm{pc}(\theta, \varphi, \omega, d) = \sin^2(\theta) \frac{\sin^2(d \sigma)}{\sin^2(\sigma)} = \sin^2(\theta) U_{d - 1}^2(\cos(\sigma)) \text{ where } \sigma = \arccos(\cos(\theta) \cos(\omega - \varphi)).
\end{equation}
Given the parametric expression, periodic calibration is performed by minimizing the distance metric between the parametric ansatz and the experimentally measured probabilities. To improve the efficiency, the depth parameter $d$ is chosen to be logarithmically spaced, namely, $d = 1, 2, 4, 8, \cdots$. In practice, multiple modulation angles $\omega$ can be chosen. Yet, for simplicity, and due to the linear additivity of the Fisher information gained from multiple $\omega$ values, we focus on a single variable value of $\omega$ in our analysis. Remarkably, we prove that for each circuit depth, the Fisher information is maximized when $\omega = \varphi$, which is referred to as phase-matching condition. Consequently, the simplified choice of fixed $\omega$ value in the analysis will indeed capture the optimality of the estimation.

\subsection{Optimality analysis using Fisher information and Cram\'{e}r-Rao bound}
According to the procedure outlined in the previous section, we can compute the Fisher information as follows. Suppose $d = 2^L$ is the maximum depth of the experiment, and $M$ is the number of measurement samples in each experiment. For notational simplicity, let the parameter vector be $\Xi = (\xi_1 := \theta, \xi_2 := \varphi)$. Then, the Fisher information matrix is expressed as:
\begin{equation}
    I_{k, k^\prime}(\Xi; \omega, d) = \sum_{j = 0}^{\log_2(d)} \frac{M}{P_\mathrm{pc}(\theta, \varphi, \omega, 2^j)(1 - P_\mathrm{pc}(\theta, \varphi, \omega, 2^j))} \frac{\partial P_\mathrm{pc}(\theta, \varphi, \omega, 2^j)}{\partial \xi_k} \frac{\partial P_\mathrm{pc}(\theta, \varphi, \omega, 2^j)}{\partial \xi_{k^\prime}}.
\end{equation}
Here, the relevant derivatives are
\begin{equation}
    \begin{split}
        & \frac{\partial P_\mathrm{pc}(\theta, \varphi, \omega, d)}{\partial \theta} = 2 \sin(\theta)\cos(\theta) U_{d-1}^2(\cos(\sigma)) - 2 \sin^3(\theta) \cos(\omega - \varphi) U_{d-1}(\cos(\sigma)) U_{d-1}^\prime(\cos(\sigma)),\\
        & \frac{\partial P_\mathrm{pc}(\theta, \varphi, \omega, d)}{\partial \varphi} = 2 \sin^2(\theta) \cos(\theta) \sin(\omega - \varphi) U_{d-1}(\cos(\sigma)) U_{d-1}^\prime(\cos(\sigma)),
    \end{split}
\end{equation}
and 
\begin{equation*}
    U_{d - 1}^\prime(\cos(\sigma)) = - \frac{d \cos(d \sigma) - \cos(\sigma) U_{d - 1}(\cos(\sigma))}{\sin^2(\sigma)}.
\end{equation*}
Using the explicit expressions above, the Fisher information matrix can be numerically evaluated. The estimation variances are lower bounded by the diagonal elements of the inverse Fisher information matrix according to Cram\'{e}r-Rao bound. For simplicity, we absorb the modulation angle $\omega$ into the definition of $\varphi$ angle, namely, setting $\omega = 0$ and $\varphi$ variable. Hence, the phase-matching condition is equivalent to $\varphi = 0$ ($\omega = \varphi$ in the original setting).

When the phase-matching condition is satisfied, the CRLB can be exactly derived. Note that $P_\mathrm{pc}|_{\varphi = 0, d = \ell} = \sin^2(\ell \theta)$. Then, for a fixed circuit depth $\ell$ with $M$ measurement shots, the Fisher information is
\begin{equation*}
    J_\theta(\ell)|_{\varphi = 0} = \frac{M}{\sin^2(\ell \theta) \cos^2(\ell \theta)} (2 \ell \sin(\ell \theta) \cos(\ell \theta))^2 = 4 M \ell^2.
\end{equation*}
As we derived in \cref{eqn:QFI-omega} in \cref{sec:quantum-CRLB}, the quantum Fisher information that a quantum circuit using $\ell$ $U$-gates can maximally contribute is $\mf{F}_{\max} = 4 M \ell^2$. Hence, the coincidence between the classical Fisher information with phase-matching condition and the maximal quantum Fisher information indicates that the optimality is attained at phase-matching because the Fisher information attains its maximum.

Under phase-matching conditions, the total Fisher information is
\begin{equation*}
    I_{1, 1} = \sum_{j = 0}^{\log_2(d)} \frac{M}{\sin^2(2^j\theta) \cos^2(2^j\theta)} 4 \times 2^{2j} \sin^2(2^j\theta) \cos^2(2^j\theta) = 4 M \sum_{j = 0} 2^{2 j} = \frac{4 M (4d^2 - 1)}{3}.
\end{equation*}
Hence, the CRLB on the estimation variance of $\theta$ is
\begin{equation}\label{eqn:phase-matching-variance}
    \mathrm{Var}(\theta)|_\mathrm{phase-matching} \ge I_{1, 1}^{-1} = \frac{3}{4 M (4d^2 - 1)} \approx \frac{3}{16 M d^2}.
\end{equation}

The numerical results are depicted in \cref{fig:crlb-periodic-calibration}. In \cref{fig:exact_crlb_periodic_calibration}, it can be seen that the improvement in the estimation variance with increasingly large depth $d$ is very limited when $\varphi$ angle deviates from zero, namely, the phase-matching condition is violated. To understand the optimal variance scaling of periodic calibration as a function of the depth parameter $d$, we numerically depict the results in \cref{fig:exact_crlb_depth_periodic_calibration}. The numerical result indicates that the variance scales as $1 / d^2$ when the phase-matching condition is satisfied. This is consistent with the exactly derived result in \cref{eqn:phase-matching-variance}. Yet when the phase-matching condition is violated, the variance decays as $1 / d^2$ when the depth parameter is not too large ($d \lesssim 1 / \varphi$). However, when $d$ gets increasingly large, the variance is almost plateaued with very limited decay. 

\begin{figure}[htbp]
    \centering
    \subfigure[CRLB as a function of $\varphi$ angle ($\theta = 10^{-3}$)\label{fig:exact_crlb_periodic_calibration}]{
        \includegraphics[width=.33\columnwidth]{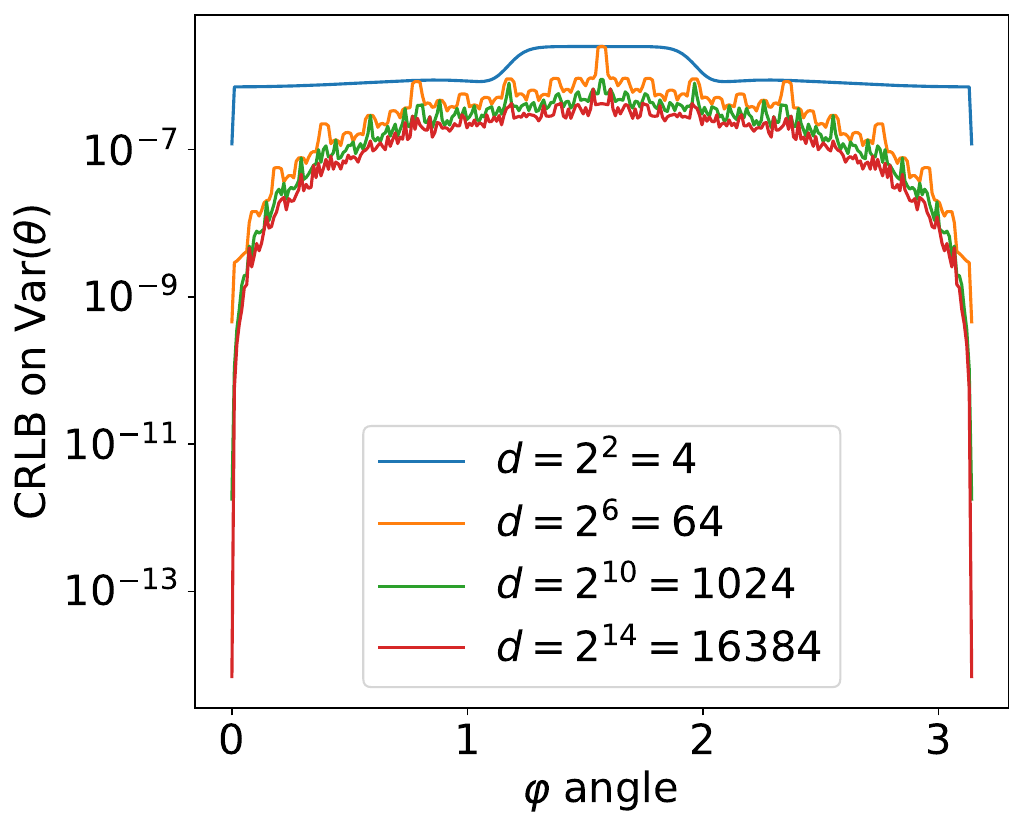}
    }
    \subfigure[CRLB as a function of depth parameter $d$\label{fig:exact_crlb_depth_periodic_calibration}]{
        \includegraphics[width=.64\columnwidth]{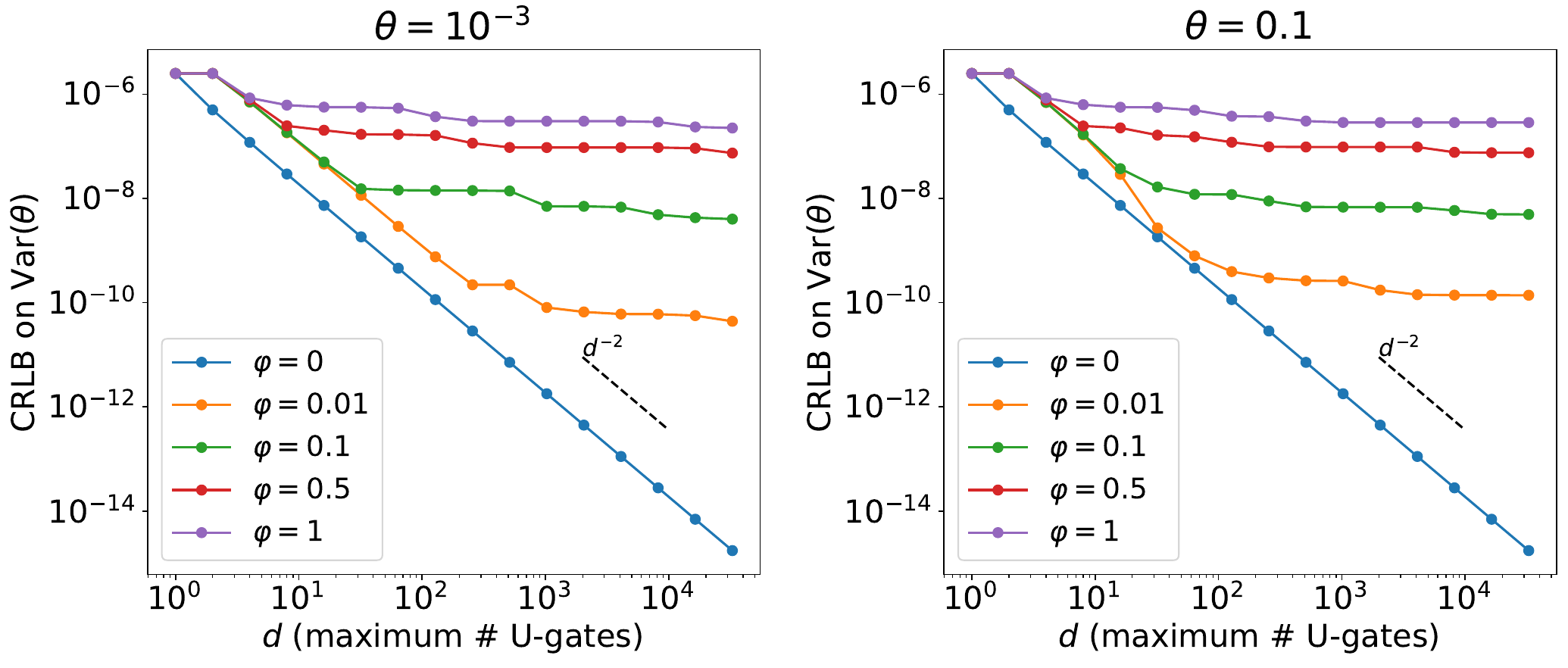}
    }
    \caption{Cram\'{e}r-Rao lower bound (CRLB) on the estimation variance of $\theta$ using periodic calibration. The number of measurement samples is set to $M = 1 \times 10^5$.\label{fig:crlb-periodic-calibration}}
\end{figure}

\subsection{Violation of phase-matching condition implies exponentially worse estimation variance}\label{app:violate-phase-matching-periodic-calibration}
    To understand the importance of the phase-matching condition and the plateau of estimation variance, we provide an approximation to the Fisher information when the swap angle $\theta$ is small. For simplicity, we focus on the quantum circuit with a fixed depth parameter $d$ rather than a series of circuits with variable depths. It is referred to as $J_\theta$ below. We remark that the focus on the fixed-depth Fisher information does not lose generality due to the additivity of Fisher information.

    Note that when $\theta$ is small, to the first order in $\theta$, the Fisher information of $\theta$ is approximately
    \begin{equation}
        J_\theta \approx 4 M \frac{\sin^2(d \varphi)}{\sin^2(\varphi)} + \Or(d^4 \theta^2).
    \end{equation}
    Now, we provide an approximation to the expression above. Note that the extrema of $g(\varphi) := \sin(d \varphi) / \sin(\varphi)$ is attained when $d \tan(\varphi) = \tan(d \varphi)$ is satisfied. Let $\varphi^\star_k$ be the $k$-th solution to the equation involving tangent functions. They belong to the maxima of $g^2$, which are
    \begin{equation*}
        g^2_{\max, k} = g^2(\varphi^\star_k) = \frac{\sin^2(d \varphi^\star_k)}{\sin^2(\varphi^\star_k)} = \frac{1 - \frac{1}{1 + \tan^2(d \varphi^\star_k)}}{1 - \frac{1}{1 + \tan^2(\varphi^\star_k)}} = \frac{1 + \tan^2(\varphi^\star_k)}{d^{-2} + \tan^2(\varphi^\star_k)} \le \frac{1 + (\varphi^\star_k)^2}{d^{-2} + (\varphi^\star_k)^2}.
    \end{equation*}
    By extrapolating this function to other points rather than maxima, we get a reasonably good approximation to the Fisher information:
    \begin{equation}\label{eqn:approximate-Fisher-information}
        J_\theta \lesssim \frac{1 + \varphi^2}{d^{-2} + \varphi^2}. 
    \end{equation}
    Furthermore, when the violation of the phase-matching condition satisfies $\abs{\varphi} \ge \pi / d$, the loss in the Fisher information compared to the maximum is
    \begin{equation}\label{eqn:loss-FI-first-peak}
        \frac{\max_{\abs{\varphi} \ge \pi / d} J_\theta}{ J_\theta|_\mathrm{phase-matching} } \lesssim \frac{1}{d^2} \frac{1 + (\pi / d)^2}{d^{-2} + (\pi / d)^2} = \frac{1 + (\pi / d)^2}{1 + \pi^2} \to \frac{1}{1 + \pi^2} \approx 9.2\%.
    \end{equation}
    Hence, though the phase-matching condition is slightly violated $\abs{\varphi} \ge \pi / d$, which is common when $d$ is large, more than 90\% of the statistical power of $\theta$-estimation is eliminated. 
    
    Furthermore, the plateaued estimation variance is justified. According to \cref{eqn:approximate-Fisher-information}, the total Fisher information is approximated as
    \begin{equation}\label{eqn:loss-FI-O1-varphi}
        I_{1, 1} \lesssim 4 M \sum_{j = 0}^{\log_2(d)} \frac{1 + \varphi^2}{4^{-j} + \varphi^2} \approx 4 M \int_0^{\log_2(d)} \frac{1 + \varphi^2}{4^{-x} + \varphi^2} \rd x = 2 M \frac{1 + \varphi^2}{\varphi^2} \log_2\left( \frac{1 + (d \varphi)^2}{1 + \varphi^2} \right).
    \end{equation}
    This result indicates that when the phase-matching condition is not satisfied, the Fisher information is exponentially diminished as $\Theta(\log(d))$ compared to the optimal $\Theta(d^2)$ Fisher information with the phase-matching condition.

    To conclude, in \cref{eqn:loss-FI-first-peak}, we find that more than 90\% of Fisher information is eliminated even though the phase-matching condition is slightly violated such that $\abs{\varphi} \ge \pi / d$ is outside the principal peak. Moreover, in \cref{eqn:loss-FI-O1-varphi}, we derive that the majority of the violation of phase-matching condition, e.g. when $\varphi$ is constantly large, leads to an exponentially worse estimation variance rather than saturating Heisenberg limit. Due to the sensitivity of $\theta$ estimation to phase-matching conditions, our proposed QSPE method is superior because of the isolation of $\theta$ and $\varphi$ estimations in Fourier space and its consequent robustness against realistic errors.

    In \cref{fig:approx-crlb-periodic-calibration}, we visualize the approximation analysis of Fisher information and CRLB. It can be seen that our approximation well fits the exact values and explains the exponentially worse scaling of $\theta$-estimation variance when the phase-matching condition is violated.

\begin{figure}[htbp]
    \centering
    \subfigure[\label{fig:approx_fisher_info_periodic_calibration}]{
        \includegraphics[width=.5\columnwidth]{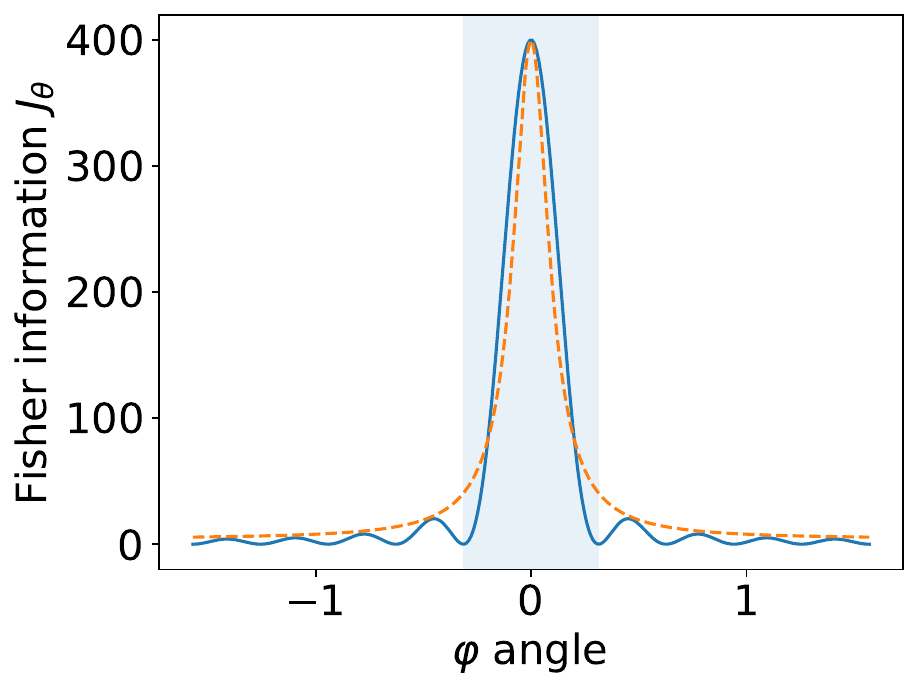}
    }
    \subfigure[\label{fig:approx_crlb_periodic_calibration}]{
        \includegraphics[width=.45\columnwidth]{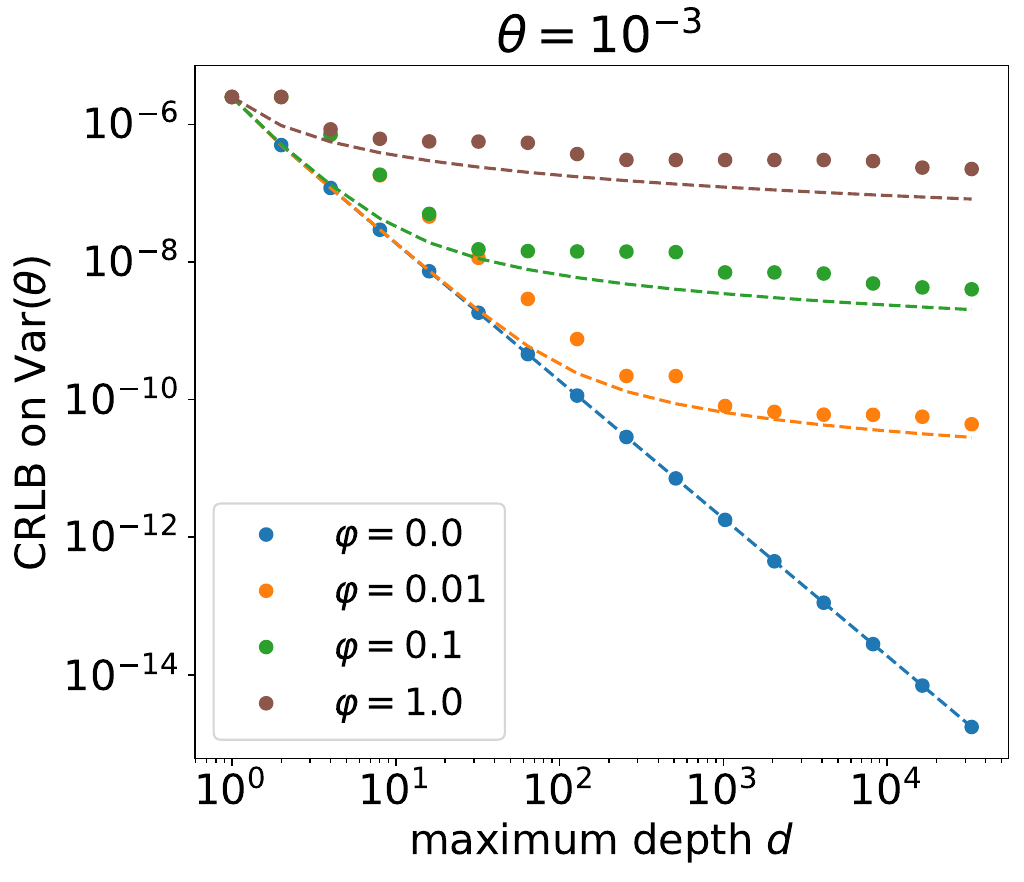}
    }
    \caption{Approximation analysis of Cram\'{e}r-Rao lower bound (CRLB). (a) Fisher information of a fixed depth $J_\theta$ (blue solid line) and its approximation in \cref{eqn:approximate-Fisher-information} (orange dashed line). The shaded area is the principal peak with $\abs{\varphi} \le \pi / d$. We set $d = 10$ and $M = 1$. (b) CRLB (scatters) and its approximation (dashed lines) based on \cref{eqn:approximate-Fisher-information}. We set $M = 1 \times 10^5$. \label{fig:approx-crlb-periodic-calibration}}
\end{figure}

 We summarize our findings in the following theorem in which for completeness, we explicitly include the phase modulation angle $\omega$ as per our original formalism.
    \begin{theorem}\label{thm:phase-matching-lower_bound}
        Let $d$ be the depth parameter of periodic calibration and $M$ be the number of measurement samples in each experiment. When the swap angle $\theta$ is small, the optimal estimation variance that periodic calibration can achieve is lower bounded as follows.
        \begin{enumerate}
            \item When phase-matching condition is satisfied $\omega = \varphi$, the optimal variance is lower bounded by $\Omega(1 / (M d^2))$.
            \item When phase-matching condition is constantly violated $\abs{\omega - \varphi} \ge \text{constant}$, the optimal variance is lower bounded by $\Omega(1 / (M \log(d)))$.
        \end{enumerate}
    \end{theorem}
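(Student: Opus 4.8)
The plan is to bound the variance through the Cram\'{e}r--Rao inequality: for the two--parameter model $\Xi=(\theta,\varphi)$ one always has $\mathrm{Var}(\hat\theta)\ge[I^{-1}(\Xi)]_{\theta\theta}\ge I_{\theta\theta}(\Xi)^{-1}$ (and a fortiori the same if $\varphi$ is treated as known), so in both regimes it suffices to control the total swap--angle Fisher information $I_{\theta\theta}=\sum_{j=0}^{\log_2 d}J_\theta(2^j)$, where $J_\theta(\ell)=M(\partial_\theta P_\mathrm{pc})^2/[P_\mathrm{pc}(1-P_\mathrm{pc})]$ and $P_\mathrm{pc}=P_\mathrm{pc}(\theta,\varphi,\omega,\ell)$. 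Under phase-matching $\omega=\varphi$ we have $\sigma=\theta$, hence $P_\mathrm{pc}=\sin^2(\ell\theta)$, $\partial_\theta P_\mathrm{pc}=2\ell\sin(\ell\theta)\cos(\ell\theta)$ and $P_\mathrm{pc}(1-P_\mathrm{pc})=\sin^2(\ell\theta)\cos^2(\ell\theta)$, so $J_\theta(\ell)=4M\ell^2$; this exactly saturates the maximal per-circuit quantum Fisher information $4M\ell^2$ of \cref{sec:quantum-CRLB}, which also explains why phase-matching is the optimal point. Summing over $\ell\in\{1,2,\dots,d\}$ gives $I_{\theta\theta}=4M\sum_{j=0}^{\log_2 d}4^{j}=\tfrac{4}{3}M(4d^2-1)$, whence $\mathrm{Var}(\hat\theta)\ge\tfrac{3}{4M(4d^2-1)}=\Omega(1/(Md^2))$, establishing claim (1).

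For claim (2) the goal is to show $J_\theta(\ell)=O(M)$ \emph{uniformly in }$\ell$ once $s_0:=\inf\sin^2(\omega-\varphi)>0$ and $\theta$ is small. Writing $\delta=\omega-\varphi$, $\sigma=\arccos(\cos\theta\cos\delta)$ and $V=U_{\ell-1}^2(\cos\sigma)$, so that $P_\mathrm{pc}=\sin^2\theta\,V$, I would first carry out the algebraic simplification
\begin{equation*}
\frac{(\partial_\theta P_\mathrm{pc})^2}{P_\mathrm{pc}}=4\cos^2\theta\,V+4\cos\theta\sin\theta\,\partial_\theta V+4\cos^2\delta\,\sin^4\theta\,\bigl(U_{\ell-1}'(\cos\sigma)\bigr)^2,
\end{equation*}
using $\partial_\theta V=-2\cos\delta\sin\theta\,U_{\ell-1}(\cos\sigma)U_{\ell-1}'(\cos\sigma)$; the essential point is that $V$, which can be arbitrarily close to $0$, drops out of every denominator. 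Since $\sin^2\sigma=1-\cos^2\theta\cos^2\delta\ge\sin^2\delta\ge s_0$, the dimension-free bounds $|U_{\ell-1}(\cos\sigma)|=|\sin(\ell\sigma)/\sin\sigma|\le s_0^{-1/2}$ and, from the derivative identity for $U_{\ell-1}'$ stated earlier in this section, $|U_{\ell-1}'(\cos\sigma)|\le(\ell+s_0^{-1/2})/s_0$ hold; together with $1-P_\mathrm{pc}\ge\tfrac12$ for small $\theta$ they give $J_\theta(\ell)\le 8M/s_0+O_{s_0}(M\ell\theta^2+M\ell^2\theta^4)$. Summing over $\ell\in\{1,2,4,\dots,d\}$ yields $I_{\theta\theta}\le(8M/s_0)\log_2(2d)+O_{s_0}(Md\theta^2+Md^2\theta^4)$, so in the small-angle window $d\theta^2=O(1)$ (wider than the $d=O(\theta^{-1/2})$ needed by the cruder first-order-in-$\theta$ estimate of the supplementary note) the first term dominates, $I_{\theta\theta}=O(M\log d)$, and therefore $\mathrm{Var}(\hat\theta)\ge I_{\theta\theta}^{-1}=\Omega(1/(M\log d))$; using finitely many modulation angles changes only the implied constant.

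The main obstacle is precisely this uniform-in-$\ell$ estimate in claim (2): the crude inequalities $\max_{[-1,1]}|U_{\ell-1}|=\ell$ and $\|U_{\ell-1}'\|_\infty=O(\ell^3)$ (Markov's inequality, \cref{thm:Markovs-ineq}) would only reproduce the Heisenberg scaling, so one must exploit that $\cos\sigma=\cos\theta\cos\delta$ is pinned away from $\pm1$ --- which is exactly where ``phase-matching violated'' enters --- to replace $\ell$ by $1/|\sin\delta|$. A secondary difficulty is bookkeeping the $\theta$-dependent remainders carefully enough to carve out the window ($1\ll d\lesssim\theta^{-2}$) in which $\log d$ can be made large while those corrections stay subdominant; this is what turns the $\Theta(M\log d)$ Fisher information into the ``exponentially worse than $1/d^2$'' statement --- reaching variance $\varepsilon$ requires $d\sim e^{\Theta(1/(M\varepsilon))}$ rather than $d\sim(M\varepsilon)^{-1/2}$ --- without contradicting the eventual return to Heisenberg scaling at genuinely large $d$, where the remainder terms take over.
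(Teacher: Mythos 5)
Your part (1) is the paper's argument verbatim: under phase-matching $P_\mathrm{pc}=\sin^2(\ell\theta)$, each depth contributes Fisher information $4M\ell^2$, and summing the geometric sequence of depths gives $I_{\theta\theta}=\tfrac{4}{3}M(4d^2-1)$ and the $\Omega(1/(Md^2))$ bound. For part (2), however, you take a genuinely different and in fact tighter route. The paper linearizes in $\theta$ to get $J_\theta\approx 4M\sin^2(d\varphi)/\sin^2(\varphi)+\Or(d^4\theta^2)$, then bounds the envelope of $\sin^2(d\varphi)/\sin^2(\varphi)$ through its local maxima by $(1+\varphi^2)/(d^{-2}+\varphi^2)$ and replaces the sum over depths by an integral, arriving at $I_{\theta\theta}\lesssim 2M\frac{1+\varphi^2}{\varphi^2}\log_2\frac{1+(d\varphi)^2}{1+\varphi^2}=\Theta(M\log d)$; this is partly heuristic (an ``extrapolation'' of an envelope and a sum-to-integral step) but yields quantitative extras the theorem statement does not need, such as the $\ge 90\%$ loss of Fisher information as soon as $\abs{\omega-\varphi}\ge\pi/d$. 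Your argument instead keeps $\theta$ exact: the algebraic identity $(\partial_\theta P)^2/P=4\cos^2\theta\,V+4\cos\theta\sin\theta\,\partial_\theta V+4\cos^2\delta\sin^4\theta\,(U_{\ell-1}')^2$ (which I checked — the potentially vanishing $V$ really does cancel from every denominator), combined with $\sin^2\sigma\ge\sin^2\delta\ge s_0$ and hence $\abs{U_{\ell-1}(\cos\sigma)}\le s_0^{-1/2}$, gives a rigorous, uniform-in-$\ell$ bound $J_\theta(\ell)\le 8M/s_0+O_{s_0}(M\ell\theta^2+M\ell^2\theta^4)$, so $I_{\theta\theta}=O(M\log d)$ throughout the window $d\theta^2=O(1)$ rather than the narrower $d\ll\theta^{-1/2}$ that the paper's $\Or(d^4\theta^2)$ remainder forces. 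What your approach buys is rigor and a wider validity range for the $\Omega(1/(M\log d))$ lower bound; what it gives up is the explicit $\varphi$-dependence of the constant and the sharp picture of how the Fisher information degrades as the phase mismatch grows from $0$ to $\Theta(1)$. Both reductions to the variance bound via $\mathrm{Var}(\hat\theta)\ge[I^{-1}]_{\theta\theta}\ge I_{\theta\theta}^{-1}$ are sound.
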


    \subsection{Practical challenges due to the complex optimization landscape}\label{sec:periodic-calibration-landscape}
    In the previous subsections, we analyze the lower bound on the optimal estimation variance that periodic calibration can achieve in an idealized scenario. However, due to the actual need to minimize a loss function, the estimation performance that periodic calibration can achieve highly depends on the optimization landscape. In this subsection, we analyze the practical challenges that prevent periodic calibration from achieving high estimation accuracy by visualizing the optimization landscape. We consider the following mean squared error (MSE) loss function:
    \begin{equation}
        \mc{L}(\theta, \varphi, d, M) = \sum_{j = 0}^{\log_2(d)} \abs{ \sin^2(\theta) U_{2^j - 1}^2(\cos(\theta) \cos(\varphi)) - \hat{P}^\expl_{2^j, M} }^2
    \end{equation}
    where $\hat{P}^\expl_{d, M}$ stands for the experimentally sampled probability using a depth-$d$ circuit with $M$ measurement samples. We numerically study the optimization landscape in \cref{fig:landscape-periodic-calibration}.

    \begin{figure}[htbp]
    \centering
    \subfigure[\label{fig:landscape_varphi_periodic_calibration}]{
        \includegraphics[width=.4\columnwidth]{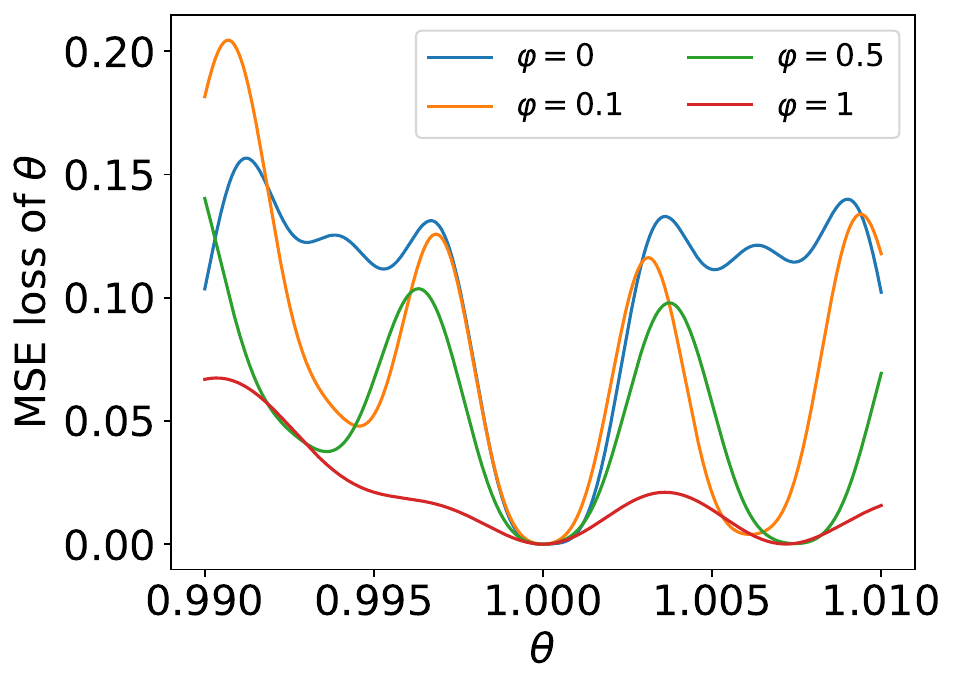}
    }
    \subfigure[\label{fig:landscape_depth_periodic_calibration}]{
        \includegraphics[width=\columnwidth]{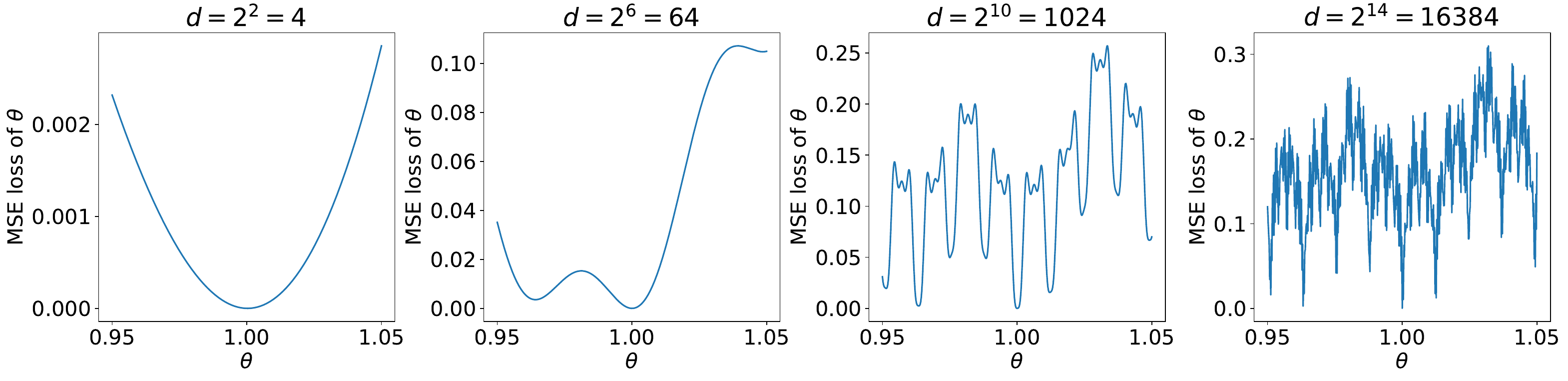}
    }
    \subfigure[\label{fig:landscape_meas_periodic_calibration}]{
        \includegraphics[width=\columnwidth]{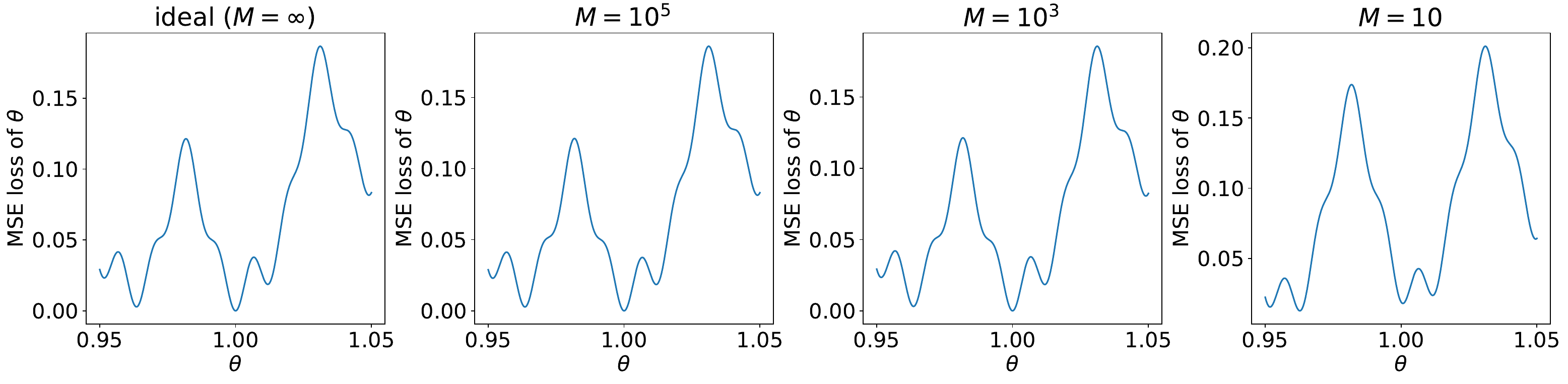}
    }
    \caption{Landscape analysis of periodic calibration. We set the exact value $\theta = 1$ for all figures. (a) Optimization landscape of different $\varphi$ angles. We set the maximum depth $d = 2^10 = 1024$ and consider an ideal case where infinite many measurement samples are used $M = \infty$. (b) Optimization landscape of different maximum depth parameters $d$. We consider an ideal case where infinite many measurement samples are used $M = \infty$, and the phase-matching condition is satisfied $\varphi = 0$. (c) Optimization landscape of different numbers of measurement samples $M$. We set $d = 2^8 = 256$. We consider an ideal case where the phase-matching condition is satisfied $\varphi = 0$.  \label{fig:landscape-periodic-calibration}}
\end{figure}

When the phase-matching condition is violated, we visualize the optimization landscape in \cref{fig:landscape_varphi_periodic_calibration}. It can be seen that as $\varphi$ becomes large, the local minima become increasingly influential. Moreover, some local minima attain an almost vanishing loss value, which makes the optimization challenging. 

    In \cref{fig:landscape_depth_periodic_calibration}, we visualize the optimization landscape of variable depth parameters. It indicates that the optimization landscape becomes increasingly complex when $d$ is larger. The landscape becomes highly non-convex and has increasingly many local minima. As the local landscape becomes sharper and the local gradient is increasingly large, it is hard for the optimizer to proceed with a reasonable step size. This renders the practical optimization-based solution hard to obtain.

    In \cref{fig:landscape_meas_periodic_calibration}, we see that the optimization landscape is not very sensitive to the number of measurement samples. Although the landscape deviates from the true one when $M = 10$ is too small (rightmost panel), the optimization landscape almost faithfully reproduces the one derived from the ideal case (leftmost panel) when $M \ge 1000$. Hence, the hardness of optimization is less affected by the number of measurement samples but mainly bottlenecked by the exponential disadvantage as circuit depth increases when the phase matching condition is not met.

	\bibliographystyle{abbrvurl}

\end{document}